%% file: main.tex
\def\eprintversion{1}
\def\anonymous{0}
\def\visiblecomments{0}
\ifnum\eprintversion=0
	\documentclass[runningheads,orivec,envcountsame]{llncs}
\else
	\documentclass[orivec,envcountsame,envcountsect]{llncs}[11pt]
	\usepackage{fullpage}
    \usepackage{breakcites}
\fi

  \newcommand{\PRF}{\mathsf{PRF}}

  \newcommand{\calR}{\mathcal{R}}

\newcommand{\KG}{\mathsf{Kg}}

\newcommand{\win}{\mathsf{win}}

\newcommand{\snote}[1]{\ifnum\visiblecomments=1 {\color{purple} [{\bf Stefano:} #1]} \fi}
\newcommand{\cnote}[1]{\ifnum\visiblecomments=1 {\color{blue} [{\bf Chenzhi:} #1]} \fi}
\newcommand{\chnote}[1]{\ifnum\visiblecomments=1 {\color{red} [{\bf Champ:} #1]} \fi}
\newcommand{\dnote}[1]{\ifnum\visiblecomments=1 {\color{orange} [{\bf Dennis:} #1]} \fi}

\usepackage[dvips]{graphicx}
\usepackage{enumitem}
\usepackage{amsmath}
\usepackage{enumitem}
\usepackage{amsfonts}
\usepackage{amssymb}
\usepackage{bbm}
\usepackage{color}
\usepackage{multicol}
\usepackage{mathabx}
\usepackage{calc}
\usepackage[colorlinks=true,
			linkcolor=blue,
			citecolor=blue,
			pdftitle={},
    		pdfauthor={}]{hyperref}
\usepackage{colonequals}  
\usepackage[noabbrev]{cleveref}
\usepackage{xcolor}
\usepackage{wrapfig}
\usepackage[lambda,n,operators]{cryptocode}
\input{macros.tex}

\input{macros-champ.tex}

\usepackage{thm-restate}
\usepackage{stmaryrd}
\usepackage{dashbox}
\usepackage{float}
\usepackage{tikz}
\usepackage{orcidlink}
\usetikzlibrary{shapes.geometric, chains, calc, fit, matrix}
\tikzstyle{system}=[rectangle,draw,fill=lightgray,minimum height=0.8cm,minimum
            width=0.8cm,thick]
\tikzstyle{BC}=[system]
\tikzstyle{resource}=[system]
\tikzstyle{RO}=[resource, minimum width=1cm]
\tikzstyle{protocol}=[circle, inner sep=0.7mm, draw]
\tikzstyle{simulator}=[circle, inner sep=0.7mm, draw]
\tikzstyle{memory}=[resource]
\tikzstyle{distinguisher}=[resource,fill=white,minimum width=3.5cm,
minimum height=1.2cm]
\tikzstyle{link}=[]

\allowdisplaybreaks[1]

\begin{document}
\ifnum\eprintversion=0
\titlerunning{}
\else

\fi
\ifnum\eprintversion=1
	\title{Tight Security for BBS Signatures~\thanks{A preliminary version of this paper appears in the proceedings of EUROCRYPT 2026. This is the full version. © IACR 2026.}}
\else
	\title{Tight Security for BBS Signatures~\thanks{The full version of this work can be found at~\cite{EPRINT:ChaHofTes25}. }}
\fi

\ifnum\anonymous=0
	 \author{Rutchathon~Chairattana-Apirom~\inst{1}\thanks{A portion of this work is done while visiting ETH Zurich.}\orcidlink{0009-0006-1990-1329} 
	 	\and Dennis~Hofheinz~\inst{2} 
	 	\and Stefano~Tessaro~\inst{1}\orcidlink{0000-0002-3751-8546}}
	\institute{Paul G.\ Allen School of Computer Science \& Engineering\\University of
	Washington, Seattle, US\\
	\email{\{rchairat,tessaro\}@cs.washington.edu} \and
	Department of Computer Science\\
	ETH Zurich, Switzerland\\
	\email{hofheinz@inf.ethz.ch}
	}
\else
	\author{}
	\institute{}
\fi

\maketitle

\input{tex_files/abstract}
\input{tex_files/intro}

\input{tex_files/overview}

\input{tex_files/prelims}
\input{tex_files/tight-security-statements}
\input{tex_files/polynomial-properties}
\input{tex_files/full-approach}
\input{tex_files/lower-bound}

\ifnum\anonymous=0
\section*{Acknowledgements}
The authors wish to thank Kenneth Paterson for discussions at an earlier stage of the work and discussions on the running time of our reduction. Chairattana-Apirom and Tessaro's research was partially supported by NSF grants CNS-2026774, CNS-2154174, CNS-2426905, a gift from Microsoft, and a Stellar Development Foundation Academic Research Award. 
\fi

\ifnum\eprintversion=0
\bibliographystyle{splncs04}
\bibliography{references.bib}

\else
\bibliographystyle{alpha}
\bibliography{arxiv.bib}
\fi

\newpage
\appendix
\ifnum\eprintversion=0
\else
\input{tex_files/proof-any-length}
\fi

\end{document}

%% file: macros.tex
\definecolor{rvone}{RGB}{139,34,82}
\definecolor{rvtwo}{rgb}{0.00,0.62,0.22}

\renewcommand{\epsilon}{\varepsilon}

\providecommand{\poly}{{\normalfont \small \textsf{poly}}}

\renewcommand{\Pr}{\mathsf{Pr}}

\providecommand{\Adv}{\mathsf{Adv}}

\ifnum\eprintversion=1
\providecommand{\eAnd}{\; \cap\; }
\providecommand{\eOr}{\; \cup\; }
\else
\providecommand{\eAnd}{ \cap }
\providecommand{\eOr}{ \cup }
\fi

\providecommand{\secpar}{n}

\providecommand{\rsample}{{\:{\leftarrow{\hspace*{-3pt}\raisebox{.75pt}{$\scriptscriptstyle\$$}}}\:}}

\providecommand{\advA}{\bA}

\makeatletter
\providecommand\parag{%
   \@startsection{paragraph}{4}{\z@}%
       {-3\p@ \@plus -1\p@ \@minus -1\p@}%
       {-0.2em \@plus -0.12em \@minus -0.05em}%
       {\normalfont\normalsize\scshape}}
\providecommand{\heading}[1]{\parag{\underline{#1}}}
\makeatother

\providecommand{\cancel}[1]{}

\providecommand{\bits}{\{0,1\}}

\providecommand{\sk}{\mathsf{sk}}

\providecommand{\getsr}{\rsample}

\renewcommand{\advA}{\mathcal{A}}

\providecommand{\true}{\texttt{true}}
\providecommand{\false}{\texttt{false}}

\providecommand{\advB}{\mathcal{B}}

\providecommand{\SD}{\mathsf{SD}}

\providecommand{\advC}{\mathcal{C}}

\providecommand{\msgSp}{\mathsf{M}}

\providecommand{\Coll}{\mathsf{Coll}}

\providecommand{\Group}{\mathbb{G}}
\providecommand{\bbZ}{\mathbb{Z}}

\definecolor{highlight-gray}{gray}{0.90}
\definecolor{highlight-yellow}{cmyk}{0,0,0.90,0}
\definecolor{highlight-cyan}{cmyk}{0.40,0,0,0}
\definecolor{highlight-magenta}{cmyk}{0,0.90,0,0}

\providecommand{\gamechange}[2][highlight-gray]{{\setlength{\fboxsep}{0pt}\colorbox{#1}{\ifmmode$\displaystyle#2$\else#2\fi}}}
\providecommand{\gamechangey}[2][highlight-yellow]{{\setlength{\fboxsep}{0pt}\colorbox{#1}{\ifmmode$\displaystyle#2$\else#2\fi}}}
\providecommand{\gamechangec}[2][highlight-cyan]{{\setlength{\fboxsep}{0pt}\colorbox{#1}{\ifmmode$\displaystyle#2$\else#2\fi}}}
\providecommand{\gamechangem}[2][highlight-magenta]{{\setlength{\fboxsep}{0pt}\colorbox{#1}{\ifmmode$\displaystyle#2$\else#2\fi}}}

\providecommand{\gamechanger}[2][highlight-green]{{\setlength{\fboxsep}{0pt}\colorbox{#1}{\ifmmode$\displaystyle#2$\else#2\fi}}}
\colorlet{highlight-green}{green!40!yellow}

\providecommand{\calI}{\mathcal{I}}
\providecommand{\calU}{\mathcal{U}}

\providecommandx{\Imid}{\calI_{\mathrm{mid}}}

\providecommand{\closed}{\mathtt{completed}}
\providecommand{\str}{\mathrm{str}}

\providecommand{\calF}{\mathcal{F}}
\providecommand{\calT}{\mathcal{T}}

\providecommand{\param}{\mathsf{par}}

\providecommand{\Setup}[1]{#1.\mathsf{Setup}}
\renewcommand{\KG}[1]{#1.\mathsf{KG}}

\providecommand{\Ver}[1]{#1.\mathsf{Ver}}

\providecommand{\SignS}{\mathsf{SS}}

\providecommand{\DL}{\mathrm{DL}}

\providecommand{\dl}{\mathrm{dl}}

\providecommand{\oraF}{\textsc{F}}

\providecommand{\oraS}{\textsc{S}}

\providecommand{\oraCh}{\textsc{Chal}}

\providecommand\defaultboxsep{2pt}
\providecommand{\dgamebox}[1]{\dashbox{\shortstack[l]{#1}}}

\providecommand{\fhgamebox}[1]{\setlength\fboxsep{0pt}\colorbox{gray!20}{\setlength\fboxsep{\defaultboxsep}\framebox{\shortstack[l]{#1}}}\setlength\fboxsep{\defaultboxsep}}

%% file: macros-champ.tex
\let\underbrace\LaTeXunderbrace

\pcsetargs{codesize=\scriptsize,subcodesize=\scriptsize}

\spnewtheorem{observation}{Observation}{\bfseries}{\rmfamily}
\Crefname{observation}{Observation}{Observations}
\Crefname{conjecture}{Conjecture}{Conjectures}

\providecommand{\secpar}{\lambda}

\providecommand{\getsr}{\leftarrow_{\$}}

\providecommand{\poly}{\ensuremath{\mathsf{poly}}}

\providecommand{\Good}{\ensuremath{\mathsf{Good}}}
\providecommand{\GoodE}{\ensuremath{\mathsf{GoodE}}}

\providecommand{\Expect}{\ensuremath{\mathbb{E}}}

\providecommand{\Zp}{\mathbb{Z}_p}

\providecommand{\Group}{\mathbb{G}}

\providecommand{\GOne}{\mathbb{G}_1}
\providecommand{\GTwo}{\mathbb{G}_2}
\providecommand{\GTarg}{\mathbb{G}_T}
\providecommand{\pairingF}{\mathsf{e}}
\providecommand{\pairingOp}[2]{\mathsf{e}\left(#1, #2\right)}
\providecommand{\pGroupDescript}{(p, \GOne, \GTwo, \GTarg, \pairingF)}

\providecommand{\Setup}{\ensuremath{\mathsf{Setup}}}
\providecommand{\GroupGen}{\mathsf{GGen}}

\providecommand{\sk}{\ensuremath{\mathsf{sk}}}

\providecommand{\vk}{\ensuremath{\mathsf{vk}}}

\providecommand{\param}{\ensuremath{{\mathsf{par}}}}

\providecommand{\hardprobfont}[1]{\texorpdfstring{\mathrm{#1}}{#1}}

\providecommand{\DL}{\hardprobfont{DL}}

\providecommand{\dlog}{\ensuremath{\mathrm{dlog}}}

\providecommand{\inp}{\ensuremath{\mathsf{inp}}}

\providecommand{\Sign}[1]{#1.\mathsf{Sign}}

\providecommand{\BBS}{\mathsf{BBS}}

\providecommand{\SUF}{\mathsf{SUF}}
\providecommand{\suf}{\mathrm{suf}}
\providecommand{\qSDH}{\qnum\text{-}\hardprobfont{SDH}}
\providecommand{\qDL}{\qnum\text{-}\hardprobfont{DL}}
\providecommand{\qsdh}{\qnum\text{-}\hardprobfont{sdh}}
\providecommand{\sdh}{\hardprobfont{sdh}}
\providecommand{\SDH}{\hardprobfont{SDH}}

\providecommand{\DBBS}{\mathsf{DBBS}}

\providecommand{\qnum}{\mathsf{q}}
\providecommand{\varX}{\mathsf{X}}
\providecommand{\varY}{\mathsf{Y}}

\providecommand{\Forge}{\mathsf{Forge}}

\providecommand{\Root}{\mathsf{root}}

\providecommand{\BadRep}{\mathsf{BadRep}}
\providecommand{\BadIdx}{\mathsf{BadIdx}}
\providecommand{\BadQ}{\mathsf{BadQ}}

\providecommand{\interprob}{\mathsf{p}}
\providecommand{\metaR}{\mathcal{M}}
\providecommand{\Rewind}{\mathrm{Rewind}}

\providecommand{\Error}{\mathsf{Er}}

\providecommand{\actual}[1]{\underline{#1}}
\providecommand{\actualMsg}{\actual{m}}
\providecommand{\actualMu}{\actual{\mu}}
\providecommand{\actualE}{\actual{e}}
\providecommand{\actualS}{\actual{S}}
\providecommand{\actualX}{\actual{x}}
\providecommand{\actualA}{\actual{a}}
\providecommand{\actualB}{\actual{b}}
\providecommand{\actualC}{\actual{c}}

\providecommand{\runscompleted}{\mathsf{cmpt}\text{-}\mathsf{ord}}
\providecommand{\started}{\mathsf{started}}

\ifnum\eprintversion=0

\else

\fi

%% file: tex_files/abstract.tex
\begin{abstract}
    This paper studies the concrete security of BBS signatures (Boneh, Boyen, Shacham, CRYPTO '04; Camenisch and Lysyanskaya, CRYPTO '04), a popular algebraic construction of digital signatures which underlies practical privacy-preserving authentication systems and is undergoing standardization by the W3C and IRTF.

    Sch\"age (Journal of Cryptology '15) gave a tight standard-model security proof under the $q$-SDH assumption for a less efficient variant of the scheme, called BBS+---here, $q$ is the number of issued signatures. In contrast, the security proof for BBS (Tessaro and Zhu, EUROCRYPT '23), also under the $q$-SDH assumption, is \emph{not} tight. Nonetheless, this recent proof shifted both standardization and industry adoption towards the more efficient BBS, instead of BBS+, and for this reason, it is important to understand whether this tightness gap is inherent. Recent cryptanalysis by Chairattana-Apirom and Tessaro (ASIACRYPT '25) also shows that a tight reduction to $q$-SDH is the best we can hope for.

    This paper closes this gap in two different ways. On the positive end, we show a novel tight reduction for BBS in the case where each message is signed at most once--this case covers in particular the common practical use case which derandomizes signing. On the negative end, we use a meta-reduction argument to prove that if we allow generating multiple signatures for the same message, then {\em no} algebraic reduction to $q$-SDH (and its variants) can be tight.  
\end{abstract}

%% file: tex_files/intro.tex
\section{Introduction}

Many privacy-preserving authentication protocols benefit from digital signature schemes that are {\em fully algebraic}, i.e., all operations performed during signing are within the group (for example, signing does not involve the evaluation of a cryptographic primitive such as a hash function). Such signatures typically enable efficient zero-knowledge proofs of knowledge of valid message-signature pairs, possibly additionally ensuring the message satisfies a certain predicate. Typical applications include anonymous credentials~\cite{C:CamLys04,CCS:TAKS07}, direct anonymous attestation~\cite{chen2010daa,brickell2010pairing,camenisch2016anonymous}, and $k$-time anonymous authentication~\cite{SJ:AuSMC13}.

There are multiple examples of such signatures, such as structure preserving~\cite{C:AFGHO10} or Pointcheval-Sanders~\cite{RSA:PoiSan16} signatures. In this work, we focus on BBS signatures~\cite{C:BonBoySha04,SJ:AuSMC13,camenisch2016anonymous,EC:TesZhu23b}, which have been favored in real-world deployment, especially in the space of anonymous credentials. In particular, BBS signatures are implemented in industry~\cite{mattr-bbs,microsoft-bbs} and covered by standardization efforts by the W3C Verifiable Credentials working group~\cite{w3c-vc-draft} and by the Decentralized Identity Foundation, the latter leading to an RFC draft by the IRTF~\cite{irtf-cfrg-bbs-signatures-10}.

Due to widespread interest in adoption, it is also crucial to gain a complete understanding of their concrete security, but recent works on security proofs~\cite{EC:TesZhu23b} and cryptanalysis~\cite{AC:ChaTes25} highlight important gaps between upper and lower bounds on the concrete security of BBS. As the main contribution of this work, we complete the picture of our understanding of the concrete security of BBS signatures. We give a new tight proof for the most common use case of BBS signatures (which covers in particular its standardized version), along with an impossibility result showing that a more general tightness result is not possible. 

\heading{BBS and its security.} BBS signatures were first put forward by Camenisch and Lysyanskaya~\cite{C:CamLys04} as a standalone signature scheme, taking inspiration from the concurrent work by Boneh, Boyen, and Shacham~\cite{C:BonBoySha04}, which used a similar construction to add exculpability 
to the well-known BBS {\em group} signature scheme. (Hence, the rather confusing naming.) 
BBS can sign any message vectors $\vec m \in \Zp^\ell$, and produce a signature $\sigma = (A, e)$ consisting of a group element $A \in \GOne$ and a scalar $e \in \Zp$ such that \[
A = \frac{1}{x - e}\left(G_1 + \sum_{i = 1}^\ell \vec{m}[i] \vec{H}[i]\right)\;,
\] where $x$ is the secret key and $G_1, \vec{H}[1], \ldots, \vec{H}[\ell]$ are generators of $\GOne$. It is not hard to see that the signature can be verified, given the verification key $X_2 = x G_2 \in \GTwo$, using a bilinear pairing.

Neither work~\cite{C:CamLys04,C:BonBoySha04}, however, gave a proof of security. Au, Susilo, and Mu~\cite{SJ:AuSMC13} later proved the security of a less efficient variant of BBS, which is typically referred to as BBS+. One can think of BBS+ as a variant of BBS where the first message coordinate $\vec{m}[1]$ is set to a random scalar $s$, which is then included as a part of the signature $\sigma = (A, e, s)$.
Their security proof is based on the $\qSDH$ assumption in type-2 pairings with $\qnum$ denoting the number of signing queries. Later on, Camenisch, Drijvers, and Lehmann~\cite{camenisch2016anonymous} adapted the proof to the more efficient type-3 pairings. These proofs are not tight, due to an extra multiplicative advantage loss of $\qnum$. Sch\"age~\cite{JC:Schage15} gave an alternative tight reduction to $\qSDH$ for a scheme essentially equivalent to BBS+. 

Hofheinz and Kiltz~\cite{JC:HofKil12}, with a non-tight proof, showed existential unforgeability security for a signature scheme equivalent to BBS but with bit-vector messages $\vec m \in \bits^\ell$. Tessaro and Zhu~\cite{EC:TesZhu23b} more recently showed that the original BBS signature scheme (i.e., without using the random scalar $s$) can also be proved secure based on the $\qSDH$ assumption. This led to the RFC draft~\cite{irtf-cfrg-bbs-signatures-10} transitioning to BBS signatures instead of BBS+. However, unlike the case of BBS+, their proof is non-tight, whereas the tight proof~\cite{JC:Schage15} inherently relies on the extra randomization via the scalar $s$, and does not apply to BBS. They were only able to give a tight security proof in the Algebraic Group Model (AGM)~\cite{C:FucKilLos18}. This leads to the following natural question: 
\begin{quote}
{\em Can we give a tight security reduction for BBS signatures without resorting to the Algebraic Group Model?}
\end{quote}
We note that these works are complemented by recent cryptanalysis work by Chairattana-Apirom and Tessaro~\cite{AC:ChaTes25} which shows that {\em any} $\qSDH$ attack can be turned into an attack against all variants of BBS/BBS+ with equal advantage (and roughly equal running time). This means that a tight reduction to $\qSDH$ is the best we can hope for any of these schemes, and the question remains open on whether such a reduction exists for BBS.

\heading{Why does it matter?} Ascertaining whether a multiplicative $\qnum$ loss is inherent has important consequences on choosing parameters that guarantee a certain security level. To this end, we would typically plug into our bound the generic advantage upper bound $O(\qnum t^2/p)$ for breaking $\qSDH$ by a $t$-time adversary, where $p$ is the order of the group. This $\qSDH$ bound is tight for many choices of $\qnum$ due to the attacks by Cheon~\cite{EC:Cheon06} and Brown-Gallant~\cite{EPRINT:BroGal04}. A non-tight reduction would then give a concrete security advantage upper bound of $O(\qnum^2 t^2/p)$. For a typical parameter choice such as $p \approx 2^{256}$ and $\qnum = 2^{64}$, such a proof would only guarantee $64$-bit security ($t = 2^{64}$). In contrast, a tight reduction would ensure $96$-bit security, essentially matching the cryptanalysis of~\cite{AC:ChaTes25}.


\heading{Result I: Tightness for distinct messages.} For our first main contribution, we give an affirmative answer to the above question for the case of signing {\em distinct} messages. This special case is the {\em most relevant to practice.} While BBS indeed allows signing a message multiple times, picking distinct values of $e$, in practice one would derandomize this choice (using a pseudorandom function applied to the message), and this is the construction covered by the IRTF draft~\cite{irtf-cfrg-bbs-signatures-10}.
Accordingly, our result effects the parameter choices for the standardization by reinforcing that the version considered by the draft does not admit any better generic attack than that of~\cite{AC:ChaTes25}. 

Our reduction only shares very basic elements from prior work, but is otherwise entirely novel. As in prior work, the main barrier is handling the case where the adversary produces a forgery $\sigma^* = (A^*, e^*)$ for a value $e^*$ which has been used by one of the signatures that the adversary has previously obtained. The reduction is given a $\qSDH$ instance consisting of $xG_1, x^2 G_1, \ldots, x^{\qnum}G_1$ to simulate the $\ell + 1$ generators used by the scheme, hoping that a forgery of the above type will allow us to extract $x$ using a well-known trick by Boneh-Boyen~\cite{JC:BonBoy08}. In particular, one can think without loss of generality of such a strategy as assigning specific polynomials in $x$ to the simulated generator, i.e., for $\ell = 1$, we may set $\overline{G}_1 = f(x) G_1$ and $\overline{H} = g(x) G_1$, for known polynomials $f$ and $g$, and unknown $x$. Our strategy differs substantially from prior works in how we pick these polynomials, and how we generate the values $e$ used by each signature.  

In particular, verifying that our strategy produces the right distribution is non-trivial, and requires a fairly involved probabilistic analysis using the H-coefficient technique from symmetric cryptography. (This technique was also used in~\cite{EC:TesZhu23b} as well as more recently in~\cite{EC:BDLR25}.) One caveat is that we require the $d$-$\SDH$ assumption with $d = \Theta(\qnum)$ instead of $\qSDH$, but still with a small constant (i.e., $d \approx 10\qnum$). Note that this still implies the same asymptotic generic upper bound as $\qSDH$. 
We give further details below in our technical overview.

\heading{Result II: Non-tightness for general messages.} 
This still leaves open \ifnum\eprintversion=0\linebreak\fi whether such tight reductions exist when messages may repeat. On that note, we show via a meta-reduction argument that a tight reduction in the standard model is in fact {\em not} possible. Concretely, no algebraic and straight-line reduction to $d$-$\SDH$ (for {\em any} integer $d$) can achieve less than $\Theta(\qnum)$ factor loss, matching the reduction in~\cite{EC:TesZhu23b}. (Here, algebraic reductions can explain their group element outputs with algebraic representations based on previously obtained elements.) This indeed covers most reductions for group-based signatures and all existing reductions for BBS. We stress that the reductions in our impossibility results are {\em in the standard model}, while our meta-reductions work in the AGM~\cite{C:FucKilLos18}. We also extend the impossibility to rewinding reductions: concretely, any reduction that runs the adversary $r$ times will incur at least a $\Theta(\qnum/r^2)$ factor loss.

\heading{Broader picture: Tight signatures.} 
Generally, the quest for tight security reductions has a long history, starting with \emph{concrete security}~\cite{FOCS:BDJR97}, and tight security reductions for public-key encryption~\cite{EC:BelBolMic00,C:HofJag12,EC:BJLS16}, digital signatures~\cite{RSA:CheJoy07,C:HofJag12,C:BlaKilPan14}, and other cryptographic primitives (e.g., key exchange~\cite{TCC:BHJKL15,EC:BJLS16,C:HesHofKoh18} or identity-based encryption~\cite{C:CheWee13,C:BlaKilPan14}). The motivation for these works is both understanding the exact security guarantees given by a particular construction, and optimizing parameter choices (as explained above).

In the case of digital signatures, we do have tightly secure constructions from standard complexity assumptions, even for schemes with additional properties (like structure-preserving~\cite{C:AHNOP17} or multi-signatures~\cite{AC:BacWag24}). However, those constructions are less efficient than “ordinary” (i.e., not known to be tightly secure) signature schemes. Little is known about the tight security of practical and popular signature schemes (such as BBS signatures), although of course it is particularly important to be able to give concrete security guarantees for such schemes.

We remark that our work {\em does not} consider the multi-user settings, where the adversary can (adaptively) corrupt multiple signers. Indeed, the lower bound by~\cite{EC:BJLS16} rules out tight reductions for multi-user security with adaptive corruptions of any standard model signature scheme. Thus, we cannot hope to prove BBS tight in this setting.

%% file: tex_files/overview.tex
\section{Technical Overview} \label{sec:tech-overview}

\subsection{Prior Non-Tight Reductions}\label{sec:abstract-prior-approaches}
We start by reviewing prior security reductions for BBS/BBS+ signatures for message length $\ell = 1$ (extending to any $\ell > 1$ is quite simple). Recall that a BBS signature on message $m \in \Zp$ for a verification key $X_{2, 1} \in \GTwo$ and secret key $x \in \Zp$ is of the form $(A, e)$ where $A = \frac{1}{x - e} (G_1 + m H)$ and $e \getsr \Zp$ ($G_1, H$ are the public parameters). For simplicity, we will call $e$ the {\em tag}. The BBS security proof (for $\ell = 1$) considers two cases for the forgery $(A^*, e^*)$ with respect to the previously obtained signatures $(\sigma_i = (A_i, e_i))_{i \in [\qnum]}$: 
(a) $e^* \notin \{e_i\}_{i \in [\qnum]}$ (which already has a tight reduction), and
(b) $e^* \in \{e_i\}_{i \in [\qnum]}$ (which is our focus throughout this paper). 
%
In both cases, the reduction taking as input a $\qSDH$ instance $(G_1, (X_{1,i} = x^iG_1)_{i \in [\qnum]}, G_2, X_{2,1} = xG_2)$ follows a common template: \begin{description}[topsep=0pt]
	\item [Set up the public parameters:] The reduction sets the verification key as $X_{2,1}$ and the public parameters as $\overline{G}_1 = f(x) G_1, \overline{H} = g(x) G_1$ depending on some polynomials $f, g \in \Zp^{\leq \qnum}[\varX]$. These can be efficiently computed from the $\qSDH$ instance. (At this stage, some tags $e_i$ may not be determined.)
	
	\item [Simulate signing queries:] On query $m_i \in \Zp$, select $e_i$ such that $(\varX - e_i) \divides f(\varX) + m_i g(\varX)$. Indeed, this divisibility condition is necessary for the reduction to compute $A_i =  \frac{f(x) + m_i g(x)}{x - e_i}G_1$ from its SDH instance. 
	
	\item [Extract SDH solution:] On the forgery $(m^*, \sigma^* = (A^*, e^*))$, if $(\varX - e^*) \ndivides f(\varX) + m^* g(\varX)$, the reduction can extract an SDH solution $(e^*, Z = \frac{1}{x - e^*} G_1)$ via the by-now well-known technique by Boneh-Boyen~\cite{JC:BonBoy08} (see \Cref{remark:compute-sdh}).
\end{description} 
Importantly, the chosen $e_i$'s should still look uniformly random. 
For Case (b), if the messages $m_1, \ldots, m_\qnum$ are known to the reduction beforehand, the reduction can sample $e_1, \ldots, e_\qnum \getsr \Zp$ and find $f$ and $g$ such that for each $i \in [\qnum]$, $(\varX - e_i) \divides f(\varX) + m_i g(\varX)$ (allowing simulation) but $(\varX - e_i) \ndivides f(\varX) + m' g(\varX)$ (allowing extraction if $e^* = e_i$) for $m' \neq m_i$ via polynomial interpolation. This is the key insight behind the tight proof for BBS+ signatures (not BBS) given in \cite{JC:Schage15}. However, this only gives a tight reduction for non-adaptively chosen queries. 

With adaptive queries, the proof given by Tessaro and Zhu~\cite{EC:TesZhu23b} first guesses an index $i^* \in [\qnum]$ such that $e^* = e_{i^*}$, and set $f(\varX) = \alpha (\varX - e'_{i^*}) \prod_{i \neq i^*} (\varX - e_i)$ and $g(\varX) = \beta \prod_{i \neq i^*} (\varX - e_i)$. This allows them to simulate the $i^*$-th query for any $m_{i^*}$ using $e_{i^*} = \alpha e'_{i^*} + m_{i^*}$ and still extract a solution from a forgery $(m^* \neq m_{i^*}, A^*, e^* = e_{i^*})$. However, the guessing results in a $\qnum$ factor loss.  

\subsection{Our Tight Reduction}\label{sec:overview-tight-reduction}
We now give a high-level overview of our tight reduction against adversaries who make {\em distinct signing queries} $m_1, \ldots, m_\qnum \in \Zp$. This implies that derandomized BBS signatures (which is the version specified in the IRTF draft~\cite{irtf-cfrg-bbs-signatures-10}) is actually tightly secure from the $d$-$\SDH$ assumption for $d = \Theta(\qnum)$. 

\heading{Reduction idea.}
One property of Tessaro and Zhu's reduction is that for the $i^*$-th signing query, $e_{i^*}$ (not $e'_{i^*}$) can only be used to simulate a specific message $m_{i^*}$ and a signature $(A^*, e_{i^*})$ for any $m^* \neq m_{i^*}$ can be used to extract a SDH solution. Our hope is to achieve this property for a large number of tags used in the signing queries (a constant fraction suffices) while still being able to simulate without knowing the messages in advance. At a high level, we will set up $\overline{G}_1, \overline H$ so that (1) on query $m_i$, attempt to find such good $e_i$ to simulate the signing, and (2) if we cannot, we use the tags from a ``stash", denoted $\widetilde{e}_1, \ldots \widetilde{e}_\qnum$, prepared in advance; note that the tags from the stash will not allow us to extract an SDH solution. More precisely, our reduction strategy is as follows: 
\begin{description}[topsep=0pt]
	\item [Setting up $\overline{G}_1, \overline H$:] Sample a random monic polynomial $f$ of degree $d$, $\beta \getsr \Zp^*$, and the stash $\widetilde{e}_1, \ldots \widetilde{e}_\qnum \allowbreak \getsr \Zp$. Then, we set \begin{align*}
		\textstyle \overline{G}_1 = f(x) \prod_{i \in [\qnum]} (x - \widetilde{e}_i) G_1\;, \pcind[2] \overline H = \beta \prod_{i \in [\qnum]} (x - \widetilde{e}_i)  G_1\;.
	\end{align*}
	\item [Simulation:] On the $i$-th signing query $m_i \in \Zp$, check if $f(\varX) + m_i \beta$ has a ``unique" zero in $\Zp$, denoted $e$. If so, set $e_i \gets e$; otherwise, set $e_i \gets \widetilde{e}_i$. Then, compute $A_i = \frac{(f(x) + m_i \beta) \prod_{i \in [\qnum]} (x - \widetilde{e}_i)}{x - e_i} G_1$ as a linear combination of the SDH instance. This is possible because $(\varX - e_i) \divides (f(\varX) + m_i \beta) \prod_{i \in [\qnum]} (\varX - \widetilde{e}_i)$. 
	
	\item [Extraction:] On the forgery $(m^*, (A^*, e^*))$ such that $e^* = e_{i^*}$ for some $i^* \in [\qnum]$, if $i^*$ is the index where we use the stash, abort. Otherwise, since $m^* \neq m_{i^*}$ (by freshness of the forgery), $f(\varX) + m^* \beta$ does not have $e_{i^*}$ as a zero. Thus, $(\varX - e_{i^*}) \ndivides (f(\varX) + m^* \beta)$, allowing us to extract a SDH solution. 
\end{description}
The success of our reduction hinges on whether the index $i^*$ (defined by the forgery) is such that $e_{i^*}$ is not from the stash. More precisely, this is the event $i^* \notin S \subseteq [\qnum]$ where we define $S := \{i \in [\qnum]: |f^{-1}(- m_i \beta)| = 1\}$, as the set of indices $i$ where $e_i$ is a zero of $f(\varX) + m_i \beta$. Note that the set of preimages of $a$, $f^{-1}(a)$, is equivalent to the set of zeros of $f(\varX) - a$.
To show why our reduction is tight, we analyze the distribution of the transcript between the reduction and the adversary augmented with the set $S$, i.e., $(\overline{G}_1, \overline H, G_2, X_2, (m_i, e_i)_{i \in [\qnum]}, S)$. The formal analysis of the distribution relies on the H-coefficient technique, but essentially, the distribution of the transcript satisfies the following properties:
\begin{enumerate}[noitemsep,topsep=0pt,label={\bf (\arabic*)}]
	\item \label{prop:uniform-tags} The tags $e_i$'s are indistinguishable from uniformly random.
	\item \label{prop:S-indep} The distribution of $S$ is (almost) independent of the rest of the transcript.
	\item \label{prop:small-loss} For any $i^* \in [\qnum]$, the probability that a sampled $S$ contains $i^*$ is at least $e^{-1}$.~\footnote{We overload $e$ with the tags $e_i \in \Zp$ and the real number $e \in \mathbb{R}$ when clear from the context.}
\end{enumerate}
Assuming that the above are true, we will now see why our reduction is tight.
Due to \ref{prop:uniform-tags}, the view of the adversary within the reduction will be indistinguishable from its view in the unforgeability game. Thus, it still wins within the reduction with probability close to in the game. Accordingly, we know that the reduction wins given that $i^*$ is in $S$. Hence, by \ref{prop:S-indep} and \ref{prop:small-loss}, the reduction's loss is only $e^{-1}$. 
We refer to \ifnum\eprintversion=1 \Cref{sec:main-thm-statements} \else \Cref{sec:tight-thm-statements} \fi for the concrete security statement, with the advantage and running time of our reduction, and \Cref{sec:full-proof} for the formal proof.
%


\heading{Case study: one-query adversary.} We give an intuition why the properties \ref{prop:uniform-tags}--\ref{prop:small-loss} are satisfied by examining the simpler case of $\qnum = 1$. Specifically in this overview, we will consider a simplified transcript that omits the generators and the verification key\footnote{Our formal proof analyzes the full transcript, which makes things slightly more complex. Here, we only present the core difficulties in the analysis of the distribution.} and is of the form \[
	(m \in \Zp, e \in \Zp, S \subseteq \{1\})\;.
\] This is generated by (a) sampling $f \getsr \Zp^d[\varX]$, $\beta \getsr \Zp^*$ and $\widetilde{e} \getsr \Zp$, (b) running the adversary who picks $m$, (c) selecting the tag $e$ by finding the number of zeros of $f(\varX) + m \beta$ as in the reduction, and (d) setting $S$ as $\emptyset$ if $e$ is set to $\widetilde{e}$ and $\{1\}$ otherwise. We will now show \ref{prop:uniform-tags} and \ref{prop:S-indep} by considering the probability of each possible transcript being generated for two main cases: $S = \{1\}$ and $S = \emptyset$.

First, consider the transcript of the form $(\actualMsg, \actualE, \actualS = \{1\})$\footnote{The notation $\underline{(\cdot)}$ denotes an actual value, and not a random variable.} for any $\actualMsg, \actualE \in \Zp$. The corresponding event over the randomness $f, \beta, \widetilde{e}$ is ``$f(\varX) + \actualMsg \beta$ has only $\actualE$ as its zero" (so that $S = \{1\}$), or equivalently $f^{-1}(-\actualMsg \beta) = \{\actualE\}$.
Then, we calculate the probability of this event over the randomness as: (following a similar analysis from \cite{leont2006roots} counting monic degree-$d$ polynomials with no zeros in $\Zp$)
\begin{align*}
	&\textstyle \Pr_{f, \beta,\widetilde{e}}[f^{-1}(-\actualMsg \beta) = \{\actualE\}] = \Pr\left[(f(e) = -\actualMsg \beta) \eAnd \neg \bigcup_{e' \in \Zp \setminus \{e\}} (f(e') = -\actualMsg \beta)\right]\\
	&\textstyle = \Pr[f(e) = -\actualMsg \beta] - \Pr\left[(f(e) = -\actualMsg \beta) \eAnd \bigcup_{e' \in \Zp \setminus \{e\}} (f(e') = -\actualMsg \beta)\right] & \\
	&\textstyle = p^{-1} - \Pr\left[ \bigcup_{e' \in \Zp \setminus \{e\}} (f(e') = f(e) = -\actualMsg \beta)\right] \\ 
	&= p^{-1} - \sum_{\emptyset \neq T \subseteq \Zp \setminus \{e\}} (-1)^{|T| - 1} \Pr\left[ \forall e' \in T \cup \{e\} : f(e')= -\actualMsg \beta\right] \; 
\end{align*}
The first equality follows from $(f^{-1}(- \actualMsg \beta) = \{\actualE\})$ being equivalent to $(f(e) = - \actualMsg \beta) \eAnd \neg \bigcup_{e' \in \Zp \setminus \{e\}} (f(e') = -\actualMsg \beta)$ (ruling out other preimages $e' \neq e$). The second to last equality follows because $f$ is uniformly random. The last equality follows from the inclusion-exclusion principle.

Here, the event $\forall e' \in T \cup \{e\} : f(e')= -\actualMsg \beta$ is equivalent to writing $f(\varX) + \actualMsg \beta = \prod_{e' \in T \cup \{e\}} (\varX - e') \cdot g(\varX)$ where $g \in \Zp[\varX]$ is monic and $\deg g = d - |T| - 1$. Therefore, we have that 
\ifnum\eprintversion=1
\[
\Pr\left[ \forall e' \in T \cup \{e\} : f(e')= -\actualMsg \beta\right] = \begin{cases}
	p^{-|T| - 1}& |T| \leq d - 1 \\
	0 & \text{otherwise}
\end{cases}\;.
\]
\else
$\Pr[ \forall e' \in T \cup \{e\} : f(e')= -\actualMsg \beta] = p^{d - |T| - 1}/p^d = p^{-|T| - 1}$ if $|T| \leq d - 1$ and 0 otherwise.
\fi
Hence, we continue the calculation as 
\begin{align*}
	& p^{-1} - \sum_{\emptyset \neq T \subseteq \Zp \setminus \{e\}} (-1)^{|T| - 1} \Pr\left[ \forall e' \in T \cup \{e\} : f(e')= -\actualMsg \beta\right]\\
	& = p^{-1} - \sum_{i = 1}^{d - 1} (-1)^{i - 1} \binom{p - 1}{i} p^{- i - 1} \pcind[2] \; \; ; \text{Counting subsets $T \subseteq \Zp \setminus \{e\}$ of size $i$}\\
	& = \underbrace{p^{-1}}_{\Pr[e = \actualE]} \cdot \underbrace{\left(\sum_{i = 0}^{d - 1} (-1)^{i} \binom{p - 1}{i} p^{- i - 1} \right)}_{\Pr[S = \{1\}]}\;.
\end{align*}
This clearly shows properties \ref{prop:uniform-tags} and \ref{prop:S-indep} with the two terms attributing to the probabilities of getting $\actualE$ and $\actualS = \{1\}$. 

Next, for transcripts of the form $(\actualMsg, \actualE, \actualS = \emptyset)$, the corresponding event over the randomness is ``$|f^{-1}(-\actualMsg \beta)| \neq 1$ and $\widetilde{e} = \actualE$." Since $f$ and $\widetilde{e}$ are independently sampled, we can write \[ 
	\Pr_{f, \beta, \widetilde{e}}[|f^{-1}(-\actualMsg \beta)| \neq 1 \eAnd \widetilde{e} = \actualE] = \underbrace{\Pr[\widetilde{e} = \actualE]}_{\Pr[e = \actualE] = p^{-1}} \cdot \underbrace{\Pr [|f^{-1}(-\actualMsg \beta)| \neq 1]}_{\Pr[S = \emptyset]}\;.
\] 
From the above analysis of both cases, properties \ref{prop:uniform-tags} and \ref{prop:S-indep} are satisfied. 

Finally, we show property \ref{prop:small-loss}, i.e., $S = \{1\}$ with probability roughly $e^{-1}$. In particular, the term $\Pr[S = \{1\}]$ derived above can be estimated as \[
\sum_{i = 0}^{d - 1} (-1)^{i} \binom{p - 1}{i} p^{- i - 1}  \approx \sum_{i = 0}^{p - 1} (-1)^{i} \binom{p - 1}{i} p^{- i - 1} = \left(1 - \frac{1}{p}\right)^{p - 1} \approx e^{-1}.
\] The first approximation follows when $d$ is large enough, making the truncated terms relatively small. The equality follows from  binomial expansion. The last approximation follows if $p$ is large. This completes the 1-query case analysis.

\heading{Generalizing to $\qnum$ queries.} 
To extend our analysis to $\qnum$-query adversaries, we make a crucial observation from the analysis above. In particular, we observe that the approximated value $(1 - \frac{1}{p})^{p - 1}$ is {\em exactly} the probability of sampling a random function $f: \Zp \to \Zp$ and for a fixed $z \in \Zp$, $|f^{-1}(z)| = 1$. This follows by counting the number of such $f$ as follows: (1) pick the unique preimage $e$ of $z$ ($p$ possible ways), (2) for $e' \neq e$, pick the image $f(e') \neq z$ ($(p - 1)^{p - 1}$ ways). 

For $\qnum$-query adversaries, we can consider a simplified transcript of the form $((m_i, e_i)_{i \in [\qnum]}, S \subseteq [\qnum])$. For the probability of getting a particular transcript $((\actualMsg_i, \actualE_i)_{i \in [\qnum]}, \actualS)$, the corresponding event over the randomness $f, \beta, (\widetilde{e}_i)_{i \in [\qnum]}$ is: \begin{equation*}
	\left(\forall i \in \actualS : f^{-1}(-\actualMsg_i \beta) = \{\actualE_i\}\right) \eAnd \left(\forall i \notin \actualS : |f^{-1}(-\actualMsg_i \beta)| \neq 1 \eAnd \widetilde{e}_i = \actualE_i \right) \;.
\end{equation*} Our analysis then relies on the observed similarities between random polynomials and random functions $\Zp \to \Zp$ on events regarding the number of preimages for {\em any distinct} elements $z_1, \ldots, z_\qnum$ (setting $z_i = -\actualMsg_i \beta$ in the analysis). Essentially, we show that for $d = \Theta(\qnum)$, any {\em distinct} $z_1, \ldots, z_\qnum$ and {\em distinct} $\actualE_1,\ldots, \actualE_\qnum$ 
\ifnum\eprintversion=1
\[
	\underset{f \getsr \Zp^{d}[\varX]}{\Pr}\left[ \begin{array}{l}
		\forall i \in  \actualS: |f^{-1}(z_i)| = 1\\
		\forall i \notin \actualS: |f^{-1}(z_i)| \neq 1\\
	\end{array}\middle| \begin{array}{l}
	\forall i \in \actualS :
	 f(\actualE_i) = z_i 
	\end{array}\right] \approx \underset{f : \Zp \to \Zp}{\Pr}\left[\begin{array}{l}
	\forall i \in  \actualS: |f^{-1}(z_i)| = 1\\
	\forall i \notin \actualS: |f^{-1}(z_i)| \neq 1\\
	\end{array}\right].
\]\else
\begin{align*}
	&\underset{f \getsr \Zp^{d}[\varX]}{\Pr}\left[ \begin{array}{l}
		(\forall i \in  \actualS: |f^{-1}(z_i)| = 1) \eAnd (\forall i \notin \actualS: |f^{-1}(z_i)| \neq 1)\\
	\end{array}\middle| \begin{array}{l}
		\forall i \in \actualS : f(\actualE_i) = z_i 
	\end{array}\right] \\
	&\approx \underset{f : \Zp \to \Zp}{\Pr}\left[\begin{array}{l}
		(\forall i \in  \actualS: |f^{-1}(z_i)| = 1) \eAnd 
		(\forall i \notin \actualS: |f^{-1}(z_i)| \neq 1)
	\end{array}\right].
\end{align*}
\fi
This says that the extra conditioning on which preimage $\actualE_i$ is used does not influence the distribution much.
We abstract out the calculation of these probabilities and give our reduction with formal analysis in \Cref{sec:random-functions-poly,sec:full-proof}, respectively.

Finally, we remark that our strategy applies only when the messages $m_1, \ldots, \allowbreak m_\qnum$ are distinct; otherwise, two queries $m_i = m_j$ can be answered with the same $e_i = e_j$ with very high probability (and distinguishable from the original game). This limitation turns out to be inherent, and no tight reduction exists for the general case. We explain in the next section why such a loss cannot be avoided.

\subsection{Impossibility of Tight Reductions for General Messages}\label{sec:lower-bound-overview}

For our impossibility results, let us consider the supposed ``worst-case scenario" 
where {\em all the signing queries are identical,} and the adversary outputs a forgery with a tag $e^* = e_i$ for some $i \in [\qnum]$. From the sketched proof structure in \Cref{sec:abstract-prior-approaches}, the reduction can be viewed as committing to polynomials $f$ and $g$ through the group elements $\overline{G}_1$ and $\overline{H}$. 
Moreover, the tags $e_1, \ldots, e_\qnum$ should satisfy that $(\varX - e_i)$ divides $f + m_i g$. We can then categorize the tag $e_i$ as ``useful" for a message $m_i$ when $(\varX - e_i)$ only divides $f + m_i g$ but not $f + m' g$ for any $m' \neq m_i$; such tags allow extracting the SDH solution. 
If an adversary can force only a small portion (say $O(1)$) of the tags to be useful and later forge using a uniformly random tag from $e_1, \ldots, e_\qnum$, then the reduction can reliably extract the SDH solution only with $O(1/\qnum)$ probability, leading to a $O(\qnum)$ loss. 

Our key observation is that for any fixed polynomials $f$ and $g$, no tag $e$ can be useful for two messages $m \neq m'$ (i.e., if $(\varX - e)$ divides $f + m g$ but not $f + m' g$ for $m' \neq m$, $e$ cannot be useful for $m'$). Hence, for a uniformly random message $m \getsr \Zp$, there exists {\em at most one} useful tag on average. Therefore, if an adversary makes $\qnum$ identical signing queries on a random $m$ and forges using a tag $e_i$ for a random $i \getsr [\qnum]$, the reduction can only find a SDH solution with $1/\qnum$ probability. Otherwise, the reduction has to output a tag $e_i$ such that $(\varX - e_i) \ndivides f + m g$, in which case it inherently breaks the SDH assumption. 

In \Cref{sec:lower-bound}, we formalize the sketched intuition as a meta-reduction against any {\em algebraic} reductions (i.e., it outputs group elements along with their algebraic representations). We also extend the impossibility to: (a) reductions to variants of the SDH assumption, (b) general verification key $\vk$ (this overview assumes $\vk = X_{2,1}$ taken from the SDH instance), (c) rewinding reductions \ifnum\eprintversion=1(a reduction running the adversary $r$ times incurs at least $\Theta(\qnum/r^2)$ factor loss)\fi, and (d) fine-grained adversaries making queries on at least $k < \qnum$ distinct messages. 

%% file: tex_files/prelims.tex
\section{Preliminaries}\label{sec:prelims}

\heading{Notations.} Let $[n .. m] = \{n, n+1, \ldots, m\}$ for any $n, m \in \bbZ$ where $n \leq m$ and $[n] = [1 .. n]$ for any $n \geq 1$. 
We use $\secpar$ as the security parameter, denote vectors with the arrow symbols (e.g., $\vec{v}, \vec{H}$), write $\vec{v}_{S}$ for the subvector $(v_i)_{i \in S}$ of $\vec{v}$. We define an error factor $\Error(a, l; p) := \prod_{k = 0}^{l-1} \left(1 - \frac{a + k}{p}\right) = p^{-l} \cdot \frac{(p - a)!}{(p - a - l)!}$ (used repeatedly in our proofs) for integers $p \geq a, l \geq 0$. For boolean-or and boolean-and, we use the operators $\cup$ and $\cap$, respectively. 

\heading{Polynomials.} We denote formal variables with sans-serif letters (e.g., $\varX, \varY$). For any prime modulus $p$, let $\Zp[\varX]$ (and $\Zp^d[\varX]$) denote the ring of ({\em monic} {degree-$d$}) polynomials $g(\varX) = \sum_{i = 0}^d a_i \varX^i$ with coefficients in $\Zp$ and $\deg g = d$ as the degree of $g(\varX)$. We often refer to $g(\varX)$ using the shorthand $g$. 
Denote $\Root(f ) := \{\alpha \in \Zp: f(\alpha) = 0\}$ as the set {\em (not a multiset)} of zeros of $f \in \Zp[\varX]$. 
\ifnum\eprintversion=1 We refer to the notation as a set and {\em not a multiset}; accordingly, if $(\varX - a)^2 | f(\varX)$, we still count $\alpha$ as one zero. \fi


\heading{Bilinear pairing groups.}  For any group $\Group$ with prime-order $p$, we write $0_\Group$ as the identity element, $\Group^*$ as the set of generators of $\Group$, and group elements and scalars with upper- and lower- case letters, respectively. We adopt additive notations. For $G \in \Group^*$ and $H\in \Group$, we use $\mathrm{DL}_G(H) \in \Zp$ as the discrete logarithm of $H$ base $G$. 
A {\em bilinear group parameter generator} is a probabilistic algorithm $\GroupGen$ with input $1^\secpar$ and output $\pGroupDescript$. Here, $\GOne, \GTwo$ and $\GTarg$ are groups of $\secpar$-bit prime order $p$, and $\pairingF : \GOne \times \GTwo \to \GTarg$ is a bilinear map satisfying (1) {\em bilinearity}, i.e., for any $A \in \GOne, B \in \GTwo$ and $x, y \in \Zp$, $\pairingOp{xA}{yB} = \allowbreak (xy) \cdot \pairingOp{A}{B}$, and (2) {\em non-triviality}, i.e., for any $G_1 \in \GOne^*, G_2 \in \GTwo^*$, $\pairingOp{G_1}{G_2} \in \GTarg^*$. 

\heading{Signature schemes.} A digital signature scheme $\SignS$ is a tuple of algorithms $(\Setup{\SignS}, \allowbreak  \KG{\SignS}, \Sign{\SignS}, \allowbreak \Ver{\SignS})$ 
\ifnum\eprintversion=1
with the following syntax: 
\begin{itemize}[topsep=0pt]
    \item The setup algorithm $\Setup{\SignS}(1^\lambda)$ generates public parameters $\param$ which are implicit inputs to all other algorithms and also define the message space $\SignS.\msgSp = \SignS.\msgSp(\param)$. The key generation algorithm $\KG{\SignS}()$ outputs the signing and verification keys $(\sk, \vk)$. 
    \item The (possibly randomized) signing algorithm $\Sign{\SignS}(\sk, m)$ takes as inputs, the signing key $\sk$ and a message $m \in \SignS.\msgSp$, and outputs a signature $\sigma$. 
    \item The deterministic verification algorithm outputs a bit $\Ver{\SignS}(\vk, m, \sigma)$.
\end{itemize}
Correctness says that for any public parameters $\param$ and key pair $(\sk, \vk)$ generated from the setup and key generation algorithms and any message $m \in \msgSp$, the signature $\sigma \getsr \Sign{\SignS}(\sk, m)$ always satisfies $\Ver{\SignS}(\vk, m, \sigma) = 1$.  We consider {\em strong unforgeability} security, defined by the game $\SUF^\advA_{\SignS}(\secpar)$ (given in \Cref{fig:suf-game}), and for any adversary $\advA$, its advantage is \ifnum\eprintversion=1\[
\Adv^\suf_{\SignS}(\advA, \secpar) := \Pr[\SUF^\advA_{\SignS}(\secpar) = 1]\;.
\]\else
$\Adv^\suf_{\SignS}(\advA, \secpar) := \Pr[\SUF^\advA_{\SignS}(\secpar) = 1]$.
\fi
\begin{figure}[t]
	\centering
	\begin{pchstack}[boxed]
		\procedure{\scriptsize Game $\SUF^\advA_{\SignS}(\secpar)$:}{
			\mathsf{Sigs} \gets \emptyset\; ; \; \param \getsr \Setup{\SignS}(1^\secpar)\\
			(\sk, \vk) \getsr \KG{\SignS}()\\
			(m^*, \sigma^*) \getsr \advA^\oraS(\param, \vk)\\
			\pcreturn (m^*, \sigma^*) \notin \mathsf{Sigs} \eAnd \Ver{\SignS}(\vk, m^*, \sigma^*) = 1 
		}
		\procedure{\scriptsize Oracle $\oraS(m)$:}{
			\sigma \getsr \Sign{\SignS}(\sk, m)\\
			\pcif \sigma \neq \bot \pcthen \\
			\pcind \mathsf{Sigs} \gets \mathsf{Sigs} \cup \{(m, \sigma)\}\\
			\pcreturn \sigma
		}
	\end{pchstack}
	\caption{Strong unforgeability game}
	\label{fig:suf-game}
\end{figure}
\else 
such that: the setup $\Setup{\SignS}(1^\lambda)$ generates public parameters $\param$, which are implicit inputs to all other algorithms and define the message space $\SignS.\msgSp = \SignS.\msgSp(\param)$; the key generation $\KG{\SignS}()$ outputs the signing and verification keys $(\sk, \vk)$; the randomized signing $\Sign{\SignS}(\sk, m \in \SignS.\msgSp)$ outputs a signature $\sigma$ of $m$; and the verification $\Ver{\SignS}(\vk, m, \sigma)$ outputs a bit. We consider standard {\em strong unforgeability}, defined in the full version.
\fi

\heading{Security assumptions.} We use the {\em $(d_1, d_2)$-Strong Diffie-Hellman} ($(d_1, d_2)$-$\SDH$) assumption in a format supporting type-3 pairings. This generalizes the $\qSDH$ assumption defined by Boneh and Boyen~\cite{JC:BonBoy08}, obtained by setting $d_1 = \qnum$ and $d_2 = 1$. We also consider the {\em $(d_1, d_2)$--Discrete Logarithm} ($(d_1, d_2)$-$\DL$)~\cite{TCC:Lipmaa12} assumption, which is implied by the $(d_1, d_2)$-$\SDH$ assumption. We simply refer to $(1, 1)$-$\DL$ as the {\em Discrete Logarithm} ($\DL$) assumption. The corresponding games \ifnum\eprintversion=1 $(d_1, d_2)\text{-}(\SDH/\DL)$ \fi are formalized in \Cref{fig:assumptions}, and the corresponding advantages for any adversary $\advA$ are \ifnum\eprintversion=1\[
\Adv^{(d_1, d_2)\text{-}(\sdh/\dl)}_{\GroupGen}(\advA, \secpar) = \Pr[(d_1, d_2)\text{-}(\SDH/\DL)^{\advA}_{\GroupGen}(\secpar) = 1]\;.
\] \else
$\Adv^{(d_1, d_2)\text{-}(\sdh/\dl)}_{\GroupGen}(\advA, \secpar) = {\Pr[(d_1, d_2)\text{-}(\SDH/\DL)^{\advA}_{\GroupGen}(\secpar) = 1]}$.
\fi

\begin{figure}[t]
	\centering
	\begin{pchstack}[boxed]
		\procedure{\scriptsize Game $(d_1, d_2)$-$\SDH^\advA_{\GroupGen}(\secpar)$:}{
			\param = \pGroupDescript \getsr \GroupGen(1^\secpar)\\
			G_1 \getsr \GOne^*; G_2 \getsr \GTwo^*; x \getsr \Zp\\
			\ifnum\eprintversion=0
			\mathsf{input} \gets (\param, G_1, (x^i G_1)_{i \in [d_1]}, G_2, (x^i G_2)_{i \in [d_2]})\\
			(e, Z) \getsr \advA(\mathsf{input})\\
			\else
			(e, Z) \getsr \advA(\param, G_1, (x^i G_1)_{i \in [d_1]}, G_2, (x^i G_2)_{i \in [d_2]})\\
			\fi
			\pcreturn \textstyle \left(Z = \frac{1}{x + e} G_1\right)}
		\procedure{\scriptsize Game $(d_1, d_2)$-$\DL^\advA_{\GroupGen}(\secpar)$:}{
			\param = \pGroupDescript \getsr \GroupGen(1^\secpar)\\
			G_1 \getsr \GOne^*; G_2 \getsr \GTwo^*; x \getsr \Zp\\
			\ifnum\eprintversion=0
			\mathsf{input} \gets (\param, G_1, (x^i G_1)_{i \in [d_1]}, G_2, (x^i G_2)_{i \in [d_2]})\\
			x' \getsr \advA(\mathsf{input})\\
			\else
			x' \getsr \advA(\param, G_1, (x^i G_1)_{i \in [d_1]}, G_2, (x^i G_2)_{i \in [d_2]})\\
			\fi
			\pcreturn (x = x')}
	\end{pchstack}
	\caption{Games $(d_1, d_2)$-$\SDH$ and $(d_1, d_2)\text{-}\DL$}
	\label{fig:assumptions}
\end{figure}

\begin{remark}\label{remark:compute-sdh}
    Our security proofs will rely on the observation (due to Boneh and Boyen~\cite{JC:BonBoy08}) that, given group elements $(x^i G_1)_{i \in [0, d]}$ and $A = \frac{f(x)}{x - e} G_1$ for any $f \in \Zp[\varX]$ of degree at most $d$ and a scalar $e \neq x$, if $f(e) \neq 0$, one can efficiently compute a $d$-$\SDH$ solution 
    $(-e, \frac{1}{x-e}G_1)$. Due to the polynomial remainder theorem, we can write $f(\varX) = g(\varX) (\varX - e) + r$ where $r = f(e) \neq 0$ and $\deg g \leq d - 1$. Thus, $A = (g(x) + \frac{r}{x - e})G_1$. Then, the solution 
    \ifnum\eprintversion=1
    \[
         \frac{1}{x - e} \; G_1 = \frac{A - g(x) G_1}{r} \;,
    \]is efficiently computable, as $g(x) G_1$ can be computed from the given inputs.\else
    $\frac{1}{x - e} \; G_1$ is efficiently computable.
    \fi  
\end{remark}

\heading{Algebraic group model (AGM).} Our impossibility results on tightness require the AGM. We assume that the reductions are algebraic~\cite{C:FucKilLos18}, i.e., when outputting a group element, it also provides the representation as a linear combination of previously seen group elements to the meta-reduction. 

\ifnum\eprintversion=1
\heading{Useful inequalities.} 
We recall several useful inequalities. The first one, which is used throughout the paper, is due to Bonferroni's giving upper and lower bounds for probability of union of sets/events through the inclusion-exclusion principle. The second is a lower bound on factorials due to Sterling's approximation. The third lemma follows from calculus and implies the fourth lemma.
\begin{lemma}\label{lemma:inclusion-exclusion-bound}
	For subsets $A_1,\ldots,A_n \subseteq \calU$ of some universe $\calU$, let $1 \leq K \leq n$ be an even integer and $\Pr$ denotes $\Pr_{u \getsr \calU}$. Then,
	\begin{align*}
		\sum_{\substack{B \subseteq [n] : \\ 1\leq |B| \leq K}} (-1)^{|B| - 1}\Pr\left[ \bigcap_{i \in B} A_i \right] &\leq \Pr\left[ \bigcup_{i \in [n]} A_i \right] = \sum_{\emptyset \neq B \subseteq [n]} (-1)^{|B| - 1} \Pr\left[ \bigcap_{i \in B} A_i \right] \\
		&\leq \sum_{\substack{B \subseteq [n] : \\ 1\leq |B| \leq K + 1}} (-1)^{|B| - 1} \Pr\left[ \bigcap_{i \in B} A_i \right]\;.
	\end{align*}
\end{lemma}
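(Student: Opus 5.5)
These are the classical Bonferroni inequalities. I will prove them by a pointwise counting argument: show the corresponding inequality between indicator functions for every fixed $u \in \calU$, then average over $u \getsr \calU$. Since probability is linear and each side is a signed sum of probabilities of intersections, the pointwise statement is enough.

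Fix $u \in \calU$ and let $m := |\{i \in [n] : u \in A_i\}|$.

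If $m = 0$, then $u$ lies in no intersection $\bigcap_{i\in B}A_i$ with $B \neq \emptyset$, and $u \notin \bigcup_i A_i$. So every term is $0$ and all three expressions agree.

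If $m \ge 1$, the indicator of $\bigcup_i A_i$ at $u$ is $1$. The point $u$ lies in $\bigcap_{i \in B} A_i$ exactly when $B$ is a subset of the $m$ indices containing $u$. Hence the truncated sum up to size $L$ contributes
\[
 S_L := \sum_{j=1}^{\min(L,m)} (-1)^{j-1}\binom{m}{j}.
\]
The key identity is
\[
 1 - S_L = \sum_{j=0}^{L}(-1)^j\binom{m}{j} = (-1)^L\binom{m-1}{L} \qquad (L < m).
\]
I will prove it by induction on $L$ using Pascal's rule $\binom{m}{j} = \binom{m-1}{j}+\binom{m-1}{j-1}$, which telescopes the alternating sum. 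For $L \ge m$, the binomial theorem gives $1 - S_L = (1-1)^m = 0$.

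Now apply this with $K$ even:
\begin{itemize}
\item For $L = K$, we get $1 - S_K = \binom{m-1}{K} \ge 0$, i.e. $S_K \le 1$. This is the lower bound.
\item For $L = K+1$, which is odd, we get $1 - S_{K+1} = -\binom{m-1}{K+1} \le 0$, i.e. $S_{K+1} \ge 1$. This is the upper bound.
\end{itemize}
Both bounds also hold with equality when $L \ge m$.

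The middle equality (full inclusion–exclusion) is the case $L = n \ge m$. There $S_n = 1$ whenever $m \ge 1$, and both sides are $0$ when $m = 0$.

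Finally, take the expectation over $u \getsr \calU$. By linearity, $\Pr\bigl[\bigcap_{i\in B}A_i\bigr] = \mathbb{E}\bigl[\mathbbm{1}[u\in \bigcap_{i\in B}A_i]\bigr]$, so the pointwise inequalities become the three stated relations. The hypothesis $K \le n$ only ensures the truncations make sense. If $K + 1 > n$, the upper sum is simply the full sum, which equals the probability of the union.

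The only nontrivial step is the partial alternating binomial sum identity. It is a routine induction, so I do not expect a real obstacle.
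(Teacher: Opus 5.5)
Your proof is correct. Reducing pointwise to the identity $\sum_{j=0}^{L}(-1)^j\binom{m}{j} = (-1)^L\binom{m-1}{L}$ for $m \ge 1$, then averaging over $u$, is the standard proof of the Bonferroni inequalities. Your parity bookkeeping (even truncation $K$ gives the lower bound, odd truncation $K+1$ gives the upper bound) is also exactly the form in which the paper later applies the lemma.

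The paper itself gives no proof; it only recalls the statement as Bonferroni's inequalities. So there is no competing argument to compare against, and yours fills that gap.

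Two minor remarks. First, once $m \ge 1$ the identity holds for every $L \ge 0$, because $\binom{m-1}{L} = 0$ for $L \ge m$. Your separate treatment of $L < m$ and $L \ge m$ is therefore unnecessary. Second, the argument never uses that $u$ is uniform on $\calU$, so the inequalities hold for any probability measure.
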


\begin{lemma}\label{lem:sterling-lower-bound}
	For positive integer $k$,  $k! \geq \left(\frac{k}{e}\right)^k$
\end{lemma}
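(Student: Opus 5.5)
The statement to prove is the Stirling-type bound $k! \geq (k/e)^k$ for every positive integer $k$. I will derive it from the power series of the exponential function, with no asymptotic approximation.

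For every real $t \geq 0$ we have
\[
e^t = \sum_{j \geq 0} \frac{t^j}{j!}.
\]
All terms of this series are nonnegative. Dropping every term except $j = k$ therefore gives
\[
e^t \geq \frac{t^k}{k!}.
\]
Setting $t = k$ yields $e^k \geq k^k / k!$. Rearranging, $k! \geq k^k e^{-k} = (k/e)^k$, which is the claim.

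No step is a real obstacle. The only point to check is that the series has nonnegative terms at $t = k > 0$, which is what justifies discarding all but one term.

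As a backup, one can argue by induction on $k$. The base case is $1! = 1 \geq 1/e$. For the inductive step, multiply the hypothesis by $k+1$. It then suffices to check
\[
(k+1)\left(\frac{k}{e}\right)^k \geq \left(\frac{k+1}{e}\right)^{k+1},
\]
which is equivalent to $(1 + 1/k)^k \leq e$. This holds because $\ln(1+x) \leq x$ for all $x > -1$.
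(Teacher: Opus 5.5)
Your proof is correct, and it takes a different route from the paper. The paper gives no argument for this lemma; it only recalls it as a consequence of Stirling's approximation.

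Your power-series argument is self-contained and non-asymptotic. Bounding $e^k$ from below by the single term $k^k/k!$ of its series gives the inequality for every $k \geq 1$ in one line. An appeal to Stirling's approximation, read literally as $k! \sim \sqrt{2\pi k}\,(k/e)^k$, does not by itself give a bound valid for all $k$. You would need a non-asymptotic version such as Robbins' bound $k! \geq \sqrt{2\pi k}\,(k/e)^k$. That bound is sharper by the factor $\sqrt{2\pi k}$, but it takes considerably more work to prove.

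The paper only ever uses the weak form, to get $|S|^i/i! \leq (e|S|/i)^i$ in the proof of \Cref{lemma:key-lemma-size}. So the extra factor is never needed, and your argument supplies exactly what is used.

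Your inductive backup is also valid. The reduction of the inductive step to $(1+1/k)^k \leq e$ is correct, and that inequality follows from $\ln(1+x) \leq x$, i.e., from the inequality $1+x \leq e^x$ in \Cref{lem:log-expo-bounds} after taking logarithms.
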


\begin{lemma}\label{lem:log-expo-bounds}
	For any real $x$, $1 + x \leq e^x$. For any $x > -1$, $\ln(1 + x) \geq \frac{x}{1 + x}$. 
\end{lemma}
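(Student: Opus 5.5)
The plan is to prove the first inequality by elementary calculus and then obtain the second from the first by a change of variables, so that no separate analysis of the logarithm is needed.

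For the first claim, define $h(x) := e^x - 1 - x$ on $\mathbb{R}$. Its derivative is $h'(x) = e^x - 1$, which is negative for $x<0$, zero at $x=0$, and positive for $x>0$. So $h$ is decreasing on $(-\infty,0]$ and increasing on $[0,\infty)$, and its global minimum is $h(0) = 0$. Hence $h(x)\ge 0$, i.e.\ $1+x \le e^x$, for every real $x$. Equivalently, one can note that $e^x$ is convex and $1+x$ is its tangent line at $0$.

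For the second claim, fix $x > -1$ and apply the first inequality at the real number $y := -\frac{x}{1+x}$, which is well defined since $1+x>0$. This gives
\[
1 - \frac{x}{1+x} = \frac{1}{1+x} \le e^{-x/(1+x)} .
\]
Both sides are strictly positive because $1+x>0$. Since $\ln$ is increasing, taking logarithms yields $-\ln(1+x) \le -\frac{x}{1+x}$, which rearranges to $\ln(1+x) \ge \frac{x}{1+x}$.

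There is no real obstacle here. The only point needing care is checking that the substitution is legitimate: $1+x>0$, so $y$ is defined and the left-hand side is positive, which is what allows taking logarithms while preserving the direction of the inequality.
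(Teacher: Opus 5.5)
Your proof is correct and follows essentially the paper's route: the paper also treats $1+y\le e^{y}$ as a standard calculus fact and derives the logarithmic bound by substituting into it. The only difference is the substitution. The paper uses $y=-\ln(1+x)$, so that $e^{y}=\frac{1}{1+x}$ and the bound follows by rearranging, with no logarithm taken afterwards. You use $y=-\frac{x}{1+x}$ and then apply monotonicity of $\ln$; both work, and you also supply the calculus proof of the first inequality, which the paper omits.
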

\begin{proof}
	The second inequality follows by substituting $x = - \ln(1 + x)$ (which is well-defined for $x > -1$) into the first inequality, giving \[
	1 - \ln(1 + x) \leq \frac{1}{1 + x} \iff \ln(1 + x) \geq \frac{x}{1 + x}\;. \pcind[2] \qed
	\]
\end{proof}

\begin{lemma}\label{lem:lower-bound-expo}
	For real numbers $0 < l, k < p$, $\left(1- \frac{l}{p}\right)^{p -k} \geq e^{-l \cdot \frac{p - k}{p - l}}$. 
\end{lemma}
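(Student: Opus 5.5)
We prove \Cref{lem:lower-bound-expo} by taking logarithms and applying the second inequality of \Cref{lem:log-expo-bounds}. Since $0 < l < p$, the base $1 - \frac{l}{p}$ lies in $(0,1)$, so its logarithm is well defined. Also $p - k > 0$. The claim is therefore equivalent to
\[
(p - k)\ln\left(1 - \frac{l}{p}\right) \geq -l \cdot \frac{p-k}{p-l}\;,
\]
and after dividing by the positive quantity $p - k$, to $\ln\left(1 - \frac{l}{p}\right) \geq -\frac{l}{p - l}$.

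Next we apply $\ln(1+x) \geq \frac{x}{1+x}$ from \Cref{lem:log-expo-bounds} with $x = -\frac{l}{p}$. This is admissible because $0 < l < p$ gives $-1 < x < 0$. The right-hand side becomes
\[
\frac{-l/p}{1 - l/p} = -\frac{l}{p-l}\;,
\]
which is exactly the bound needed. Multiplying back by $p - k > 0$ and exponentiating, using that $\exp$ is increasing, gives $\left(1 - \frac{l}{p}\right)^{p-k} \geq e^{-l\cdot\frac{p-k}{p-l}}$.

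No step is a real obstacle. The only points to check are that the hypotheses $l < p$ and $k < p$ keep the logarithm defined and let us divide by $p - k$ without flipping the inequality.
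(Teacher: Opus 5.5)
Your proof is correct and follows the paper's argument: the paper writes $\left(1-\frac{l}{p}\right)^{p-k} = e^{(p-k)\ln(1-l/p)}$, applies $\ln(1+x) \geq \frac{x}{1+x}$ from \Cref{lem:log-expo-bounds} with $x = -l/p$, and uses $p-k>0$ together with the fact that $\exp$ is increasing. Your intermediate inequality $\ln(1-l/p) \geq -\frac{l}{p-l}$ is also stated correctly, whereas the paper's display has a stray $\ln$ in front of $\frac{-l/p}{1-l/p}$.
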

\begin{proof}
	By setting $x = - \frac{l}{p}$ in \Cref{lem:log-expo-bounds}, we have \begin{align*}
		\left(1- \frac{l}{p}\right)^{p -k} &= e^{(p - k) \cdot \ln \left(1- \frac{l}{p}\right)} \geq e^{(p - k) \cdot \ln \frac{- l / p}{1 - l/p}} = e^{-l \cdot \frac{p - k}{p - l}}\;. \pcind[2] \qed
	\end{align*}
\end{proof}
\else
\heading{Useful inequalities.} 
We recall a lower bound with the proof  in the full version.
\begin{lemma}\label{lem:lower-bound-expo}
	For real numbers $0 < l, k < p$, $\left(1- \frac{l}{p}\right)^{p -k} \geq e^{-l \cdot \frac{p - k}{p - l}}$. 
\end{lemma}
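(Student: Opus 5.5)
The plan is to reduce the statement to the scalar inequality $\ln(1 - l/p) \geq -\frac{l}{p-l}$ and then exponentiate. Since $0 < l < p$, the quantity $1 - \frac{l}{p}$ lies in $(0,1)$. Hence the left-hand side is well defined and strictly positive, and we may write
\[
\left(1 - \frac{l}{p}\right)^{p-k} = \exp\!\left((p-k)\ln\!\left(1 - \frac{l}{p}\right)\right).
\]
Because $\exp$ is monotone increasing, it suffices to show
\[
(p-k)\ln\!\left(1-\frac{l}{p}\right) \;\geq\; -l\cdot\frac{p-k}{p-l}.
\]

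For the key inequality I would use the standard calculus fact $\ln(1+x) \geq \frac{x}{1+x}$ for $x > -1$. It follows from $1 + y \leq e^{y}$ applied with $y = -\ln(1+x)$: this gives $1 - \ln(1+x) \leq \frac{1}{1+x}$, and rearranging yields $\ln(1+x) \geq 1 - \frac{1}{1+x} = \frac{x}{1+x}$. Now set $x = -l/p$, which satisfies $x > -1$ because $l < p$. Then
\[
\ln\!\left(1 - \frac{l}{p}\right) \;\geq\; \frac{-l/p}{1 - l/p} \;=\; -\frac{l}{p-l}.
\]

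To finish, multiply this inequality by $p - k$. The one point that needs care is the sign of this factor, since multiplying by a negative number would reverse the inequality. The hypothesis $k < p$ guarantees $p - k > 0$, so the direction is preserved and we obtain exactly the exponent bound required above. Exponentiating both sides then gives the lemma.

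There is no real obstacle in this argument. The only things to check are the domain conditions, which the hypotheses $0 < l, k < p$ supply: $x = -l/p > -1$ for the logarithm bound, and $p - k > 0$ for multiplying without flipping the inequality.
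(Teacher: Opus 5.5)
Your proof is correct and takes the same route as the paper: the paper also derives $\ln(1+x)\ge \frac{x}{1+x}$ from $1+y\le e^{y}$ with $y=-\ln(1+x)$, then substitutes $x=-l/p$, multiplies by $p-k$, and exponentiates. You are slightly more explicit than the paper on two points: you check that $p-k>0$ so multiplying preserves the inequality, and your chain of inequalities avoids the stray $\ln$ that appears in the middle term of the paper's displayed computation.
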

\fi

\heading{BBS signatures.} The signature scheme $\BBS = \BBS[\GroupGen, \ell]$ in \Cref{fig:bbs-signature} is parameterized by \ifnum\eprintversion=1 a group parameter generator\fi  $\GroupGen$ and the message length $\ell \geq 1$. We make a syntactical change replacing $(x + e)$ with $(x - e)$.
\ifnum\eprintversion=1 This is for readability since we can write $e$ as zero of some polynomial $f$ if $(\varX - e) | f$ instead of $-e$. \fi
When $x - e = 0$, we denote $1/0 = 0$, following the notations in~\cite{EC:TesZhu23b}.
We also consider derandomized BBS, denoted $\DBBS = \DBBS[\GroupGen, \ell, \PRF]$ where $e$ is derived by evaluating the pseudorandom function $\PRF$ on $\vec{m}$ and the PRF key that is sampled at key generation. \smallskip

\begin{figure}[t]
	\begin{pchstack}[center,boxed,space=2pt]
		\begin{pcvstack}
			\procedure{\scriptsize Algorithm $\Setup{\BBS}(1^\secpar):$}{
				(p, \GOne, \GTwo, \GTarg, \pairingF) \getsr \GroupGen(1^\secpar)\\
				G_1 \getsr \GOne^*\; ; \; G_2 \getsr \GTwo^*\; ; \; \vec{H} \getsr \GOne^\ell\\
				\pcreturn \param \gets (p, G_1, \vec{H}, G_2,\GOne, \GTwo, \GTarg, \pairingF)}
			\procedure{\scriptsize Algorithm $\KG{\BBS}():$}{
				x \getsr \Zp \ifnum\eprintversion=1 \\ \else \; ; \; \fi
				\pcreturn (\sk \gets x, \vk \gets x G_2)}
		\end{pcvstack}
		\begin{pcvstack}
			\procedure{\scriptsize Algorithm $\Sign{\BBS}(\sk = x, \vec{m} \in \Zp^\ell):$}{ 
				\textstyle e \getsr \Zp \; ; \; A \gets \frac{1}{x - e} \left(G_1 + \sum_{i = 1}^\ell \vec{m}[i] \vec{H}[i]\right)\\
				\pcreturn (A, e)}
			\procedure{\scriptsize Algorithm $\Ver{\BBS}(\vk, \vec{m}, \sigma = (A, e)):$}{ 
				\textstyle C \gets G_1 + \sum_{i = 1}^\ell \vec{m}[i] \vec{H}[i]\\
				\pcreturn \pairingOp{A}{\vk - e G_2} = \pairingOp{C}{G_2}}
		\end{pcvstack}
	\end{pchstack}
	
	\caption{Signature scheme $\BBS = \BBS[\GroupGen, \ell]$. }
	\label{fig:bbs-signature}
\end{figure}

%% file: tex_files/tight-security-statements.tex
\ifnum\eprintversion=1 \section{Tight Security Proofs of BBS Signatures}\label{sec:main-thm-statements}

In this section, we state our tight security guarantees for BBS signatures against adversaries with distinct signing queries in \Cref{sec:tight-thm-statements} and discuss our reduction running time in \Cref{sec:discuss-runtime}.

\subsection{Tight Concrete Security of BBS Signatures}
\else
\section{Tight Concrete Security of BBS Signatures}
\fi \label{sec:tight-thm-statements}
The following theorem establishes the tight SUF security of $\BBS$ signatures under the $d$-$\SDH$ assumption (with $d = \Theta(\qnum + \secpar)$) against adversaries making $\qnum$ \underline{distinct} signing queries. It also implies tight security for $\DBBS = \DBBS[\GroupGen, \ell, \PRF]$ by additionally assuming the security of PRF.

\begin{theorem}\label{thm:tight-security-statement}
	Let $\GroupGen$ be a group parameter generator outputting bilinear \ifnum\eprintversion=0\linebreak\fi groups of prime-order $p = p(\secpar)$, $\ell = \ell(\secpar)$, $\BBS = \BBS[\GroupGen, \ell]$, and  $T_\Group = T_\Group(\secpar) , T_p = T_p(\secpar)$ be the running time for group exponentiation and field operation in $\Zp$, respectively. For any $\SUF$ adversary $\advA$ making at most $\qnum = \qnum(\secpar)$ \underline{distinct} signing queries with running time $t_\advA = t_\advA(\secpar)$, there exist adversaries $ \advB_1, \advB_2, \advB_3,\advB_{\BadQ}$, and $\advB_{\Coll}$ such that for $d' =  \lceil(e^2 + 2)\qnum + 2\log_2 p\rceil + 1$, 
	\begin{align*}
		&\Adv^{\suf}_{\BBS}(\advA, \secpar) \leq e \cdot \Adv^{d'\text{-}\sdh}_{\GroupGen}(\advB_1, \secpar)  + \Adv^{\dl}_{\GroupGen}(\advB_{\Coll}, \secpar)  + \Adv^{\dl}_{\GroupGen}(\advB_{\BadQ}, \secpar) \\
		&\pcind[3] +  \Adv^{\dl}_{\GroupGen}(\advB_2, \secpar) + \Adv^{\qsdh}_{\GroupGen}(\advB_3, \secpar)  + \frac{(4e + 1)\qnum^2 + (8e + 2)\qnum + 2e + 6}{2p} \;.
	\end{align*}
	Also, $\advB_1$ and $\advB_3$ run in time roughly $t_\advA + O(\qnum^2(\log^2 \qnum \log_2 p + T_\Group + T_p))$ and  $t_\advA + O(\qnum^2(T_\Group + T_p))$, respectively; $\advB_2$, $\advB_{\BadQ}$ and $\advB_{\Coll}$ run in time roughly $t_\advA$. 
\end{theorem}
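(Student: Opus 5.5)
We will split the winning event of $\SUF^\advA_{\BBS}$ according to the forgery $(\vec m^*,(A^*,e^*))$ into a small number of cases, each handled by one of the adversaries in the statement. The first case is $e^*\notin\{e_i\}_{i\in[\qnum]}$. We will reduce it to $\qSDH$ via $\advB_3$ using the standard Boneh--Boyen style argument. That is, we sample all tags upfront, embed $\prod_i(\varX-e_i)$ into the generators, simulate every query by polynomial division, and extract from the forgery via \Cref{remark:compute-sdh}. This costs $O(\qnum^2)$ group/field operations.

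The degenerate events are charged to $\DL$ adversaries.
\begin{itemize}
\item A collision among the sampled/stash tags, or $e_i=x$, is handled by $\advB_{\Coll}$. Since $e_i=x$ reveals the secret key, a $\DL$ adversary embedding $x$ in $\vk$ detects it.
\item The forged message $\vec m^*$ (or a queried $\vec m_i$) making the relevant linear combination of the $\ell$ message generators collapse to the same polynomial is handled by $\advB_{\BadQ}$ and $\advB_2$. This is the "bad query" event; a $\DL$ adversary planting a DL challenge in a random $\vec H[j]$ exploits it, which reduces $\ell>1$ to the one-dimensional picture of the overview.
\end{itemize}
Statistical terms such as $\beta=0$ or repeated roots contribute to the additive $O(\qnum^2/p)$.

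The main case is $e^*=e_{i^*}$ with $\vec m^*\neq\vec m_{i^*}$. (If $\vec m^*=\vec m_{i^*}$, then $A^*=A_{i^*}$ by verification, contradicting freshness.) For this case we build $\advB_1$ exactly as in \Cref{sec:overview-tight-reduction} with $d=d'-\qnum-1$. The setup is a random monic $f$ of degree $d$, $\beta\getsr\Zp^*$, a stash $\widetilde e_1,\ldots,\widetilde e_\qnum$, $\overline G_1=f(x)\prod_i(x-\widetilde e_i)G_1$, and $\overline H$ built from random multiples of $\prod_i(x-\widetilde e_i)$, so that $\vec m\mapsto\sum_j\vec m[j]\beta_j$ plays the role of $m\beta$. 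Query $i$ uses the unique root of $f+\mu_i$ if it exists, and otherwise $\widetilde e_i$. Root finding for a degree-$\Theta(\qnum+\log p)$ polynomial over $\Zp$ gives the $\log^2\qnum\log p$ factor. On a forgery with $i^*\in S$, distinctness and the non-degeneracy events give $(\varX-e^*)\nmid (f+\mu^*)\prod(\varX-\widetilde e_i)$, so \Cref{remark:compute-sdh} extracts. It then remains to show $\Pr[\advB_1\text{ wins}]\ge e^{-1}\Pr[\text{case}]-\epsilon$.

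The core step, and the main obstacle, is an H-coefficient argument. We compare the real transcript $(\param,\vk,(\vec m_i,e_i)_i)$, augmented with an independently sampled "ideal" $S$, to the reduction's transcript with its true $S$. For every good transcript we need the ratio of probabilities to be at least $1-\epsilon$, and $\Pr[i^*\in S]\ge e^{-1}$ in the ideal world. We will prove this by the inclusion--exclusion computation of the overview, generalized to $\qnum$ distinct targets $z_i=-\mu_i$. Via Bonferroni truncation (\Cref{lemma:inclusion-exclusion-bound}) we compare the polynomial probabilities to the corresponding random-function probabilities. Independence of $f$'s values on up to $d$ points is exact, so the truncation error is bounded by a tail $\sum_{k>K}\binom{p}{k}p^{-k}\cdot(\ldots)\le (e\qnum/K)^K$ using \Cref{lem:sterling-lower-bound}. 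This is negligible (at most $p^{-2}$) once $K\approx(e^2+1)\qnum+2\log_2p$, which explains $d'$. \Cref{lem:lower-bound-expo} and $\Error(\cdot)$ then lower-bound the random-function probability $\Pr[|f^{-1}(z_{i^*})|=1]\approx(1-1/p)^{p-1}\ge e^{-1}(1-O(\qnum/p))$. Multiplying the advantage by $e$ and collecting the error terms gives the stated bound.
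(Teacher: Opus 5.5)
Your plan follows the paper's proof. You split off $e^*\notin\{e_i\}$ via $\qnum$-SDH as in Tessaro--Zhu, charge degenerate message events to DL, and build the stash-based $\advB_1$ from a random polynomial of degree $\Theta(\qnum+\log p)$. You then compare transcripts by the H-coefficient technique against an independently sampled $S\sim\eta$, using Bonferroni bounds and exact $d$-wise independence. The only structural difference is that you embed $\ell>1$ directly via $\vec H[j]=\beta_j\prod_i(x-\widetilde e_i)G_1$, instead of first reducing to $\ell=1$ as in \Cref{lemma:tight-proof-extend}. This is equivalent: the paper's $\Coll$ event becomes a collision among your $\mu_i$.

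The estimate you justify incorrectly is the comparison step, which is the heart of the proof. ``Negligible (at most $p^{-2}$)'' is not what the H-coefficient ratio needs. The error must be small \emph{relative} to $p^{-|S|-1}\eta(S)$, and $\eta(S)$ is exponentially small in $\qnum$: for example $\eta([\qnum])\approx e^{-\qnum}$, and the paper only has the lower bound $\eta(S)\ge e^{-\qnum-1}\Error(1,|S|;p)$ from \Cref{corol:lower-bound-eta}. So $\eta(S)$ lies far below $p^{-2}$ once $\qnum>2\ln p$, and an absolute error of $p^{-2}$ proves nothing.

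Moreover, the constraints $|f^{-1}(z_i)|\neq 1$ for $i\notin S$ force an outer exact inclusion--exclusion over $T\subseteq[\qnum]\setminus S$. Each of its $2^{\qnum-|S|}$ alternating terms carries its own inner Bonferroni error. Hence you need the inner \emph{multiplicative} error $\varepsilon$ to satisfy $\varepsilon\,2^{\qnum}\ll e^{-\qnum-1}$.

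Your choice of $K$ does deliver this, because each tail term is at most $e^{-i}$ for $i>K\ge e^2\qnum$. This is exactly how the paper gets $\varepsilon=e^{-(e^2-1)\qnum-\log_2 p}$ in \Cref{lemma:key-lemma-size}, and then $\varepsilon\,2^{\qnum}\le\varepsilon'\,\Error(1,|S|;p)^{-1}\eta(S)$ in \Cref{lemma:size-f}. So the repair is bookkeeping, but the relative-error argument must be stated explicitly.

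A smaller point concerns your bad-event accounting:
\begin{itemize}
\item Tag collisions and $e_i=x$ are purely statistical in the real game, with probability at most $\binom{\qnum+1}{2}/p$.
\item A DL adversary that embeds $x$ in $\vk$ could not even answer signing queries, so the reduction you describe for these events does not exist.
\item Collisions of $e^*$ with the hidden stash entries $\widetilde e_i$, $i\in S$, exist only inside $\advB_1$. The paper bounds them there by $\qnum^2/p$.
\item The paper's $\advB_{\Coll}$ is instead the DL adversary for message-vector collisions.
\end{itemize}
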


We concretely specify the additive term in the reduction's running time in contrast to the prior works giving security proofs for BBS signatures~\cite{SJ:AuSMC13,JC:Schage15,camenisch2016anonymous,EC:TesZhu23b} and note that the additive running time of $ O(\qnum^2 (\log^2\qnum \log p + T_\Group + T_p))$  might not lead to a totally tight reduction. For example, an adversary $\advA$ may only ask queries, so that $t_\advA \approx \qnum \cdot \poly(\secpar) \ll \qnum^2  \cdot \poly(\secpar)$. However realistically, the concrete number of signing queries, which are online interactions, are relatively small compared to offline computations done by the adversary. As a result, we can mildly assume that $t_\advA \geq \widetilde{\Omega}(\qnum^2)$. More importantly, similar additive terms \ifnum\eprintversion=1 depending on $\qnum^2$ \fi also appear in prior works~\cite{EC:Gentry06,SJ:AuSMC13,camenisch2016anonymous,JC:Schage15,EC:TesZhu23b} on tight security reductions to $\qnum$-type assumptions\ifnum\eprintversion=1, and especially for BBS/BBS+ signatures\fi. We further discuss why such terms appear in \ifnum\eprintversion=1\Cref{sec:discuss-runtime}\else the full version\fi .

\begin{figure}[!t]
	\centering
	\begin{pchstack}[boxed]
		\procedure{\scriptsize Game ${\SUF'}^\advA_{\BBS}(\secpar)$:}{
			\mathsf{Sigs}, \mathsf{Tags} \gets \emptyset \; ; \; \param \getsr \Setup{\BBS}(1^\secpar)\\
			(\sk, \vk) \getsr \KG{\BBS}()\\
			(\vec m^*, \sigma^* = (A^* , e^*)) \getsr \advA^\oraS(\param, \vk)\\
			\pcreturn (\vec m^*, \sigma^*) \notin \mathsf{Sigs} \eAnd \protect\fhgamebox{$e^* \in \mathsf{Tags}$} 
			\eAnd \Ver{\BBS}(\vk, \vec m^*, \sigma^*) = 1 
		}
		\procedure{\scriptsize Oracle $\oraS(\vec m \in \Zp^\ell)$:}{
			\sigma = (A, e) \getsr \Sign{\BBS}(\sk, \vec m)\\
			\mathsf{Sigs} \gets \mathsf{Sigs} \cup \{(\vec m, \sigma)\}\\
			\protect\fhgamebox{$\mathsf{Tags} \gets \mathsf{Tags} \cup \{e\}$}\\
			\pcreturn \sigma
		}
	\end{pchstack}
	\caption{Game $\SUF'$ for $\BBS = \BBS[\GroupGen, \ell]$.} 
\label{fig:suf-bbs}
\end{figure}

The following lemma shows that the $\SUF$ security of $\BBS$ is implied tightly by the $\SUF'$ security of $\BBS$ formalized in \Cref{fig:suf-bbs}. The $\SUF'$ security is a special case for the security analysis of $\BBS$ in \cite{EC:TesZhu23b} where the forgery contains $e^*$ which is already output by the signing oracle. We do not give a proof of \Cref{lemma:adapt-TZ23-BBS} as it immediately follows from the proof of \cite[Theorem 1]{EC:TesZhu23b} by abstracting out the specific case where the previous reduction incurs a $\qnum$ factor in security loss.
\begin{lemma}\label{lemma:adapt-TZ23-BBS}
Let $\GroupGen, p, T_\Group, T_p, \BBS$ be as in \Cref{thm:tight-security-statement}. For any \ifnum\eprintversion=1 SUF \fi adversary $\advA$ making $\qnum = \qnum(\secpar)$ signing queries with running time $t_\advA = t_\advA(\secpar)$, there exist adversaries $\advB_1, \advB_2, \advB_3$ such that \[
\Adv^{\suf}_\BBS(\advA, \secpar) \leq \Adv^{\suf'}_{\BBS}(\advB_1, \secpar) + \Adv^{\dl}_{\GroupGen}(\advB_2, \secpar) + \Adv^{\qSDH}_{\GroupGen}(
\advB_3) + \frac{\qnum^2 + 2\qnum + 4}{2p}\;.
\]
Also, $\advB_1$ runs in time roughly $t_\advA$ and makes at most $\qnum$ queries to its signing oracle; $\advB_2$ and $\advB_3$ runs in time roughly $t_\advA$ and $t_\advA + O(\qnum^2(T_\Group + T_p))$, respectively.
\end{lemma}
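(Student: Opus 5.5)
The plan is to follow the case split from the proof of \cite[Theorem 1]{EC:TesZhu23b}, adapted to $\SUF'$. Given a $\SUF$ adversary $\advA$ with forgery $(\vec m^*, (A^*, e^*))$, we distinguish three cases.
\begin{enumerate}
\item[(i)] $e^* \in \mathsf{Tags}$.
\item[(ii)] $e^* \notin \mathsf{Tags}$ and $e^* \neq x$.
\item[(iii)] $e^* = x$, which also covers the degenerate convention $1/0 = 0$.
\end{enumerate}
By a union bound, $\Adv^{\suf}_\BBS(\advA)$ is at most the sum of the probabilities that $\advA$ wins in each case, plus a small statistical term for bad events. Case (i) is handled trivially: $\advB_1$ runs $\advA$, forwards its signing queries to its own $\SUF'$ oracle, and outputs the same forgery. $\advB_1$ wins exactly when $\advA$ wins with $e^* \in \mathsf{Tags}$, it makes the same $\qnum$ queries, and it runs in time about $t_\advA$.

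For case (iii), $\advB_2$ receives a $\DL$ instance $(G_1, xG_1, G_2, xG_2)$ in the type-3 formulation. It sets $\vk = xG_2$ and embeds the public parameters with known discrete logarithms, $\vec H[i] = h_i G_1$. To answer signing queries without knowing $x$, it samples $e$ uniformly and computes $A = \frac{1+\sum_i \vec m[i] h_i}{x-e} G_1$. This is not possible directly, so the plan is to program things as in \cite{EC:TesZhu23b}: choose $G_1' = r(xG_1) - r e G_1$-style generators per query, or more simply use the fact that a valid forgery with $e^* = x$ reveals $x = e^*$, which can be checked against $xG_2$. Concretely, $\advB_2$ simulates the game perfectly by running $\advA$ in the real game on a random $x'$. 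It embeds the $\DL$ challenge only in the verification key via rerandomization, and extracts $x$ whenever $e^*$ equals the secret key. I will follow the TZ23 simulation verbatim here and not re-derive it.

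Case (ii) is the standard Boneh--Boyen reduction $\advB_3$ to $\qSDH$. $\advB_3$ samples $e_1, \dots, e_\qnum$ up front. It sets $\overline G_1 = \alpha \prod_i (x-e_i)G_1$ and $\vec H[j] = \beta_j \prod_i (x-e_i) G_1$, which requires degree $\qnum$ and costs $O(\qnum^2(T_\Group+T_p))$ to compute. It answers the $i$-th query using tag $e_i$ by division. On a forgery with fresh $e^*$, the numerator polynomial $(\alpha + \sum_j \vec m^*[j]\beta_j)\prod_i(\varX-e_i)$ does not vanish at $e^*$ unless $\alpha + \sum_j \vec m^*[j]\beta_j = 0$. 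Otherwise \Cref{remark:compute-sdh} gives an SDH solution. The bad event $\alpha+\sum_j \vec m^*[j]\beta_j=0$ is independent of $\advA$'s view, since $\alpha$ is hidden by a random choice of $G_1$. It therefore occurs with probability about $1/p$.

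The statistical term $\frac{\qnum^2+2\qnum+4}{2p}$ collects the remaining bad events. These are collisions among the $\qnum$ uniformly sampled tags, at most $\binom{\qnum}{2}/p$, which is needed for the simulations to match the real distribution. They also include some $e_i = x$ or a generator being $0_\Group$, at most about $(\qnum+2)/p$, and the degenerate-coefficient event above. The main obstacle is case (iii) and the identity-element conventions: one must verify that the simulated parameters and the $1/0=0$ convention produce a distribution identical to the real one except on events counted in the additive term. I expect the bookkeeping of this additive term to be the most delicate step, and I would defer it to the exact accounting in \cite[Theorem 1]{EC:TesZhu23b}.
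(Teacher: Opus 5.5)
There is a genuine gap: the proposal mishandles valid forgeries with
\[
C^* := G_1 + \sum_j \vec m^*[j]\vec H[j] = 0_{\GOne}.
\]
Your decomposition otherwise matches what the paper intends. The paper says only that the lemma follows from the proof of \cite[Theorem 1]{EC:TesZhu23b} by isolating the case $e^*\in\mathsf{Tags}$. Your case (i) forwarding argument and the Boneh--Boyen simulation in case (ii) are fine. The $C^* = 0_{\GOne}$ problem shows up in two places.

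\textbf{Case (ii).} Your ``degenerate-coefficient'' event $\alpha + \sum_j \vec m^*[j]\beta_j = 0$ is exactly $C^* = 0_{\GOne}$ when $x\notin\{e_i\}$. It is \emph{not} a $1/p$ statistical event. $\advA$ sees $\overline G_1 = \alpha P$ and $\vec H[j] = \beta_j P$ over the same base $P = \prod_i (x-e_i)G_1$. The randomness of $G_1$ hides only that common base; the ratios $\alpha/\beta_j$ are fixed by $\advA$'s view. An adversary that can compute discrete logarithms in $\GOne$ picks $\vec m^*$ with $C^* = 0_{\GOne}$ and outputs $(\vec m^*, (0_{\GOne}, e^*))$ for a fresh $e^*$. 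This verifies, and your $\advB_3$ extracts nothing, because its numerator polynomial is identically zero.

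\textbf{Case (iii).} The $\advB_2$ you describe is not a reduction. Without $x$ (or the $\qnum$-SDH structure) you cannot answer signing queries, and rerandomizing $xG_2$ does not change that. Running the real game on your own $x'$ embeds no challenge at all. Deferring to TZ23 does not rescue this, because ``extract $x$ from $e^* = x$'' is not the step that is needed.

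\textbf{Missing observation and fix.} If $e^* = x$, then $\vk - e^*G_2 = 0_{\GTwo}$. The verification equation then forces $\pairingOp{C^*}{G_2}$ to be the identity of $\GTarg$, so $C^* = 0_{\GOne}$. Case (iii) is therefore a subcase of $C^* = 0_{\GOne}$, which is a discrete-log relation among the \emph{public parameters}, not information about $x$. Split instead into three cases:
\begin{enumerate}
\item $e^*\in\mathsf{Tags}$;
\item $e^*\notin\mathsf{Tags}$ and $C^* = 0_{\GOne}$;
\item $e^*\notin\mathsf{Tags}$ and $C^*\neq 0_{\GOne}$.
\end{enumerate}
For case 2, let $\advB_2$ sample $x$ itself, so it signs honestly in time about $t_\advA$. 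It plants its DL challenge $Y$ in the generators, e.g.\ $\vec H[j] = \gamma_j G_1 + \delta_j Y$. Then $C^* = 0_{\GOne}$ gives
\[
1 + \sum_j \vec m^*[j]\gamma_j + \Bigl(\sum_j \vec m^*[j]\delta_j\Bigr)\mathrm{DL}_{G_1}(Y) = 0.
\]
Here $\vec m^*\neq \vec 0$, since $\vec m^* = \vec 0$ would give $C^* = G_1 \neq 0_{\GOne}$. Because $\vec\delta$ is hidden by $\vec H$, $\sum_j \vec m^*[j]\delta_j = 0$ happens only with probability $1/p$. For case 3, $C^*\neq 0_{\GOne}$ already forces $e^*\neq x$. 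When $x\notin\{e_i\}$ it also forces $\alpha + \sum_j\vec m^*[j]\beta_j \neq 0$. So \Cref{remark:compute-sdh} applies directly and no degenerate-coefficient term is needed.
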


We now establish the tight $\SUF'$ security of $\BBS$ against adversaries making \underline{distinct} signing queries via the two following lemmas, which combined with the above imply \Cref{thm:tight-security-statement}.
\Cref{lemma:tight-proof-extend} tightly reduces $\SUF'$ security for arbitrary length $\ell \geq 1$ messages to $\SUF'$ security for length-$1$.  \Cref{lem:tight-proof-main} is our core technical lemma establishing the tight security of $\SUF'$ for length-$1$ messages. 
\begin{lemma}\label{lemma:tight-proof-extend}
	Let $\GroupGen, p, T_\Group, T_p, \BBS$ be as in \Cref{thm:tight-security-statement}, and $\BBS_1 = \BBS \allowbreak [\GroupGen, 1]$. For any adversary $\advA$ making at most $\qnum = \qnum(\secpar)$ \underline{distinct} signing queries with running time $t_\advA = t_\advA(\secpar)$, there exist adversaries $\advB'$ and $\advB_{\Coll}$ running in time roughly $t_\advA$ such that
	\ifnum\eprintversion=1
	\begin{align*}
		\Adv^{\suf'}_{\BBS}(\advA, \secpar) &\leq  \Adv^{\suf'}_{\BBS_1}(\advB', \secpar) + \Adv^{\dl}_{\GroupGen}(\advB_{\Coll}, \secpar) \;.
	\end{align*}
	\else %
	$
	\Adv^{\suf'}_{\BBS}(\advA, \secpar) \leq  \Adv^{\suf'}_{\BBS_1}(\advB', \secpar) + \Adv^{\dl}_{\GroupGen}(\advB_{\Coll}, \secpar) + \frac{1}{p}$.
	\fi %
	Moreover, $\advB'$ makes at most $\qnum$ \underline{distinct} signing queries.
\end{lemma}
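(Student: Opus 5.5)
The plan is to compress the $\ell$ message coordinates into one by a random linear combination that $\advB'$ knows. Concretely, $\advB'$ receives $\param_1 = (p, G_1, H, G_2, \ldots)$ and $\vk$ from the $\SUF'_{\BBS_1}$ game. It samples $\vec r \getsr \Zp^\ell$ and sets $\vec H[i] = \vec r[i] H$ for $i \in [\ell]$. (If $H = 0_{\GOne}$ is a concern, it can be absorbed into the $1/p$ term in the conference version.) With this choice, $G_1 + \sum_i \vec m[i]\vec H[i] = G_1 + \langle \vec r, \vec m\rangle H$. Since $H$ is uniform, each $\vec H[i]$ is uniform and the $\vec H[i]$ are independent given $H \neq 0$; in fact the joint distribution is exactly right up to the event $H=0$, because a uniform $H$ times independent uniform $\vec r[i]$ gives i.i.d.\ uniform elements as long as $H \neq 0$. $\advB'$ runs $\advA$ on $(\param, \vk)$. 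It answers each signing query $\vec m$ by querying $m = \langle \vec r, \vec m\rangle$ to its own oracle and forwarding the returned $(A,e)$. This is a perfectly valid $\BBS$ signature on $\vec m$ under the simulated parameters, and the tag $e$ is the same, so the $\mathsf{Tags}$ sets coincide. On forgery $(\vec m^*, (A^*, e^*))$, $\advB'$ outputs $(\langle \vec r, \vec m^*\rangle, (A^*, e^*))$. Verification transfers because the commitment $C$ is identical.

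The key event to control is \emph{collision}: two distinct vectors among the queried $\vec m_1,\ldots,\vec m_\qnum,\vec m^*$ map to the same scalar under $\langle \vec r,\cdot\rangle$. If no collision occurs, then (i) the $\qnum$ queries of $\advB'$ are distinct, as required, and (ii) freshness transfers. To see (ii), note that if $(\vec m^*,\sigma^*)\notin\mathsf{Sigs}$, then either $\vec m^*$ was never queried, so $m^*$ is a new scalar, or $\vec m^* = \vec m_i$ but $\sigma^* \neq \sigma_i$. In both cases the scalar pair is fresh, and $e^* \in \mathsf{Tags}$ is preserved. Hence $\Adv^{\suf'}_{\BBS}(\advA) \le \Adv^{\suf'}_{\BBS_1}(\advB') + \Pr[\Coll]$.

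The main step is to bound $\Pr[\Coll]$ by a DL advantage rather than $\qnum^2/p$, since $\advA$ is adaptive. However, $\vec r$ is information-theoretically hidden only through the $\vec H[i]$, so it is not hidden at all. Instead, $\advB_{\Coll}$ embeds a DL challenge. Given $(G, Y = yG)$ in $\GOne$, it sets $G_1 = G$ and $H = Y$ (say), and samples $\vec r$ and $\vec s$ uniformly to set $\vec H[i] = \vec r[i] G + \vec s[i] Y$. It also picks $x$ itself, so it can sign honestly: $A = \frac{1}{x-e}(G_1 + \sum \vec m[i]\vec H[i])$ is computable from known representations. A collision $\vec m \neq \vec m'$ with the same $G_1+\sum \vec m[i]\vec H[i]$ gives $\langle \vec r, \vec m - \vec m'\rangle + y\langle \vec s, \vec m-\vec m'\rangle = 0$. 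Since $\vec s$ is perfectly hidden from $\advA$ given the $\vec H[i]$ (each $\vec H[i]$ is uniform whatever $\vec s$ is), $\langle \vec s, \vec m-\vec m'\rangle \neq 0$ except with probability $1/p$, and then $y$ is recovered. So I would define the collision event directly as the event that two distinct queried/forged vectors yield the same commitment $C$. This event, rather than a collision under $\langle\vec r,\cdot\rangle$, is what the argument needs, because in $\advB'$'s game a collision of scalars is the same as a collision of commitments $G_1 + mH$ when $H\neq 0$. That identification is what lets the same $\advB_{\Coll}$ bound serve. The delicate part I expect is checking that the game with $\advB_{\Coll}$'s parameter distribution matches the real one, and accounting for the $1/p$ slack (and $H=0$) in the stated bound. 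In the eprint statement this slack would be folded into the DL term by having $\advB_{\Coll}$ rerandomize.
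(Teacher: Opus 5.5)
Your proposal is essentially the paper's proof. It uses the same $\advB'$ (set $\vec H=\vec r H$, query $\langle\vec r,\vec m\rangle$, forward signatures and the forgery) and the same $\advB_{\Coll}$ (set $\vec H=\vec r G_1+\vec s Y$, choose $x$ yourself, and lose $1/p$ exactly when $\langle\vec s,\vec m-\vec m'\rangle=0$, since $\vec s$ is independent of $\advA$'s view). It also uses the same split on commitment collisions. Your freshness argument explicitly treats $\vec m^*=\vec m_i$ with $\sigma^*\neq\sigma_i$, which makes it more careful than the paper's: the paper's $\Coll$ event, read literally, also contains forgeries on a previously queried vector, a case $\advB_{\Coll}$ cannot exploit.

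A few bookkeeping points need fixing.
\begin{itemize}
\item For $H=0_{\GOne}$ the paper loses nothing. Its $\advB'$ queries $0$ and outputs the returned signature as a forgery on $1$, which is valid since $G_1+0\cdot H=G_1+1\cdot H$. If you instead treat this case as slack, observe that the two losses are $\Pr[\Coll]/p$ and $\Pr[\mathrm{win}\cap\neg\Coll]/p$. These come from disjoint events, so together they are at most $1/p$; adding two separate $1/p$ terms would overshoot the stated bound.
\item Your claim that rerandomizing $Y$ folds the remaining $1/p$ into the DL term does not hold. When $\langle\vec s,\Delta\rangle=0$, the collision equation forces $\langle\vec r,\Delta\rangle=0$ as well, so it carries no information about $y$ however $Y$ is rerandomized. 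The paper's own proof likewise only yields $\Pr[\Coll]\le\Adv^{\dl}_{\GroupGen}(\advB_{\Coll})+\frac{1}{p}$, i.e., the bound with the $+\frac{1}{p}$ term.
\item The tight reduction needs the ``distinct queries'' property on every run. So $\advB'$ should abort rather than issue a repeated scalar query. On such runs (with $H\neq 0_{\GOne}$) $\Coll$ occurs, so the abort costs nothing; the paper leaves this implicit as well.
\end{itemize}
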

The proof\ifnum\eprintversion=1 \ in \Cref{sec:tight-proof-extend} \else \ in the full version \fi simply sets up the public parameters $\vec{H}[j] \gets \alpha_j H'$  for $\alpha_j \getsr \Zp$  (the reduction receives $H'$ and outputs $\vec{H}[j]$) and convert a length $\ell$ message $\vec{m}$ to $m = \sum_{j = 1}^\ell \alpha_j\vec{m}[j]$. Additionally, we rule out the event where two messages $\vec{m} \neq \vec{m}'$ maps to the same message via the map $\vec{m} \mapsto \sum_{j = 1}^\ell \alpha_j \vec{m}[j]$, which implies a collision $\sum_{j = 1}^\ell \vec{m}[j] \vec{H}[j] = \sum_{j = 1}^\ell \vec{m}'[j] \vec{H}[j]$ breaking DL.

\begin{lemma}\label{lem:tight-proof-main}
	Let $\GroupGen, p, T_\Group, T_p, \BBS_1$ be as in \Cref{lemma:tight-proof-extend}. For any adversary $\advA$ making at most $\qnum = \qnum(\secpar)$ \underline{distinct} signing queries with running time $t_\advA = t_\advA(\secpar)$, there exist adversaries  $\advB_1$ and $\advB_{\BadQ}$ such that, 
	\begin{align*}
		 \Adv^{\suf'}_{\BBS_1}(\advA, \secpar) \leq e \cdot \Adv^{d'\text{-}\sdh}_{\GroupGen}(\advB_1, \secpar) + \Adv^{\dl}_{\GroupGen}(\advB_{\BadQ}, \secpar) + \frac{e\cdot (2\qnum^2 + 4 \qnum + 1)}{p} \;,
	\end{align*}
	where ${d' =  \lceil(e^2 + 2)\qnum + 2\log_2 p\rceil + 1}$. Moreover, $\advB_1$ runs in time roughly $t_\advA + O(\qnum^2(\log^2 \qnum \log_2 p + T_\Group + T_p))$, and $\advB_{\BadQ}$ runs in time roughly $t_\advA$. 
\end{lemma}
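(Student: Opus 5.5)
We build $\advB_1$ exactly as in the overview of \Cref{sec:overview-tight-reduction}. On input a $d'$-$\SDH$ instance $(G_1,(x^iG_1)_{i\in[d']},G_2,xG_2)$, $\advB_1$ samples a uniformly random monic $f\in\Zp^{d}[\varX]$ with $d = d'-\qnum$, a scalar $\beta\getsr\Zp^*$ and stash tags $\widetilde e_1,\dots,\widetilde e_\qnum\getsr\Zp$. It sets $\overline G_1 = f(x)P(x)G_1$ and $\overline H=\beta P(x)G_1$, where $P=\prod_i(\varX-\widetilde e_i)$, and uses $\vk = xG_2$. To keep the parameters distributed correctly we must have $f(x)P(x)\neq 0$; otherwise $x$ is a root of a known polynomial and can be found by root finding, which is where the $\DL$ adversary $\advB_{\BadQ}$ enters. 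On the $i$-th (distinct) query $m_i$, $\advB_1$ factors $f(\varX)+m_i\beta$ over $\Zp$; this costs roughly $O(d\log^2 d\log p)$ per query and gives the stated running time. If this polynomial has exactly one root $e$, it sets $e_i\gets e$ and puts $i\in S$; otherwise it sets $e_i\gets\widetilde e_i$. In both cases $(\varX-e_i)$ divides $(f+m_i\beta)P$, so $A_i$ is a known combination of the instance. On a $\SUF'$ forgery $(m^*,(A^*,e^*))$ with $e^*=e_{i^*}$ and $i^*\in S$, freshness of the forgery together with distinctness of the queries gives $m^*\neq m_{i^*}$. Hence $f(e_{i^*})+m^*\beta=(m^*-m_{i^*})\beta\neq0$. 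If moreover $P(e_{i^*})\neq0$, which fails only with probability $O(\qnum^2/p)$, \Cref{remark:compute-sdh} extracts an $\SDH$ solution; when $x = e^*$, we get $x$ directly.

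The core of the proof is an H-coefficient argument comparing two transcripts. The real transcript is $(\overline G_1,\overline H,\vk,(m_i,e_i)_i)$, augmented with an independently sampled set $S$. The simulated transcript is augmented with the true $S$. It suffices to show that for every good transcript $\tau$ with distinct $\actualE_i$ (and $\actualE_i\neq\widetilde e_j$ for the relevant pairs), the simulated probability of $\tau$ is at least the ideal probability of the transcript minus $S$, times a factor $\mu(\actualS)$. Here $\mu$ is the distribution of $S$ under the random-function model, up to a $(1-\epsilon)$ factor. For a fixed $\beta$, the query values $z_i=-m_i\beta$ are distinct. The event over $(f,\widetilde e)$ then factorizes as in the one-query case study:
\[
\Pr\Bigl[\forall i\in \actualS: f^{-1}(z_i)=\{\actualE_i\};\ \forall i\notin\actualS: |f^{-1}(z_i)|\neq1,\ \widetilde e_i=\actualE_i\Bigr]
= p^{-\qnum}\cdot\Pr\Bigl[\cdots\,\Big|\, \forall i\in\actualS: f(\actualE_i)=z_i\Bigr],
\]
since $\Pr[f(\actualE_i)=z_i\ \forall i\in\actualS]=p^{-|\actualS|}$ whenever $|\actualS|\le d$. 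The remaining conditional probability is the quantity that the result of \Cref{sec:random-functions-poly} compares with the random-function probability. This is the step I expect to be the main obstacle. I would treat it by a Bonferroni truncation (\Cref{lemma:inclusion-exclusion-bound}) of the inclusion–exclusion over extra preimage sets $T\subseteq\Zp$. Sets with $|T|$ up to about $d-\qnum$ behave exactly as for a uniformly random function, because fixing at most $d$ evaluations of a random monic degree-$d$ polynomial is uniform. The tail terms are bounded by $\binom{p}{k}p^{-k}\le 1/k!\le (e/k)^k$ using \Cref{lem:sterling-lower-bound}. Choosing $k\approx e^2\qnum+2\log_2p$ makes the tail at most about $p^{-2}$, which explains the value of $d'$. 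The remaining $\Error$-type factors are handled by \Cref{lem:lower-bound-expo}.

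Finally, in the ideal world $S$ is independent of the adversary's view. For each fixed $i^*$ we have $\Pr[i^*\in S]\ge(1-1/p)^{p-1}\cdot(1-O(\qnum/p))\gtrsim e^{-1}$. This is the probability that a random function has a unique preimage of a given point, lower-bounded by \Cref{lem:lower-bound-expo}. Combining this with the H-coefficient bound gives $\Adv^{d'\text{-}\sdh}(\advB_1)\ge e^{-1}\bigl(\Adv^{\suf'}-\Adv^{\dl}(\advB_{\BadQ})\bigr)-O(\qnum^2/p)$. Multiplying through by $e$ and collecting the bad-event probabilities, namely repeated tags, clashes between stash tags and roots, $\beta$ and the degree constraints, and the $1/p$ terms, yields the stated $e(2\qnum^2+4\qnum+1)/p$.
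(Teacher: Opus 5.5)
Your architecture is the paper's: the same $\advB_1$ with stash tags, an H-coefficient comparison against the real transcript augmented with an independent $S\sim\eta$, a Bonferroni truncation with $K\approx e^2\qnum+2\log_2 p$, and $\Pr[i^*\in S]=(1-1/p)^{p-1}\ge e^{-1}$. The genuine gap is in the $\DL$ term.

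You attach $\advB_{\BadQ}$ to the event $f(x)P(x)=0$. That event is purely statistical over $\advB_1$'s own coins: its probability is at most $(\qnum+1)/p$, such transcripts have $\interprob_0=0$ anyway, and $\advB_1$ could find $x$ itself there. The event that actually needs a computational argument is $\BadQ$: $\advA$ queries some $m_i$ with $G_1+m_iH=0_{\GOne}$, i.e.\ $-m_i=\mathrm{DL}_H(G_1)$. In the simulation this means $f(x)=-m_i\beta$, so the value $z_0=f(\actualX)$ pinned down by $\overline G_1$ coincides with $z_i$.

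Your comparison with the random-function model needs $z_0\notin\{z_1,\dots,z_\qnum\}$; you only use that the $z_i$ are pairwise distinct. It really fails otherwise. Since $x\in f^{-1}(z_i)$, either $i\notin S$, or the simulator answers with $e_i=x$ and leaks the key. In the real game, by contrast, $A_i=0_{\GOne}$ and $e_i$ is uniform. Transcripts with $\actualE_i\neq\actualX$ and $i\in\actualS$ then have $\interprob_1(\tau)=0<\interprob_0(\tau)$. An adversary that can compute $\mathrm{DL}_{\overline H}(\overline G_1)$ triggers this at will, so no statistical bound can cover it.

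You must first excise $\BadQ$ in the real $\SUF'$ game. Let $\advB_{\BadQ}$ embed its challenge as $H$, sample $x$ itself to sign honestly, and output $-m_i^{-1}$ when $\BadQ$ occurs. Afterwards you may assume w.l.o.g.\ that $\advA$ never makes such a query, since it can test $G_1+m_iH=0_{\GOne}$ itself. Then run the H-coefficient argument only over transcripts with $\actualMsg_i\neq-\actualA/\actualB$. This is also what makes your final inequality meaningful, since it subtracts $\Adv^{\dl}(\advB_{\BadQ})$ from $\Adv^{\suf'}$ before invoking $\advB_1$.

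The same computation has two further omissions.

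\emph{Coupling through the stash.} The transcript fixes $\overline G_1=f(\actualX)P(\actualX)G_1$ and $\overline H=\beta P(\actualX)G_1$, where $P$ depends on the hidden stash tags $(\widetilde e_i)_{i\in\actualS}$. So $\beta$ is not simply ``fixed'', and $f$ carries the extra constraint $f(\actualX)=z_0$. You have to sum over $(\widetilde e_i)_{i\in\actualS}$ with $\widetilde e_i\neq\actualX$, where each choice determines $\beta$ and $z_0$, and apply the polynomial bound under that constraint. This is what produces the $(1-1/p)^{|\actualS|}$ loss in $\interprob_1/\interprob_0$.

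\emph{Tail bound.} Your tail term $\binom{p}{k}p^{-k}$ is missing the factor $|\actualS|^k$ for assigning each extra preimage to an index. An absolute tail of about $p^{-2}$ also does not suffice. The truncation error must be small relative to $(1-|\actualS|/p)^{p-|\actualS|-1}\approx e^{-|\actualS|}$. After the outer inclusion--exclusion over $T\subseteq[\qnum]\setminus\actualS$ inflates it to $\varepsilon\,2^{\qnum-|\actualS|}$, it must also be small relative to $\eta(\actualS)$, which can be as small as about $e^{-\qnum}$. This is what forces the $e^2\qnum$ term in $d'$.
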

The proof in \Cref{sec:full-proof} will rely heavily on the toolkit on statistical properties of random functions and random polynomials established in \Cref{sec:random-functions-poly}.

\ifnum\eprintversion=1
\input{tex_files/discussion-runtime}

\fi

%% file: tex_files/discussion-runtime.tex
\subsection{Discussion on the Reduction's Running Time}\label{sec:discuss-runtime}


As mentioned earlier, the $\widetilde{O}(\qnum^2)$ additive loss in our reduction's runtime also appears in prior works. 
The security reduction to $\qnum$-ABDHE assumption of Gentry's IBE~\cite{EC:Gentry06} requires additional $O(\qnum^2 \cdot T_\Group)$ where $T_\Group = T_\Group(\secpar)$ denotes running time for group exponentiation to compute group elements for the challenge queries. 
Similarly, the prior tight reductions for BBS/BBS+~\cite{SJ:AuSMC13,camenisch2016anonymous,JC:Schage15,EC:TesZhu23b} also has $O(\qnum^2 \cdot (T_\Group + T_p))$ additive term arising from computing for each signing query, the group element $A_i$ from the SDH instance and the underlying polynomial that represents $A_i$. 

\heading{Possible optimization.} {\em Only for certain bilinear groups} (e.g., ones where $p - 1$ is divisible by a large enough power-of-two $d$), there exists an algorithm, due to~\cite{TCC:GarHajOst20,EPRINT:FeiKho23} to precompute openings for KZG polynomial commitments~\cite{AC:KatZavGol10} in $O(d \log^2 d \cdot T_\Group)$ at $\leq d$ points where $d$ is the degree of the committed polynomial and $d  | (p - 1)$. This particular algorithm can be adapted to the reductions for BBS to precompute the group elements $A_i$ in the signatures faster than $O(\qnum^2)$ time. In particular, we can view the two group elements $\overline{G}_1$ and $\overline H$ that the reduction outputs as KZG polynomial commitments to two polynomials $f$ and $g$. (Recall the reduction structure in \Cref{sec:tech-overview}.) Then, the reduction can do the following: \begin{itemize}[topsep=0pt, itemsep=0pt]
	\item Setup $f, g$ and $\overline{G}_1 \gets f(x) G_1 , \overline H \gets g(x) G_1$. 
	
	\item Assuming that the tags $e_1, \ldots, e_\qnum$ {\em are known beforehand}, the reduction uses the mentioned algorithm to precompute the KZG openings $B_i$ for the evaluation $f(e_i)$  and $B_i'$ for the evaluation $g(e_i)$. For all $i \in [\qnum]$, the following equations are then satisfied by the precomputed values  \begin{align*}
		\pairingOp{\overline{G}_1 - f(e_i) G_1}{G_2} &= \pairingOp{B_i}{X_{2,1} - e_i G_2} \;, \\
		\pairingOp{\overline H - g(e_i) G_1}{G_2} &= \pairingOp{B_i'}{X_{2,1} - e_i G_2}
	\end{align*}
	
	\item Since it is ensured that $f(e_i) + m_i g(e_i) = 0$ by how the reduction is structured, we have that $A_i = B_i + m_i B_i'$ satisfies \begin{align*}
		\pairingOp{A_i}{X_{2,1} - e_i G_2} &= \pairingOp{B_i + m_i B_i'}{X_{2,1} - e_i G_2} \\
		&= \pairingOp{\overline{G}_1 + m_i \overline H -( f(e_i) + m_i g(e_i))G_1}{G_2}\\
		&= \pairingOp{\overline{G}_1 + m_i \overline H }{G_2}\;.
	\end{align*}
\end{itemize}
It is easy to see that the above sketched algorithm's runtime is dominated by the second step.
This optimization directly applies to all prior reductions due to the tags being known beforehand. 
In the case of the standard model reduction in \cite{EC:TesZhu23b}, $\qnum - 1$ of the tags $e_i$ are already known beforehand (meaning the precomputation can be done for those tags). For the only query where $e_i$ is not known beforehand, naively computing $A_i$ incurs only additive $O(\qnum (T_{\Group} + T_p))$ running time and does not change the asymptotic runtime.

This optimization, unfortunately, does not apply to our tight reduction as it does not know most of the $e_i$'s in advance. In particular, we can precompute the $A_i$'s for when $e_i = \widetilde{e}_i$, but not for $e_i$'s derived from the unique zero. More importantly, because our analysis shows that the number of indices $i$ such that $e_i \neq \widetilde{e}_i$ is on average $\approx \qnum / e$, this optimization does not quite apply. Still, we stress again that this optimization only works for certain classes of bilinear groups.

%% file: tex_files/polynomial-properties.tex
\section{Probability Toolkit for the Tight Reductions}\label{sec:random-functions-poly}
In this section, we introduce a toolkit for statistical properties that are heavily used by the analysis of our tight reduction in \Cref{sec:full-proof}. In particular, we consider certain properties regarding number of preimages for arbitrarily fixed elements of random functions (in \Cref{sec:set-S-properties-RF}) and relate them to similar properties of random high-degree polynomials (in \Cref{sec:set-S-rand-poly}). We suggest that the readers first go through the proof in \Cref{sec:full-proof} to see how these derived properties are relevant to the analysis and refer back to this section when they are invoked.

\subsection{Probability Toolkit from Random Functions}\label{sec:set-S-properties-RF}
In this section, we consider the distribution of a certain set $S$ sampled with respect to certain properties of a random function $f: \Zp \to \Zp$. 
Looking ahead in \Cref{sec:full-proof}, we will augment the $\SUF'$ game's transcript with the set $S$ independently sampled from this distribution. Then, we show that the distribution of this augmented transcript is {\em indistinguishable} from the transcript between our reduction and the adversary. There, $S$ is defined according to the reduction strategy using a random polynomial, of which the relevant probabilities are considered in \Cref{sec:set-S-rand-poly}. Importantly, our security proof {\em will not explicitly} consider random functions but only refer to the properties of $S$ derived in this section.

For any distinct values $z_0, z_1, \ldots, z_\qnum \in \Zp$ and $\actualX \in \Zp$, we will consider the distribution of the set $S \subseteq [\qnum]$ defined by the following experiment: \begin{enumerate}[noitemsep,topsep=0pt]
	\item Sample a random function $f : \Zp \to \Zp$ conditioned on $f(\actualX) = z_0$. 
	\item Set $S := \{i \in [\qnum] : |f^{-1}(z_i)| = 1\}$.
\end{enumerate} 
Here, the extra conditioning comes from the fact that our reduction will embed $f(\actualX)$ in one of the generators with $\actualX$ being the discrete logarithm of the SDH instance. This further constrains the distribution of $S$ defined in the reduction. The following lemma\ifnum\eprintversion=1 , proved in \Cref{sec:dist-S-random-f},\fi \ describes the probability density function of $S$, denoted $\eta$.

\begin{lemma}\label{lem:dist-S-random-f}
	Let $\calF$ be the family of all functions $\Zp \to \Zp$, $z_0, z_1, \ldots, z_\qnum \in \Zp$ be any distinct values and $\actualX \in \Zp$. For any $\actualS \subseteq [\qnum]$, define  \begin{equation}
	\eta(\actualS) := \Pr_{f \getsr \calF}[(\forall i \in \actualS : |f^{-1}(z_i)| = 1) \eAnd (\forall i \notin \actualS : |f^{-1}(z_i)| \neq 1)~|~f(\actualX) = z_0]\;. \label{eq:def-eta-S}
	\end{equation} Accordingly, $\sum_{\actualS \subseteq [\qnum]} \eta(\actualS) = 1$ and $\eta(\actualS) \geq 0$ for all $\actualS \subseteq [\qnum]$. Moreover, \[
	\eta(\actualS) = \sum_{i = 0}^{\qnum - |\actualS|} (-1)^i \binom{\qnum - |\actualS|}{i} \Error(1, |\actualS| + i; p)\left(1 - \frac{|\actualS| + i}{p}\right)^{p - |\actualS| - i - 1}\;.
	\]
\end{lemma}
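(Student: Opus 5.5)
The first two claims are immediate, and the formula follows from inclusion--exclusion. The events indexed by $\actualS \subseteq [\qnum]$ (``exactly the indices in $\actualS$ have a unique preimage'') partition the conditional probability space. Hence the $\eta(\actualS)$ are nonnegative and sum to $1$.

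For the formula, fix $\actualS$ with $s := |\actualS|$, and let $E_i$ denote the event $|f^{-1}(z_i)| = 1$. Inclusion--exclusion over the complementary indices gives
\[
\eta(\actualS) = \sum_{T \subseteq [\qnum]\setminus \actualS} (-1)^{|T|} \Pr\Big[\bigcap_{i \in \actualS \cup T} E_i ~\Big|~ f(\actualX) = z_0\Big].
\]
It then suffices to show that for every $U \subseteq [\qnum]$ with $|U| = k$,
\[
\Pr\big[\forall i\in U: |f^{-1}(z_i)| = 1 ~\big|~ f(\actualX) = z_0\big] = \Error(1,k;p)\left(1-\tfrac{k}{p}\right)^{p-k-1}.
\]
Since this depends only on $k$, grouping the sets $T$ by size $i = |T|$ yields the factor $\binom{\qnum - s}{i}$ and the stated sum.

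The key step is counting functions. Conditioned on $f(\actualX) = z_0$, $f$ is uniform over the $p^{p-1}$ functions on the remaining $p-1$ points $\Zp \setminus \{\actualX\}$. Because $z_0 \neq z_i$ for all $i \geq 1$, the point $\actualX$ is never a preimage of any $z_i$. The event therefore says that each $z_i$, $i \in U$, has exactly one preimage among the $p-1$ free points.

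To count such functions, choose distinct preimages $x_i$ for the $z_i$ with $i \in U$. These must be distinct points because the $z_i$ are distinct, so there are $(p-1)(p-2)\cdots(p-k)$ ordered choices. The remaining $p-1-k$ free points must map outside $\{z_i\}_{i\in U}$, which gives $(p-k)^{p-1-k}$ options. Dividing by $p^{p-1}$ gives
\[
\frac{(p-1)\cdots(p-k)}{p^k}\cdot\Big(\frac{p-k}{p}\Big)^{p-1-k} = \Error(1,k;p)\Big(1-\frac{k}{p}\Big)^{p-k-1},
\]
matching the definition $\Error(a,l;p) = \prod_{j=0}^{l-1}(1-\frac{a+j}{p})$ with $a = 1$, $l = k$.

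The main point requiring care is the role of the conditioning. One must check that the distinctness of $z_0$ from every $z_i$ makes $\actualX$ irrelevant, which is what produces $\Error(1,\cdot)$ and the exponent $p-k-1$ rather than $\Error(0,\cdot)$ and $p-k$. One must also handle the edge case $k = 0$, where both sides equal $1$. Everything else is routine.
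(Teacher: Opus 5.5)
Your proposal is correct and follows the paper's proof essentially step for step. The paper also applies inclusion--exclusion over $[\qnum]\setminus\actualS$ to write $\eta(\actualS)=\sum_{T}(-1)^{|T|}\Pr[\forall i\in\actualS\cup T: |f^{-1}(z_i)|=1 \mid f(\actualX)=z_0]$. It then evaluates each term (its Lemma~\ref{lem:all-one-primage-subset}) by the same count: $\frac{(p-1)!}{(p-l-1)!}$ choices of distinct preimages avoiding $\actualX$, times $(p-l)^{p-l-1}$ images for the remaining points, divided by $p^{p-1}$. Your partition argument for nonnegativity and $\sum_{\actualS}\eta(\actualS)=1$, which the paper leaves implicit, is also fine.
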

We write the probability as $\eta(\actualS)$ since its value is independence of the choices of $z_0, z_1, \ldots, z_\qnum$ and $\actualX$. This means that for any $z_0, z_1, \ldots, z_\qnum$ distinct and $\actualX$, if one sample the set $S$ according to the above experiment, the distribution of $S$ is independent of the choices of $z_i$'s and $\actualX$. 
The proof given in \ifnum\eprintversion=1 \Cref{sec:dist-S-random-f}\else the full version\fi \ is roughly as follows: (1) observe that $\forall i \notin \actualS : |f^{-1}(z_i)| \neq 1$ is the negation of $\exists i \notin \actualS : |f^{-1}(z_i)| = 1$, (2) break the probability calculation down via the inclusion-exclusion principle, and (3) this finally boils down to computing the probability for some $T \subseteq [\qnum]$ that $\forall i \in T : |f^{-1}(z_i)| = 1$.

Next, the following corollaries establish lower bounds on specific probabilities regarding a sampled set $S$. The first corollary will be used to show that our reduction has a constant factor loss of $e$ in security. The second corollary will be used in \Cref{sec:size-f} to relate the distribution of $S$ defined with respect to random polynomials as in our reduction to the distribution of $S$ sampled according to $\eta$. 
\ifnum\eprintversion=0
The proofs are given in the full version.
\fi
\begin{corollary}\label{corol:prob-contain-index}
	For any $i^* \in [\qnum]$, $\Pr_{S \getsr \eta}[i^* \in S] \geq e^{-1}$.
\end{corollary}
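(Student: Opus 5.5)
The plan is to skip the closed-form expression for $\eta$ in \Cref{lem:dist-S-random-f} and go back to its defining experiment \eqref{eq:def-eta-S}. The events in \eqref{eq:def-eta-S}, indexed by $\actualS \subseteq [\qnum]$, partition the probability space: each $f$ determines exactly one set $S = \{i : |f^{-1}(z_i)| = 1\}$. So summing $\eta(\actualS)$ over all $\actualS \ni i^*$ gives the marginal
\[
\Pr_{S \getsr \eta}[i^* \in S] = \Pr_{f \getsr \calF}\left[|f^{-1}(z_{i^*})| = 1 ~\middle|~ f(\actualX) = z_0\right].
\]
The task is therefore to lower-bound a single preimage-count probability for a random function.

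To compute it, note that conditioning on $f(\actualX) = z_0$ fixes only the value at $\actualX$. The remaining values $f(y)$ for $y \in \Zp \setminus \{\actualX\}$ stay independent and uniform on $\Zp$. Since the $z_i$ are distinct, $z_0 \neq z_{i^*}$, so $\actualX$ is never a preimage of $z_{i^*}$. Hence $|f^{-1}(z_{i^*})|$ is the number of the $p-1$ free points that land on $z_{i^*}$, which is distributed as $\mathrm{Bin}(p-1, 1/p)$. The probability that this count equals exactly one is
\[
(p-1)\cdot \frac{1}{p}\left(1-\frac{1}{p}\right)^{p-2} = \left(1 - \frac{1}{p}\right)^{p-1}.
\]

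Finally, apply \Cref{lem:lower-bound-expo} with $l = k = 1$, which is valid because $0 < 1 < p$. This gives $\left(1-\frac{1}{p}\right)^{p-1} \geq e^{-1\cdot\frac{p-1}{p-1}} = e^{-1}$. There is no real obstacle. The only points needing care are the marginalization step, which relies on the events in \eqref{eq:def-eta-S} being disjoint and exhaustive, and the observation that distinctness of $z_0$ and $z_{i^*}$ removes $\actualX$ from consideration.
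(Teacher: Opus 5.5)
Your proof is correct and follows the same route as the paper's. Both marginalize to $\Pr[|f^{-1}(z_{i^*})| = 1 \mid f(\actualX) = z_0]$ and evaluate it as $(1-\frac{1}{p})^{p-1}$: the paper counts $(p-1)\cdot(p-1)^{p-2}$ functions out of $p^{p-1}$, while you use the equivalent $\mathrm{Bin}(p-1,1/p)$ computation. Both then finish with \Cref{lem:lower-bound-expo} at $l=k=1$; your version additionally spells out the disjointness behind the marginalization and why $\actualX$ is never a preimage of $z_{i^*}$.
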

\ifnum\eprintversion=1
\begin{proof}
	Let $z_0, z_1, \ldots, z_\qnum \in \Zp$ be distinct and $\actualX \in \Zp$ as in \Cref{lem:dist-S-random-f}.
	Consider 
	\begin{align*}
		 \Pr_{S \getsr \eta}[i^* \in S] &=  \ifnum\eprintversion=0 \textstyle \fi \Pr[|f^{-1}(z_{i^*})| = 1~|~f(\actualX) = z_0]= \left(1 - \frac{1}{p}\right)^{p - 1} \geq e^{-1}\;.
	\end{align*} 
	The first equality follows from definition of $S$ as the set of indices $i \in [\qnum]$ where $z_{i}$ has only one preimage. The second follows from a counting argument: (1) pick a preimage of $z_{i^*}$ that is not $\actualX$ and (2) pick images of other elements in the domain. The last inequality follows from \Cref{lem:lower-bound-expo} setting $k = l = 1$. \qed
\end{proof}
\fi
\begin{corollary}\label{corol:lower-bound-eta}
	For any $\actualS \subseteq [\qnum]$, $\eta(\actualS) \geq \left(1- \frac{\qnum}{p}\right)^{p - |\actualS| - 1} \cdot \Error(1 , |\actualS|; p)$.
\end{corollary}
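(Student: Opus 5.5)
Rather than bounding the alternating sum in \Cref{lem:dist-S-random-f} term by term, I will lower-bound $\eta(\actualS)$ directly from its probabilistic definition \eqref{eq:def-eta-S} by exhibiting a sub-event that implies it. Fix distinct $z_0, z_1, \ldots, z_\qnum$ and $\actualX$; the lemma says $\eta$ does not depend on these choices. Write $s = |\actualS|$. Consider the event $E$ in which:
\begin{itemize}
\item there are distinct points $y_i \neq \actualX$, one for each $i \in \actualS$, with $f(y_i) = z_i$;
\item every other point $y \notin \{\actualX\} \cup \{y_i\}_{i\in\actualS}$ has $f(y) \notin \{z_1, \ldots, z_\qnum\}$.
\end{itemize}
On $E$, each $z_i$ with $i \in \actualS$ has exactly one preimage. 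This uses $f(\actualX) = z_0 \neq z_i$ and that the $z_i$ are distinct. Each $z_i$ with $i \notin \actualS$ has zero preimages, and in particular not exactly one. So $E$ implies the event in \eqref{eq:def-eta-S}, and $\eta(\actualS) \geq \Pr[E \mid f(\actualX) = z_0]$.

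Next I count, conditioned on $f(\actualX) = z_0$, so that $f$ is uniform on the remaining $p-1$ domain points. The unique preimages are chosen as an ordered tuple of distinct points $(y_i)_{i \in \actualS}$ from the $p-1$ points other than $\actualX$. There are $(p-1)(p-2)\cdots(p-s)$ such tuples, and each has probability $p^{-s}$ of receiving the prescribed values. Their product is exactly $\Error(1, s; p) = \prod_{k=0}^{s-1}(1 - \frac{1+k}{p})$.

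Each of the remaining $p - s - 1$ points independently avoids the $\qnum$ values $z_1, \ldots, z_\qnum$ with probability $1 - \qnum/p$. These events over different tuples are disjoint, because the tuple is determined by $f$ on $E$. Summing over tuples therefore gives
\[
\Pr[E \mid f(\actualX) = z_0] = \Error(1, s; p)\left(1 - \frac{\qnum}{p}\right)^{p - s - 1},
\]
which is the claimed bound.

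The only delicate point is disjointness and exactness. Within $E$, no other point maps to any $z_j$ with $j \in [\qnum]$, so the $y_i$ are uniquely determined by $f$ and no overcounting occurs. Strictly we only need $\Pr[E] \le \eta(\actualS)$, so even overcounting would not be needed for correctness beyond the equality shown. Excluding all of $z_1, \ldots, z_\qnum$ rather than only those outside $\actualS$ is deliberately wasteful, but it is what produces the clean factor $(1-\qnum/p)$ in the statement. No appeal to the inclusion--exclusion formula is needed.
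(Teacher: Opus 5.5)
Your proof is correct and is essentially the paper's argument. You lower-bound $\eta(\actualS)$ by the sub-event in which every $z_i$ with $i \notin \actualS$ has no preimage, then count the placements of the unique preimages (giving $\Error(1,|\actualS|;p)$) and the images of the remaining $p-|\actualS|-1$ points, which avoid $z_1,\ldots,z_\qnum$. One small quibble: your aside about overcounting is backwards, since overcounting would only give an upper bound on $\Pr[E]$ and would break the lower bound; but you showed the tuple is determined by $f$ on $E$, so the sum is exact and nothing is lost.
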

\ifnum\eprintversion=1
\begin{proof}
	First, we recall the definition of $\eta$ 
	\ifnum\eprintversion=1
	\begin{align*}
		\eta(\actualS) &:= \Pr_{f \getsr \calF}[(\forall i \in \actualS : |f^{-1}(z_i)| = 1) \eAnd (\forall i \notin \actualS : |f^{-1}(z_i)| \neq 1)~|~f(\actualX) = z_0]\\
		& \geq \Pr[(\forall i \in \actualS : |f^{-1}(z_i)| = 1) \eAnd {\color{blue}(\forall i \notin \actualS : |f^{-1}(z_i)| = 0)}~|~f(\actualX) = z_0]\\
		& = \left(1- \frac{\qnum}{p}\right)^{p - |\actualS| - 1} \cdot \Error(1 , |\actualS|; p)\;.
	\end{align*}
	\else
	\begin{align*}
		\eta(\actualS) &:= \Pr_{f \getsr \calF}[(\forall i \in \actualS : |f^{-1}(z_i)| = 1) \eAnd (\forall i \notin \actualS : |f^{-1}(z_i)| \neq 1)~|~f(\actualX) = z_0]\\
		& \geq \Pr[(\forall i \in \actualS : |f^{-1}(z_i)| = 1) \eAnd {\color{blue}(\forall i \notin \actualS : |f^{-1}(z_i)| = 0)}~|~f(\actualX) = z_0]\\
		& \textstyle = \left(1- \frac{\qnum}{p}\right)^{p - |\actualS| - 1} \cdot \Error(1 , |\actualS|; p)\;.
	\end{align*}
	\fi
	The first inequality follows from $|f^{-1}(z_i)| = 0$ implying $|f^{-1}(z_i)| \neq 1$. The last equality follows from the following counting argument and dividing by $p^{p - 1}$ (i.e., the number of $f \in \calF$ such that $f(\actualX) = z_0$): 
	\begin{itemize}[topsep=0pt]
		\item {\em Count number of ways to assign unique preimages of $z_i$:} For each $i \in \actualS$, we can assign distinct $e_i \neq \actualX$ such that $f(e_i) = z_i$. In particular, this corresponds to selecting a vector $(e_i)_{i \in \actualS}$ with distinct elements, i.e., there are $(p - 1)! / (p - |\actualS| - 1)!$ such vectors.
		
		\item {\em Count number of ways to assign images to the non-selected elements in $\Zp$:} For each $e \in \Zp \setminus (\{\actualX\} \cup \{e_i\}_{i \in \actualS})$, we select $f(e)$ to be any value in $\Zp$ except for $\{z_i\}_{i \in [\qnum]}$, since we already fixed all the preimages of these values. Hence, there are exactly $(p - \qnum)^{p - |\actualS| - 1}$ ways to assign these images.  \qed
	\end{itemize}
\end{proof}
\fi

\subsection{Probability Toolkit from Random Polynomials}\label{sec:set-S-rand-poly}
In this section, we consider the distribution of the set $S \subseteq [\qnum]$ sampled according to our reduction strategy in \Cref{sec:full-proof} based on a random polynomial of large enough degree $d$. The following lemma, which is proved in \Cref{sec:size-f} and is the key to our reduction's analysis, establishes the probability that a random monic polynomial $f$ of degree $d = \Theta(\qnum)$ satisfies the following constraints for any distinct $z_0, z_1, \ldots, z_\qnum$, a set $\actualS \subseteq [\qnum]$, and distinct $\actualX, (\actualE_i)_{i \in \actualS}$: (a) for $i \notin \actualS$, the number of preimages 
of $z_i$ with respect to the function $f$ is not $1$, and (b) for $i \in \actualS$, $z_i$ has $\actualE_i$ as its only preimage. Note that $f^{-1}(z)= \Root(f(\varX) - z)$, of which the notation is used throughout the paper. 
As a connection to \Cref{sec:set-S-properties-RF}, the lemma shows the probability lower bound of sampling an $f$ that leads to a particular set $\actualS$ {\em with extra constraints on specific preimages of $z_i$ for $i \in \actualS$} (hence, the $p^{-|\actualS|}$ factor). This relates the probability of sampling a specific transcript based on the reduction's simulation and the transcript in the $\SUF'$ game.

\begin{lemma}\label{lemma:size-f}
	Let $d = \lceil (e^2 + 1) \qnum + 2 \log_2 p \rceil + 1$, $p \geq 2\qnum^2$, $\actualS \subseteq [\qnum]$, $z_0, z_1, \ldots, z_\qnum \in \Zp$ be distinct values,  and $\actualX \in \Zp, (\actualE_i)_{i \in \actualS} \in \Zp^{|\actualS|}$ also be distinct values. Then, with $\varepsilon = e^{-(e^2 - 3) \qnum - \log_2 p + 1} < p^{-1}$, \[
	\Pr_{f \getsr \Zp^d[\varX]}\left[\begin{array}{l}f(\actualX) = z_0 \eAnd \forall i \notin \actualS: |f^{-1}(z_i)| \neq 1\\
		\eAnd \forall i \in \actualS: f^{-1}(z_i) =  \{\actualE_i\}
	\end{array} \right] \geq (1 - \varepsilon) \cdot p^{- 1 - |\actualS|} \eta(\actualS) \;.
	\]
\end{lemma}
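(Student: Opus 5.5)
Let $\actualS$ have size $s$, and let $\bar S=[\qnum]\setminus\actualS$. The plan mirrors the one-query calculation of \Cref{sec:overview-tight-reduction} and the inclusion--exclusion structure behind \Cref{lem:dist-S-random-f}.

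\emph{Rewriting the event.} Write $E_0$ for $f(\actualX)=z_0$, and $F_i$ for $f(\actualE_i)=z_i$ with $i\in\actualS$. The event ``$f^{-1}(z_i)=\{\actualE_i\}$'' is $F_i$ together with ``no $e\neq\actualE_i$ has $f(e)=z_i$''. The event ``$|f^{-1}(z_j)|\neq 1$'' for $j\in\bar S$ is the negation of $\exists e: f^{-1}(z_j)=\{e\}$. I would expand the negated unions by inclusion--exclusion, first over subsets $B\subseteq\bar S$ (as in \Cref{lem:dist-S-random-f}, producing the $\binom{\qnum-s}{i}$ terms). After this step every term has the form $\Pr[E_0\cap\bigcap_{i\in\actualS\cup B}(|f^{-1}(z_i)|=1)\cap\ldots]$. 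Each such probability I would again expand into a sum over choices of distinct unique preimages $(e_i)_{i\in B}$ and over sets of extra points $T$ forced to map into $\{z_i\}_{i\in\actualS\cup B}$.

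\emph{Key identity: low-order terms agree exactly.} The basic fact is this. For any $k\le d$ distinct points $a_1,\dots,a_k$ and any targets, a uniform monic degree-$d$ polynomial satisfies $\Pr[\forall j: f(a_j)=y_j]=p^{-k}$. This holds because the non-leading coefficients are uniform and interpolation on $k\le d$ points is a bijection modulo the remaining freedom. So $k$-wise evaluations of $f$ are exactly distributed as those of a uniform function $\Zp\to\Zp$. Consequently, every inclusion--exclusion term involving at most $d$ constrained points equals the corresponding term for a random function conditioned on $f(\actualX)=z_0$. The full random-function sum equals $p^{-1-s}\eta(\actualS)$ by \Cref{lem:dist-S-random-f}. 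To see this, note that $\eta$ is a conditional probability given $f(\actualX)=z_0$, and the constraints $F_i$ each contribute an extra factor $p^{-1}$ with the remaining probability unchanged. This is the symmetry already used in \Cref{corol:prob-contain-index} and \Cref{corol:lower-bound-eta}.

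\emph{Truncation via Bonferroni.} Rather than compute exactly, I would apply \Cref{lemma:inclusion-exclusion-bound}, truncating at an even level $K$ with $1+s+K\le d$. At this level all retained terms coincide for polynomials and random functions. The polynomial probability is then at least the random-function truncated sum minus nothing (Bonferroni's lower bound), and that truncated sum is at least the exact random-function value minus the first omitted term. The sign patterns need to be arranged carefully across the two nested expansions; I would flatten them into a single inclusion--exclusion over "bad" atomic events. These atomic events are: some point $e\neq\actualE_i$ maps to $z_i$ for $i\in\actualS$, or some $z_j$ for $j\in\bar S$ has a unique preimage at some $e$. Fixing the order of expansion this way lets one Bonferroni cut apply.

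\emph{The main obstacle: bounding the tail.} I would bound the tail term by roughly $\binom{p}{K}(\qnum/p)^{K}\le(e\qnum/K)^K$, using \Cref{lem:sterling-lower-bound} for $K!\ge(K/e)^K$. The budget for $K$ is $d-1-s\ge (e^2+1)\qnum - \qnum+2\log_2 p$. With $K\ge e^2\qnum+2\log_2p$ we get $e\qnum/K\le e^{-1}$, so the tail is at most $e^{-K}$. This is comparable to $e^{-(e^2-3)\qnum-\log_2 p}$ after slack is spent on the counting factors $2^{\qnum}$ from subset choices. I then need to divide by the main term $p^{-1-s}\eta(\actualS)$, lower-bounded via \Cref{corol:lower-bound-eta} and \Cref{lem:lower-bound-expo} by roughly $p^{-1-s}e^{-\qnum}\cdot(1-s/p)^s$. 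Using $p\ge2\qnum^2$ to control $\Error(1,s;p)$, this turns the absolute error into the relative factor $(1-\varepsilon)$. Making the tail estimate uniform while keeping the exact $\varepsilon$ is the delicate bookkeeping step. To check the constants, I would first verify the case $s=0$, $\qnum=1$ against the overview's computation.
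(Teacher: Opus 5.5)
Your ingredients match the paper's: an expansion over $B\subseteq[\qnum]\setminus\actualS$, splitting by the unique preimages, exact agreement of at most $d$ evaluations with a random function, a Bonferroni cut, and absorbing a $2^{\qnum}$ factor via \Cref{corol:lower-bound-eta}. However, the step that is supposed to tie them together fails as stated.

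Bonferroni applies to any union, so cutting the flattened union of your atoms at one level is a legitimate one-sided bound. The failure is the next claim, that all retained terms coincide with the random-function ones. The atom ``$z_j$ has a unique preimage at $e$'' (for $j\notin\actualS$) is global: it contains $f(e')\neq z_j$ for every $e'\neq e$. So intersections involving it are not determined by at most $d$ evaluations, and their probabilities under a degree-$d$ polynomial differ from those under a random function; estimating that difference is the whole content of the lemma. Because $|f^{-1}(z_j)|\neq 1$ is not monotone (adding or removing a preimage can each flip it), the target event cannot be written as local constraints minus a union of \emph{local} events, so a second layer of expansion is unavoidable. Once you expand those terms, the outer sign $(-1)^{|B|}$ multiplies each inner Bonferroni remainder. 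Cutting every inner sum at the same parity then pushes the even-$|B|$ and odd-$|B|$ contributions in opposite directions, and the result is neither a lower nor an upper bound. The step ``polynomial probability $\geq$ truncated random-function sum, minus nothing'' is therefore unjustified.

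The missing idea is to keep the outer inclusion--exclusion over $B$ \emph{exact}, which has only $2^{\qnum-|\actualS|}$ terms. Partition each term by the distinct unique preimages $(e'_j)_{j\in B}$ and prove a \emph{two-sided} estimate for each piece. With the unique preimages pinned, each piece is local constraints minus a union of the local events $f(\alpha)=z_i$. Bonferroni at both levels $K$ and $K+1$ then gives a probability in $[1\pm\varepsilon_0]\,p^{-t-1}(1-t/p)^{p-t-1}$, with $t=|\actualS|+|B|$ and $\varepsilon_0=e^{-(e^2-1)\qnum-\log_2 p}$; this is \Cref{lemma:key-lemma-size}. Summing, the main term is $p^{-1-|\actualS|}\Error(1,|\actualS|;p)^{-1}\eta(\actualS)$ and the accumulated error is at most $\varepsilon_0\,2^{\qnum-|\actualS|}p^{-1-|\actualS|}$ in absolute value. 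Only then do you divide by the main term, using \Cref{corol:lower-bound-eta} and $(1-\qnum/p)^{p-|\actualS|-1}\geq e^{-\qnum-1}$ (this is where $p\geq 2\qnum^2$ enters). That division is where $\varepsilon_0$ degrades to $\varepsilon=e^{-(e^2-3)\qnum-\log_2 p+1}$. Alternatively, you could truncate each inner sum at the parity matching $(-1)^{|B|}$, but you would still need the exact outer sum and a per-$B$ tail bound.

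A smaller slip: for a random function the event has probability $p^{-1-|\actualS|}\Error(1,|\actualS|;p)^{-1}\eta(\actualS)$, not $p^{-1-|\actualS|}\eta(\actualS)$. Pinning the unique preimages to the specific $\actualE_i$ is not an independent factor $p^{-1}$ each. This is harmless since $\Error(1,|\actualS|;p)\leq 1$, which is exactly the final step of the paper's argument.
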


We stress that the lemma  {\em does not immediately follow} from the fact that a random degree-$d$ polynomial $f$ is a $d$-wise independent function (i.e., for $d$ distinct points, the evaluations of $f$ is uniformly random). Indeed, the particular event we consider is {\em global}, in that reasoning about the (size of) set of preimages introduce constraints to {\em all values} in the domain. For example, fixing $f$ to have only $e \in \Zp$ mapping to $z \in \Zp$ means that {\em for all} $e' \neq e$, we require $f(e') \neq z$. 

If $f$ is a uniformly random function $\Zp \to \Zp$ instead of degree-$d$ polynomials, the corresponding probability as defined in the lemma is exactly $p^{- |\actualS| - 1} \cdot  \Error(1, |\actualS|; p)^{-1} \eta(\actualS)$.\footnote{The error term stems from the additional constraints $f(\actualE_i) = z_i$ for all $i \in \actualS$.} For random degree-$d$ polynomial $f$, we can also show that the probability is within $(1 \pm \varepsilon)$ factor of this mentioned value, but we only state the relevant lower bound required by our security proof later on.

\ifnum\eprintversion=1
\input{tex_files/random-function-proofs}
\fi
\input{tex_files/key-lemma-proofs}

%% file: tex_files/random-function-proofs.tex
\subsection{Proof of \Cref{lem:dist-S-random-f}}\label{sec:dist-S-random-f}
As an overview, we will simplify the calculation of the probability by defining placeholder events and apply the inclusion-exclusion principle so that the events can be easily analyzed.
For readability, we first define the events $B_i := |f^{-1}(z_i)| = 1$ for $ i \in [\qnum]$, and $A := \bigcap_{i \in \actualS} B_i$. Then, we can write the the event defining $\eta(\actualS)$ as: 
\ifnum\eprintversion=0
\begin{align*}
	(\forall i \in \actualS : |f^{-1}(z_i)| = 1) \eAnd (\forall i \notin \actualS : |f^{-1}(z_i)| \neq 1) 
	&= A \eAnd \left(\bigcap_{i \in [\qnum] \setminus \actualS} \neg B_i \right)\\
	&= A \eAnd \neg \left(\bigcup_{i \in [\qnum] \setminus \actualS} B_i \right)\;.
\end{align*}
\else
\begin{align*}
	(\forall i \in \actualS : |f^{-1}(z_i)| = 1) \eAnd (\forall i \notin \actualS : |f^{-1}(z_i)| \neq 1) 
	&= A \eAnd \left(\bigcap_{i \in [\qnum] \setminus \actualS} \neg B_i \right)= A \eAnd \neg \left(\bigcup_{i \in [\qnum] \setminus \actualS} B_i \right)\;.
\end{align*}
\fi
The first equality follows from $\neg B_i = |f^{-1}(z_i)| \neq 1$. 

Hence, we have that \begin{align*}
	\eta(\actualS) &= \Pr_{f \getsr \calF} \left[A \eAnd \neg \left(\bigcup_{i \in [\qnum] \setminus \actualS} B_i \right)~\middle|~f(\actualX) = z_0 \right] \\ 
	&= \Pr[A~|~f(\actualX) = z_0] - \Pr\left[A \eAnd \left(\bigcup_{i \in [\qnum] \setminus \actualS} B_i \right)~\middle|~f(\actualX) = z_0 \right]\\
	&= \Pr[A~|~f(\actualX) = z_0] - \Pr\left[\bigcup_{i \in [\qnum] \setminus \actualS} (A \eAnd B_i) ~\middle|~f(\actualX) = z_0 \right]\;.
\end{align*}
Next, by inclusion-exclusion principle, we can compute the probability of the event $\bigcup_{i \in [\qnum] \setminus \actualS} (A \eAnd B_i)$, as the following alternating sum 
\ifnum\eprintversion=0
\begin{align*}
	&\Pr\left[\bigcup_{i \in [\qnum] \setminus \actualS} (A \eAnd B_i)~\middle|~f(\actualX) = z_0 \right] \\
	&= \sum_{\emptyset \neq T \subseteq [\qnum] \setminus \actualS} (-1)^{|T| - 1} \Pr\left[\bigcap_{i \in T} (A \eAnd B_i)~\middle|~f(\actualX) = z_0\right] \\
	&= \sum_{\emptyset \neq T \subseteq [\qnum] \setminus \actualS} (-1)^{|T| - 1} \Pr\left[A \eAnd \bigcap_{i \in T}  B_i~\middle|~f(\actualX) = z_0\right]\;.
\end{align*}
\else
\begin{align*}
	\Pr\left[\bigcup_{i \in [\qnum] \setminus \actualS} (A \eAnd B_i)~\middle|~f(\actualX) = z_0 \right] 
	&= \sum_{\emptyset \neq T \subseteq [\qnum] \setminus \actualS} (-1)^{|T| - 1} \Pr\left[\bigcap_{i \in T} (A \eAnd B_i)~\middle|~f(\actualX) = z_0\right] \\
	&= \sum_{\emptyset \neq T \subseteq [\qnum] \setminus \actualS} (-1)^{|T| - 1} \Pr\left[A \eAnd \bigcap_{i \in T}  B_i~\middle|~f(\actualX) = z_0\right]\;.
\end{align*}
\fi
Note that $A = A \eAnd \bigcap_{i \in \emptyset}  B_i$, meaning we can also write $\eta(\actualS)$ as 
\ifnum\eprintversion=0
\begin{align*}
	\eta(\actualS) &= \Pr\left[ A \eAnd \bigcap_{i \in \emptyset}  B_i \middle | ~f(\actualX) = z_0\right] \\
	&\pcind[2] - \sum_{\emptyset \neq T \subseteq [\qnum] \setminus \actualS} (-1)^{|T| - 1} \Pr\left[A \eAnd \bigcap_{i \in T}  B_i~\middle|~f(\actualX) = z_0\right] \\
	&= \sum_{T \subseteq [\qnum] \setminus \actualS} (-1)^{|T|} \Pr\left[A \eAnd \bigcap_{i \in T}  B_i~\middle|~f(\actualX) = z_0\right]\;.
\end{align*}
\else
\begin{align*}
	\eta(\actualS) &= \Pr\left[ A \eAnd \bigcap_{i \in \emptyset}  B_i \middle | ~f(\actualX) = z_0\right] - \sum_{\emptyset \neq T \subseteq [\qnum] \setminus \actualS} (-1)^{|T| - 1} \Pr\left[A \eAnd \bigcap_{i \in T}  B_i~\middle|~f(\actualX) = z_0\right] \\
	&= \sum_{T \subseteq [\qnum] \setminus \actualS} (-1)^{|T|} \Pr\left[A \eAnd \bigcap_{i \in T}  B_i~\middle|~f(\actualX) = z_0\right]\;.
\end{align*}
\fi
The second equality follows from $-(-1)^{|T| - 1} = (-1)^{|T|}$. 
Now, we state the following lemma which determines the exact probability of $A \eAnd \bigcap_{i \in T}  B_i$. The proof is in \Cref{sec:all-one-primage-subset}.

\begin{lemma}\label{lem:all-one-primage-subset}
	For any distinct $z_0, z_1, \ldots, z_\qnum \in \Zp$, $\actualX \in \Zp$, and $S' \subseteq [\qnum]$, let $l = |S'| \leq \qnum$, then 
	\[
	\Pr_{f \getsr \calF}\left[\forall i \in S': |f^{-1}(z_i)| = 1~\middle|~f(\actualX) = z_0\right] = \left(1 - \frac{l}{p}\right)^{p - l - 1} \cdot \Error(1, l; p)\;.
	\]
\end{lemma}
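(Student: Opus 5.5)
The probability will be computed by an exact count. Conditioning on $f(\actualX)=z_0$ leaves $f$ uniform on the $p-1$ points of $\Zp\setminus\{\actualX\}$, so the conditional probability equals the number of assignments $f:\Zp\setminus\{\actualX\}\to\Zp$ with $|f^{-1}(z_i)|=1$ for all $i\in S'$, divided by $p^{p-1}$.

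Since $z_0$ is distinct from every $z_i$, the point $\actualX$ is never a preimage of any $z_i$ with $i\in S'$. Hence every unique preimage must lie in $\Zp\setminus\{\actualX\}$.

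The count proceeds in two steps, in the same spirit as the proofs of \Cref{corol:prob-contain-index,corol:lower-bound-eta}.

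\textbf{Step 1: choose the unique preimages.} Assign to each $i\in S'$ its unique preimage $e_i\in\Zp\setminus\{\actualX\}$. The points $e_i$ must be pairwise distinct, because the $z_i$ are distinct and $f$ is a function. The number of such injective choices is $(p-1)!/(p-1-l)!$.

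\textbf{Step 2: fill in the remaining points.} For each of the remaining $p-1-l$ points of $\Zp\setminus(\{\actualX\}\cup\{e_i\}_{i\in S'})$, choose any image outside $\{z_i\}_{i\in S'}$. This guarantees that no $z_i$ gains a second preimage, and gives $(p-l)^{p-l-1}$ choices.

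Distinct choices in these two steps yield distinct functions, and every valid $f$ arises exactly once. So there is no overcounting, and the count is exact.

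Dividing by $p^{p-1}=p^{l}\cdot p^{p-1-l}$ gives
\[
\frac{(p-1)!}{p^{l}(p-1-l)!}\cdot\left(\frac{p-l}{p}\right)^{p-l-1}
=\Error(1,l;p)\left(1-\frac{l}{p}\right)^{p-l-1},
\]
where the last step uses the definition $\Error(a,l;p)=p^{-l}(p-a)!/(p-a-l)!$ with $a=1$.

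The only point needing care, rather than a real obstacle, is the role of the conditioning point. It removes $\actualX$ from both the choice of the $e_i$ and the free positions. Distinctness $z_0\ne z_i$ is what ensures $\actualX$ creates no extra preimage. The edge case $l=0$ gives $1$, as it should.
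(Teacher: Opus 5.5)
Your proposal is correct and follows the paper's proof essentially verbatim: you count the $p^{p-1}$ functions with $f(\actualX)=z_0$, choose distinct unique preimages outside $\actualX$ in $(p-1)!/(p-l-1)!$ ways, fill the remaining $p-l-1$ points with images outside $\{z_i\}_{i\in S'}$ in $(p-l)^{p-l-1}$ ways, and divide. You also make explicit two points the paper leaves implicit: that $z_0\neq z_i$ is what rules out $\actualX$ as a preimage, and that the count is bijective.
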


By substituting in the terms and setting $S' = \actualS \cup T$ to the lemma statement, we have that \begin{align}
	\Pr\left[A \eAnd \bigcap_{i \in T}  B_i~\middle|~f(\actualX) = z_0\right] &= \Pr[\forall i \in \actualS \cup T: |f^{-1}(z_i)| = 1~|~f(\actualX) = z_0] \nonumber \\
	&=  \left(1 - \frac{|\actualS| + |T|}{p}\right)^{p - |\actualS| - |T| - 1} \cdot \Error(1,|\actualS| + |T| ; p) \label{eq:Prob-A-and-Bi}\;.
\end{align}
Then, notice that $\Pr[A \eAnd \bigcap_{i \in T}  B_i~|~f(\actualX) = z_0]$ only depends the size of $\actualS$ and $T$, meaning we can write \begin{align*}
	\eta(\actualS)&= \sum_{T \subseteq [\qnum] \setminus \actualS} (-1)^{|T|} \Pr\left[A \eAnd \bigcap_{i \in T}  B_i~\middle|~f(\actualX) = z_0\right] \\
	&= \sum_{T \subseteq [\qnum] \setminus \actualS} (-1)^{|T|} \left(1 - \frac{|\actualS| + |T|}{p}\right)^{p - |\actualS| - |T| - 1} \cdot \Error(1,|\actualS| + |T| ; p) \\
	&= \sum_{i = 0}^{\qnum - |\actualS|} (-1)^{i} \binom{\qnum - |\actualS|}{i} \left(1 - \frac{|\actualS| + i}{p}\right)^{p - |\actualS|- i - 1} \Error(1 ,|\actualS| + i; p)\;.
\end{align*}
The second equality follows from \Cref{eq:Prob-A-and-Bi}.
The third equality follows by replacing each size $|T|$ with $i$ and counting all subsets $T \subseteq [\qnum] \setminus \actualS$ of size $i \geq 1$.   \qed

	\subsection{Proof of \Cref{lem:all-one-primage-subset}} \label{sec:all-one-primage-subset}
	We note first that there are exactly $p^{p - 1}$ functions $f \in \calF$ such that $f(\actualX) = z_0$. Then, we will now count the number of functions $f \in \calF$ such that for all $i \in S'$, $|f^{-1}(z_i)| = 1$, which is done in the following steps: \begin{enumerate}
		\item {\em Count number of ways to assign unique preimages of $z_i$:} For each $i \in S'$, we can assign distinct $e_i \neq \actualX$ such that $f(e_i) = z_i$. In particular, this correspond to selecting a vector $(e_i)_{i \in S'}$ of length $l = |S'|$ with distinct elements, i.e., there are $(p - 1)! / (p - l - 1)!$ such vectors.
		
		\item {\em Count number of ways to assign images to the non-selected elements in $\Zp$:} For each $e \in \Zp \setminus (\{\actualX\} \cup \{e_i\}_{i \in S'})$, we can select $f(e)$ to be any value in $\Zp$ except for $\{z_i\}_{i \in S'}$, since we constrained the number of preimages for these values to $1$. Hence, there are exactly $(p - l)^{p - l - 1}$ ways to assign these images. 
	\end{enumerate}
	Combining the two steps and dividing by $p^{p - 1}$, we have that \begin{align*}
		\Pr_{f \getsr \calF}\left[\forall i \in S': |f^{-1}(z_i)| = 1~\middle|~f(\actualX) = z_0\right] &= p^{- p + 1} \cdot \frac{(p - 1)!}{(p - l - 1)!} \cdot (p - l)^{p - l - 1} \\
		&= \left(1 - \frac{l}{p}\right)^{p - l - 1} \cdot \Error(1, l; p)\;.
	\end{align*} The equality follows from the definition of $\Error(1 , l ; p) = p^{- l} \cdot \frac{(p - 1)!}{(p - l - 1)!}$. \qed

%% file: tex_files/key-lemma-proofs.tex
\subsection{Proof of \Cref{lemma:size-f}}\label{sec:size-f}
Let $(z_0, \vec{z}, \actualS, \actualX, (\actualE_i)_{i \in \actualS})$ be as in the lemma statement with $S \subseteq [\qnum]$, $z_0, \ldots, z_\qnum$ distinct, and $\actualX, (\actualE_i)_{i \in \actualS}$ distinct.
To prove the lemma and for readability, we first define for each $i \in [\qnum]$, the events $D$ and $E_i$ as 
\ifnum\eprintversion=1
\begin{align*}
	D &:=  f(\actualX) = z_0 \eAnd
	(\forall i \in \actualS: f^{-1}(z_i) = \{\actualE_i\})
	;, 
	&E_i := |f^{-1}(z_i)| = 1\;.
\end{align*}
\else
\begin{align*}
	D &:=  f(\actualX) = z_0 \eAnd
	\forall i \in \actualS: f^{-1}(z_i) = \{\actualE_i\}\;, \;
	&E_i := |f^{-1}(z_i)| = 1\;.
\end{align*}
\fi
Then, we rewrite our desired probability (which we will denote as $F$) as  
\ifnum\eprintversion=0
\begin{align*}
F&:=\Pr_{f \getsr \Zp^d[\varX]}\left[\begin{array}{l}f(\actualX) = z_0 \eAnd \forall i \notin \actualS: |f^{-1}(z_i)| \neq 1\\
	\eAnd \forall i \in \actualS: f^{-1}(z_i) =  \{\actualE_i\}
\end{array} \right] \textstyle = \Pr\left[D \mathbin{\bigg\backslash} \bigcup_{j \in [\qnum] \setminus \actualS} E_i\right] \\
&= \textstyle \Pr\left[D  \mathbin{\bigg\backslash} \bigcup_{j \in [\qnum] \setminus \actualS} (D \eAnd E_j) \right] = \Pr[D] - \Pr\left[\bigcup_{j \in [\qnum] \setminus \actualS} (D \eAnd E_i) \right] \;.
\end{align*}
\else
\begin{align*}
	F&:=\Pr_{f \getsr \Zp^d[\varX]}\left[\begin{array}{l}f(\actualX) = z_0 \eAnd \forall i \notin \actualS: |f^{-1}(z_i)| \neq 1\\
		\eAnd \forall i \in \actualS: f^{-1}(z_i) =  \{\actualE_i\}
	\end{array} \right] \\
	& = \Pr\left[D \mathbin{\bigg\backslash} \bigcup_{j \in [\qnum] \setminus \actualS} E_i\right] \\
	&= \Pr\left[D  \mathbin{\bigg\backslash} \bigcup_{j \in [\qnum] \setminus \actualS} (D \eAnd E_j) \right] \\
	&= \Pr[D] - \Pr\left[\bigcup_{j \in [\qnum] \setminus \actualS} (D \eAnd E_i) \right] \;.
\end{align*}
\fi
Next, we apply the principle of inclusion-exclusion \ifnum\eprintversion=1(\Cref{lemma:inclusion-exclusion-bound})\fi to compute the probability of $\bigcup_{j \in [\qnum] \setminus \actualS} (D \eAnd E_i) $ as \[
\textstyle \Pr\left[\bigcup_{j \in [\qnum] \setminus \actualS} (D \eAnd E_i) \right] = \sum_{\emptyset \neq T \subseteq  [\qnum] \setminus \actualS} (-1)^{|T| + 1} \Pr\left[\bigcap_{j \in T} (D \eAnd E_j) \right]\;.
\]
Also, notice that $D = \bigcap_{j \in \emptyset} (D \eAnd E_j)$, meaning 
\ifnum\eprintversion=1
\begin{equation}
F =  \sum_{T \subseteq  [\qnum] \setminus \actualS} (-1)^{|T|} \Pr\left[\bigcap_{j \in T} (D \eAnd E_j) \right]\;.\label{eq:prob-F}
\end{equation}
\else
\begin{equation}
	\textstyle F =  \sum_{T \subseteq  [\qnum] \setminus \actualS} (-1)^{|T|} \Pr\left[\bigcap_{j \in T} (D \eAnd E_j) \right]\;.\label{eq:prob-F}
\end{equation}
\fi
Note that by definition, we can write the event $\bigcap_{j \in T} (D \eAnd E_j)$ as \ifnum\eprintversion=0
\begin{align*}
	\textstyle \bigcap_{j \in T} (D \eAnd E_j) = D \eAnd \bigcap_{j \in T}  E_j = \left( 
	\begin{array}{l}
		f(\actualX) = z_0 \eAnd 
		\forall j \in T:  |f^{-1}(z_j)| = 1 \eAnd\\
		\forall j \in \actualS: f^{-1}(z_j) = \{\actualE_j\}
	\end{array} \right)\;.
\end{align*}
\else
\begin{align*}
	\bigcap_{j \in T} (D \eAnd E_j) = D \eAnd \bigcap_{j \in T}  E_j = \left(  
	f(\actualX) = z_0 \eAnd 
	\forall j \in T:  |f^{-1}(z_j)| = 1 \eAnd
	\forall j \in \actualS: f^{-1}(z_j) = \{\actualE_j\} 
	\right)\;.
\end{align*}
\fi
We now compute $\Pr[D \eAnd \bigcap_{j \in T} E_j]$. We observe that this event can be partitioned into $p^{|T|}$ disjoint events, $G_{T,  (e_i')_{i \in T}}$ for $(e_i')_{i \in T} \in \Zp^{|T|}$ defined as 
\ifnum\eprintversion=0
\[
	 G_{T,  (e_i')_{i \in T}} := \left( \begin{array}{l}
		f(\actualX) = z_0 \eAnd  
		\forall j \in T:  f^{-1}(z_j) = \{e'_j\} \eAnd  
		\forall j \in \actualS: f^{-1}(z_j) = \{\actualE_j\}
	\end{array} \right)\;.
\]
\else
\[
G_{T,  (e_i')_{i \in T}} := \left( \begin{array}{l}
	f(\actualX) = z_0 \eAnd 
	\forall j \in T:  f^{-1}(z_j) = \{e'_j\} \eAnd 
	\forall j \in \actualS: f^{-1}(z_j) = \{\actualE_j\}
\end{array} \right)\;.
\]
\fi
Note that the event $G_{T,  (e_i')_{i \in T}}$ partitions $D \eAnd \bigcap_{j \in T} E_j$ because (1) by definition, all $G_{T,  (e_i')_{i \in T}}$ is a subset of (i.e., implies) $D  \eAnd \bigcap_{j \in T} E_j$, (2) the union of all $G_{T,  (e_i')_{i \in T}}$ contains $D  \eAnd \bigcap_{j \in T} E_j$ since each of the sets explains what the only zero of $f(\varX) - z_j = 0$ could be, and (3) $G_{T,  (e_i')_{i \in T}}$ and $G_{T,  (e_i'')_{i \in T}}$ for $(e_i')_{i \in T} \neq (e_i'')_{i \in T}$ are disjoint as no function can have two ``unique preimages"--i.e., $f^{-1}(z_j) = \{e'_j\} = \{e''_j\}$ for $e'_j \neq e''_j$ for some $j \in T$, which is clearly a contradiction.

\ifnum\eprintversion=1
Also, for $(e_i' \in \Zp)_{i \in T}$, $\Pr[G_{T, (e_i')_{i \in T}}] = 0$ if one of the following is true: \begin{enumerate}[noitemsep,topsep=0pt,label=(\alph*)]
	\item For some $i \in T, j \in \actualS$, $e_i' = \actualE_j$ {\em or} 
	\item For some $i \in T, \actualX = e_i'$, {\em or}
	\item For some $i\neq j \in T$,  $e_i' = e'_j$.
\end{enumerate} 
\else
Also, we claim that $\Pr[G_{T, (e_i')_{i \in T}}] = 0$ for $(e_i')_{i \in T}$ that satisfies one of the following is true: (a) For some $i \in T, j \in \actualS$, $e_i' = \actualE_j$ {\em or}, (b) For some $i \in T, \actualX = e_i'$, {\em or}, (c) For some $i\neq j \in T$,  $e_i' = e'_j$.

\fi
The reasoning is as follows. If (a) occurs, $z_i = f(e_i') = f(\actualE_j) = z_j$, a contradiction as $z_i \neq z_j$. If (b) occurs, $z_0 = f(\actualX) = f(e_i') = z_i$, also a contradiction since $z_i \neq z_0$. If (c) occurs, $z_i  = z_j$, which is again a contradiction.

\ifnum\eprintversion=1
Hence, we can write \[
\Pr\left[\bigcap_{j \in T} (D \eAnd E_j)\right] = \sum_{(e_i')_{i \in T} \text{ distinct and not overlap with } \{\actualE_i\}_{i \in \actualS} \cup \{\actualX\}} \Pr[G_{T, (e_i')_{i \in T}}]\;.
\]
\else
Hence, we write $
\Pr\left[\bigcap_{j \in T} (D \eAnd E_j)\right] = \sum_{\substack{(e_i')_{i \in T} \text{ distinct and not} \\ \text{ overlap with } \{\actualE_i\}_{i \in \actualS} \cup \{\actualX\}}} \Pr[G_{T, (e_i')_{i \in T}}]$.
\fi
We will now bound the probability of the event $G_{T, (e_i')_{i \in T}}$ via the following lemma, proved in \ifnum\eprintversion=1 \Cref{sec:key-lemma-size}\else the full version\fi. In particular, it is a special case of \Cref{lemma:size-f} where we do not consider the constraints that for $i \notin S$, the number of preimages of $z_i$ is not $1$. 
We note that the term $p^{-|S| - 1}\left(1 - \frac{|S|}{p}\right)^{p - |S| - 1}$ is exactly the probability over a random function $f$ such that $f(\actualX) = z_0$ and for each $j \in S$, $z_j$ has only one preimage being $\actualE_j$\ifnum\eprintversion=1 ---drawing comparison to \Cref{lem:all-one-primage-subset} (with extra constraints on which elements are the preimages)\fi. 
The proof idea is that in the binomial expansion of $\left(1 - \frac{|S|}{p}\right)^{p - |S| - 1}$, the terms after a large enough $d$ will be {\em very small relative to the overall value}; hence, the multiplicative error $\varepsilon$. The formal proof follows similar (but more complex) counting argument as in~\Cref{sec:tech-overview} relying on the inclusion-exclusion principle.

\begin{lemma}\label{lemma:key-lemma-size}
	Let $d = \lceil (e^2 + 1) \qnum + 2 \log_2 p \rceil + 1$, $S \subseteq [\qnum]$, 
	$\actualX, (\actualE_i)_{i \in S}$ be distinct values, and $z_0, z_1,\ldots,z_\qnum$ be distinct values. 
	Then, with $\varepsilon = e^{-(e^2 - 1) \qnum - \log_2 p}$, 
	 \ifnum\eprintversion=0 
	 \begin{align*}
		\Pr_{f \getsr \Zp^{d}[\varX]}\left[ \begin{array}{l}
			f(\actualX) = z_0 \; \eAnd\\ 
			\forall j \in S: f^{-1}(z_j) = \{\actualE_j\}
		\end{array} \right] \in [1 \pm \varepsilon] \cdot p^{- |S| -1 }\left(1 - \frac{|S|}{p}\right)^{p - |S| - 1} \;.
	\end{align*} 
	\else 
	\begin{align*}
		\Pr_{f \getsr \Zp^{d}[\varX]}\left[ \begin{array}{l}
			f(\actualX) = z_0 \; \eAnd
			\forall j \in S: f^{-1}(z_j) = \{\actualE_j\}
		\end{array} \right] \in [1 \pm \varepsilon] \cdot p^{- |S| -1 }\left(1 - \frac{|S|}{p}\right)^{p - |S| - 1} \;.
	\end{align*} 
	\fi
\end{lemma}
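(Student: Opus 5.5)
The plan is to mimic the one-query computation of \Cref{sec:tech-overview}: write the target event as ``all prescribed evaluations hold'' minus ``some extra preimage exists'', and expand the latter by inclusion--exclusion. Let $s = |S|$ and $P := \Zp \setminus (\{\actualX\} \cup \{\actualE_j\}_{j \in S})$, so $|P| = p - s - 1$. The event equals
\[
\textstyle C := \{f(\actualX) = z_0\} \cap \bigcap_{j \in S}\{f(\actualE_j) = z_j\}
\]
minus the union over pairs $(e', j) \in P \times S$ of the events $\{f(e') = z_j\}$. Each element $e'$ can map to at most one $z_j$. Therefore, an intersection over a set $U$ of such pairs is nonempty only if the first coordinates in $U$ are distinct. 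Inclusion--exclusion then gives
\[
\Pr[\cdot] = \sum_{k \geq 0} (-1)^k N_k \Pr[C \cap \{f(e'_1) = z_{j_1}, \ldots, f(e'_k) = z_{j_k}\}],
\]
where $N_k = \binom{p - s - 1}{k} s^k$ counts the choices of $k$ distinct points in $P$ together with a target index for each.

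The key fact is that, for a uniformly random monic degree-$d$ polynomial, prescribing values at $m$ distinct points has probability exactly $p^{-m}$ whenever $m \leq d$. This is the $d$-wise independence of the non-leading coefficients (interpolation: subtracting $\varX^d$ leaves a uniform polynomial of degree $<d$). So for $k \leq d - s - 1$ the $k$-th term equals $\binom{p-s-1}{k} s^k p^{-s-1-k}$. These are precisely the first terms of the binomial expansion of
\[
p^{-s-1}\left(1 - \tfrac{s}{p}\right)^{p-s-1} = p^{-s-1} \sum_{k} (-1)^k \binom{p-s-1}{k} \left(\tfrac{s}{p}\right)^k,
\]
which is the random-function value.

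To control the discrepancy I will use the Bonferroni bounds of \Cref{lemma:inclusion-exclusion-bound}. Truncating at an even $K$ and at $K+1$, with $K + 1 \leq d - s - 1$ so that every retained term is exact, sandwiches the true probability between two truncated binomial sums. Each truncated sum differs from $(1 - s/p)^{p-s-1}$ by at most the tail $\sum_{k \geq K} \binom{p-s-1}{k}(s/p)^k \leq \sum_{k \geq K} (s^k/k!) \leq 2 (e s / K)^K$, using \Cref{lem:sterling-lower-bound}. Dividing by the main term, which is at least $e^{-s \cdot \frac{p-s-1}{p-s}} \geq e^{-\qnum}$ by \Cref{lem:lower-bound-expo}, gives a relative error of at most $2 e^{\qnum} (e\qnum/K)^K$. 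Choosing $K \approx d - \qnum - 2 \geq e^2 \qnum + 2\log_2 p - 1$ makes $e\qnum/K \leq e^{-1}$. The relative error is then at most about $e^{\qnum - K} \leq e^{-(e^2-1)\qnum - 2\log_2 p + O(1)}$, which is at most $\varepsilon = e^{-(e^2-1)\qnum - \log_2 p}$ once the spare $\log_2 p$ absorbs the constants.

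The main obstacle is this last step: fixing the parity and exact value of $K$ against the ceiling in $d$, and showing the constant slack from $2e^{\qnum}$ and the Stirling bound fits within the stated $\varepsilon$. I would first try $K$ equal to the largest even integer with $K + 1 \leq d - s - 1$. The degenerate case $s = 0$ is immediate, since then the probability is exactly $p^{-1}$.
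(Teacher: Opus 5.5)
Your proposal is correct and is essentially the paper's proof. The paper likewise writes the event as $V$ (your $C$) minus the union of the events $U_{i,\alpha}=\{f(\alpha)=z_i\}$ over $i\in S$ and $\alpha\notin\{\actualX\}\cup\{\actualE_j\}_{j\in S}$. It then notes that these intersections vanish unless the $\alpha$'s are distinct, and otherwise equal $p^{-|S|-1-|B|}$ by $d$-wise independence. Finally, it sandwiches the probability with Bonferroni at an even $K$ with $K+1\le d-|S|-1$, and compares against the binomial expansion of $(1-|S|/p)^{p-|S|-1}\ge e^{-\qnum}$.

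The only deviation is the tail estimate. The paper bounds each tail term by $(e\qnum/i)^i\le e^{-i}$ and multiplies by at most $p$ terms, spending one $\log_2 p$ on that factor; it also simply sets $K=e^2\qnum+2\log_2 p$ without addressing parity or the ceiling. Your geometric-series bound avoids the factor $p$, which leaves ample room for the constant and parity slack you were worried about.
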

By substituting  $S = \actualS \cup T$, $\actualE_i = \actualE_i$ for $i \in \actualS$ and $\actualE_i = e'_i$ for $i \in T$, we have that 
\ifnum\eprintversion=1
\[
	\Pr[G_{T, (e_i')_{i \in T}}] \in [1 \pm \varepsilon] \cdot p^{- |\actualS| - |T| -1 }\left(1 - \frac{|\actualS| + |T|}{p}\right)^{p - |\actualS| - |T| - 1} \;.
	\] \else
	$	\Pr[G_{T, (e_i')_{i \in T}}] \in [1 \pm \varepsilon] \cdot p^{- |\actualS| - |T| -1 }\left(1 - \frac{|\actualS| + |T|}{p}\right)^{p - |\actualS| - |T| - 1}$.
	\fi
	Hence, we bound 
	\ifnum\eprintversion=1
	\begin{align*}
	\Pr\left[\bigcap_{j \in T} (D \eAnd E_j)\right] &\in  [1 \pm \varepsilon] \cdot \Error(|\actualS| + 1, |T| ; p) p^{|T|} \cdot p^{- |\actualS| - |T| -1 }\left(1 - \frac{|\actualS| + |T|}{p}\right)^{p - |\actualS| - |T| - 1} \\
	&=  [1 \pm \varepsilon] \cdot \Error(|\actualS| + 1, |T| ; p) \cdot p^{- |\actualS| -1 }\left(1 - \frac{|\actualS| + |T|}{p}\right)^{p - |\actualS| - |T| - 1} \;, 
	\end{align*} 
	\else
	\begin{align*}
		&\textstyle\Pr[\bigcap_{j \in T} (D \eAnd E_j)] \\
		&\in  [1 \pm \varepsilon] \cdot \Error(|\actualS| + 1, |T| ; p) p^{|T|} \cdot p^{- |\actualS| - |T| -1 }\left(1 - \frac{|\actualS| + |T|}{p}\right)^{p - |\actualS| - |T| - 1} \\
		&=  [1 \pm \varepsilon] \cdot \Error(|\actualS| + 1, |T| ; p) \cdot p^{- |\actualS| -1 }\left(1 - \frac{|\actualS| + |T|}{p}\right)^{p - |\actualS| - |T| - 1} \;, 
	\end{align*} 
	\fi
	where the multiplicative term $\Error(|\actualS| + 1, |T| ; p) \cdot p^{|T|} = \frac{(p - |\actualS| - 1)!}{(p - |\actualS| - |T| - 1)!}$ is the number of distinct $(e'_i)_{i \in T} \in (\Zp \setminus (\{\actualX\} \cup \{\actualE_i\}_{i \in T}))^{|T|}$. We emphasize that this probability only depends on the size of $\actualS$ and $T$ (ignoring for the error term). 
	
	Therefore, we can (upper and lower) bound $F$ by substituting into \Cref{eq:prob-F} the derived bound as \begin{align*}
	F &\in \sum_{T \subseteq [\qnum]\setminus\actualS} (-1)^{|T|} [1 \pm \varepsilon] \cdot \Error(|\actualS| + 1, |T| ; p)  p^{- |\actualS| -1 } \left(1 - \frac{|\actualS| + |T|}{p}\right)^{p - |\actualS| - |T| - 1} \\
	&= p^{- |\actualS| - 1} \sum_{i = 0}^{\qnum - |\actualS|} (-1)^i [1 \pm \varepsilon] \binom{\qnum - |\actualS|}{i } \Error(|\actualS| + 1, i ; p) \left(1 - \frac{|\actualS| + i}{p}\right)^{p - |\actualS| - i - 1}.
\end{align*}
The second equality follows from setting $i = |T|$ and factoring out $p^{- |\actualS| - 1}$. 
Now, we lower bound $p^{|\actualS| + 1}  F $ by taking the worst case of $\pm \varepsilon$ in the alternating sum: 
\begin{align}
	p^{ |\actualS| + 1} F& \geq  \sum_{i = 0}^{\qnum - |\actualS|} (-1)^i \binom{\qnum - |\actualS|}{i } \Error(|\actualS| + 1, i ; p) \cdot \left(1 - \frac{|\actualS| + i}{p}\right)^{p - |\actualS| - i - 1} \label{eq:first-term-F} \\
	& \pcind[2] - \varepsilon \sum_{i = 0}^{\qnum - |\actualS|} \binom{\qnum - |\actualS|}{i} \Error(|\actualS| + 1, i ; p) \cdot \left(1 - \frac{|\actualS| + i}{p}\right)^{p - |\actualS| - i - 1}\label{eq:second-term-F}\;.
\end{align}
We first see that the right-hand side of  (\ref{eq:first-term-F}) is  \begin{align}
	&\sum_{i = 0}^{\qnum - |\actualS|} (-1)^i \binom{\qnum - |\actualS|}{i } \Error(|\actualS| + 1, i ; p) \cdot \left(1 - \frac{|\actualS| + i}{p}\right)^{p - |\actualS| - i - 1} \nonumber \\
	&= \Error(1, |\actualS|; p)^{-1} \cdot \sum_{i = 0}^{\qnum - |\actualS|} (-1)^i \binom{\qnum - |\actualS|}{i } \Error(1, |\actualS| + i ; p) \cdot \left(1 - \frac{|\actualS| + i}{p}\right)^{p - |\actualS| - i - 1} \nonumber \\
	&=  \Error(1, |\actualS|; p)^{-1} \cdot \eta(\actualS) \label{eq:derived-first-term}
\end{align}
The first equality follows from $\Error(|\actualS| + 1, i ; p) = \Error(1 , |\actualS|; p)^{-1}\Error(1 , |\actualS| +  i ; p)$. The second equality follows from \Cref{lem:dist-S-random-f}. 

Finally, we upper bound the term in (\ref{eq:second-term-F}), so that ultimately, we can lower bound $p^{|\actualS| + 1} \cdot F$ with $(1 - \varepsilon') \cdot \eta(\actualS) $ for $\epsilon' = e^{-(e^2 - 3) \qnum - \log_2 p + 1}$. 
\ifnum\eprintversion=1
\begin{align}
	&\varepsilon \cdot \sum_{i = 0}^{\qnum - |\actualS|} \binom{\qnum - |\actualS|}{i} \Error(|\actualS| + 1, i ; p) \cdot \left(1 - \frac{|\actualS| + i}{p}\right)^{p - |\actualS| - i - 1} \nonumber\\
	& \leq \varepsilon  \sum_{i = 0}^{\qnum - |\actualS|} \binom{\qnum - |\actualS|}{i } \nonumber\\
	&= \varepsilon 2^{\qnum - |\actualS|} \nonumber\\
	& \leq e^{-(e^2 - 1) \qnum - \log_2 p} \cdot e^{\qnum}  = e^{-(e^2 - 3) \qnum - \log_2 p + 1} e^{-\qnum - 1}\nonumber \\
	& \leq e^{-(e^2 - 3) \qnum - \log_2 p + 1} \left(1 - \frac{\qnum}{p}\right)^{p - |\actualS| - 1} \nonumber\\
	& \leq \epsilon' \cdot \Error(1, |\actualS|; p)^{-1} \cdot \eta(\actualS)\;. \label{eq:derived-second-term}
\end{align}
\else
\begin{align}
	&\varepsilon \cdot \sum_{i = 0}^{\qnum - |\actualS|} \binom{\qnum - |\actualS|}{i } \Error(|\actualS| + 1, i ; p) \cdot \left(1 - \frac{|\actualS| + i}{p}\right)^{p - |\actualS| - i - 1} \nonumber\\
	& \leq \varepsilon  \sum_{i = 0}^{\qnum - |\actualS|} \binom{\qnum - |\actualS|}{i } = \varepsilon 2^{\qnum - |\actualS|}  \leq e^{-(e^2 - 1) \qnum - \log_2 p} \cdot e^{\qnum}  = e^{-(e^2 - 3) \qnum - \log_2 p + 1} e^{-\qnum - 1}\nonumber \\
	& \leq e^{-(e^2 - 3) \qnum - \log_2 p + 1} \left(1 - \frac{\qnum}{p}\right)^{p - |\actualS| - 1} \leq \epsilon' \cdot \Error(1, |\actualS|; p)^{-1} \cdot \eta(\actualS)\;. \label{eq:derived-second-term}
\end{align}
\fi
The first inequality follows simply from $(1 - x) \leq 1$. The second equality follows from the binomial sum. 
The third inequality is subtituting $\varepsilon$ and that $2^{\qnum - |\actualS|} \leq e^{\qnum}$. The second to last inequality follows from \Cref{lem:lower-bound-expo} with $l = \qnum$ and $k = |\actualS| + 1 \leq \qnum + 1$, giving $\left(1 - \frac{\qnum}{p}\right)^{p - |\actualS| - 1} \geq e^{-\qnum\frac{p - |\actualS| - 1}{p - \qnum}} \geq e^{-\qnum - \qnum(\qnum - |\actualS| - 1) / (p - \qnum)} \geq e^{-\qnum - 1}$ (last step follows from $p \geq 2 \qnum^2$). The last inequality follows from \Cref{corol:lower-bound-eta}. 

Therefore, we have that from \Cref{eq:derived-first-term,eq:derived-second-term} \ifnum\eprintversion=1 \[
p^{|\actualS| + 1} \cdot F \geq (1 - \varepsilon') \Error(1, |\actualS|; p)^{-1} \cdot \eta(\actualS) \geq  (1 - \varepsilon') \eta(\actualS)\;.
\] \else
$p^{|\actualS| + 1} \cdot F \geq (1 - \varepsilon') \cdot \allowbreak \Error(1, |\actualS|; p)^{-1} \cdot \eta(\actualS) \geq  (1 - \varepsilon') \eta(\actualS)$.
\fi The final inequality follows from $\Error(1, |\actualS|; p) \allowbreak \leq 1$. This concludes the proof. \qed

\ifnum\eprintversion=1
\input{tex_files/proof-poly-bound}
\fi

%% file: tex_files/proof-poly-bound.tex
\subsection{Proof of \Cref{lemma:key-lemma-size}}\label{sec:key-lemma-size}
Fix distinct $z_0, z_1, \ldots, z_\qnum$ and $\actualX,(\actualE_j)_{j \in S}$ as in the lemma statement. We will first simplify the probability calculation by defining placeholder events and apply the inclusion-exclusion principle (more precisely using the bounds in \Cref{lemma:inclusion-exclusion-bound}) so that the events can be easily analyzed. This proof will then crucially relies on the fact that $d$ is relatively large.
 Denote the event in the lemma statement as \[
W := (
f(\actualX) = z_0 \eAnd \forall j \in S: f^{-1}(z_j) = \{\actualE_j\})
\] 
First, we will define the following sets/events containing polynomials with specific evaluation constraints: \begin{align*}
	V &:= \left( \begin{array}{l}
		f(\actualX) = z_0 \eAnd \forall j \in S: f(\actualE_j) = z_j
	\end{array} \right)\\
	U_{i, \alpha} &:=  f(\alpha) = z_i \; ; \; \text{ for } i \in S, \alpha \in \Zp\;.
\end{align*}
Note that $W \subseteq V$ by definition, and $U_{i, \alpha}$ are sets of polynomials which maps $\alpha$ to $z_i$. 
We then claim the following: 
\begin{claim}
	$W = V \setminus \bigcup_{i \in S, \alpha \in \Zp \setminus (\{\actualE_j\}_{j \in S} \cup \{\actualX\})} U_{i, \alpha}$
\end{claim}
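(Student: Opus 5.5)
This is a set equality, and I will prove it by double inclusion. Throughout, write $\mathcal{E} := \{\actualE_j\}_{j \in S} \cup \{\actualX\}$ for the set of excluded points, so that the union on the right ranges over $i \in S$ and $\alpha \in \Zp \setminus \mathcal{E}$.

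For $W \subseteq V \setminus \bigcup U_{i,\alpha}$, take any $f \in W$.

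First, $f \in V$. Indeed, $f(\actualX) = z_0$ holds directly. Also, $f^{-1}(z_j) = \{\actualE_j\}$ gives $f(\actualE_j) = z_j$ for every $j \in S$.

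Second, $f$ avoids every excluded set $U_{i,\alpha}$. Suppose instead that $f \in U_{i,\alpha}$ for some $i \in S$ and $\alpha \notin \mathcal{E}$. Then $\alpha \in f^{-1}(z_i) = \{\actualE_i\}$, so $\alpha = \actualE_i \in \mathcal{E}$. This is a contradiction.

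For the reverse inclusion, take $f \in V$ that lies in no $U_{i,\alpha}$ with $i \in S$ and $\alpha \notin \mathcal{E}$.

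First, $f(\actualX) = z_0$ holds because $f \in V$.

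Second, fix $j \in S$; I must show $f^{-1}(z_j) = \{\actualE_j\}$.
\begin{itemize}
\item The inclusion $\actualE_j \in f^{-1}(z_j)$ comes from $V$.
\item Now take any $\alpha$ with $f(\alpha) = z_j$ and split on where $\alpha$ lies.
\begin{itemize}
\item If $\alpha \notin \mathcal{E}$, then $f \in U_{j,\alpha}$, which is excluded by assumption.
\item If $\alpha = \actualX$, then $z_j = f(\actualX) = z_0$. This contradicts the distinctness of $z_0, z_j$.
\item If $\alpha = \actualE_k$ with $k \neq j$, then $z_j = f(\actualE_k) = z_k$ (using $f \in V$). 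This contradicts the distinctness of $z_j, z_k$.
\end{itemize}
\item Hence $\alpha = \actualE_j$, so $f^{-1}(z_j) = \{\actualE_j\}$ and $f \in W$.
\end{itemize}

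The argument is elementary and I do not expect a real obstacle. The one point needing care is the case where $\alpha$ lies in the excluded set $\mathcal{E}$ but $\alpha \neq \actualE_j$. Those values are deliberately omitted from the union, so this case cannot be handled by the $U_{i,\alpha}$ sets. It is handled instead by the distinctness of $z_0, z_1, \ldots, z_\qnum$ together with the constraints in $V$. The distinctness of $\actualX$ and the $\actualE_j$'s is not needed for the equality itself; it only matters later, when counting the sizes of these sets.
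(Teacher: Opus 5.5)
Your proof is correct and follows essentially the same route as the paper: double inclusion, with the points of $\{\actualE_j\}_{j \in S} \cup \{\actualX\}$ other than $\actualE_j$ ruled out by the distinctness of $z_0, z_1, \ldots, z_\qnum$ together with the constraints in $V$. Your closing remark is also right: if $\actualX$ and the $\actualE_j$'s were not distinct, both $V$ and $W$ would be empty, so the equality would still hold.
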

\begin{proof}[of Claim]
	We first prove that $W \subseteq  V \setminus \bigcup_{i \in S, \alpha \in \Zp \setminus (\{\actualE_j\}_{j \in S} \cup \{\actualX\})} U_{i, \alpha}$. Consider $f \in W$. Since $W \subseteq V$, $f \in V$. Hence, we only need to show that $f \notin \bigcup_{i \in S, \alpha \in \Zp \setminus (\{\actualE_j\}_{j \in S} \cup \{\actualX\})} U_{i, \alpha}$, i.e., for all $i \in S$ and $\alpha \notin \{\actualE_j\}_{j \in S} \cup \{\actualX\}$, $f(\alpha) \neq z_i$. This is implied by the fact that $f^{-1}(z_i) = \{\actualE_i\}$ for all $i \in S$. 
	
	Now, we consider $f \in V \setminus \bigcup_{i \in S, \alpha \in \Zp \setminus (\{\actualE_j\}_{j \in S} \cup \{\actualX\})} U_{i, \alpha}$ and will show that $f \in W$. To do so, we consider for each $i \in S$, the set $f^{-1}(z_i)$. Note that by definition of $V$, $f(\actualE_i) = z_i$ and that for any $\alpha \notin \{\actualX\} \cup \{\actualE_j\}_{j \in S}, \alpha \notin f^{-1}(z_i)$. Hence, we only need to consider $\alpha \in \{\actualX\} \cup \{\actualE_j\}_{j \in S}$ which is not $\actualE_i$. However, since we know that $z_0, z_1, \ldots, z_\qnum$ are distinct, it cannot be the case that $f(\alpha) = z_i$ and $f(\alpha) = z_j$ for $j \neq i$. Therefore, we have that $f^{-1}(z_i) = \{\actualE_i\}$, meaning $f \in W$, proving the claim. \qed
\end{proof}

With the claim, we can write $\Pr[W]$ as 
\ifnum\eprintversion=0
\begin{align*}
	\Pr[W] &= \Pr[V] - \Pr\left[V \eAnd \bigcup_{\substack{i \in S, \\ \alpha \in \Zp \setminus (\{\actualE_j\}_{j \in S} \cup \{\actualX\})}} U_{i, \alpha}\right]  \\
	&= \Pr[V] - \Pr\left[\bigcup_{\substack{i \in S, \\ \alpha \in \Zp \setminus (\{\actualE_j\}_{j \in S} \cup \{\actualX\})}} (V \eAnd U_{i, \alpha})\right]\;.
\end{align*}
\else
\begin{align*}
	\Pr[W] &= \Pr[V] - \Pr\left[V \eAnd \bigcup_{\substack{i \in S, \\ \alpha \in \Zp \setminus (\{\actualE_j\}_{j \in S} \cup \{\actualX\})}} U_{i, \alpha}\right] 
	= \Pr[V] - \Pr\left[\bigcup_{\substack{i \in S, \\ \alpha \in \Zp \setminus (\{\actualE_j\}_{j \in S} \cup \{\actualX\})}} (V \eAnd U_{i, \alpha})\right]\;.
\end{align*}
\fi

We also make the following observation. 
\begin{lemma}\label{lemma:observation}
	Let $V$ and $U_{i, \alpha}$ be as defined before.
	\begin{enumerate}
		\item For $i \neq j$ and any $\alpha \in \Zp$, $\Pr[U_{i, \alpha} \cap U_{j, \alpha}] = 0$. 
		
		\item For any subset $B \subseteq S \times \Zp \setminus (\{\actualE_j\}_{j \in S} \cup \{\actualX\})$ such that $|B| \leq d - |S| - 1$ and each $(i, \alpha) \in B$ has distinct $\alpha$ (crucially, the $i$'s can be the same), we have that $ \Pr[\bigcap_{(i ,\alpha) \in B} (V \cap U_{i, \alpha})] = p^{- |S| - 1 - |B|}$.
	\end{enumerate}
\end{lemma}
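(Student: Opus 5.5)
Both items reduce to the fact that a uniformly random monic polynomial of degree $d$ is determined by $d$ free coefficients, so prescribing its values at $k \le d$ distinct points is an affine system of full rank. The only work is to check that the constraints in item 2 really are at $\le d$ distinct points.

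\textbf{Item 1.} Both $U_{i,\alpha}$ and $U_{j,\alpha}$ constrain the same evaluation $f(\alpha)$. They require $f(\alpha) = z_i$ and $f(\alpha) = z_j$, which is impossible since $z_i \neq z_j$ for $i \neq j$ (the $z$'s are distinct by hypothesis). Hence the intersection is empty and its probability is $0$.

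\textbf{Item 2.} First unfold the event. $\bigcap_{(i,\alpha)\in B}(V \cap U_{i,\alpha})$ equals $V \cap \bigcap_{(i,\alpha)\in B} U_{i,\alpha}$, which is the set of monic degree-$d$ polynomials $f$ satisfying:
\begin{itemize}
\item $f(\actualX) = z_0$;
\item $f(\actualE_j) = z_j$ for all $j \in S$;
\item $f(\alpha) = z_i$ for all $(i,\alpha) \in B$.
\end{itemize}

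Next, count the evaluation points. By hypothesis $\actualX$ and $(\actualE_j)_{j\in S}$ are distinct. The $\alpha$'s appearing in $B$ are pairwise distinct (this is assumed), and they lie outside $\{\actualE_j\}_{j\in S}\cup\{\actualX\}$ by the definition of the index set. So the constraints fix $f$ at exactly $k = 1 + |S| + |B|$ distinct points, and the assumption $|B| \le d - |S| - 1$ gives $k \le d$. The fact that some $i$ may repeat across different $\alpha$'s is harmless, because distinct points may receive equal prescribed values.

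Now use interpolation. Write $f(\varX) = \varX^d + h(\varX)$, where $h$ is uniform over polynomials of degree $\le d-1$, i.e.\ uniform in $\Zp^d$ via its coefficients. The condition $f(\alpha_t) = y_t$ for $t \in [k]$ is the same as $h(\alpha_t) = y_t - \alpha_t^d$. The linear map $h \mapsto (h(\alpha_1), \ldots, h(\alpha_k))$ is given by a $k \times d$ Vandermonde matrix with distinct nodes and $k \le d$, so it has full rank $k$ and is therefore surjective onto $\Zp^k$. Each fibre is an affine subspace of size $p^{d-k}$.

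It follows that the probability is $p^{d-k}/p^d = p^{-k} = p^{-|S|-1-|B|}$, as claimed.

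The only point needing care is the distinctness check in item 2. Without the distinct-$\alpha$ condition two constraints could hit the same point, which would either give probability $0$ (as in item 1) or collapse the count of constraints; that is precisely why it is assumed. Beyond that, the proof is routine.
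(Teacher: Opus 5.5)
Your proof is correct and follows the same route as the paper. Item 1 comes from the distinctness of the $z_i$. Item 2 unfolds the event into $|S|+1+|B| \le d$ evaluation constraints at distinct points and counts the solutions of a full-rank linear system in the $d$ free coefficients of a monic degree-$d$ polynomial; your explicit Vandermonde and fibre-size argument spells out the paper's step that these are ``linearly independent constraints.''
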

\begin{proof}
	The first point is true simply because $z_i \neq z_j$, so it cannot be that $f(\alpha) = z_i \neq z_j = f(\alpha)$. The second point follows from the fact that the event $\bigcap_{(i ,\alpha) \in B} (V \cap U_{i, \alpha})$ can be written as follows \[
	\bigcap_{(i ,\alpha) \in B} (V \cap U_{i, \alpha}) =
		f(\actualX) = z_0 \eAnd (\forall j \in S: f(\actualE_j) = z_j) \eAnd 
		(\forall (i, \alpha) \in B: f(\alpha) = z_i)
	.
	\] In particular, the set contains polynomials satisfying $|S| + 1 + |B|$ evaluation constraints on $f$, since $(i, \alpha) \in B$ contains all distinct $\alpha$ and $\alpha \notin \{\actualE_j\}_{j \in S} \cup \{\actualX\}$. This boils down to $|S| + 1 + |B|$ linearly independent constraints on the coefficients of the possible monic polynomials $f$ of degree $d$. Therefore, with $f$ uniformly random with degree $d$ and $|S| + 1 + |B| \leq d$, we have that the probability is $\Pr[\bigcap_{(i ,\alpha) \in B} (V \cap U_{i, \alpha})] = p^{- |S| - 1 - |B|}$ \qed
\end{proof}
The above lemma gives $\Pr[V] = p^{- |S| - 1}$ from setting $B = \emptyset$. We also consider the probability of the union of $(V \cap U_{i, \alpha})$. In particular, by the upper and lower bounds in \Cref{lemma:inclusion-exclusion-bound}, we have that for an even integer $K$, which we pick to be such that $K + 1 \leq d - |S| - 1$ we can bound (we mark the slight differences of the upper and lower bounds with {\color{blue} blue}) \begin{align*}
	&\Pr\left[\bigcup_{\substack{i \in S, \\ \alpha \in \Zp \setminus (\{\actualE_j\}_{j \in S} \cup \{\actualX\})}} (V \eAnd U_{i, \alpha})\right]\\
	&{\color{blue} \geq}  \sum_{\color{blue} \substack{B \subseteq S \times (\Zp \setminus \{\actualE_i\}_{i \in S} \cup \{\actualX\}) :\\ 1\leq |B| \leq K }} (-1)^{|B| - 1} \Pr\left[\bigcap_{(i, \alpha) \in B} (V \cap U_{i, \alpha}) \right]\\
	&= \sum_{\color{blue} \substack{B \subseteq S \times (\Zp \setminus \{\actualE_i\}_{i \in S} \cup \{\actualX\}) : \\1\leq |B| \leq K}} (-1)^{|B| - 1}  p^{- |S| - |B| - 1}  \mathbbm{1} [ B \text{ contains $(j, \alpha)$ with distinct } \alpha ] \\
	&= \sum_{i = 1}^{\color{blue} K} (-1)^{i  - 1} \binom{p - |S| - 1}{i} |S|^i p^{- |S| - i - 1}\;.
\end{align*}
The second equality for the lower bound follows from the probability of the event $\bigcap_{(i, \alpha) \in B}(V \cap U_{i, \alpha})$ for $|B| \leq K + 1\leq d - |S| - 1$ as established in \Cref{lemma:observation} (the indicator refers to the conditions from the lemma). 
The third equality for the lower bound follows by counting the number of subsets $B$ of size $i$ satisfying the indicator (i.e., containing $(j, \alpha)$ with distinct $\alpha$). The total count for each $i$ is $ \binom{p - |S| - 1}{i} |S|^i$, where $\binom{p - |S| - 1}{i}$ is the number of ways to choose $i$ distinct $\alpha$ values from $\Zp  \setminus \{\actualE_i\}_{i \in S} \cup \{\actualX\}$, and $|S|^i$ is the number of ways to assign an index $j \in S$ to each $\alpha$. 

Similarly, we also have the upper bound derived from similar steps.
\ifnum\eprintversion=0
\begin{align*}
	&\Pr\left[\bigcup_{\substack{i \in S, \\ \alpha \in \Zp \setminus (\{\actualE_j\}_{j \in S} \cup \{\actualX\})}} (V \eAnd U_{i, \alpha})\right]\\
	&{\color{blue} \leq} \sum_{\color{blue} \substack{B \subseteq S \times (\Zp \setminus \{\actualE_i\}_{i \in S} \cup \{\actualX\}) :\\ 1\leq |B| \leq K + 1}} (-1)^{|B| - 1} \Pr\left[\bigcap_{(i, \alpha) \in B}(V \cap U_{i, \alpha}) \right]\\
	&=  \sum_{i = 1}^{\color{blue} K + 1} (-1)^{i - 1} \binom{p - |S| - 1}{i} |S|^i p^{d - |S| - i - 1}\;,
\end{align*}
\else
\begin{align*}
	\Pr\left[\bigcup_{\substack{i \in S, \\ \alpha \in \Zp \setminus (\{\actualE_j\}_{j \in S} \cup \{\actualX\})}} (V \eAnd U_{i, \alpha})\right]
	&{\color{blue} \leq} \sum_{\color{blue} \substack{B \subseteq S \times (\Zp \setminus \{\actualE_i\}_{i \in S} \cup \{\actualX\}) :\\ 1\leq |B| \leq K + 1}} (-1)^{|B| - 1} \Pr\left[\bigcap_{(i, \alpha) \in B}(V \cap U_{i, \alpha}) \right]\\
	&=  \sum_{i = 1}^{\color{blue} K + 1} (-1)^{i - 1} \binom{p - |S| - 1}{i} |S|^i p^{d - |S| - i - 1}\;,
\end{align*}
\fi
From the above bounds, we can lower and upper bound $\Pr[W]$ as 
\ifnum\eprintversion=0
\begin{align*}
	\Pr[W] &{\color{red}\; \leq \;} p^{- |S| - 1} - \sum_{i = 1}^{\color{blue} K} (-1)^{i  - 1} \binom{p - |S| - 1}{i} |S|^i p^{- |S| - i - 1}\\
	&= p^{- |S| - 1} \sum_{i = 0}^{\color{blue} K} (-1)^{i} \binom{p - |S| - 1}{i}\left(\frac{|S|}{p}\right)^i\\
	\Pr[W] &{\color{red} \; \geq \;}  p^{- |S| - 1} - \sum_{i = 1}^{\color{blue} K + 1} (-1)^{i  - 1} \binom{p - |S| - 1}{i} |S|^i p^{- |S| - i - 1} \\
	&= p^{- |S| - 1} \sum_{i = 0}^{\color{blue} K + 1} (-1)^{i} \binom{p - |S| - 1}{i}\left(\frac{|S|}{p}\right)^i\;.
\end{align*} 
\else
\begin{align*}
	\Pr[W] &{\color{red}\; \leq \;} p^{- |S| - 1} - \sum_{i = 1}^{\color{blue} K} (-1)^{i  - 1} \binom{p - |S| - 1}{i} |S|^i p^{- |S| - i - 1}
	= p^{- |S| - 1} \sum_{i = 0}^{\color{blue} K} (-1)^{i} \binom{p - |S| - 1}{i}\left(\frac{|S|}{p}\right)^i\\
	\Pr[W] &{\color{red} \; \geq \;}  p^{- |S| - 1} - \sum_{i = 1}^{\color{blue} K + 1} (-1)^{i  - 1} \binom{p - |S| - 1}{i} |S|^i p^{- |S| - i - 1} 
	= p^{- |S| - 1} \sum_{i = 0}^{\color{blue} K + 1} (-1)^{i} \binom{p - |S| - 1}{i}\left(\frac{|S|}{p}\right)^i\;.
\end{align*} 
\fi
The equalities for both the upper and lower bounds follow from $-(-1)^{i - 1} = (-1)^i$ and that $p^{- |S| - 1} = \binom{p - |S| - 1}{0} |S|^0 p^{ - |S| - 0 - 1}$. 

Now, we observe that the term $ \sum_{i = 0}^K (-1)^{i} \binom{p - |S| - 1}{i}\left(\frac{|S|}{p}\right)^i$ without the factor $p^{- |S| - 1}$ is simply a truncated sum of the binomial expansion of $\left(1 - \frac{|S|}{p}\right)^{p - |S| - 1}$. In particular, we have 
\ifnum\eprintversion=1
\begin{align*}
\sum_{i = 0}^K (-1)^{i} \binom{p - |S| - 1}{i}\left(\frac{|S|}{p}\right)^i = \left(1 - \frac{|S|}{p}\right)^{p - |S| - 1} + \sum_{i = K + 1}^{p - |S| - 1} (-1)^{i - 1}  \binom{p - |S| - 1}{i}\left(\frac{|S|}{p}\right)^i
\end{align*}
\else
\begin{align*}
	&\sum_{i = 0}^K (-1)^{i} \binom{p - |S| - 1}{i}\left(\frac{|S|}{p}\right)^i \\
	&= \left(1 - \frac{|S|}{p}\right)^{p - |S| - 1} + \sum_{i = K + 1}^{p - |S| - 1} (-1)^{i - 1}  \binom{p - |S| - 1}{i}\left(\frac{|S|}{p}\right)^i
\end{align*}
\fi
Choosing $K = e^2 \qnum + 2 \log_2 p$ -- this can be done by choosing $d \geq (e^2 + 1)\qnum + 2 \log_2 p + 1$. Each of the term in the summation on the RHS with $i > K$ can be upper bounded as \begin{align}
	\binom{p - |S| - 1}{i}\left(\frac{|S|}{p}\right)^i &= \frac{(p - |S| - 1)! |S|^i}{i! (p - |S| - 1 - i)! p^i} \leq \frac{|S|^i}{i!} \nonumber \\
	&\leq \left(\frac{e |S|}{i}\right)^i \nonumber\\
	&\leq \left(\frac{e\qnum}{i}\right)^i\nonumber \\
	&\leq e^{-i} \leq e^{-(e^2 \qnum + 2 \log_2 p)}\;. \label{eq:individual-term-bound}
\end{align}
The first equality follows from expanding the binomial coefficient.
The second inequality follows from $\frac{(p - |S| - 1)!}{(p - |S| - 1 - i)! p^i} \leq 1$.
The third inequality follows from \Cref{lem:sterling-lower-bound}. 
The fourth inequality follows from $|S| \leq \qnum$.
The last two inequalities follows from $i > K \geq e^2 \qnum + 2 \log_2 p$. Hence, we have that \begin{align}
	\sum_{i = K + 1}^{p - |S| - 1}(-1)^{i - 1}  \binom{p - |S| - 1}{i}\left(\frac{|S|}{p}\right)^i 
	&\leq p e^{-(e^2 \qnum + 2 \log_2 p)} \nonumber\\
	&\leq e^{-(e^2 \qnum + \log_2 p)} \nonumber\\
	&\leq e^{-(e^2 - 1) \qnum - \log_2 p} \left(1 - \frac{|S|}{p}\right)^{p - |S| - 1}\;. \label{eq:important-tail-bound}
\end{align} 
The first inequality follows from the prior bound and that the number of terms in the sum is at most $p$. 
The second inequality follows from $p \leq e^{\log_2 p}$. The third inequality follows from $\left(1 - \frac{|S|}{p}\right)^{p - |S| - 1} \geq \left(1 - \frac{|S|}{p}\right)^{p - |S|} \geq e^{-|S|} \geq e^{-\qnum}$ (the second step is via \Cref{lem:lower-bound-expo} with $l = k = |S|$).
Thus, we have that for $\varepsilon = e^{-(e^2 - 1) \qnum - \log_2 p}$ \[
\Pr[W] \leq  (1 + \varepsilon) \left(1 - \frac{|S|}{p}\right)^{p - |S| - 1} \cdot  p^{- |S| - 1}\;.
\] 
Next, for the lower bound of $\Pr[W]$, we can similarly see that 
\ifnum\eprintversion=1
\begin{align*}
	\sum_{i = 0}^{K + 1} (-1)^{i} \binom{p - |S| - 1}{i}\left(\frac{|S|}{p}\right)^i &= \left(1 - \frac{|S|}{p}\right)^{p - |S| - 1} - \sum_{i = K + 2}^{p - |S| - 1} (-1)^{i}  \binom{p - |S| - 1}{i}\left(\frac{|S|}{p}\right)^i\\
	&\geq \left(1 - \frac{|S|}{p}\right)^{p - |S| - 1} - p \cdot e^{-(e^2 \qnum + 2 \log_2 p)}\\
	&\geq (1 - \varepsilon) \left(1 - \frac{|S|}{p}\right)^{p - |S| - 1}
\end{align*}
\else
\begin{align*}
	&\sum_{i = 0}^{K + 1} (-1)^{i} \binom{p - |S| - 1}{i}\left(\frac{|S|}{p}\right)^i \\
	&= \left(1 - \frac{|S|}{p}\right)^{p - |S| - 1} - \sum_{i = K + 2}^{p - |S| - 1} (-1)^{i}  \binom{p - |S| - 1}{i}\left(\frac{|S|}{p}\right)^i\\
	&\geq \left(1 - \frac{|S|}{p}\right)^{p - |S| - 1} - p \cdot e^{-(e^2 \qnum + 2 \log_2 p)}\\
	&\geq (1 - \varepsilon) \left(1 - \frac{|S|}{p}\right)^{p - |S| - 1}
\end{align*}
\fi
The first equality follows from expanding $\left(1 - \frac{|S|}{p}\right)^{p - |S| - 1} $. The second and third inequalities follow from \Cref{eq:individual-term-bound,eq:important-tail-bound}, respectively. 
\ifnum\eprintversion=1
Therefore, we can similarly lower bound $\Pr[W]$ as \[
\Pr[W] \geq  (1 - \varepsilon) \left(1 - \frac{|S|}{p}\right)^{p - |S| - 1} \cdot  p^{- |S| - 1}\;.
\] This proves the lemma.
\else
Therefore, we can similarly lower bound $\Pr[W] \geq  (1 - \varepsilon) \left(1 - \frac{|S|}{p}\right)^{p - |S| - 1}  p^{- |S| - 1}$, proving the lemma.
\fi \qed

%% file: tex_files/full-approach.tex
\section{Our Tight Reductions for Distinct Messages}\label{sec:full-proof}

In this section, we give our tight reductions in \Cref{sec:tight-reduction-state} with the analysis in the following subsections.

\subsection{Proof of \Cref{lem:tight-proof-main}}\label{sec:tight-reduction-state}
For simplicity, assume that all algorithms implicitly take as input the group description $\pGroupDescript$.  
We consider an $\SUF'$ adversary $\advA$ making exactly $\qnum$ distinct signing queries (since $\qnum$ is the upper bound we assume this without loss of generality). The adversary receives at the start of the game the tuple of public parameters and verification key $(G_1, H, G_2, X_{2,1})$. For each $i \in [\qnum]$, we denote $m_i \in \Zp$ as the query and $\sigma_i = (A_i, e_i)$ as the resulting signature. At the end of the game, $\advA$ outputs a forgery $(m^*, \sigma^* = (A^*, e^*))$. We consider the event where $\advA$ wins in $\SUF'$ game, denoted $\Forge_1$ adopting the name from \cite{EC:TesZhu23b}.

We also consider an additional bad event $\BadQ$ where the adversary $\advA$ makes a signing query $m_i$ such that $G_1 + m_i H = 0_{\GOne}$. We denote the event $(\Forge_1 \eAnd \neg \BadQ)$ as $\Forge_1'$. Here, we have that \ifnum\eprintversion=1 \[
	\Adv^{\suf'}_{\BBS_1}(\advA, \secpar) = \Pr[\Forge_1] \leq \Pr[\Forge_1'] + \Pr[\BadQ] \;.
\] \else 
$\Pr[\Forge_1] \leq \Pr[\Forge_1'] + \Pr[\BadQ]$.
\fi

\heading{Bounding $\Pr[\BadQ]$.}
We construct a reduction $\advB_{\BadQ}$ playing the DLOG game such that 
\ifnum\eprintversion=1
\[
	\Pr[\BadQ] \leq \Adv^{\dl}_{\GroupGen}(\advB_{\BadQ}, \secpar)\;.
\] 
\else
$\Pr[\BadQ] \leq \Adv^{\dl}_{\GroupGen}(\advB_{\BadQ}, \secpar)$.
\fi
In particular, $\advB_{\BadQ}$ takes as input $G_1 \in \GOne^*, Z_{1,1} \in \GOne, G_2 \in \GTwo^*$. The reduction samples $x \getsr \Zp$ and set $X_{2,1} = xG_2$ and run the adversary with the input $(G_1, H = Z_{1,1}, G_2, X_{2,1})$. For each signing query $m_i$, the reduction returns $(A_i = \frac{G + m_i H}{x - e_i}, e_i \getsr \Zp)$. If $\BadQ$ occurs, the reduction returns $m_i^{-1}$. Note that if $\BadQ$ occurs, $m_i \neq 0$ and is invertible, since $G_1$ is a generator. Thus, the above inequality is justified.

\input{figures/tight-reduction}

\heading{Bounding $\Pr[\Forge_1']$.} 
We additionally assume that $\advA$ does not make signing queries that trigger $\BadQ$ and consider the event $\Forge_1'$. To bound $\Pr[\Forge_1']$, we consider the reduction $\advB_{1}$, given in \Cref{fig:tight-reduction}, following the sketch in \Cref{sec:overview-tight-reduction}, and playing the $d'$-$\SDH$ game where $d' = \lceil(e^2 + 2)\qnum + 2\log_2 p\rceil + 1 = \Theta(\qnum + \secpar)$. 

We note that the running time of $\advB_1$ is dominated by (1) the running time of $\advA$ (roughly $t_\advA$) and (2) the signing simulation which consists of (2.1) checking if $f(\varX) + m_i\beta$ has exactly $1$ root in $\Zp$ and (2.2) computing the group element $A_i$ from the $d$-$\SDH$ instance. Step (2.1) can be done by computing $\gcd(f(\varX) + m_i\beta, \varX^p - \varX)$ and checking if the result is of degree 1 or not. More precisely, we can first compute $\varX^p - \varX \pmod{f(\varX) + m_i\beta}$ via repeated squaring and polynomial division taking $O(\qnum \log \qnum \log^2 p)$ time~(see \cite[Lecture 1]{yap2000fundamental}), and then computing the GCD, via the Half-GCD algorithm~(see \cite[Lecture 2]{yap2000fundamental} and \cite[Theorem 8.19]{aho1974design}), takes $O(\qnum \log^2 \qnum \log p)$ per query with $d = \Theta(\qnum + \secpar)$.\footnote{We do not need to fully factor out the polynomial $f(\varX) + m_i\beta$ nor need to know all available zeros. This helps immensely in reducing the running time of our reduction.} Step (2.2) can be done by computing the polynomial $\frac{ \prod_{i = 1}^\qnum (\varX - \tilde{e}_i) (f(\varX) + m_i \beta)}{\varX - e_i}$ and compute $A_i$ as a linear combination of $G_1,X_{1,1},\ldots,X_{1,d}$, taking $O(\qnum (T_\Group + T_p))$ per query.

The reduction $\advB_1$ wins in the $d'$-$\SDH$ game if the following occurs: (1) $\advA$ returns a ``fresh" forgery $(m^*, (A^*, e^*))$ such that $A_{i^*} \neq A^*$ and $e_{i^*} = e^*$ for some $i^* \in [\qnum]$ and $A^* = \frac{\overline{G}_1 + m^* \overline H}{x - e^*}$, (2) $x$ and $e_1, \ldots, e_\qnum$ are distinct, (3) $i^* \in S$, and (4) $\{e_i\}_{i \in [\qnum]} \cap \{\widetilde{e}_{i}\}_{i \in S} = \emptyset$. This is because if $\{e_i\}_{i \in [\qnum]} \cap \{\widetilde{e}_{i}\}_{i \in S} = \emptyset$ and $i^* \in S$, then $(\varX - e_{i^*}) \nmid  \prod_{i = 1}^\qnum (\varX - \tilde{e}_i)$. Moreover, the forgery being fresh and $i^* \in S$ implies that $m^* \neq m_{i^*}$, so $(\varX - e_{i^*}) \nmid (f + m^*\beta)$ (in order to apply \Cref{remark:compute-sdh}); otherwise, $f(e_{i^*}) = -m^* \beta \neq - m_{i^*} \beta = f(e_{i^*})$, which is a contradiction. 

With that said, we claim the following lemma which concludes the proof. 
\begin{lemma}\label{lemma:analyze-reduction} For a prime $p$, an integer $\qnum$ where  $p \geq 2\qnum^2$, and $d' = \lceil(e^2 + 2)\qnum + 2\log_2 p\rceil + 1 < p$,
	\ifnum\eprintversion=1
	\[
	\Pr[\Forge_1'] \leq e \cdot\left( \Adv^{d'\text{-}\sdh}_{\GroupGen}(\advB_1, \secpar) +  \frac{2\qnum^2 + 4 \qnum + 1}{p} \right)\;.
	\]\else
	$	\Pr[\Forge_1'] \leq e \cdot\left( \Adv^{d'\text{-}\sdh}_{\GroupGen}(\advB_1, \secpar) +  \frac{2\qnum^2 + 4 \qnum + 1}{p} \right)$.
	\fi
\end{lemma}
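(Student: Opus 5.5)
I plan to prove \Cref{lemma:analyze-reduction} with the H-coefficient technique, comparing two transcript distributions. The real one comes from the $\SUF'$ game with $\neg\BadQ$, augmented by an independent set $S \getsr \eta$ as in \Cref{lem:dist-S-random-f}. The ideal one comes from $\advB_1$'s simulation, where $S$ is the set of indices for which $f(\varX)+m_i\beta$ has a unique zero. A transcript is $\tau = (x, \overline{G}_1, \overline H, (m_i, e_i)_{i\in[\qnum]}, S)$; the $A_i$ are determined by the rest, and I include $x$ so that the winning conditions of $\advB_1$ are functions of $\tau$. Since $\advA$ is deterministic given its coins and its view, it suffices to compare, for each fixed $\tau$, the probability that the randomness of each world is consistent with $\tau$. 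I would define $\tau$ to be \emph{bad} if the values $x, e_1,\ldots,e_\qnum$ are not all distinct, or if some $\widetilde e_i$ with $i\in S$ coincides with some $e_j$. The latter can be phrased in the ideal world only, or bounded directly by union bounds. Each such event has probability $O(\qnum^2/p)$, and this is where the additive $(2\qnum^2+4\qnum+1)/p$ term comes from.

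The core step is a pointwise ratio bound for good transcripts: $\Pr_{\mathrm{ideal}}[\tau] \geq (1-\varepsilon)\Pr_{\mathrm{real}}[\tau]$, up to the neglected $\Error$ factors. In the real world, $\Pr_{\mathrm{real}}[\tau] = p^{-1}\cdot\Pr[\overline{G}_1,\overline H]\cdot p^{-\qnum}\cdot \eta(\actualS)$. In the ideal world I write $\overline{G}_1 = f(x)P(x)G_1$ and $\overline H = \beta P(x) G_1$, where $P=\prod_i(\varX-\widetilde e_i)$, and I condition on $\beta$ and on $P(x)\neq 0$. Matching $\overline{G}_1$ and $\overline H$ then forces $\beta$ and also forces $f(x)=z_0$ for a fixed value $z_0$. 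With $z_i := -m_i\beta$, these values are distinct because the messages are distinct and $\neg\BadQ$ gives $z_0\neq z_i$ (since $G_1+m_iH\neq 0$ means $f(x)+m_i\beta\neq0$). Consistency with $\tau$ then requires three things: $f(x)=z_0$; $f^{-1}(z_i)=\{e_i\}$ for $i\in S$; and $|f^{-1}(z_i)|\neq1$ together with $\widetilde e_i=e_i$ for $i\notin S$. The $\widetilde e_i$ for $i\in S$ are free. By \Cref{lemma:size-f} (with $d = d'-\qnum$, which matches the lemma's degree), the probability of the $f$-event is at least $(1-\varepsilon)p^{-1-|S|}\eta(S)$. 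The $\widetilde e_i$ with $i\notin S$ contribute $p^{-(\qnum-|S|)}$. The leftover discrepancies come from $x$ being a root of $P$ and from the $\overline{G}_1,\overline H$ marginals; I expect these to add only $O(\qnum/p)$ error.

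The conclusion then follows in the standard way. $\Pr[\Forge_1']$ is at most the probability, in the real augmented experiment, of a good transcript on which $\advA$ forges with $i^*\in S$, divided by $\Pr[i^*\in S]\ge e^{-1}$ (\Cref{corol:prob-contain-index}). This division is valid because $S$ is independent of everything else in the real world. Formally, $\Pr[\Forge_1' \wedge i^*\in S] = \sum_{i}\Pr[\Forge_1'\wedge i^*=i]\Pr_\eta[i\in S] \ge e^{-1}\Pr[\Forge_1']$. Next, I transfer to the ideal world using the ratio bound and the bad-event bound. On good transcripts, $\advB_1$ wins whenever $\advA$ forges with $i^*\in S$, as argued just before the lemma. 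This yields $e^{-1}\Pr[\Forge_1'] \le \Adv^{d'\text{-}\sdh} + \varepsilon + \Pr[\text{bad}]$, and since $\varepsilon<1/p$ the claimed bound follows.

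I expect the main obstacle to be setting up the ideal-world probability of a transcript exactly. The difficulty is that $x$ enters both through $f(x)$ and through $P(x)$, and the $\widetilde e_i$ for $i\in S$ must be integrated out without interacting with the conditioning. My first attempt would treat $(f,\beta)$ as the only relevant randomness once $\widetilde e$ is fixed with $P(x)\neq0$, and absorb the event $P(x)=0$ into the bad events.
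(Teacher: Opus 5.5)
Your proposal is correct and follows essentially the same route as the paper. The paper also runs an H-coefficient comparison of the $\SUF'$ transcript, augmented with an independent $S\getsr\eta$, against $\advB_1$'s simulation. It requires distinct $x,e_1,\ldots,e_\qnum$ in good transcripts and gets the ratio bound from \Cref{lemma:size-f} with $d=d'-\qnum$. It bounds the stash-collision event separately in the simulation by $\qnum^2/p$, and it obtains the factor $e$ from the independence of $S$ together with \Cref{corol:prob-contain-index}.

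The one thing to tidy up is that ``$P(x)=0$'' is not a property of the transcript, so it cannot be absorbed into the bad set. As the paper does, put $\overline H\neq 0_{\GOne}$ (i.e., $\actualB\neq 0$) into the good set and sum only over $(\widetilde e_i)_{i\in\actualS}$ with every $\widetilde e_i\neq\actualX$. This produces exactly the extra factor in $\interprob_1(\tau)/\interprob_0(\tau)\geq(1-\varepsilon)(1-1/p)^{|\actualS|}$. Your leftover ``$O(\qnum/p)$'' discrepancy is therefore this factor, and the $\overline{G}_1,\overline H$ marginals cancel exactly.
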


\subsection{Proof of \Cref{lemma:analyze-reduction}}

We assume $\advA$ to be deterministic without loss of generality.
For our formal analysis, we will use shorthand $\Pr_0[\cdot]$ and $\Pr_1[\cdot]$ to denote probabilities in Experiment 0, which is the $\SUF$ game running the adversary $\advA$, and Experiment 1, which is the simulation by $\advB_1$ to $\advA$, respectively. However, we augment the transcripts in both experiments with the set $S$ further explained below. In particular, we denote the transcript in experiment $b \in \bits$ as $T_b$, which are of the form 
\ifnum\eprintversion=1
\begin{align*}
	T_0 &= (a, b, c, x, (m_1, e_1), \ldots, (m_\qnum, e_\qnum), S)\;,\\
	T_1 &= (\overline{a},\overline{b}, c, x,  (m_1, e_1), \ldots, (m_\qnum, e_\qnum), S)\;.
\end{align*}
\else
\begin{align*}
	T_0 &= (a, b, c, x, (m_i, e_i)_{i \in [\qnum]}, S)\;,
	&T_1 &= (\overline{a},\overline{b}, c, x,  (m_i, e_i)_{i \in [\qnum]}, S)\;.
\end{align*}
\fi
The values $b, \overline{b}$ and $c$ denote the discrete logarithms with respect to some fixed generator of $H$, $\overline{H}$ and $G_2$, respectively. 

In $T_0$, $a$ denotes the discrete logarithm of $G_1$ with respect to some fixed generator, and the set $S \subseteq [\qnum]$ is sampled independently of the rest of the transcript with the probability density function $\eta : 2^{[\qnum]} \to [0, 1]$ as defined in \Cref{lem:dist-S-random-f}. 
\ifnum\eprintversion=1
In particular, let $z_0, \ldots, z_\qnum \in \Zp$ be distinct and any $x_0 \in \Zp$, we can sample $S$ as follows: (1) sample a random function $f : \Zp \to \Zp$ such that $f(x_0) = z_0$ and (2) set $S$ as the set of indices $i \in [\qnum]$ such that $|f^{-1}(z_i)| = 1$, i.e., there is exactly one preimage of $i$. Note that the distribution of $S$ is independent of the choice of images $z_0, \ldots, z_\qnum$ and the fixed $x_0$. 
\fi
In $T_1$, $\overline{a}$ denotes the discrete logarithm of $\overline{G}_1$ with respect to some fixed generator, and the rest of the transcript is sampled as in the reduction. The set $S \subseteq [\qnum]$ is defined as follows: given the messages $m_1, \ldots, m_\qnum$ sent by $\advA$, the sampled $f$, $x$, and $\beta$, $S := \{i \in [\qnum] : |f^{-1}(-m_i \beta)| = 1\}$. 
We excluded the  elements $G_1, H, G_2$, and $A_i$, since they are already determined by the transcript.

For each $i^* \in [\qnum]$, we define the event $\Forge'_{1, i^*}$ as the event that $\Forge_1'$ occurs and the value $e^* = e_{i^*}$. We note that since the adversary is deterministic, this event is well-defined for both transcripts $T_0, T_1$ (as the forgery is determined by the transcript).
Also, denote $\GoodE$ as the event that $x, e_1, \ldots, e_\qnum$ are distinct. Then, according to the winning condition of $\advB_1$,  \ifnum\eprintversion=1 we have that\begin{align*}
	\Adv^{d'\text{-}\sdh}_{\GroupGen}(\advB_1, \secpar) 
	&\geq \Pr_1\left[\GoodE \eAnd \left(\exists i^* \in [\qnum]: \Forge'_{1, i^*} \eAnd i^* \in S\right) \eAnd (\{e_i\}_{i \in [\qnum]} \cap \{\widetilde{e}_{i}\}_{i \in S} = \emptyset)\right]\\
	&\geq  \Pr_1\left[\GoodE \eAnd \left(\exists i^* \in [\qnum]: \Forge'_{1, i^*} \eAnd i^* \in S\right)\right]  - \Pr_1[\{e_i\}_{i \in [\qnum]} \cap \{\widetilde{e}_{i}\}_{i \in S} \neq \emptyset] \\
	&\geq  \Pr_1\left[\GoodE \eAnd \left(\exists i^* \in [\qnum]: \Forge'_{1, i^*} \eAnd i^* \in S\right) \right] - \frac{\qnum^2}{p}\;.
\end{align*}
\else
we bound $\Adv^{d'\text{-}\sdh}_{\GroupGen}(\advB_1, \secpar)$
\begin{align*}
	&\geq \Pr_1\left[\GoodE \eAnd (\{e_i\}_{i \in [\qnum]} \cap \{\widetilde{e}_{i}\}_{i \in S} = \emptyset) \eAnd \exists i^* \in [\qnum]: (\Forge'_{1, i^*} \eAnd i^* \in S) \right]\\
	&\geq  \Pr_1\left[\GoodE \eAnd \exists i^* \in [\qnum]: (\Forge'_{1, i^*} \eAnd i^* \in S)\right] - \Pr_1[\{e_i\}_{i \in [\qnum]} \cap \{\widetilde{e}_{i}\}_{i \in S} \neq \emptyset] \\
	&\geq \textstyle \Pr_1\left[\GoodE \eAnd \exists i^* \in [\qnum]: (\Forge'_{1, i^*} \eAnd i^* \in S) \right] - \frac{\qnum^2}{p}\;.
\end{align*}
\fi
The last inequality follows from $\Pr_1[\{e_i\}_{i \in [\qnum]} \cap \{\widetilde{e}_{i}\}_{i \in S} \neq \emptyset] \leq \frac{\qnum^2}{p}$. 
This is because either (1) there is no element in $S$ (the event is not triggered), or (2) for $i \in S$, $\widetilde{e}_{i}$ is uniformly random and hidden from the view of the adversary (the transcript hides $\widetilde{e}_{i}$ for $i \in S$), so each collides with $\{e_i\}_{i \in [\qnum]}$ with probability at most $\frac{\qnum}{p}$.

Then, the following lemma, proved in \Cref{sec:bound-stat-distance}, upper bounds the total variation distance between the two transcripts given that $d'$ is large enough. 
\begin{lemma}\label{lemma:bound-stat-distance}
	For a prime $p$, an integer $\qnum$ where $p \geq 2\qnum^2$, and $d' = \lceil(e^2 + 2)\qnum + 2 \log_2 p\rceil + 1$,
	\ifnum\eprintversion=1
	\[
	\SD(T_0, T_1) \leq  \frac{(\qnum+ 1)^2 + 4\qnum}{2p}\;.
	\]
	\else
	$	\SD(T_0, T_1) \leq  \frac{(\qnum+ 1)^2 + 4\qnum}{2p}$.
	\fi
\end{lemma}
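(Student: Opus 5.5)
I will use the H-coefficient technique. Define a set of bad transcripts $\mathcal{T}_{\mathrm{bad}}$ consisting of those where (i) $x, e_1, \ldots, e_\qnum$ are not all distinct, (ii) some $m_i$ satisfies $\overline{a} + m_i\overline{b} = 0$ (equivalently $a + m_i b = 0$), or (iii) $b = 0$ or $a = 0$ (degenerate generators). Then I show $\Pr_1[T = \tau] \geq (1-\varepsilon')\Pr_0[T = \tau]$ for every good $\tau$. The standard H-coefficient bound then gives $\SD(T_0,T_1) \leq \Pr_0[T_0 \in \mathcal{T}_{\mathrm{bad}}] + \varepsilon'$. In experiment 0 the tags $e_i$ and $x$ are uniform and independent, so (i) has probability at most $\binom{\qnum+1}{2}/p \le (\qnum+1)^2/(2p)$. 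Items (ii) and (iii) contribute $O(\qnum/p)$, since $b$ is uniform and $m_i$ is fixed by the deterministic $\advA$ given the prefix. Finally, $\varepsilon' < 1/p$ absorbs into the $4\qnum/(2p)$ slack. Together these give the stated bound.

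Next I compute $\Pr_0[\tau]$ for a good transcript $\tau = (\actualA, \actualB, \actualC, \actualX, (\actualMsg_i, \actualE_i)_i, \actualS)$. Because $\advA$ is deterministic, the messages are determined by the preceding values. So $\Pr_0[\tau] = p^{-4-\qnum}\,\eta(\actualS)$ up to the normalization for $a, c$ being generators; I will be careful to match these normalizations, since $G_1, G_2$ range over $\Zp^*$. In experiment 1, $\overline{a} = f(x)\prod_i(x-\widetilde e_i)\cdot a_0$ and $\overline{b} = \beta\prod_i(x-\widetilde e_i)\cdot a_0$, where $a_0$ is the discrete logarithm of the SDH $G_1$. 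Fix $\actualX$, $\actualC$, and $a_0$. The transcript then forces: $\widetilde e_i = \actualE_i$ for $i \notin \actualS$ (probability $p^{-(\qnum-|\actualS|)}$); $\beta$ and the product term to produce $\actualB$; $f(\actualX)$ to produce $\actualA$ relative to $\actualB$; and the preimage conditions on $f$. Set $z_i = -\actualMsg_i\beta$ for $i \in [\qnum]$ and $z_0 = f(\actualX) = \actualA\beta/\actualB$. These $z$'s are distinct because the messages are distinct and $\beta \neq 0$, and $z_0 \ne z_i$ by goodness (ii). The required event on $f$ is exactly the one in \Cref{lemma:size-f}. Note that $d'$ here corresponds to $d+1$ there because of the degree shift from the stash product, and one should check the degree bookkeeping. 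Also note that the $\widetilde e_i$ for $i\in\actualS$ are free, and the constraint $x \neq \widetilde e_i$ (needed so that $\overline b \neq 0$) costs at most $\qnum/p$, which I fold into the bad set or the slack.

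Summing over $\beta$ and the free stash values, the map $(a_0,\beta) \mapsto (\overline a,\overline b)$ is a bijection given the other randomness. Then \Cref{lemma:size-f} gives $\Pr_1[\tau] \ge (1-\varepsilon)\,p^{-1-|\actualS|}\eta(\actualS)\cdot p^{-(\qnum-|\actualS|)}\cdot p^{-3}$, matching $\Pr_0[\tau]$ up to $(1-\varepsilon)$. The main obstacle is this step: tracking exactly which randomness is consumed by each transcript coordinate. This includes the adaptivity, namely that $m_i$ depends on the earlier $e_j$, which is harmless because all conditions are evaluated on the fixed transcript. It also includes verifying that $z_0,\dots,z_\qnum$ and $\actualX,(\actualE_i)_{i\in\actualS}$ are distinct on good transcripts, which is exactly what justifies invoking \Cref{lemma:size-f}.
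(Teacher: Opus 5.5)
Your route is the paper's route. You use the H-coefficient technique with the same good set: $\actualX,\actualE_1,\ldots,\actualE_\qnum$ distinct and $\actualA,\actualB\neq 0$. You compute the experiment-0 probability from independence of $S\sim\eta$. You lower-bound the experiment-1 probability by pinning $x$, $c$, $\widetilde e_i=\actualE_i$ for $i\notin\actualS$, pinning $\beta$ and $f(\actualX)$, and invoking \Cref{lemma:size-f} with $z_0=f(\actualX)$ and $z_i=-\actualMsg_i\beta$. You pay $(1-1/p)^{|\actualS|}$ for $\widetilde e_i\neq\actualX$. Your extra variable $a_0$ for the SDH generator is harmless and slightly more careful than the paper.

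The step that fails is your bound on bad event (ii), $\actualA+\actualMsg_i\actualB=0$. You claim it has probability $O(\qnum/p)$ because ``$b$ is uniform and $m_i$ is fixed given the prefix.'' But $b$ is part of that prefix: $\advA$ picks $m_i$ after seeing $G_1$ and $H$. The statistical argument ranges over an arbitrary deterministic $\advA$, so nothing prevents $m_1=-a/b$, which makes (ii) occur with probability $1$.

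This is not just slack in the constants; the lemma is false for such an $\advA$. In experiment 1 that query gives $f(\varX)+m_1\beta=f(\varX)-f(x)$, which has $x$ as a root. So $e_1=x$ with probability about $e^{-1}$, versus $1/p$ in experiment 0.

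The missing ingredient is the standing assumption from the proof of \Cref{lem:tight-proof-main}. Event (ii) is exactly $\BadQ$ ($G_1+m_iH=0_{\GOne}$), which is split off and bounded computationally by the DL adversary $\advB_{\BadQ}$, and $\advA$ is then assumed never to trigger it. Under that assumption (ii) has probability $0$, does not belong in the bad set, and directly gives $z_0\neq z_i$. The remaining accounting is then the paper's: $\Pr_0[\neg\Good]\le\binom{\qnum+1}{2}/p+1/p$ and a ratio loss of $\varepsilon+\qnum/p$.

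A smaller slip: the degree relation is $d'=d+\qnum$, not $d+1$. The stash product $\prod_{i=1}^{\qnum}(x-\widetilde e_i)$ has degree $\qnum$, so $f$ must have degree exactly $d=\lceil(e^2+1)\qnum+2\log_2 p\rceil+1$, matching \Cref{lemma:size-f}. With $\deg f=d'-1$ the reduction could not even compute $\overline G_1$ from a $d'$-SDH instance.
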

Using \Cref{lemma:bound-stat-distance}, we have that 
\ifnum\eprintversion=1
\begin{align*}
	&\Pr_1\left[\GoodE \eAnd \exists i^* \in [\qnum]: (\Forge'_{1, i^*} \eAnd i^* \in S) \right] - \frac{\qnum^2}{p} \\
	&\geq \Pr_0\left[\GoodE \eAnd \exists i^* \in [\qnum]: (\Forge'_{1, i^*} \eAnd i^* \in S) \right] -  \frac{(\qnum+ 1)^2 + 4\qnum}{2p}  - \frac{\qnum^2}{p}\\
	& = \sum_{i^* = 1}^\qnum \Pr_0\left[\GoodE \eAnd \Forge'_{1, i^*} \eAnd i^* \in S \right]  -  \frac{3\qnum^2 + 6 \qnum + 1}{2p} \\
	& =  \sum_{i^* = 1}^\qnum \Pr_0\left[\GoodE \eAnd \Forge'_{1, i^*}\right] \Pr_0\left[i^* \in S \right] -  \frac{3\qnum^2 + 6 \qnum + 1}{2p} \\
	& \geq \frac{1}{e} \sum_{i^* = 1}^\qnum \Pr_0\left[\GoodE \eAnd \Forge'_{1, i^*}\right] -  \frac{3\qnum^2 + 6 \qnum + 1}{2p} \\
	&  = \frac{1}{e} \Pr_0[\GoodE \eAnd \Forge_1'] -  \frac{3\qnum^2 + 6 \qnum + 1}{2p} \\
	& \geq \frac{1}{e} \Pr_0[\Forge_1'] - \frac{1}{e}\Pr_0[\neg \GoodE] -  \frac{3\qnum^2 + 6 \qnum + 1}{2p} \\
	&\geq \frac{1}{e} \Pr_0[ \Forge_1'] - \frac{2\qnum^2 + 4 \qnum + 1}{p}\;.
\end{align*} 
\else
\begin{align*}
	&\textstyle \Pr_1\left[\GoodE \eAnd \exists i^* \in [\qnum]: (\Forge'_{1, i^*} \eAnd i^* \in S) \right] - \frac{\qnum^2}{p} \\
	&\textstyle \geq \Pr_0\left[\GoodE \eAnd \exists i^* \in [\qnum]: (\Forge'_{1, i^*} \eAnd i^* \in S) \right] -  \frac{(\qnum+ 1)^2 + 4\qnum}{2p}  - \frac{\qnum^2}{p} \\
	&\textstyle = \sum_{i^* = 1}^\qnum \Pr_0\left[\GoodE \eAnd \Forge'_{1, i^*} \eAnd i^* \in S \right]  -  \frac{3\qnum^2 + 6 \qnum + 1}{2p} \\
	&\textstyle =  \sum_{i^* = 1}^\qnum \Pr_0\left[\GoodE \eAnd \Forge'_{1, i^*}\right] \Pr_0\left[i^* \in S \right] -  \frac{3\qnum^2 + 6 \qnum + 1}{2p} \\
	&\textstyle \geq \frac{1}{e} \sum_{i^* = 1}^\qnum \Pr_0\left[\GoodE \eAnd \Forge'_{1, i^*}\right] -  \frac{3\qnum^2 + 6 \qnum + 1}{2p} \\
	&\textstyle  = \frac{1}{e} \Pr_0[\GoodE \eAnd \Forge_1'] -  \frac{3\qnum^2 + 6 \qnum + 1}{2p} \\
	&\textstyle \geq \frac{1}{e} \Pr_0[\Forge_1'] - \frac{1}{e}\Pr_0[\neg \GoodE] -  \frac{3\qnum^2 + 6 \qnum + 1}{2p}  \geq \frac{1}{e} \Pr_0[ \Forge_1'] - \frac{2\qnum^2 + 4 \qnum + 1}{p}\;.
\end{align*} 
\fi
The first inequality follows from the lemma. 
The second equality follows from the events $\Forge'_{1, i^*} \eAnd i^* \in S$ being disjoint for all $i^* \in [\qnum]$ when $\GoodE$ is true. The third equality follows from $S$ sampled independently from the rest of the transcript in $T_0$ (note that $i^*$ is a fixed value). Then, the fourth inequality follows from \Cref{corol:prob-contain-index}. The fifth equality is obtained by summing over the probability of disjoint events $\Forge'_{1, i^*}$. The last inequality follows from $\Pr[\neg \GoodE] \leq \binom{\qnum + 1}{2} \frac{1}{p}$. Rearranging the terms finally concludes the proof. 
\qed

\subsection{Proof of \Cref{lemma:bound-stat-distance}} \label{sec:bound-stat-distance}

\heading{H-coefficient technique.}
 \ifnum\eprintversion=1 Our goal is to bound the total variation distance  $\SD(T_0, T_1)$ defined as \[
\SD(T_0, T_1):= \frac{1}{2} \sum_{\tau} |\Pr[T_0 = \tau] - \Pr[T_1 = \tau]| = \sum_{\tau} \max\{0, \Pr[T_0 = \tau] - \Pr[T_1 = \tau]\}\;.
\] The sum is over all transcripts $\tau$. We will employ Patarin's H-Coefficient technique~\cite{AFRICACRYPT:Patarin08}, but using the formalism of~\cite{C:HoaTes16}, which we now introduce.

\else
We bound the total variation distance 
$\SD(T_0, T_1) \allowbreak := \frac{1}{2} \sum_{\tau} |\Pr[T_0 = \tau] - \Pr[T_1 = \tau]| = \sum_{\tau} \max\{0, \Pr[T_0 = \tau] - \Pr[T_1 = \tau]\}$, by employing Patarin's H-Coefficient method~\cite{AFRICACRYPT:Patarin08}, but using the formalism of~\cite{C:HoaTes16}, which we now introduce.
\fi  
First, we denote any transcript $\tau$ to be of the form \ifnum\eprintversion=1\[
	\tau = (\actualA, \actualB, \actualC, \actualX, (\actualMsg_1, \actualE_1), \ldots, (\actualMsg_\qnum, \actualE_\qnum), \actualS)\;.
\]\else
$	\tau = (\actualA, \actualB, \actualC, \actualX, (\actualMsg_1, \actualE_1), \ldots, (\actualMsg_\qnum, \actualE_\qnum), \actualS)$.
\fi Note the distinction of a fixed value (denoted with $\underline{(\cdot)}$) and random variables (without). Let $\calT$ denote the set of transcripts $\tau$ such that $\Pr[T_0 = \tau] > 0$. Also, we let $\interprob_0(\tau)$ and $\interprob_1(\tau)$ denote the interpolation probability, i.e., the probability that we pick the random coins in the respective experiment and result in the transcript $\tau$ if the queries $\actualMsg_1, \ldots, \actualMsg_\qnum$ are fixed ahead of time. Note that for any transcript $\tau$, we have $\Pr_b[T_b = \tau] \in \{0, \interprob_b(\tau)\}$, since $\advA$ is deterministic. Then, 
\ifnum\eprintversion=1
\[
	\SD(T_0, T_1) = \sum_{\tau \in \calT} \max\{0, \interprob_0(\tau) - \interprob_1(\tau)\}\;.
\] \else
$\SD(T_0, T_1) = \sum_{\tau \in \calT} \max\{0, \interprob_0(\tau) - \interprob_1(\tau)\}$.
\fi The key idea for the H-coefficient technique is to consider a subset set of good transcripts $\Good \subseteq \calT$ such that for $\tau \in \Good$, 
\ifnum\eprintversion=1
\[
1 - \frac{\interprob_1(\tau)}{\interprob_0(\tau)} \leq \delta
\]
\else
$1 - \frac{\interprob_1(\tau)}{\interprob_0(\tau)} \leq \delta$
\fi for some $\delta \in [0, \infty)$. Then, we can bound $\SD(T_0, T_1)$ as \[
\SD(T_0, T_1) \leq \sum_{\tau \in \Good} \interprob_0(\tau) \max\left\{0, 1 - \frac{\interprob_1(\tau)}{\interprob_0(\tau)}\right\} + \Pr_0[\neg \Good] \leq \delta + \Pr_0[\neg \Good]\;.
\] Here, the event $\neg \Good$ in $T_0$ is independent of the queries of the adversary $\advA$. This lets us circumvent the adaptive nature of the distinguisher when analyzing the statistical distance, as considering $\neg \Good$ in $T_1$ can be quite complicated. 

\heading{Interpolation probability.}
Recall first that we only consider transcripts where $\actualMsg_i \neq  - \actualA/\actualB$ (otherwise, $G_1 + m_i H = 0_{\GOne}$ which is ruled out by the event $\BadQ$) and the queries $m_i$'s are distinct (this is assumed at the beginning). Now, we define $\Good$ to be the set of transcripts $\tau$ such that (1) $\actualX$ and $\actualE_1, \ldots, \actualE_\qnum$ are distinct, and (2) $\actualA$ and $\actualB$ are non-zero. Note here that $\Pr_0[\neg \Good] \leq \frac{(\qnum + 1)^2}{2p} + \frac{1}{p}$ via the union bound over the negation of (1) and (2). Now, fix a transcript $\tau \in \Good$, we will compute the interpolation probabilities $\interprob_0(\tau)$ and $\interprob_1(\tau)$ and lower bound $\frac{\interprob_1(\tau)}{\interprob_0(\tau)}$ to ultimately bound $\SD(T_0, T_1)$ via the above inequality.

For $\interprob_0(\tau)$, we consider the distribution of $T_0$. First, $b, x, e_1, \ldots, e_\qnum$ are uniformly random over $\Zp$ and $a, c$ are uniformly random in $\Zp^*$. Also, $S$ is sampled independently with respect to the PDF $\eta$. Hence we have that the interpolation probability of the sampled transcript being exactly $\tau$ is 
\ifnum\eprintversion=1
\begin{align*}
	\interprob_0(\tau) &= \underbrace{\frac{1}{p-1}}_{a} \cdot \underbrace{\frac{1}{p}}_{b} \cdot \underbrace{\frac{1}{p-1}}_{c} \cdot \underbrace{\frac{1}{p}}_{x} \cdot \underbrace{\frac{1}{p^{\qnum}}}_{(e_i)_{i \in [\qnum]}} \cdot \eta(\actualS) \\
	&= \underbrace{\frac{1}{p}}_{\actualB} \cdot \underbrace{\frac{1}{p-1}}_{\actualC} \cdot \underbrace{\frac{1}{p}}_{\actualX} \cdot \underbrace{\frac{1}{p^{\qnum - |\actualS|}}}_{(\actualE_i)_{i \notin \actualS}} \cdot \underbrace{\frac{1}{(p-1)p^{|\actualS|}} \eta(\actualS)}_{(\actualA, (\actualE_i)_{i \in \actualS}, \actualS)}\;. 
\end{align*}
\else
\begin{align*}
	\interprob_0(\tau) &= \underbrace{(p-1)^{-1}}_{a} \cdot \underbrace{p^{-1}}_{b} \cdot \underbrace{(p-1)^{-1}}_{c} \cdot \underbrace{p^{-1}}_{x} \cdot \underbrace{p^{-\qnum}}_{(e_i)_{i \in [\qnum]}} \cdot \eta(\actualS) \\
	&= \underbrace{p^{-1}}_{\actualB} \cdot \underbrace{(p-1)^{-1}}_{\actualC} \cdot \underbrace{p^{-1}}_{\actualX} \cdot \underbrace{p^{-\qnum + |\actualS|}}_{(\actualE_i)_{i \notin \actualS}} \cdot \underbrace{(p-1)^{-1}p^{-|\actualS|} \eta(\actualS)}_{(\actualA, (\actualE_i)_{i \in \actualS}, \actualS)}\;. 
\end{align*}
\fi
The first equality follows because each value in the transcript is sampled uniformly at random according to their domain. (Note: we write the factors in the order of the transcript $\tau$ and annotate the random variable each factor corresponds to.) The second equality follows by simply rearranging the terms. The corresponding annotated values will assist in comparing the quantity with $\interprob_1$. 

Now, we consider the more complex $\interprob_1(\tau)$. To reiterate, the transcript is defined by the randomness $(x \in \Zp, (\tilde{e}_i \in \Zp)_{i \in [\qnum]}, f \in \Zp^{d}[\varX], \beta \in \Zp^*, c \in \Zp^*)$ sampled by the reduction. This means $\interprob_1(\tau)$ can be written as follows: 
\ifnum\eprintversion=1
\[
\interprob_1(\tau) = \underbrace{\frac{1}{p^d}}_{f} \cdot \underbrace{\frac{1}{p-1}}_{\beta} \cdot \underbrace{\frac{1}{p-1}}_{c} \cdot \underbrace{\frac{1}{p}}_{x} \cdot \underbrace{\frac{1}{p^\qnum}}_{(\widetilde{e}_i)_{i \in [\qnum]}} \sum_{f, x, (\tilde{e}_i)_{i \in [\qnum]}, \beta, c} \mathbbm{1}\left[ \begin{array}{l}
	\actualX = x \eAnd \actualC = c \eAnd\\
	\actualA = f(\actualX) \prod_{ i = 1}^\qnum (\actualX - \tilde{e}_i) \eAnd\\
	\actualB = \beta \prod_{ i = 1}^\qnum (\actualX - \tilde{e}_i) \eAnd\\
	\forall i \notin \actualS: |f^{-1}(-\actualMsg_i \beta)| \neq 1 \eAnd \widetilde{e}_i = \actualE_i \eAnd\\
	\forall i \in \actualS: f^{-1}(-\actualMsg_i \beta) = \{\actualE_i\} 
\end{array}\right]\;.
\] \else 
\begin{align*}
	\interprob_1(\tau) &= \underbrace{p^{-d}}_{f} \cdot \underbrace{(p - 1)^{-1}}_{\beta} \cdot \underbrace{(p - 1)^{-1}}_{c} \cdot \underbrace{p^{-1}}_{x} \cdot \underbrace{p^{-\qnum}}_{(\widetilde{e}_i)_{i \in [\qnum]}} \cdot \\ 
	&\pcind \sum_{f, x, (\tilde{e}_i)_{i \in [\qnum]}, \beta, c} \mathbbm{1}\left[ \begin{array}{l}
		\actualX = x \eAnd \actualC = c \eAnd \\
		\actualA = f(\actualX) \prod_{ i = 1}^\qnum (\actualX - \tilde{e}_i) \eAnd 
		\actualB = \beta \prod_{ i = 1}^\qnum (\actualX - \tilde{e}_i) \eAnd\\
		\forall i \notin \actualS: |f^{-1}(-\actualMsg_i \beta)| \neq 1 \eAnd \widetilde{e}_i = \actualE_i \eAnd\\
		\forall i \in \actualS: f^{-1}(-\actualMsg_i \beta) = \{\actualE_i\} 
	\end{array}\right]\;.
\end{align*}
\fi The sum counts all possible randomness that results in the transcript $\tau$. Note that the indicator is 1 only if $x = \actualX$, $c = \actualC$, $\widetilde{e}_i = \actualE_i$ for all $i \notin \actualS$, and $\widetilde{e}_i \neq  \actualX$ 
for all $i \in [\qnum]$. The last condition arises from considering $\tau \in \Good$ where $\actualA, \actualB \neq 0$ and all $\actualE_i \neq \actualX$. By constraining to these particular values, we  rewrite the sum as 
\ifnum\eprintversion=1 \[
\underbrace{\frac{1}{p-1}}_{\actualB} \cdot \underbrace{\frac{1}{p-1}}_{\actualC} \cdot \underbrace{\frac{1}{p}}_{\actualX}  \cdot \underbrace{\frac{1}{p^{\qnum - |\actualS|}}}_{(\actualE_i)_{i \notin \actualS}} \cdot \underbrace{\frac{1}{p^{|\actualS|} p^{d}} \sum_{(\tilde{e}_i)_{i \in \actualS} \neq \actualX} \sum_{\beta} \sum_{f} \mathbbm{1}\left[ \begin{array}{l}
	f(\actualX) = \actualA \prod_{i \notin \actualS} (\actualX - \actualE_i)^{-1} \prod_{ i \in \actualS} (\actualX - \tilde{e}_i)^{-1} \eAnd\\
	\beta = \actualB \prod_{i \notin \actualS} (\actualX - \actualE_i)^{-1} \prod_{ i \in \actualS} (\actualX - \tilde{e}_i)^{-1} \eAnd\\
	\forall i \notin \actualS: |f^{-1}(-\actualMsg_i \beta)| \neq 1 \eAnd\\
	\forall i \in \actualS: f^{-1}(-\actualMsg_i \beta) = \{\actualE_i\} 
\end{array}\right]}_{(\actualA, (\actualE_i)_{i \in \actualS}, \actualS)}\;.
\] \else 
\begin{align*}
	& \underbrace{(p - 1)^{-1}}_{\actualB} \cdot \underbrace{(p - 1)^{-1}}_{\actualC} \cdot \underbrace{p^{-1}}_{\actualX}  \cdot \underbrace{p^{-\qnum + |\actualS|}}_{(\actualE_i)_{i \notin \actualS}} \cdot \\
	&\pcind \underbrace{p^{-|\actualS|} p^{-d} \sum_{(\tilde{e}_i)_{i \in \actualS} \neq \actualX} \sum_{\beta} \sum_{f} \mathbbm{1}\left[ \begin{array}{l}
			f(\actualX) = \actualA \prod_{i \notin \actualS} (\actualX - \actualE_i)^{-1} \prod_{ i \in \actualS} (\actualX - \tilde{e}_i)^{-1} \eAnd\\
			\beta = \actualB \prod_{i \notin \actualS} (\actualX - \actualE_i)^{-1} \prod_{ i \in \actualS} (\actualX - \tilde{e}_i)^{-1} \eAnd\\
			\forall i \notin \actualS: |f^{-1}(-\actualMsg_i \beta)| \neq 1 \eAnd\\
			\forall i \in \actualS: f^{-1}(-\actualMsg_i \beta) = \{\actualE_i\} 
		\end{array}\right]}_{(\actualA, (\actualE_i)_{i \in \actualS}, \actualS)}\;.
\end{align*}
\fi Now, if we fix each tuple $(\tilde{e}_i)_{i \in [\qnum]}$ such that $\tilde{e}_i \neq \actualX$ for $i \in [\qnum]$ and $(\tilde{e}_i)_{i \in \actualS} = (\actualE_i)_{i \in \actualS}$, $\beta$ is fixed as $\beta = \actualB \prod_{i \notin \actualS} (\actualX - \actualE_i)^{-1} \prod_{ i \in \actualS} (\actualX - \tilde{e}_i)^{-1} \neq 0$, and $f(\actualX)$ is fixed as $\actualA' = \actualA \prod_{i \notin \actualS} (\actualX - \actualE_i)^{-1} \prod_{ i \in \actualS} (\actualX - \tilde{e}_i)^{-1} \neq 0$. Then, for these fixed values, we want to count the number of $f$ such that the following constraint is true  \[
	f(\actualX) = \actualA' \eAnd (\forall i \notin \actualS: |f^{-1}(-\actualMsg_i \beta)| \neq 1) \eAnd
	(\forall i \in \actualS: f^{-1}(-\actualMsg_i \beta) =  \{\actualE_i\})\;.
\]
This coincides with the event considered in \Cref{lemma:size-f}\ifnum\eprintversion=1 in \Cref{sec:set-S-rand-poly}\fi, by setting
$z_0 = \actualA', z_i = - \actualMsg_i \beta,$ and $\actualE_i = \actualE_i$ for $i \in [\qnum]$. 
Note that $z_0, z_1, \ldots, z_\qnum$ are distinct, since $\tau \in \Good$ and $- \actualMsg_i \neq \actualA / \actualB$. Applying \Cref{lemma:size-f}, we have that 
\ifnum\eprintversion=1
\begin{align*}
	\frac{1}{p^d} \cdot \sum_{f} \mathbbm{1}\left[ \begin{array}{l}
		f(\actualX) = \actualA' \eAnd 
		\forall i \notin \actualS: |f^{-1}(-\actualMsg_i \beta)| \neq 1\\
		\eAnd \forall i \in \actualS: f^{-1}(-\actualMsg_i \beta) = \{\actualE_i\}
	\end{array}\right] 
	&= 	\Pr_{f \getsr \Zp^d[\varX]}\left[\begin{array}{l}f(\actualX) = z_0 \eAnd \forall i \notin \actualS: |f^{-1}(-\actualMsg_i \beta)| \neq 1\\
	\eAnd \forall i \in \actualS: f^{-1}(-\actualMsg_i \beta) =  \{\actualE_i\}
	\end{array} \right]\\
	&\geq  (1 - \varepsilon) \cdot p^{- 1 - |\actualS|} \eta(\actualS) \;,
\end{align*}
\else
\begin{align*}
	&p^{-d} \cdot \sum_{f} \mathbbm{1}\left[ \begin{array}{l}
		f(\actualX) = \actualA' \eAnd 
		\forall i \notin \actualS: |f^{-1}(-\actualMsg_i \beta)| \neq 1 \eAnd \forall i \in \actualS: f^{-1}(-\actualMsg_i \beta) = \{\actualE_i\}
	\end{array}\right] \\
	&= 	\Pr_{f \getsr \Zp^d[\varX]}\left[\begin{array}{l}f(\actualX) = z_0 \eAnd \forall i \notin \actualS: |f^{-1}(-\actualMsg_i \beta)| \neq 1\\
		\eAnd \forall i \in \actualS: f^{-1}(-\actualMsg_i \beta) =  \{\actualE_i\}
	\end{array} \right] \geq  (1 - \varepsilon) \cdot p^{- 1 - |\actualS|} \eta(\actualS) \;,
\end{align*}
\fi
where $\varepsilon = e^{-(e^2 - 3) \qnum - \log_2 p + 1}$. 
Finally, we can lower bound $\interprob_1(\tau)$ as \ifnum\eprintversion=1 
\begin{align*}
	\interprob_1(\tau) &= \frac{1}{p-1} \cdot \frac{1}{p-1} \cdot \frac{1}{p}  \cdot \frac{1}{p^{\qnum - |\actualS|}} \cdot \frac{1}{p^{|\actualS|} p^{d}} \sum_{(\tilde{e}_i)_{i \in \actualS} \neq \actualX} \sum_{\beta} \sum_{f} \mathbbm{1}\left[ \begin{array}{l}
			f(\actualX) = \actualA \prod_{i \notin \actualS} (\actualX - \actualE_i)^{-1} \prod_{ i \in \actualS} (\actualX - \tilde{e}_i)^{-1} \eAnd\\
			\beta = \actualB \prod_{i \notin \actualS} (\actualX - \actualE_i)^{-1} \prod_{ i \in \actualS} (\actualX - \tilde{e}_i)^{-1} \eAnd\\
			\forall i \notin \actualS: |f^{-1}(-\actualMsg_i \beta)| \neq 1 \eAnd\\
			\forall i \in \actualS: f^{-1}(-\actualMsg_i \beta) = \{\actualE_i\}
		\end{array}\right]\\ 
		&\geq \frac{1}{p-1} \cdot \frac{1}{p-1} \cdot \frac{1}{p}  \cdot \frac{1}{p^{\qnum - |\actualS|}} \cdot \frac{1}{p^{|\actualS|}} \sum_{(\tilde{e}_i)_{i \in \actualS} \neq \actualX} (1 - \varepsilon) \frac{1}{p^{1 + |\actualS|}} \eta(\actualS)\\
		&= \frac{1}{p-1} \cdot \frac{1}{p-1} \cdot \frac{1}{p}  \cdot \frac{1}{p^{\qnum - |\actualS|}} \cdot \frac{(p - 1)^{|\actualS|}}{p^{|\actualS|}} \cdot (1 - \varepsilon) \frac{1}{p^{1 + |\actualS|}} \eta(\actualS)\\
		&=  (1 - \varepsilon) \left(1 - \frac{1}{p}\right)^{|\actualS|} \cdot \underbrace{(p - 1)^{-1} \cdot (p-1)^{-1} \cdot p^{-1}  \cdot p^{-\qnum + |\actualS|}  \cdot p^{-1 - |\actualS|} \eta(\actualS)}_{\interprob_0(\tau)}\;. 
\end{align*}
\else
\begin{align*}
	&\geq (p - 1)^{-1} \cdot (p-1)^{-1} \cdot p^{-1}  \cdot p^{-\qnum + |\actualS|} \cdot \colorbox{lightgray}{$p^{-|\actualS|}$} p^{-d} \colorbox{lightgray}{$\displaystyle \sum_{(\tilde{e}_i)_{i \in \actualS} \neq \actualX}$} (1 - \varepsilon) p^{-1 - |\actualS|} \eta(\actualS)\\
	&= (p - 1)^{-1} \cdot (p-1)^{-1} \cdot p^{-1}  \cdot p^{-\qnum + |\actualS|} \cdot \colorbox{lightgray}{$p^{-|\actualS|} (p - 1)^{|\actualS|}$} \cdot (1 - \varepsilon) p^{-1 - |\actualS|} \eta(\actualS)\\
	&=  (1 - \varepsilon)\colorbox{lightgray}{$\left(1 - \frac{1}{p}\right)^{|\actualS|}$} \cdot (p - 1)^{-1} \cdot (p-1)^{-1} \cdot p^{-1}  \cdot p^{-\qnum + |\actualS|}  \cdot p^{-1 - |\actualS|} \eta(\actualS) \;. 
\end{align*}
\fi
The \ifnum\eprintversion=1 second \else first \fi inequality follows from the inequality derived above and because there is only one $\beta$ that makes the indicator $1$.
The second to last equality follows from counting the number of $(\widetilde{e}_i)_{i \in \actualS}$ where $\widetilde{e}_i \neq \actualX$. The last equality follows by rearranging the terms. We also mark the modified terms with \ifnum\eprintversion=1 {\color{blue} blue} \else \colorbox{lightgray}{highlighting}\fi.

Therefore, \ifnum\eprintversion=1\[
 \frac{\interprob_1(\tau)}{\interprob_0(\tau)} \geq (1 - \varepsilon) \left(1 - \frac{1}{p}\right)^{|\actualS|}  \geq 1 - \varepsilon - \frac{|\actualS|}{p} \geq 1 - \frac{\qnum + 1}{p}\;.
\]\else
$ \frac{\interprob_1(\tau)}{\interprob_0(\tau)} \geq (1 - \varepsilon) \left(1 - \frac{1}{p}\right)^{|\actualS|}  \geq 1 - \varepsilon - \frac{|\actualS|}{p} \geq 1 - \frac{\qnum + 1}{p}$.
\fi The second inequality follows from $\prod_{i} (1 - x_i) \geq 1 - \sum_i x_i$ for $0 \leq x_i \leq 1$. The last one follows from $\varepsilon \leq 1/p$ and $|\actualS| \leq \qnum$.
\ifnum\eprintversion=1
 Hence, we have that \[
\SD(T_0, T_1) \leq \frac{(\qnum+ 1)^2 + 4\qnum}{2p}\;. \qed
\]
\else
 Hence, $\SD(T_0, T_1) \leq \frac{(\qnum+ 1)^2 + 2}{2p} + \frac{\qnum + 1}{p} \leq \frac{(\qnum+ 1)^2 + 4\qnum}{2p}$. \qed
\fi

%% file: figures/tight-reduction.tex
\begin{figure}[!t]
	\centering
	\begin{pchstack}[boxed]
	\procedure{\scriptsize Algorithm $\advB_1(G_1, (X_{1,i})_{i \in [d']}, G_2, X_{2,1})$:}{
		\mathsf{Sigs}, S \gets \emptyset \; ; \;
		\tilde{e}_1, \ldots, \tilde{e}_\qnum \getsr \Zp \; ; \; \beta \getsr \Zp^* 
		\; ; \; f \getsr \Zp^{d}[\varX] { }^\dagger \\
		\textstyle \overline{G}_1 \gets f(x) \prod_{i = 1}^\qnum (x - \tilde{e}_i) G_1 { }^\ddagger\; ; \; \overline H \gets \beta \prod_{i = 1}^\qnum (x - \tilde{e}_i) G_1   \\
		(m^*, \sigma^* = (A^*, e^*)) \getsr \advA^{\oraS}(\overline{G}_1, \overline H, G_2, X_{2,1})\\
		\textstyle i^* \gets \min \{i : e^* = e_i\}\\
		\pcif i^* = \bot \eOr i^* \notin S \eOr e^* \in \{\widetilde{e}_i\}_{i \in [\qnum]} \eOr (m^*, \sigma^*) \in \mathsf{Sigs} \; \eOr \\
		\pcind \pairingOp{A^*}{X_{2,1} + e^* G_2} \neq \pairingOp{\overline{G}_1 + m^* \overline H}{G_2} \pcthen \pcreturn \bot\\
		\pcreturn (e^*, (x - e^*)^{-1}G_1) \pccomment{Computed via \Cref{remark:compute-sdh}}
	}
	\procedure{\scriptsize Oracle $\oraS(m_i \in \Zp)$:}{
		\pcif |\Root(f(\varX) + m_i \beta) | = 1 \pcthen \\
		\pcind e_i \gets \Root(f(\varX) + m_i \beta) \\
		\pcind S \gets S \cup \{i\}\\
		\pcelse e_i \gets \widetilde{e}_i\\
		\textstyle A_i \gets \frac{(f(x) + m_i \beta) \prod_{i \in [\qnum]} (x - \widetilde{e}_i)}{x - e_i} G_1\\
		\mathsf{Sigs} \gets \mathsf{Sigs} \cup \{(m_i,  (A_i, e_i))\}\\
		\pcreturn (A_i, e_i)
	}
	\end{pchstack}
	\caption{Adversary $\advB_1$ running $\advA$. We assume that all algorithms implicitly take as input the group description, and $\advB_1$ implicitly counts the signing queries. $d' =  d + \qnum  \allowbreak = \lceil(e^2 + 2)\qnum + 2\log_2 p\rceil + 1$. We use $\Root(\cdot)$ as a $\Zp$-scalar instead if its size is 1. {$\dagger$: Our analysis does not require monic $f$, but we use it for consistency with the technical overview and readability as there are exactly $d$ coefficients, rather than $d + 1$. $\ddagger$ : At this point, it is not obvious that $\overline{G}_1$ is uniformly random. We will formally show this in our analysis but intuitively this is due to the degree $d$ of $f$ being relatively large.}} \label{fig:tight-reduction}
\end{figure}

%% file: tex_files/lower-bound.tex
\section{Impossibility of Tight Reductions for General Messages}\label{sec:lower-bound}

In this section, we consider the tightness of security reductions againsts adversaries that makes arbitrary (i.e., not necessarily distinct) signing queries. In particular, \Cref{sec:lower-bound-algebraic-straightline,sec:lower-bound-algebraic-rewinding} give lower bounds on the reduction loss for straight-line and rewinding ``algebraic" reductions, respectively. 

\subsection{Impossibility of Tight Straight-Line Algebraic Reductions}\label{sec:lower-bound-algebraic-straightline}
The following theorem shows that for any straight-line algebraic reduction $\calR$, with black-box access to an adversary $\advA$ that makes at most $\qnum$ identical signing queries, cannot achieve an advantage in breaking the $(d_1, d_2)$-$\SDH$ assumption greater than $O(\qnum^{-1}) \cdot \Adv^{\suf'}_{\BBS}(\advA, \secpar)$. Otherwise, there exists an adversary $\metaR$ breaking $(d_1, d_2)$-$\DL$ with non-negligible advantage. 
Accordingly, any straight-line algebraic reductions must incur at least a $\Theta(\qnum)$ factor loss in the advantage. 

Here, it suffices for us to show impossibility of tight $\SUF'$ security with message length $\ell = 1$, since it is tightly (and trivially) implied by $\SUF$ security for any $\ell > 1$. We also stress that the reduction $\calR$ {\em does not} obtain the algebraic representation of the group elements output by the adversary. (Note that there exists a tight reduction in the AGM~\cite{EC:TesZhu23b}.) Finally, \Cref{corol:generalization} extends this impossibility result to adversaries $\advA$ with fine-grained query patterns: making $\qnum/k$ queries each on $k$ distinct messages. In that case, the loss is at least $\Theta(\qnum/k)$. 

\begin{theorem}\label{thm:straightline-algebraic}
	Let $\GroupGen$ be a group parameter generator outputting bilinear groups of prime-order $p = p(\secpar)$, $d_1 = d_1(\secpar), d_2 = d_2(\secpar)$ be integers, and $\BBS_1 = \BBS[\GroupGen, 1]$. There exists an unbounded adversary $\advA$ making $\qnum = \qnum(\secpar)$ identical signing queries with $\Adv^{\suf'}_{\BBS_1}(\advA, \secpar) \geq 1 - \frac{\qnum^2}{p}$. For any straight-line algebraic reduction $\calR$ in the $(d_1, d_2)$-$\SDH$ game  with running time $t_\calR = t_\calR(\secpar)$ and running the adversary only once, there exists a meta-reduction $\metaR$ (in the AGM) against the $(d_1, d_2)$-$\DL$ assumption running in time roughly $t_\calR$ such that \[
		\Adv^{(d_1, d_2)\text{-}\dl}_{\GroupGen}(\metaR, \secpar) \geq \Adv^{(d_1, d_2)\text{-}\sdh}_{\GroupGen}(\calR^{\advA}, \secpar) - \frac{2}{\qnum} \;. 
	\]
\end{theorem}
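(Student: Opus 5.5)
The plan is the meta-reduction sketched in \Cref{sec:lower-bound-overview}. First I fix an unbounded adversary $\advA$. On input $(\overline{G}_1, \overline H, G_2, \vk)$, $\advA$ samples $m \getsr \Zp$ and queries $\oraS(m)$ exactly $\qnum$ times, obtaining $(A_i, e_i)_{i \in [\qnum]}$.

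\begin{itemize}
\item If some signature fails verification, or two tags $e_i$ coincide, $\advA$ aborts.
\item Otherwise it samples $i \getsr [\qnum]$ and $m^* \getsr \Zp \setminus \{m\}$, recovers $x$ from $\vk$ by brute force, and outputs $(m^*, (A^*, e_i))$ with $A^* = \frac{1}{x - e_i}(\overline{G}_1 + m^* \overline H)$.
\end{itemize}

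This forgery is fresh because $m^* \neq m$, and it satisfies $e^* \in \mathsf{Tags}$. In the real $\SUF'$ game $\advA$ fails only if tags collide or some $e_i = x$, which happens with probability at most $\qnum^2/p$.

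The meta-reduction $\metaR$ gets a $(d_1,d_2)$-$\DL$ instance, which has the same format as the $(d_1,d_2)$-$\SDH$ instance, and runs $\calR$ on it. Since $\calR$ is algebraic, its outputs $\overline{G}_1, \overline H, \vk$ come with representations. These give polynomials $f, g$ of degree at most $d_1$ with $\overline{G}_1 = f(x)G_1$ and $\overline H = g(x)G_1$, and $v$ of degree at most $d_2$ with $\vk = v(x)G_2$. Likewise each returned $A_i$ comes with a polynomial $a_i$.

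$\metaR$ simulates $\advA$ as follows. It picks $m$, makes the $\qnum$ queries, and checks validity exactly as $\advA$ would, so aborts are reproduced faithfully. For a valid signature it tests whether the formal identity $a_i(\varX)(v(\varX) - e_i) = f(\varX) + m g(\varX)$ holds. If the pairing equation holds but this identity fails, then $x$ is a root of a nonzero polynomial of degree at most $d_1 + d_2$. $\metaR$ factors it, finds $x$ by testing the roots against the instance, and wins. If $v$ is constant, the key is known to $\calR$ and $A^*$ is just a known scalar multiple of $\overline{G}_1 + m^*\overline H$, so $\metaR$ simulates perfectly. I will therefore focus on nonconstant $v$.

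The core step is the forgery. $\metaR$ samples $i$ and $m^*$ as $\advA$ does. If $(v - e_i) \mid f + m^* g$ in $\Zp[\varX]$, it computes $A^*$ exactly as the combination $\frac{f + m^* g}{v - e_i}(x)$ of instance elements and hands it to $\calR$, which perfectly simulates $\advA$. Given the identity above, $(v - e_i)$ divides $f + mg$. Call $e$ \emph{useful for $m$} if $(v - e) \mid f + mg$ but not $(v-e) \mid g$. If $e_i$ is not useful, then $(v - e_i)$ divides both $f$ and $g$, hence every $f + m^* g$, and the simulation goes through.

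The counting claim is that each $e$ is useful for at most one $m$. Indeed, divisibility of both $f + mg$ and $f + m'g$ with $m \neq m'$ forces $(v-e) \mid g$. So $\sum_{m \in \Zp} \#\{e : e \text{ useful for } m\} \leq p$, and for $m \getsr \Zp$ the expected number of useful tags is at most $1$. The tags are distinct in the non-aborting case, and $i$ is uniform and independent, so $\Pr[e_i \text{ useful}] \leq 1/\qnum$. I expect this to be the main obstacle, since it requires arguing that the adaptively chosen tags cannot depend on the yet-unsampled $i$, and that $\metaR$'s view up to sampling $i$ is identical to that of $\calR^\advA$.

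Finally, when $\calR$ outputs an $\SDH$ solution $(e, Z)$ with representation $z(\varX)$, the polynomial $z(\varX)(\varX + e) - 1$ is nonzero, since it has constant term $z(0)e - 1$ and degree at least one otherwise. A correct solution makes $x$ a root, so $\metaR$ factors it and recovers $x$. The simulation deviates from $\calR^\advA$ only on the event that $e_i$ is useful, plus small terms from roots $e_i = x$ and tag collisions that can be folded into slack. This gives $\Adv^{(d_1,d_2)\text{-}\dl}_{\GroupGen}(\metaR,\secpar) \geq \Adv^{(d_1,d_2)\text{-}\sdh}_{\GroupGen}(\calR^\advA,\secpar) - 2/\qnum$. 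The running time of $\metaR$ is $t_\calR$ plus polynomial arithmetic and factoring over $\Zp$.
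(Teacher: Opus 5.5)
Your approach is the paper's: the same unbounded adversary, the same extract-or-simulate meta-reduction, and the same counting fact that a tag is ``useful'' for at most one message.

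\textbf{The gap.} You treat the $\GTwo$ generator handed to $\advA$ as the instance's $G_2$ and write $\vk = v(x)G_2$. In $\BBS$, however, $G_2$ is part of $\param$, so $\calR$ chooses it. An algebraic $\calR$ may output $\overline{G}_2 = a(x)G_2$ and $\overline{X}_{2,1} = b(x)G_2$ with $a$ nonconstant, so the effective secret key is the rational function $b(x)/a(x)$. For such a $\calR$ the pairing equation gives $\alpha^{(i)}(x)\,(b(x) - e_i a(x)) = (f(x) + m g(x))\,a(x)$, not your identity. Your $\metaR$ would then try to ``extract'' from a polynomial that need not vanish at $x$, and the forgery it computes would not be the one $\advA$ outputs. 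The theorem quantifies over all straight-line algebraic reductions, so these reductions are simply not covered. In your restricted model the argument is correct, and the loss is in fact only $1/\qnum$.

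\textbf{The repair.} Follow the paper's general handling:
\begin{itemize}
\item Declare $\BadRep$ when $(b - e_i a) \nmid (f + mg)a$.
\item Simulate the forgery as $A^* = \alpha^*(x)G_1$ with $\alpha^* = (f + m^* g)a/(b - e_{i^*}a)$.
\item This requires $(b - e_{i^*}a)$ to divide both $fa$ and $ga$.
\item It also requires $\deg(b - e_{i^*}a) = \max\{\deg a, \deg b\}$, so that $\deg \alpha^* \le d_1$ and $A^*$ is computable from the $(d_1,d_2)$ instance.
\end{itemize}
Your disjointness argument carries over verbatim to the divisibility condition and gives $1/\qnum$. The degree condition fails for at most one tag (the one cancelling the leading coefficient), which contributes the second $1/\qnum$.

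That degree term is where the paper's $2/\qnum$ comes from. It is not slack from tag collisions, which $\metaR$ reproduces exactly as $\advA$ does. Terms of order $\qnum^2/p$ also could not, in general, be absorbed into $1/\qnum$.

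Separately, the step you flagged as the main obstacle is not one. The index $i$ is sampled after all tags are fixed. Given $\calR$'s input and coins, the polynomials are fixed before $m$ is sampled, and that is all the counting argument needs.
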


\begin{proof} 
	We will discuss the reduction's output and representation in the AGM, describe the unbounded adversary $\advA$ and the meta-reduction $\metaR$ in \Cref{fig:meta-reduction-simple}, and analyze them. For simplicity, assume that all algorithms implicitly takes as input the group description $\pGroupDescript$.

	\heading{Reduction $\calR$.} Note that we are working in Type-3 pairing groups and assuming the AGM~\cite{C:FucKilLos18}. In particular, the reduction takes as input the group elements $(G_1 \in \GOne^*, (X_{1,i} \in \GOne)_{i \in [d_1]}, G_2 \in \GTwo^*, (X_{2,i} \in \GTwo)_{i \in [d_2]})$. It then outputs the following group elements along with their algebraic representation: \begin{itemize}[noitemsep,topsep=0pt]
		\item {\bf Public parameters and key:} The group elements $\overline{G}_1, \overline H, \overline{G}_2, \overline{X}_{2,1} $ are given along with their representation $\vec{f}, \vec{g} \in \Zp^{d_1 + 1}, \vec{a}, \vec{b} \in \Zp^{d_2 + 1}$ such that 
		\ifnum\eprintversion=1
		\begin{align}
		\overline{G}_1 = f_0 G_1 + \sum_{i = 1}^{d_1} f_i X_{1,i}\;&,\; \overline H = g_0 G_1 + \sum_{i = 1}^{d_1} g_i X_{1,i}\;, \label{eq:red-input-1}\\
		\overline{G}_2 = a_0 G_2 + \sum_{i = 1}^{d_2} a_i X_{2,i} \;&, \;\overline{X}_{2,1} = b_0 G_2 + \sum_{i = 1}^{d_2} b_i X_{2,i} \;. \label{eq:red-input-2}
		\end{align}
		\else
		\begin{align}
			\textstyle \overline{G}_1 = f_0 G_1 + \sum_{i = 1}^{d_1} f_i X_{1,i}\;&,\; \textstyle\overline H = g_0 G_1 + \sum_{i = 1}^{d_1} g_i X_{1,i}\;, \label{eq:red-input-1}\\
			\textstyle \overline{G}_2 = a_0 G_2 + \sum_{i = 1}^{d_2} a_i X_{2,i} \;&, \;\textstyle\overline{X}_{2,1} = b_0 G_2 + \sum_{i = 1}^{d_2} b_i X_{2,i} \;. \label{eq:red-input-2}
		\end{align}
		\fi
		In particular, $\vec f, \vec g$ define polynomials $f, g$ of degree at most $d_1$ and similarly $\vec a, \vec b$ define polynomials $a, b$\footnote{These extend from \Cref{sec:lower-bound-overview} on how the reduction can choose the verification key.} of degree at most $d_2$. We will refer to them as polynomials throughout the proof.
		
		\item {\bf Signing queries:} As a part of the output of the $i$-th signing query, the group element $A_i$ can be described with $\vec\alpha^{(i)} \in \Zp^{d_1 + 1}$ or equivalently a polynomial $\alpha^{(i)}$ of degree at most $d_1$ such that 
		\ifnum\eprintversion=1
		\[
		A_i = \alpha^{(i)}_0 G_1 + \sum_{j = 1}^{d_1} \alpha^{(i)}_j X_{1,i} = \alpha^{(i)}(x) G_1\;.
		\]
		\else
		$A_i = \alpha^{(i)}_0 G_1 + \sum_{j = 1}^{d_1} \alpha^{(i)}_j X_{1,i} = \alpha^{(i)}(x) G_1$.
		\fi
		
		\item {\bf The $(d_1, d_2)$-$\SDH$ solution:} The SDH solution $(e', Z^*)$ is described with $\vec{\zeta} \in \Zp^{d + 2}$ where
		\ifnum\eprintversion=1
		\[
			Z^* = \zeta_0 G_1 + \sum_{i = 1}^{d_1} \zeta_i X_{1,i} + \zeta_{d_1 + 2} A^*\;,
		\] \else
		$Z^* = \zeta_0 G_1 + \sum_{i = 1}^{d_1} \zeta_i X_{1,i} + \zeta_{d_1 + 2} A^*$,
		\fi with $A^*$ being the forgery that the adversary returns to the reduction.
		
	\end{itemize}
	
	\input{figures/meta-reduction-simple}

	\heading{Unbounded adversary $\advA$.} The adversary $\advA$ in \Cref{fig:meta-reduction-simple} finds the discrete logarithm $x = \dlog_{\overline{G}_2} \overline{X}_{2,1}$, makes $\qnum$ signing queries on a uniformly random message $m$, and forges on $m + 1$ using $x$ and a tag $e_{i^*}$ with $i^* \getsr [\qnum]$. 
	Here, $\advA$ wins in the $\SUF'$ game unless it aborts when signatures are invalid or the tags $e_i$'s collide (occurs with probability at most $\qnum^2/p$). Hence, $\Adv^{\suf'}_{\BBS_1}(\advA, \secpar) \geq 1 - \qnum^2/p$.
	
	\heading{Meta-reduction $\metaR$:} The meta-reduction $\metaR$, given in \Cref{fig:meta-reduction-simple}, forwards its inputs to $\calR$ and has access to $\calR$'s algebraic representations. It simulates $\advA$ by making $\qnum$ queries on a random message $m$ and trying to forge using a tag from a random query. We note that there are two bad events defined where $\metaR$ aborts: \begin{itemize}[noitemsep,topsep=0pt]
		\item $\BadRep$: There exists some $i \in [\qnum]$ where $(b - e_i a) \ndivides (f + m \cdot g)a$. If this occurs, $\metaR$ can find the discrete logarithm $x$ of $X_{2,1}$. 
		In particular, when $(A_i, e_i)$ verifies, $\pairingOp{A_i}{\overline{X}_{2,1} - e_i \overline{G}_2}= \pairingOp{\overline{G}_1 + m \overline{H}}{\overline{G}_2}$. Therefore, 
		\ifnum\eprintversion=1
		\[
		\alpha^{(i)}(x) \cdot  (b(x) - e_i a(x)) = a(x)(f(x) + m g(x))\;.
		\]\else
		$\alpha^{(i)}(x)  (b(x) - e_i a(x)) = a(x)(f(x) + m g(x))$.
		\fi Since $(b - e_i a) \ndivides (f + m g) a$, we have that $x$ is one of the zeros of the non-zero polynomial $h(\varX ) = \alpha^{(i)}(\varX) (b(\varX) - e_i a(\varX)) - (f(\varX) + m g(\varX))a(\varX)$, which $\metaR$ can find efficiently. 
		
		\item $\BadIdx$: The sampled index $i^*$ is such that $(b - e_{i^*} a)$ does not divide $fa$ or $g a$, or $\deg (b - e_{i^*} a) < \max\{\deg a,\deg b\}$. If this does not occur, $\metaR$ can compute $A^* = \alpha^*(x) G_1$ where $\alpha^*(\varX) = \frac{f (\varX) + m^* g  (\varX)}{b (\varX) - e_{i^*} a (\varX)} a(\varX) \allowbreak= \frac{f (\varX) + m^* g  (\varX)}{b (\varX)/a (\varX) - e_{i^*} }$. With the degree constraint, we have $\deg \alpha^* = \deg (f + m^* g) + \deg a - \deg (b - e_{i^*} a) \leq \deg (f + m^* g)  \leq d_1$, so $A^*$ can be computed from the $(d_1, d_2)$-DL instance.
	\end{itemize} 
	 If $\BadIdx$ does not occur, the view of $\calR$ is identical to when it is running $\advA$ since $\metaR$ outputs a valid forgery. With $A^* = \alpha^*(x) G_1$, the representation $\vec \zeta$ of $Z^*$ can be viewed as a polynomial $\zeta(\varX)$ with $\deg \zeta \leq d_1$. Hence, $\metaR$ succeeds in the $(d_1, d_2)$-$\DL$ game, if $\calR$ succeeds in the $(d_1, d_2)$-$\SDH$ game and $\BadIdx$ does not occur. Also, if $\BadRep$ occurs, $\metaR$ also succeeds as discussed above. 
	 \ifnum\eprintversion=1 Also, it is easy to see that the running time of $\metaR$ depends on $t_\calR$, the time for finding the zeros of $\zeta(\varX) (\varX - e') - 1$, and the time for checking if $b - e_{i^*} a$ divides $f$ and $g$. These are dominated by $t_\calR$.
	 \else
	 Also, it is easy to see that the running time of $\metaR$ is dominated by $t_\calR$.
	 \fi
	 Finally, we bound the advantage of $\metaR$ and conclude the proof via the following lemma.
	\[
		\Adv^{(d_1,d_2)\text{-}\dl}_{\GroupGen}(\metaR, \secpar) \geq \Adv^{(d_1,d_2)\text{-}\sdh}_{\GroupGen}(\calR^{\advA}, \secpar) - \Pr[\BadIdx \eAnd \neg \BadRep] \;. \pcind \qed
	\]
	
	\begin{lemma}\label{lem:guessing-index}
		$\Pr[\BadIdx \eAnd \neg \BadRep] \leq \frac{2}{\qnum}$.
	\end{lemma}
	\begin{proof}[of \Cref{lem:guessing-index}]
		We fix any input and random tape of $\calR$, which fixes the polynomials $f, g, a, b$. Then, we show that for any such fixed polynomials, the probability that $\BadIdx \eAnd \neg \BadRep$ occurs is at most $2/\qnum$. By definition, $\BadIdx \eAnd \neg \BadRep$ implies (1) $\deg (b - e_{i^*} a) < \max\{\deg a, \deg b\}$, or (2) $(b - e_{i^*} a \divides (f + m g) a) \eAnd 
		(b - e_{i^*} a \ndivides f a \eOr b - e_{i^*} a \ndivides ga )$. Our approach is to upper bound the number of $e$'s that can lead to such event, which bounds the probability over $i^* \getsr [\qnum]$, and take the expectation over $m \gets \Zp$. For part (1) of the event, it is clear that only one $e$ can decrease the degree, i.e., by canceling out the leading coefficient. Thus, for fixed $f,g,a,b$ and distinct $e_1, \ldots, e_\qnum$, $\Pr_{i^* \getsr [\qnum]}[ \deg (b - e_{i^*} a) < \max\{\deg a, \deg b\} ] \leq 1/\qnum$.
		
		For part (2) of the event, define for fixed $f \neq 0, g, a \neq 0, b$ and $m \in \Zp$  \ifnum\eprintversion=1\[
		\varphi_{f,g,a,b}(m) := \{e \in \Zp :  (b - e a) \divides (f + m g) a \eAnd ((b - e a) \ndivides fa \eOr ( b - e a) \ndivides g a)\}\;.\]
		\else , the set $\varphi_{f,g,a,b}(m)$ of $e \in \Zp$ such that $(b - e a) \divides (f + m g) a \eAnd ((b - e a) \ndivides fa \eOr ( b - e a) \ndivides g a)$. \fi
		Then, for $m \neq m' \in \Zp$, $\varphi_{f,g,a,b}(m) \cap \varphi_{f,g,a,b}(m') = \emptyset$. This is because if there exists $e \in \varphi_{f,g,a,b}(m) \cap \varphi_{f,g,a,b}(m')$, we have that $(b - e a) \divides (f + m g)a$ and $(b - e a) \divides (f + m' g)a$, so $b - e a$ divides both $fa$ and $ga$ (a contradiction). Hence, $\sum_{\actualMsg \in \Zp} |\varphi_{f,g,a,b}(\actualMsg)| \leq p$. Accordingly, for any fixed $f \neq 0, g, a \neq 0, b \in \Zp[\varX]$, $\actualMsg \in \Zp$, and distinct $e_1, \ldots, e_\qnum \in \Zp$ (otherwise, $\advA$ aborts), we have \ifnum\eprintversion=1 \[
		\Pr_{i^* \getsr [\qnum]}\left[ \begin{array}{l}
			(b - e_{i^*} a) \divides (f + \actualMsg g)a \; \eAnd \\
			((b - e_{i^*} a )\ndivides fa \eOr (b - e_{i^*} a) \ndivides ga) 
		\end{array} \right]  \leq \frac{|\varphi_{f,g,a,b}(\actualMsg)|}{\qnum} \;. 
		\] \else 
		$\Pr_{i^* \getsr [\qnum]}[ e_{i^*} \in \varphi_{f,g,a,b}(\actualMsg)] \leq  |\varphi_{f,g,a,b}(\actualMsg)| / p$.
		\fi Taking expectation over all messages $m \in \Zp$ results in the $1/\qnum$ bound, and we conclude by applying the union bound.\qed
	\end{proof}

\end{proof}

\begin{corollary}[Generalized query patterns]\label{corol:generalization}
	\ifnum\eprintversion=1
	Let $\GroupGen$ be a group parameter generator, producing groups of order $p = p(\secpar)$, $d_1 = d_1(\secpar), d_2 = d_2(\secpar)$ be integers, and $\BBS_1 = \BBS[\GroupGen, 1]$. 
	\else
	Let $\GroupGen$, $p$, $d_1, d_2$, and $\BBS_1$ be as in \Cref{thm:straightline-algebraic}. 
	\fi 
	There exists an unbounded adversary $\advA$ making in total $\qnum  = \qnum(\secpar)$ queries on at least $k = k(\secpar)$ distinct messages with $\qnum/k$ queries per message, such that $\Adv^{\suf'}_{\BBS_1}(\advA, \secpar) \geq 1 - \frac{\qnum^2}{p}$. For any straight-line algebraic reduction $\calR$ playing the $(d_1, d_2)$-$\SDH$ game with running time $t_\calR = t_\calR(\secpar)$ and running the adversary only once, there exists a meta-reduction $\metaR$ against the $(d_1, d_2)$-$\DL$ assumption running in time roughly $t_\calR$ such that \ifnum\eprintversion=1\[
	\Adv^{(d_1, d_2)\text{-}\dl}_{\GroupGen}(\metaR, \secpar) \geq \Adv^{(d_1, d_2)\text{-}\sdh}_{\GroupGen}(\calR^{\advA}, \secpar) - \frac{k + 1}{\qnum} \;. 
	\]\else
	$	\Adv^{(d_1, d_2)\text{-}\dl}_{\GroupGen}(\metaR, \secpar) \geq \Adv^{(d_1, d_2)\text{-}\sdh}_{\GroupGen}(\calR^{\advA}, \secpar) - \frac{k + 1}{\qnum}$. 
	\fi
\end{corollary}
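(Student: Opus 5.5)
We will reuse the meta-reduction from the proof of \Cref{thm:straightline-algebraic}, modifying only the adversary $\advA$ that $\metaR$ simulates. Partition the $\qnum$ queries into $k$ blocks of $\qnum/k$ queries each. The adversary samples $k$ independent uniform messages $m_1,\ldots,m_k \getsr \Zp$ and queries each one $\qnum/k$ times. It then picks a uniformly random query index $i^* \getsr [\qnum]$; let $j$ be the block containing $i^*$. Using the discrete logarithm $x$ found by brute force, it forges on the message $m_j + 1$ with tag $e_{i^*}$, aborting if two tags collide. If the sampled messages are not distinct, it resamples them (or simply requires at least $k$ distinct values, as in the statement). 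As before, $\advA$ wins the $\SUF'$ game except on a tag collision, so $\Adv^{\suf'}_{\BBS_1}(\advA,\secpar)\geq 1-\qnum^2/p$. The forgery is fresh because $m_j+1 \neq m_j$ for the same tag; if $m_j+1$ coincides with some other queried message, that message was signed with a different tag, so freshness still holds.

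The meta-reduction $\metaR$ is defined verbatim as before. It answers nothing itself, forwards the SDH/DL instance, and aborts on the events $\BadRep$ (any $i$ with $(b-e_i a)\ndivides (f+m_{j(i)}g)a$, which lets it extract $x$ as before) and $\BadIdx$ (defined for $i^*$ with message $m_j$). Outside these events the simulated forgery is a correct algebraic element $\alpha^*(x)G_1$ of degree $\le d_1$. Hence the same inequality $\Adv^{\dl}\geq \Adv^{\sdh}(\calR^\advA)-\Pr[\BadIdx\eAnd\neg\BadRep]$ follows without change.

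The only new step is bounding $\Pr[\BadIdx\eAnd\neg\BadRep]\le (k+1)/\qnum$. Fix $f,g,a,b$ and the distinct tags. The degree-drop part (1) involves at most one $e$, contributing $1/\qnum$. For part (2), use the disjointness of the sets $\varphi_{f,g,a,b}(m)$ across distinct $m$ from \Cref{lem:guessing-index}. Given $\neg\BadRep$, the index $i^*$ is bad only if $e_{i^*}\in\varphi(m_{j(i^*)})$, and since the tags are distinct, the number of bad indices is at most $\sum_{j=1}^k|\varphi(m_j)|$. Thus the probability over $i^*$ is at most $\frac1\qnum\sum_{j}|\varphi(m_j)|$. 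Each $m_j$ is marginally uniform, so $\mathbb{E}|\varphi(m_j)|\le \frac1p\sum_m|\varphi(m)|\le 1$, which gives at most $k/\qnum$. A union bound yields $(k+1)/\qnum$.

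The main obstacle is this expectation step. The $m_j$ are chosen before $\calR$ commits to $f,g,a,b$, but $\calR$ sees $m_1$ before choosing later responses. The polynomials $f,g,a,b$, however, are fixed by $\calR$'s first output, which precedes all queries. I would therefore condition on $\calR$'s input, random tape, and its first output, making $f,g,a,b$ fixed and independent of the uniformly sampled $m_j$. If the messages are resampled to enforce distinctness, each $m_j$ is no longer exactly uniform; I would avoid resampling and just note that $k$ distinct messages occur except with probability $k^2/p$, or alternatively bound $\mathbb{E}|\varphi(m_j)|\le p/(p-k)$, absorbing the difference.
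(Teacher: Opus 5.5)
Your proposal is correct and follows essentially the paper's argument: you reuse the meta-reduction, bound the number of bad indices by $\sum_{j}|\varphi_{f,g,a,b}(m_j)|$ using disjointness of the $\varphi$-sets, and apply linearity of expectation over the messages, with $f,g,a,b$ fixed by $\calR$'s first output before any message is sampled (the paper forges on $\min(\Zp\setminus\{\mu_l\}_l)$ rather than $m_j+1$, which makes no difference). Your worry about resampling is unnecessary: for a uniformly random tuple of distinct elements of $\Zp$, each coordinate is exactly uniform, so $\mathbb{E}|\varphi(m_j)|\le 1$ holds exactly and the bound is precisely $(k+1)/\qnum$ with no $k^2/p$ or $p/(p-k)$ slack (alternatively, let $\advA$ abort on a message collision, as the paper does in the rewinding proof, which keeps the $m_j$ i.i.d.\ and charges the collision to $\advA$'s advantage).
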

\begin{proof}
	We only sketch the changes to the proof of  \Cref{thm:straightline-algebraic} as follows: \begin{itemize}[topsep=0pt]
		\item {\bf Unbounded adversary $\advA$ and meta-reduction $\metaR$:} Both now sample uniformly random distinct messages $ (\mu_1, \ldots, \mu_{k}) \getsr \Zp^k$ and make $\qnum/k$ queries each, i.e., the queries are $\vec{m} \gets (\underbrace{\mu_1, \ldots, \mu_1}_{\qnum/k \text{ times}}, \ldots, \underbrace{\mu_k, \ldots, \mu_k}_{\qnum/k \text{ times}})$. They both forge on the minimum element $m^*$ in $\Zp \setminus \{\mu_1,\ldots,\mu_{k}\}$. The overall strategies for both are unchanged.
		
		\item {\bf Events $\BadIdx$ and $\BadRep$.} The two events are now defined as: ($\BadRep$) There exists $i \in [\qnum]$ such that $(b - e_{i} a) \ndivides (f + m_{i} g)a$, and ($\BadIdx$) For $i^* \allowbreak \getsr [\qnum]$, $b - e_{i^*} a$ does not divide $fa$ or $ga$ or $\deg (b - e_{i^*} a) = \max\{\deg a, \deg b\}$. The bound in \Cref{lem:guessing-index} changes to $(k + 1)/\qnum$ instead of $2/\qnum$. Intuitively, the term $k/\qnum$ appears because with $k$ distinct messages, there are now $k$ times more chances (by linearity of expectation) for the reduction to choose $e_i$ where $b + e_i a$ does not divide $fa$ or $ga$ but divides $(f + m_{i} g)a$. \qed
	\end{itemize}
\end{proof}

\subsection{Impossibility of Tight Rewinding Algebraic Reductions}\label{sec:lower-bound-algebraic-rewinding}
In this section, we extend our impossibility results to rewinding algebraic reductions that run the adversary at most $r$ times with the same random tape, with the ability to interleave these executions. The following theorem establishes that any such reduction incurs at least a $\Theta(\qnum / r^2)$ factor loss in their advantage. (For fine-grained query patterns, this scales linearly with $1/k$ as in \Cref{corol:generalization}.) Otherwise, the meta-reduction breaks the $(d_1, d_2)$-$\DL$ assumption. This result also encompasses those from the previous section, which we presented separately to illustrate our main ideas with simpler proofs before progressing to the more complex proof in \ifnum\eprintversion=0 the full version\else\Cref{sec:proof-rewinding-lower-bound}\fi.

\begin{theorem}\label{thm:rewinding-algebraic}
	Let $\GroupGen$, $p$, $d_1, d_2$, and $\BBS_1$ be as in \Cref{thm:straightline-algebraic}. 
	There exists an unbounded adversary $\advA$ making in total $\qnum  = \qnum(\secpar)$ queries on at least $k = k(\secpar)$ distinct messages with $\qnum/k$ queries per message, such that $\Adv^{\suf'}_{\BBS_1}(\advA, \secpar) \geq 1 - \frac{\qnum^2}{p}$. Moreover, for any algebraic reduction $\calR$ playing the $(d_1, d_2)$-$\SDH$ game, running in time $t_\calR = t_\calR(\secpar)$, and running the adversary at most $r = r(\secpar)$ times (possibly interleaving the executions), there exist meta-reductions $\metaR$ and $\metaR'$ (both in the AGM) against the $(d_1, d_2)$-$\DL$ assumption running in time roughly $t_\calR$ such that \ifnum\eprintversion=1 \[
	\Adv^{(d_1, d_2)\text{-}\dl}_{\GroupGen}(\metaR, \secpar) \geq \Adv^{(d_1, d_2)\text{-}\sdh}_{\GroupGen}(\calR^{\advA}, \secpar) - \frac{r^2 \cdot (k + 1)}{\qnum} - \Adv^{(d_1, d_2)\text{-}\dl}_{\GroupGen}(\metaR', \secpar) \;.
	\]\else 
	$\Adv^{(d_1, d_2)\text{-}\dl}_{\GroupGen}(\metaR, \secpar) \geq \Adv^{(d_1, d_2)\text{-}\sdh}_{\GroupGen}(\calR^{\advA}, \secpar) - \frac{r^2 \cdot (k + 1)}{\qnum} - \Adv^{(d_1, d_2)\text{-}\dl}_{\GroupGen}(\metaR', \secpar) $.
	\fi
\end{theorem}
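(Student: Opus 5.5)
The plan is to generalize the meta-reduction of \Cref{thm:straightline-algebraic} (in its fine-grained form, \Cref{corol:generalization}) to the rewinding setting. The adversary $\advA$ stays the same: it is unbounded and uses its random tape to sample distinct $\mu_1,\ldots,\mu_k$, makes $\qnum/k$ queries on each, computes $x$ from $(\overline{G}_2,\overline{X}_{2,1})$, and forges on the minimal $m^*\notin\{\mu_j\}$ with tag $e_{i^*}$. The index $i^*$ is derived from its random tape \emph{together with} the transcript seen so far. Concretely, I would model the index choice as a random function $F$ applied to the first message of the reduction (the public parameters, verification key, and all received tags $(e_i)_i$). This makes $\advA$ consistent across rewinds with identical views and independent across executions whose views differ. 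As before, its advantage is at least $1-\qnum^2/p$.

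The meta-reduction $\metaR$ forwards the $(d_1,d_2)$-DL instance to $\calR$ and simulates each of the at most $r$ executions, possibly interleaved. It lazily samples $F$ and the messages $\mu_j$ per distinct execution prefix. For each execution it reads the algebraic representations $f,g,a,b$ and $\alpha^{(i)}$, and it keeps the two bad events of the previous proof. If $\BadRep$ occurs in any execution, it extracts $x$ as a root of $\alpha^{(i)}(b-e_ia)-(f+m_ig)a$. If $\BadIdx$ does not occur in that execution, it answers with $A^*=\alpha^*(x)G_1$, computed from the instance with $\alpha^*=(f+m^*g)a/(b-e_{i^*}a)$. When $\calR$ finally outputs $(e',Z^*)$ with representation $\zeta$, possibly depending on several simulated forgeries, all of which are polynomials of degree $\le d_1$, $\metaR$ solves $\zeta(\varX)(\varX-e')-1=0$ for $x$.

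The loss bound is a union bound over executions. In each execution $j$, Lemma~\ref{lem:guessing-index} as generalized in \Cref{corol:generalization} bounds $\Pr[\BadIdx_j\cap\neg\BadRep]\le (k+1)/\qnum$, provided $i^*$ is fresh uniform given the view and $\mu$'s are uniform given $f,g,a,b$. The subtlety, and the source of the $r^2$, is that $\calR$ may adapt $f,g,a,b$ across executions using information leaked from earlier ones. It could learn which $\mu$ were used (messages can repeat across rewinds with the same tape) and which $i^*$ was chosen, and then pick the polynomials so that the useful tags line up. I would handle this by conditioning on the pair of executions $(j,j')$ at which the leaked randomness first becomes relevant. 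For each ordered pair, one argues the probability that execution $j$ hits $\BadIdx$ given what $\calR$ learned in execution $j'$ is at most $(k+1)/\qnum$. Summing over the $\le r^2$ pairs gives $r^2(k+1)/\qnum$.

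A second meta-reduction $\metaR'$ handles the residual case in which the simulated forgery $\alpha^*(x)G_1$ is inconsistent across executions, or in which $\calR$'s verification-key polynomials $b/a$ differ between runs sharing randomness. Such an inconsistency yields a nonzero polynomial vanishing at $x$, and this is the additive $\Adv^{\dl}(\metaR',\secpar)$ term. The main obstacle is the adaptive correlation across rewinds: formalizing that lazily sampling $(\mu,i^*)$ per execution prefix keeps each execution's $\BadIdx$ probability at $(k+1)/\qnum$ conditionally. I would attempt this first by a hybrid that fixes $\calR$'s coins and enumerates execution prefixes as a tree of depth $\le r$.
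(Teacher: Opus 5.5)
Your overall architecture matches the paper's. Your $\advA$ draws its coins from random functions of its view, which $\metaR$ lazily samples. $\metaR$ forges via $\alpha^*=(f+m^*g)a/(b-e_{i^*}a)$ and extracts $x$ on $\BadRep$; this is sound as long as you stop at the first $\BadIdx$, so that every earlier forgery is polynomial in $x$. It finally solves $\zeta(\varX)(\varX-e')=1$.

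The gap is the step that produces $r^2(k+1)/\qnum$, which you leave as ``one argues'' and state incorrectly. First, the threat is misidentified. $\calR$ cannot pick new polynomials to ``line up'' with $\mu$'s it has already seen, because new group elements are a fresh input to the random function and yield fresh uniform $\mu$'s. Its real leverage is to try up to $r$ parameter sets and keep one whose $\mu$ happened to make many tags useful. It then rewinds only the signing phase of that set, with new tag vectors, to resample $i^*$ up to $r$ times.

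Second, under such reuse your per-pair claim is false. You claim execution $j$ hits $\BadIdx$ with probability at most $(k+1)/\qnum$ \emph{given} what $\calR$ learned in execution $j'$. Let $X_{j',l}=|\varphi_{f,g,a,b}(\mu^{(j')}_l)|$ for the polynomials of $j'$. Conditioned on $\mu^{(j')}$, the number of useful tags $\sum_l X_{j',l}$ can be as large as $\qnum$, and then every reuse hits $\BadIdx$ with probability close to $1$. The bound holds only unconditionally:
\[
\Pr\bigl[\text{completed run $j$ reuses the parameters of query $j'$} \wedge \BadIdx_j\bigr] \le \Expect\Bigl[\frac{\min\{\sum_l X_{j',l},\qnum\}+1}{\qnum}\Bigr] \le \frac{k+1}{\qnum}\;.
\]
The last step uses that $(f,g,a,b)^{(j')}$ is fixed before $\mu^{(j')}$ is drawn, and that the sets $\varphi_{f,g,a,b}(m)$ are disjoint over $m$, so $\Expect[X_{j',l}]\le 1$. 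Summing over the $r\times r$ pairs (completed run, parameter query) gives the theorem. The paper does the same accounting via the game $\Rewind$. Guessing the winning $\oraCh_2$ query costs one factor $r$, and $\Expect[\max_j\sum_l X_{j,l}]\le\sum_j\sum_l\Expect[X_{j,l}]\le rk$ gives the other.

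Your explanation of the additive $\metaR'$ term is also off. The cases you list do not arise. Whenever the divisibility and degree checks pass, $\alpha^*(x)G_1$ is exactly the forgery $\advA$ would output, and repeated group elements can simply keep their first representation. In the paper, $\metaR'$ is the $\BadRep$ extractor. The work there is ordering bad events by completed run ($\BadRep'_j$, $\BadIdx'_j$), so that at the first $\BadRep$ every earlier forgery is polynomial. Since your $\metaR$ already extracts on $\BadRep$, the $\metaR'$ term is harmless for you (any $\metaR'$ works), but not for the reason you give.
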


\begin{proof}[Sketch]
	We give a proof sketch here for the simpler case of $k = 1$ and prove the general case in \ifnum\eprintversion=0 the full version\else\Cref{sec:proof-rewinding-lower-bound}\fi, following a similar argument from the straight-line reduction case. 
	Similar to before, the (algebraic) reduction $\calR$ and unbounded adversary $\advA$ interact in the following steps in one run of $\advA$: \begin{enumerate}[topsep=0pt, label=(\arabic*)] 
		\item $\calR$ sends $\overline{G}_1, \overline H, \overline{G}_2, X_{2,1}$ committing to polynomials $f, g, a, b$, respectively.
		\item $\advA$ replies with a random message $m \getsr \Zp$, requesting $\qnum$ signatures.
		\item $\calR$ sends valid signatures $(A_i, e_i)_{i \in [\qnum]}$ such that $(a - e_i b) \divides (f + m g)a$. 
		\item $\advA$ forges a signature on the tag $e_{i^*}$ where $i^* \getsr [\qnum]$. 
	\end{enumerate}
	However in this case, $\calR$ can rewind $\advA$ and give group different elements $\overline{G}_1, \overline H, \allowbreak \overline{G}_2,  X_{2,1}$ at Step (1) or different signatures $(A_i, e_i)_{i \in [\qnum]}$ at Step (3).\footnote{Sending the same elements over rewound runs does not increase the reduction's success probability, as $\advA$ can pick its outputs by evaluating a random function (defined by its random-tape) on its view so far. The meta-reduction can efficiently simulate this via lazy-sampling.} Our meta-reduction would also replicate the adversary $\advA$ except that it can only forge when $e_{i^*}$ is such that $(a - e_{i^*} b) $ divides both $fa$ and $ga$ in each run.
	
	As we will formally prove, the optimal strategy for the reduction is to rewind to Step (1) $O(r)$ times and Step (3) $O(r)$ times. 
	In essence, the rewindings to Step (1) allow the reduction to find the tuple $(f(\varX), g(\varX), a(\varX), b(\varX), m \in \Zp)$ such that there are many tags $e \in \Zp$ for which $(a - e b)$ divides $(f + m g)a$ but does not divide $fa$ or $ga$. We show that after $r$ rewindings to Step (1), the largest number of such tags $e$ over the $r$ rewound runs is at most $r$ in expectation. Then, $r$ additional rewindings to Step (3) of the run that produces the best tuple $(f,g, a,b, m)$ boost the failure probability of the meta-reduction by a factor of $r$, resulting in the $r^2$ factor in the bound. 
\end{proof}

\ifnum\eprintversion=1
\input{tex_files/proof-lower-bound-rewinding}
\fi

%% file: figures/meta-reduction-simple.tex
\begin{figure}[!t]
	\centering
	\begin{pchstack}[boxed,space=2pt]
	\begin{pcvstack}
	\procedure{\scriptsize Algorithm $\metaR(\inp)$:}{
		(G_1, (X_{1,i})_{i \in [d_1]}, G_2, (X_{2,i})_{i \in [d_2]}) \gets \inp\\
		(e', Z^*, \vec{\zeta}) \getsr \calR^{\advB}(\inp)\\
		\pccomment{$\textstyle Z^* = \zeta_0 G_1 + \sum_{i = 1}^{d_1} \zeta_i X_{1,i}  + \zeta_{d_1 + 2} A^*$}\\
		\pcreturn x \text{ s.t. } xG_1 = X_{1,1} \eAnd \zeta(x)(x - e') = 1
	}
	\procedure{\scriptsize Algorithm $\advA^{\oraS}(\overline{G}_1, \overline H, \overline{G}_2, \overline{X}_{2,1})$:}{
		\pcif \overline{G}_1 = 0_{\GOne} \eOr \overline{G}_2 = 0_{\GTwo} \pcthen \pcreturn \bot\\
		m \getsr \Zp \; x \gets \dlog_{\overline{G}_2}(\overline{X}_{2,1})\\
		\pcfor i \in [\qnum]: (\sigma_i = (A_i, e_i)) \getsr \oraS(m)\\
		 \pcif (\exists i \in [\qnum ] : \textstyle \Ver{\BBS}(\overline{X}_{2,1}, m, \sigma_i) = 0 ) \eOr \\
		\pcind[2] (\exists i \neq j \in [\qnum] : e_i = e_j) \pcthen \pcreturn \bot \\
		i^* \getsr [\qnum]\; ; \; m^* \gets m + 1\\
		\textstyle \pcreturn (m^*, (A^* = \frac{\overline{G}_1 + m^* \overline H}{x - e_{i^*}}, e_{i^*}))
	}
	\end{pcvstack}
	\procedure{\scriptsize Algorithm $\advB^{\oraS}(\overline{G}_1, \overline H, \overline{G}_2, \overline{X}_{2,1}, \vec{f}, \vec{g}, \vec{a}, \vec{b})$:}{
		\pccomment{$\vec f, \vec g, \vec a, \vec b$ satisfy (\ref{eq:red-input-1},\ref{eq:red-input-2}) , $\vec f, \vec a \neq \vec 0$}\\
		\pcif \overline{G}_1 = 0_{\GOne} \eOr \overline{G}_2 = 0_{\GTwo} \pcthen \pcreturn \bot\\
		m \getsr \Zp\; 	; \;	i^* \getsr [\qnum]\; ; \; m^* \gets m + 1\\
		\pcfor i \in [\qnum]:  (\sigma_i = (A_i, e_i), \vec \alpha^{(i)}) \getsr \oraS(m)\\
		\ifnum\eprintversion=1 \pccomment{$\textstyle A_i = \alpha^{(i)}_0 G_1 + \sum_{j = 1}^{d_1} \alpha^{(i)}_j X_{1,i} = \alpha^{(i)}(x) G_1$}\\ \fi
		 \pcif \textstyle (\exists i \in [\qnum] : \Ver{\BBS}((\overline{G}_1, \overline H, \overline{G}_2), \overline{X}_{2,1}, m, \sigma_i) = 0)  \\ 
		 \pcind \eOr(\exists i \neq j \in [\qnum] : e_i = e_j) \pcthen \pcreturn \bot\\
		 \pcif \exists i \in [\qnum] : (b - e_i a) \ndivides (f + m g) a \pcthen \pccomment{$\BadRep$}\\
		\pcind[2] \metaR \; \pcabort \text{and } \pcreturn x \text{ where }  xG_1 = X_{1,1} \; \eAnd  \\
		\pcind[3]\alpha^{(i)}(x) (b(x) - e_i a(x)) = (f(x) + m g(x)) a(x)\\
		\pcif (b - e_{i^*} a) \divides f a \eAnd (b - e_{i^*} a) \divides g a \; \eAnd \\
		\pcind \deg (b - e_{i^*} a) = \max\{\deg a, \deg b\}\pcthen\\
		\pcind \textstyle \pcreturn (m^*, (A^* = \frac{f (x) + m^* g (x)}{b(x) - e_{i^*} a(x)} \cdot a(x) G_1, e_{i^*}))\\
		\pcelse \metaR \; \pcabort \pcind[10] \pccomment{$\BadIdx$}
	}
	\end{pchstack}
	\caption{Meta-reduction $\metaR$, unbounded adversary $\advA$, and simulated adversary $\advB$. As discussed in the body, all representation vectors can be parsed as a polynomial (i.e., for a vector $\vec \zeta$, we write $\zeta$ as the corresponding polynomial). We distinguish between \ifnum\eprintversion=1 the meta-reduction\fi $\metaR$ and \ifnum\eprintversion=1 the simulated adversary\fi $\advB$ aborting with $\pcabort$ and $\pcreturn \bot$, respectively.} \label{fig:meta-reduction-simple}
\end{figure}

%% file: tex_files/proof-lower-bound-rewinding.tex
\subsection{Proof of \Cref{thm:rewinding-algebraic}}\label{sec:proof-rewinding-lower-bound}
For simplicity, assume that all algorithms implicitly takes as input the group description $\pGroupDescript$ and that $k \divides \qnum$.
First and similarly to the proof of \Cref{thm:straightline-algebraic}, we discuss the group elements output by the reduction $\calR$ and their representations. Then, we describe the unbounded adversary $\advA$ and the meta-reduction $\metaR$, also given in \Cref{fig:unbounded-adv-rewind,fig:meta-reduction-rewind}. Note that by now, we assume that the readers are familiar with the notations and underlying ideas of the proof of \Cref{thm:straightline-algebraic}, as we will refer back to these ideas. We also without loss of generality assume that for each run of $\advA$, the reduction receives a forgery $\sigma^{(*, j)} = (A^{(*, j)}, e^{(*, j)})$---in the case that $\advA$ or the reduction aborts that run, we let $A^{(*, j)} = 0_{\GOne}$. Also, throughout this proof, we will denote  the values defined in the $j$-th rewinding of $\advA$ with the superscript $(j)$. 

\heading{Reduction $\calR$.} 
The reduction takes as input the SDH instance $(G_1 , (X_{1,i})_{i \in [d_1]}, \allowbreak G_2 , (X_{2,i})_{i \in [d_2]})$ and runs $\advA$ at most $r$ times {\em with the ability to interleave the executions}. Then, for the $j$-th run ($j \in [r]$) of the adversary, the settings of the reduction $\calR$ is as follows: 
\begin{itemize}
	\item {\bf Public parameters and key:} These are the group elements $\overline{G}_1^{(j)}, \overline{H}^{(j)}, \allowbreak \overline{G}_2^{(j)}, \overline{X}_{2,1}^{(j)}$ output along with their algebraic descriptions $\vec{f}^{(j)}, \vec{g}^{(j)} \in \Zp^{d_1 + r + 1}, \allowbreak \vec{a}^{(j)}, \vec{b}^{(j)} \in \Zp^{d_2 + 1}$. The representations describe these elements with respect to the group elements the reduction has previously seen including the SDH instance and the forgeries from previously completed runs denoted $A^{(*, n)}$ for the forgery in the $n$-th run. 
	We note that if the $n$-th run ($n \in [r]$) is still incomplete (so $A^{(*, n)}$ is undefined), we let $f_{d_1 + n}^{(j)} = 0$ for simplicity. 
	Then, the representations satisfy the following equations
	\begin{align}
		\overline{G}_1^{(j)} &= f_0^{(j)} G_1 + \sum_{n = 1}^{d_1} f_n^{(j)} X_{1,n} + \sum_{n = 1}^r f_{d_1 + n}^{(j)} A^{(*, n)}\;,\label{eq:red-rew-input-1}\\
		\overline{H}^{(j)} &= g_0^{(j)} G_1 + \sum_{n = 1}^{d_1} g_n^{(j)} X_{1,n} + \sum_{n = 1}^r g_{d_1 + n}^{(j)} A^{(*, n)}\;, \\
		\overline{G}_2^{(j)} &= a_0^{(j)} G_2 + \sum_{n = 1}^{d_2} a_n^{(j)} X_{2,n}\;, \pcind[2]
		\overline{X}_{2,1}^{(j)} = b_0^{(j)} G_2 + \sum_{n = 1}^{d_2} b_n^{(j)} X_{2,n} \;. \label{eq:red-rew-input-2}
	\end{align}
	Also, if for all completed run $n$, $A^{(*, n)}$ is computed as a linear combination of $G_1, X_{1,1},\ldots,X_{1,d_1}$, we can still describe $\overline{G}_1^{(j)}, \overline{H}^{(j)}$ with polynomials $f^{(j)}, g^{(j)}$ of degree at most $d_1$. Moreover, $\overline{G}_2^{(j)}, \overline{X}_{2,1}^{(j)}$ can always be described with polynomials $a^{(j)}, b^{(j)}$ of degree at most $d_2$. Throughout the proof, we will refer to $f^{(j)}, g^{(j)}, a^{(j)}, b^{(j)}$ as polynomials instead.
	
	\item {\bf Signing queries:} As the output of the $i$-th signing query, the group element $A_i^{(j)}$ can be described with $\vec\alpha^{(i, j)} \in \Zp^{d_1 + r + 1}$ such that \[
	A_i^{(j)} = \alpha_0^{(i, j)} G_1 + \sum_{n = 1}^{d_1} \alpha_n^{(i, j)} X_{1,n} + \sum_{n = 1}^r \alpha_{d_1 + n}^{(i,j)} A^{(*, n)}\;,
	\] Similar to the above, we let $\alpha_{d_1 + n}^{(i,j)} = 0$ if $A^{(*, n)}$ is not yet defined. Also, if $A^{(*, n)}$'s are computed as a linear combination of $G_1, X_{1,1},\ldots,X_{1,d_1}$, we can write $A_i^{(j)} = \alpha^{(i,j)}(x) G_1$ for some polynomial $ \alpha^{(i,j)}$ of degree at most $d_1$.

\end{itemize}
\noindent {\bf The $(d_1, d_2)$-$\SDH$ solution:} Finally, $\calR$ returns the SDH solution $(e', Z^*)$ which can be described with $\vec{\zeta} \in \Zp^{d_1 + r + 1}$ where \[
Z^* = \zeta_0 G_1 + \sum_{n = 1}^{d_1} \zeta_n X_{1,n} + \sum_{j = 1}^r \zeta_{d_1 + j} A^{(*, j)}\;.
\] 

\input{figures/meta-reduction-rewind}

\heading{Unbounded adversary $\advA$.} The adversary $\advA$, given in \Cref{fig:unbounded-adv-rewind}, is similar to the one in the non-rewinding proof. However, its random coins are now two (exponentially large) random functions 
$F_1: \GOne^2 \times \GTwo^2 \to \Zp^k , F_2 : \GOne^2 \times \GTwo^2 \times (\Zp^2 \times \GOne)^\qnum \to [\qnum]$. These in particular are used to control randomnesses for different rewound runs of itself (without knowing that it has been rewound). The key steps to $\advA$ are as follows:
\begin{itemize}
	\item {\bf Processing the public parameters and the verification key:} Compute $x = \dlog_{\overline{G}_2} \overline{X}_{2,1}$. The adversary $\advA$ aborts if $\overline G_1 = 0_{\GOne}$ or $ \overline G_2 = 0_{\GTwo}$. (This does not occur in the real $\SUF'$ game.)
	
	\item {\bf Selecting signing query:} Compute 
	$(\mu_l)_{l \in [k]} \gets F_1(\overline G_1, \overline H, \overline{G}_2, \overline{X}_{2,1})$ and setup $\vec{m} \gets (\underbrace{\mu_1, \ldots, \mu_1}_{\qnum/k \text{ times}}, \ldots,\allowbreak \underbrace{\mu_k, \ldots, \mu_k}_{\qnum/k \text{ times}})$. Then, request $\qnum$ signing queries $(A_i, e_i) \gets \oraS(m_i)$. If $e_i = e_{i'}$ for $i \neq i' \in [\qnum]$ or the signatures do not verify or $\mu_i = \mu_{i'}$ for some $i \neq i' \in [k]$, $\advA$ aborts. 
	
	\item {\bf Forgery:} Compute 
	$i^* \gets F_2(\overline{G}_1, \overline H, \overline{G}_2, \overline{X}_{2,1}, \{(m_i, e_i, A_i)\}_{i \in [\qnum]})$\footnote{The tuples of messages and signatures are sorted before evaluating $F_2$. This is to avoid the reduction gaining advantage by simply reordering the oracle replies over several runs.}, set 
	$m^* \gets \min(\Zp\setminus\{m_l\}_{l \in [k]})$, i.e., the minimum element not used as queries, $e^* = e_{i^*}$ and $A^* = \frac{\overline G_1 + m^* \overline H}{x - e^*}$.
\end{itemize}
It is easy to see that $\advA$ wins in the $\SUF'$ game unless it aborts (when $e_i$'s or $m_i$'s contain duplicates---this occurs with probability  at most $(\binom{\qnum}{2} + \binom{k}{2})/p  \leq \qnum^2 / p$). Hence, $\Adv^{\suf'}_{\BBS_1}(\advA, \secpar) \geq 1 - \qnum^2/p$. 

\heading{Meta-reduction $\metaR$:} 
Now, we describe the meta-reduction (also given in \Cref{fig:meta-reduction-rewind}) playing the $(d_1, d_2)$-$\DL$ game and running the reduction $\calR$ as follows: \begin{itemize}
	\item {\bf $(d_1, d_2)$-SDH instance for $\calR$:} This is simply forwarding its $(d_1, d_2)$-$\DL$ instance $G_1, X_{1,1}, \ldots, X_{1,d_1}, G_2, \allowbreak X_{2,1}, \ldots, X_{2,d_2}$. 
	
	\item {\bf Simulating random functions $F_1$ and $F_2$:} Since the meta-reduction is not unbounded, it needs to simulate the random functions $F_1$ and $F_2$ to be consistent over all simulated runs. This is done by lazy-sampling, i.e., when computing $F_1(\cdot), F_2(\cdot)$ on a new input, it samples a uniformly random element from the co-domain of the functions and records the input-output pair in a table.  To compute the function on an input for the second time, the meta-reduction looks up the table for the recorded value. In the pseudocode, we model these as random oracles that $\metaR$ simulates. 
	
	\item {\bf For each run of $\advA$:} Denoted with the simulated adversary $\advB$ which the reduction is given access to. The runs are indexed by $j \in [r]$. \begin{itemize}
		\item The reduction $\calR$ returns the group elements $\overline{G}_1^{(j)}, \overline{H}^{(j)}, \overline{G}_2^{(j)}, \overline{X}_{2,1}^{(j)}$.
		Here, $\metaR$ aborts {\em this particular run} if $\overline{G}_1^{(j)}$ and $\overline{G}_2^{(j)}$ are not generators.
		
		Note that these elements come with their algebraic representations which defines the polynomials $f^{(j)}, g^{(j)}, a^{(j)}, b^{(j)}$ (assuming that all the forgeries $A^{(*,j)}$ returned by $\metaR$ are constructed as linear combinations of $X_{1,i}$). We may assume without loss of generality that these are consistent over all runs for the same group elements; otherwise, the meta-reduction can stick to the first representation. 
		
		\item {\bf Selecting the signing query:} Compute 
		$(\mu^{(j)}_l)_{l \in [k]} \getsr F_1(\overline{G}_1^{(j)}, \overline{H}^{(j)}, \allowbreak \overline{G}_2^{(j)}, \overline{X}_{2,1}^{(j)})$ (with $F_1$ simulated as described above). Then, structure the message queries $\vec{m}^{(j)}$ exactly as $\advA$ does. 
		
		\item {\bf Processing the signatures:} This is done after all the queries have been made. The meta-reduction replicates the aborts in {\em a particular run} of $\advA$ if one of the following occurs: \begin{enumerate}[label=(\alph*)]
			\item There exists $i \in [\qnum]$ such that the signature $(A_i^{(j)}, e_i^{(j)})$ is not valid for $m^{(j)}_i$.
			
			\item There exists $i \neq i' \in [\qnum]$ such that $e_i^{(j)} = e_{i'}^{(j)}$.
			
			\item There exists $i \neq i' \in [k]$ such that $\mu_i^{(j)} = \mu_{i'}^{(j)}$.
		\end{enumerate}
		
		
		Moreover, if for some $i \in [\qnum]$, $(b^{(j)} - e_i^{(j)} \cdot a^{(j)}) \ndivides (f^{(j)} + m^{(j)}_i \cdot g^{(j)}) a^{(j)}$, the meta-reduction aborts {\em its whole algorithm}.\footnote{We make note of the distinction of the meta-reduction {\em aborting its whole algorithm} and it {\em aborting a particular rewound run}. The former means aborting its whole interaction with $\calR$. However, the latter means simulating the same aborts that $\advA$ would do, and $\calR$ is still free to run more instances of $\advA$.} 
		We denote the event that {\em this latter abort occurs in any of the runs} as $\BadRep$. 
		
		\item {\bf Forging a signature:} 
		Select $m^{(*,j)}$ to be the minimal element of the set $\Zp\setminus\{\mu_l^{(j)}\}_{l \in [k]}$ (i.e., an unused message), and compute $i^{(j)}$ from $F_2$. 
		If $i^{(j)}$ is such that $(b^{(j)} + e_{i^{(j)}}^{(j)} a^{(j)})$ does not divide $f^{(j)}a^{(j)}$ or $g^{(j)}a^{(j)}$, or the degree of $b^{(j)} - e_{i^{(j)}}^{(j)} a^{(j)}$ is less than the maximum of $\deg b^{(j)}$ and $\deg a^{(j)}$, the meta-reduction aborts {\em its whole algorithm}.
		
		If neither $\BadIdx$ nor $\BadRep$ occurs, we have that the polynomial $(b^{(j)} + e_{i^{(j)}}^{(j)} a^{(j)})$ divides both $f^{(j)}a^{(j)}$ and $g^{(j)}a^{(j)}$ and the degree of $b^{(j)} - e_{i^{(j)}}^{(j)} a^{(j)}$ is exactly the maximum of $\deg b^{(j)}$ or $\deg a^{(j)}$. Then, let \[
		\alpha^{(*,j)}(\varX) = \frac{f^{(j)}(\varX) + m^{(*,j)} g^{(j)}(\varX)}{b^{(j)}(\varX) - e_{i^{(j)}}^{(j)} a^{(j)}(\varX)}a^{(j)}(\varX) = \frac{f^{(j)}(\varX) + m^{(*,j)} g^{(j)}(\varX)}{b^{(j)}(\varX) / a^{(j)}(\varX)- e_{i^{(j)}}^{(j)}}. \] Similar to the prior proof, we can see that  $\alpha^{(*,j)}$ is a polynomial of degree at most $d_1$ from the fact that the degree of $b^{(j)} - e_{i^{(j)}}^{(j)} a^{(j)}$ does not decrease; i.e., \[
		\deg \alpha^{(*,j)} = \underbrace{\deg \left(f^{(j)} + m^{(*,j)} g^{(j)}\right)}_{\leq d_1} \underbrace{-  \deg \left(b^{(j)} - e_{i^{(j)}}^{(j)} a^{(j)}\right) +  \deg a^{(j)}}_{\leq 0} \leq d_1
		\] Hence, the meta-reduction can compute the forgery \[
			\sigma^{(*,j)} = (A^{(*,j)} =  \alpha^{(*,j)}(x) G_1, \allowbreak e^{(*,j)} = e_{i^{(j)}}^{(j)})\;,
		\] from its $(d_1, d_2)$-$\DL$ instance. Note that it is easy to verify the validity of $\sigma^{(*,j)}$.
	\end{itemize}		
	
	\item {\bf Computing $(d_1, d_2)$-$\DL$ solution:} Unless the meta-reduction aborts, the SDH solution $(e', Z^*)$ from the reduction $\calR$ can be written as $Z^* = \zeta(x) G_1$ for a degree at most $d_1$ polynomial $\zeta$. This is simply because $A^{(*,j)}$ can be written as $\alpha^{(*,j)}(x) G_1$ with $\deg \alpha^{(*,j)} \leq d_1$ (assuming that neither $\BadIdx$ nor $\BadRep$ occurs). Now, since $Z^* = (x - e')^{-1}G_1$, we have that $\zeta(x) (x - e') = 1$. Thus, the discrete logarithm $x$ can be found by computing the zeros of the polynomial $\zeta(\varX) (\varX - e') - 1$ and returning the one satisfying $xG_1 = X_{1,1}$
\end{itemize}

Note that if neither $\BadIdx$ nor $\BadRep$ occur, $\calR$'s view is identical to its interaction with $\advA$. Also, if $\calR$ wins in the $(d_1, d_2)$-$\SDH$ game, $\metaR$ also wins in the $(d_1, d_2)$-$\DL$ game. Hence, \[
\Adv^{d\text{-}\dl}_{\GroupGen}(\metaR, \secpar) \geq \Adv^{d\text{-}\sdh}_{\GroupGen}(\calR^{\advA}, \secpar) - \Pr[\BadIdx \eOr \BadRep] \;.
\]
To bound the probability of $\BadIdx \eOr \BadRep$, we first define for each $j \in [r]$, the index $\runscompleted[j] \in [r]$ as the run-index of the $j$-th completed run. We refer to \Cref{fig:meta-reduction-rewind} for how it is defined within the meta reduction. As an example, if the reduction starts $5$ runs of $\advA$, and complete these runs in the order $(2,3,1,5,4)$, we have that \ifnum\eprintversion=0 
\begin{align*}
&\runscompleted[1] = 2\;, \runscompleted[2] = 3\;, \runscompleted[3] = 1\;, \\
&\runscompleted[4] = 5\;, \runscompleted[5] = 4\;.    
\end{align*}
\else
\begin{align*}
	&\runscompleted[1] = 2\;, \runscompleted[2] = 3\;, \runscompleted[3] = 1\;, \runscompleted[4] = 5\;, \runscompleted[5] = 4\;.    
\end{align*}
\fi
We note that this unusual definition is used only to deal with the ability of the reduction to concurrently run many executions of $\advA$ (and thus further generalizing our result). If $\calR$ runs each execution of $\advA$ sequentially, $\runscompleted$ is simply an identity map. Moreover, we remark that the forgery containing $A^{(*, j)}$ will only be sent to the reduction if the $j$-th run is completed.

We remark that we cannot simply bound $\Pr[\BadRep]$ separately via another meta-reduction $\metaR'$. This is because it could be the case that $\BadRep$ occurs at some point after the event $\BadIdx$ already occurred; meaning such meta-reduction $\metaR'$ would not be able to compute a forgery on that run.
To this end and as an intermediate step, we will define a sequence of events $\BadIdx_j$ and $\BadRep_j$ for $j \in [r]$ which denotes events where $\BadIdx$ or $\BadRep$ are triggered in the $j$-th {\em completed} run (i.e., the $\runscompleted[j]$-th run). Then, we will write the event $\BadIdx \eOr \BadRep$ so that we can easily consider the two cases without the undesirable situation above. For the definition of  $\BadIdx_j$ and $\BadRep_j$, we will use $J = \runscompleted[j]$ for readability and define them as follows: \begin{itemize}[noitemsep]
	\item $\BadIdx_j:$ The random index $i^{(J)}$ leads to the meta-reduction aborting, i.e., the polynomial $(b^{(J)} - e_{i^{(J)}}^{(J)} a^{(J)})$ does not divide  $f^{(J)}$ or $g^{(J)}$, or \begin{align*}
		&\deg (b^{(J)} - e_{i^{(J)}}^{(J)}  a^{(J)})  < \max\{\deg b^{(J)}, \deg a^{(J)}\} \;.
	\end{align*}
	
	\item $\BadRep_j:$ There exists $i \in [\qnum]$ such that $b^{(J)} - e_i^{(J)} a^{(J)} \ndivides (f^{(J)} + m^{(J)}_i  g^{(J)}) a^{(J)}$.
\end{itemize}
Next, we can rewrite the event $\BadIdx \eOr \BadRep$ as follows \begin{align*}
	\ifnum\eprintversion=1
	\BadIdx \eOr \BadRep &= \bigcup_{j = 1}^r (\BadIdx_j \eOr \BadRep_j)\\
	\else
	&\BadIdx \eOr \BadRep = \bigcup_{j = 1}^r (\BadIdx_j \eOr \BadRep_j)\\
	\fi
	&= \bigcup_{j = 1}^r \left( \bigcap_{n = 1}^{j - 1} \left(\neg \BadIdx_n \eAnd \neg \BadRep_n \right) \eAnd (\BadIdx_j \eOr \BadRep_j) \right)\\
	&= \bigcup_{j = 1}^r \left( \bigcap_{n = 1}^{j - 1} \left( \neg \BadIdx_n \eAnd \neg \BadRep_n \right) \eAnd (\BadRep_j \eOr (\neg \BadRep_j \eAnd \BadIdx_j)) \right).
\end{align*}
The second and third equalities follow from the fact that for boolean values $x,y, x \eOr y = x \eOr (\neg x \eAnd y)$ (and inductively applying it). We additionally define two more events $\BadIdx'_j, \BadRep'_j$ such that \begin{align*}
	\BadRep'_j &:= \bigcap_{n = 1}^{j - 1} (\neg \BadIdx_n \eAnd \neg \BadRep_n) \eAnd  \BadRep_j \\
	\BadIdx'_j &:= \bigcap_{n = 1}^{j - 1} (\neg \BadIdx_n \eAnd \neg \BadRep_n) \eAnd  \neg \BadRep_j \eAnd \BadIdx_j\;.
\end{align*}
Intuitively, $\BadRep'_j$ is the event that the meta-reduction has not aborted in the first $j-1$ {\em completed} runs, and $\BadRep_j$ occurs. On the other hand, $\BadIdx'_j$ is the event that the meta-reduction has not aborted in the first $j-1$ completed runs and all the $e_i^{(\runscompleted[j])}$'s does not trigger the $\BadRep_j$ abort condition, but the sampled index $i^{(\runscompleted[j])}$ triggers $\BadIdx_j$. Now, we can write 
\ifnum\eprintversion=0
\begin{align*}
\BadIdx \eOr \BadRep &= \bigcup_{j = 1}^r (\BadRep'_j \eOr \BadIdx'_j) \\
&= \left( \bigcup_{j = 1}^r \BadRep'_j\right) \eOr \left( \bigcup_{j = 1}^r \BadIdx'_j\right) \;.
\end{align*}
\else
\begin{align*}
	\BadIdx \eOr \BadRep &= \bigcup_{j = 1}^r (\BadRep'_j \eOr \BadIdx'_j) = \left( \bigcup_{j = 1}^r \BadRep'_j\right) \eOr \left( \bigcup_{j = 1}^r \BadIdx'_j\right) \;.
\end{align*}
\fi
Thus, with the union bound, we will bound the following probability. \[
\Pr[\BadIdx \eOr \BadRep] \leq \Pr\left[ \bigcup_{j = 1}^r \BadRep'_j\right] + \Pr\left[  \bigcup_{j = 1}^r \BadIdx'_j \right]\;.
\]
Importantly, we stress that instead of bounding $\BadIdx$ and $\BadRep$ directly, we will bound the probability that these two events occur instead. This is because $\BadIdx$ does not necessarily guarantee that the given algebraic representation always leads to a well-defined polynomials, and $\BadRep$ does not necessarily guarantee that prior to the event triggering the meta-reduction can compute the forgeries for the prior runs. In contrast, with how  $\BadRep'_j$ and $\BadIdx'_j$ are defined, these conditions are guaranteed. The two following lemmas then conclude the proof. \qed


\begin{lemma}\label{lem:rewinding-game}
	$ \Pr\left[  \bigcup_{j = 1}^r \BadIdx'_j \right] \leq \displaystyle \frac{r^2 \cdot (k + 1)}{\qnum}$.
\end{lemma}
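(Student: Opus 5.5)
We bound $\Pr[\bigcup_j \BadIdx'_j]$ by isolating the only randomness the reduction cannot control, namely the outputs of $F_1$ (the message tuple $\vec\mu^{(J)}$) and $F_2$ (the index $i^{(J)}$). The plan is to condition on everything else and count, for each completed run, how many indices can trigger $\BadIdx_j$ while $\BadRep_j$ fails.

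\textbf{Per-run bad indices.} On $\bigcap_{n<j}(\neg\BadIdx_n\cap\neg\BadRep_n)$, every earlier forgery $A^{(*,n)}$ was produced as $\alpha^{(*,n)}(x)G_1$ with $\deg\alpha^{(*,n)}\le d_1$. So the representations of run $J=\runscompleted[j]$ collapse to honest polynomials $f^{(J)},g^{(J)},a^{(J)},b^{(J)}$; this is exactly why the primed events were introduced. Under $\neg\BadRep_j$, each tag satisfies $(b-e_ia)\mid(f+m_ig)a$.

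Exactly as in \Cref{lem:guessing-index}, define $\varphi_{f,g,a,b}(\mu)$ as the set of tags dividing $(f+\mu g)a$ but not both $fa$ and $ga$. These sets are pairwise disjoint over $\mu$, so $\sum_\mu|\varphi(\mu)|\le p$. Add the at most one tag that drops $\deg(b-ea)$. The number of bad indices in run $J$ is then at most
\[
N^{(J)}=\frac{\qnum}{k}\sum_{l\in[k]}|\varphi(\mu^{(J)}_l)|+1,
\]
because each message is queried $\qnum/k$ times and the tags are distinct (otherwise $\advA$ aborts). Since $i^{(J)}$ is an $F_2$ output, the probability it lands on a bad index is at most $N^{(J)}/\qnum$, \emph{provided} $F_2$ is evaluated on a fresh input.

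\textbf{Handling rewinding; this is the main obstacle.} The reduction adaptively chooses both the public parameters and the signature replies, and it sees earlier runs. Two effects must be controlled.

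First, at Step (1), $\vec\mu=F_1(\cdot)$ is a fresh uniform sample for each distinct parameter tuple. The tuple $(f,g,a,b)$ is fixed before $\vec\mu$ is revealed, so $\mathbb{E}_{\vec\mu}\bigl[\sum_l|\varphi(\mu_l)|\bigr]\le k$. There are at most $r$ distinct tuples, and by a union bound over them each run's quantity has expectation at most $k$. I would bound $\mathbb{E}[\max_J\sum_l|\varphi(\mu^{(J)}_l)|]\le rk$ simply by summing over runs.

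Second, at Step (3), a run sharing the same parameters (hence the same $\vec\mu$) can be replayed with different signature sets. Each distinct sorted signature set yields a fresh $F_2$ input and hence a fresh uniform index. Repeated identical inputs reuse the same index, so they give the reduction no new chance.

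Combining these, I would sum over the at most $r$ completed runs $j$. For each, $\Pr[\BadIdx'_j]\le\mathbb{E}[N^{(J)}]/\qnum$, where the expectation is over $F_1$ and is taken conditionally on the reduction's view before $F_1$ was queried on run $J$'s parameters. A cleaner way to obtain this is to sum over parameter tuples rather than runs. Each tuple $P$ has at most $r$ completions, each contributing at most $(\qnum/k\cdot\Sigma_P+1)/\qnum$, and $\mathbb{E}[\Sigma_P]\le k$ because $F_1(P)$ is independent of $P$. With at most $r$ tuples, the total is at most $r\cdot r\cdot(k+1)/\qnum$, which is the claimed $r^2(k+1)/\qnum$.

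The delicate points are twofold. One must make sure the conditional independence of $F_1(P)$ from $(f,g,a,b)$ survives adaptivity; it does, because the representation is fixed when $P$ is output. One must also make sure counting at most $r$ completions per tuple is legitimate; it is, because each completion consumes one of the $r$ runs.
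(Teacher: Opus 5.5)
Your proof has a counting error in the bad-index step, and as written it does not reach the bound. You set $N^{(J)}=\frac{\qnum}{k}\sum_{l}|\varphi(\mu^{(J)}_l)|+1$, so each completion contributes at most $\bigl(\frac{\qnum}{k}\Sigma_P+1\bigr)/\qnum$. With $\mathbb{E}[\Sigma_P]\le k$ this is $(\qnum+1)/\qnum$ per completion. Summing over $r$ tuples and $r$ completions then gives the vacuous $r^2(1+1/\qnum)$, not $r^2(k+1)/\qnum$; the factor $\qnum/k$ silently disappears in your last line. That factor should never appear, and the reason is the tag distinctness you yourself invoke. In a non-aborted run the $\qnum$ tags are pairwise distinct. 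Under $\neg\BadRep_j$, any index $i$ that is bad for a reason other than the degree drop has $e_i\in\varphi(m_i)$, so each element of $\bigcup_l\varphi(\mu_l)$ occupies at most one index. Hence the number of bad indices is at most $\sum_l\min\{|\varphi(\mu_l)|,\qnum/k\}+1\le\Sigma_P+1$, which is the paper's $\min\{\sum_l X_{j,l},\qnum\}+1$. Each completion then contributes at most $(\Sigma_P+1)/\qnum$, with expectation at most $(k+1)/\qnum$.

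With that fix, your tuple-grouped argument is sound and takes a different route from the paper. The paper abstracts the interaction into the game $\Rewind$ and uses a guessing argument to pass to an adversary making a single $\oraCh_2$ query, losing a factor $r$. It then bounds $\mathbb{E}[\max_j\sum_l X_{j,l}]\le\sum_j\sum_l\mathbb{E}[X_{j,l}]\le rk$ via the tail-sum identity and a union bound. You instead union-bound over all completions directly, indexing the distinct parameter tuples $P_1,\dots,P_{r'}$ by first appearance. You use $\mathbb{E}[\Sigma_{P_t}]\le k$ conditionally on the history when $F_1(P_t)$ is sampled. The reduction's adaptive choice of which tuple to replay is absorbed by the crude bound of at most $r$ completions per tuple. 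This avoids both the guessing step and the expectation-of-a-maximum computation, and gives the same $r(rk+1)/\qnum\le r^2(k+1)/\qnum$. Drop your intermediate per-completed-run claim, though. The tuple of the $j$-th completed run may be chosen after seeing several $F_1$ outputs (e.g.\ by completing first the run with the largest $\Sigma$). So conditioning ``on the view before $F_1$ was queried on run $J$'s parameters'' does not give expectation at most $k$ for that run; only the per-tuple version is valid.
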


\begin{lemma}\label{lem:bad-rep-rewind}
	There exists a meta-reduction $\metaR'$ playing the $(d_1, d_2)$-$\DL$ game running in time roughly $t_\calR$ such that $\Pr\left[ \bigcup_{j = 1}^r \BadRep'_j\right] \leq \Adv^{(d_1, d_2)\text{-}\dl}_{\GroupGen}(\metaR', \secpar)$.
\end{lemma}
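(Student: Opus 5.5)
The plan is to build $\metaR'$ as a copy of $\metaR$ that, instead of aborting when a $\BadRep$ condition fires, uses that run's verifying signature to solve its $(d_1, d_2)$-$\DL$ instance. Concretely, $\metaR'$ forwards its instance $(G_1, (X_{1,i})_{i \in [d_1]}, G_2, (X_{2,i})_{i \in [d_2]})$ to $\calR$. It simulates $F_1, F_2$ by lazy sampling and handles every run of $\advB$ exactly as $\metaR$ does in \Cref{fig:meta-reduction-rewind}, including the same per-run aborts (a)–(c), and it maintains $\runscompleted$ in the same way. Since $\metaR'$ makes the same random choices as $\metaR$ up to the first abort of the whole algorithm, the two executions coincide until that point. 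In particular, the event $\bigcup_{j=1}^r \BadRep'_j$ has the same probability in both.

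\textbf{Step 1: the representations are polynomials of degree at most $d_1$ when $\BadRep'_j$ fires.} On $\BadRep'_j$, none of $\BadIdx_n, \BadRep_n$ occurred for $n < j$. Hence the forgery of each of the first $j-1$ completed runs was computed as $A^{(*, \runscompleted[n])} = \alpha^{(*,\runscompleted[n])}(x) G_1$ with $\deg \alpha^{(*,\runscompleted[n])} \leq d_1$, by the degree computation already carried out for $\metaR$.

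Fix $J = \runscompleted[j]$. Every forgery $A^{(*,n)}$ that has a nonzero coefficient in the representation of $\overline{G}_1^{(J)}, \overline{H}^{(J)}$ or of some $A_i^{(J)}$ comes from a run that was completed before that element was output. That run was therefore completed before run $J$ itself, so it is among the first $j-1$ completed runs. Substituting these expressions, $\metaR'$ obtains polynomials $f^{(J)}, g^{(J)}, \alpha^{(i,J)}$ of degree at most $d_1$ and $a^{(J)}, b^{(J)}$ of degree at most $d_2$. Moreover, $a^{(J)} \neq 0$ and $f^{(J)} \neq 0$, because the run was not aborted for a non-generator.

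The ordering argument behind this step is the one I expect to be the main obstacle. With interleaved executions, one has to check that $\BadRep_j$ is evaluated only at the completion of run $J$, after all its signatures have been received. One also has to check that no element of run $J$ can reference the forgery of a run that completes later; this holds because undefined forgeries carry zero coefficients by convention.

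\textbf{Step 2: extraction.} On $\BadRep_j$ there is an index $i$ with $(b^{(J)} - e_i^{(J)} a^{(J)}) \nmid (f^{(J)} + m_i^{(J)} g^{(J)}) a^{(J)}$. Since run $J$ was not aborted by (a), the signature $(A_i^{(J)}, e_i^{(J)})$ verifies. Writing the pairing equation in exponents gives
\[
\alpha^{(i,J)}(x)\bigl(b^{(J)}(x) - e_i^{(J)} a^{(J)}(x)\bigr) = a^{(J)}(x)\bigl(f^{(J)}(x) + m_i^{(J)} g^{(J)}(x)\bigr).
\]
Let $h = \alpha^{(i,J)}(b^{(J)} - e_i^{(J)} a^{(J)}) - (f^{(J)} + m_i^{(J)} g^{(J)}) a^{(J)}$. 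If $h$ were the zero polynomial, then $b^{(J)} - e_i^{(J)} a^{(J)}$ would divide $(f^{(J)} + m_i^{(J)} g^{(J)}) a^{(J)}$. This also covers the degenerate case $b^{(J)} = e_i^{(J)} a^{(J)}$, in which $h = 0$ forces the right-hand product to be $0$, which the zero polynomial divides. Either way this contradicts $\BadRep_j$, so $h \neq 0$.

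Since $\deg h \leq d_1 + d_2$, the meta-reduction $\metaR'$ factors $h$ over $\Zp$ in polynomial time and outputs the root $x$ satisfying $x G_1 = X_{1,1}$; the true discrete logarithm is a root of $h$. If instead some $\BadIdx'_n$ occurs first, or $\calR$ terminates without any $\BadRep'_j$, then $\metaR'$ simply outputs $\bot$.

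\textbf{Conclusion.} $\metaR'$ wins whenever $\bigcup_{j=1}^r \BadRep'_j$ occurs, which gives $\Pr\bigl[\bigcup_{j=1}^r \BadRep'_j\bigr] \leq \Adv^{(d_1,d_2)\text{-}\dl}_{\GroupGen}(\metaR', \secpar)$. Its running time is $t_\calR$ plus the lazy-sampling bookkeeping, the polynomial substitutions and one root-finding step, so it runs in time roughly $t_\calR$.
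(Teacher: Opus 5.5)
Your proposal is correct and follows the paper's route. $\metaR'$ simulates exactly as $\metaR$ does. The absence of $\BadIdx_n$ and $\BadRep_n$ for $n < j$ turns all representations in run $\runscompleted[j]$ into polynomials of bounded degree. On $\BadRep_j$, $\metaR'$ extracts $x$ as a root of the nonzero polynomial $h$ coming from the pairing equation of a verifying signature. Your extra care with interleaved runs and the degenerate case $b^{(J)} = e_i^{(J)} a^{(J)}$ just makes explicit what the paper leaves implicit.
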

\begin{proof}[of \Cref{lem:bad-rep-rewind}]
	We first consider the event $\bigcup_{j = 1}^r \BadRep'_j$. 
	For simplicity and readability, we use the shorthand $J = \runscompleted[j]$ to denote the run specified in the event $\BadRep'_j$ throughout this proof.
	By the definition of $\BadRep'_j$, when the event is triggered, the meta-reduction $\metaR$ is able to forge on all prior completed runs as $\BadIdx_n$ does not occur for $n < j$. 
	Hence, in all the runs the forgery $A^{(*,J)}$ can be computed by the meta-reduction as a linear combination of $G_1, X_{1,1}, \ldots, X_{1,d_1}$. Thus, the representation of the group elements $\overline{G}_1^{(J)}, \overline{H}^{(J)}, \overline{G}_2^{(J)}, \overline{X}_{2,1}^{(J)}$ and $A^{(J)}_i$ defines polynomials $f^{(J)}, g^{(J)}, a^{(J)}, b^{(J)}$,  and $\alpha^{(i,J)}$, respectively. These polynomials are of degree at most $d_1$ or $d_2$ depending on which group it is in. 
	
	Therefore, we can construct another meta-reduction $\metaR'$ that simulates the adversary $\advA$ in the same manner as $\metaR$. However, when one of $\BadRep'_j$ occurs, $\metaR'$ will try to extract a DL solution. Let the $i$-th signing query of the $j$-th completed run (which corresponds to the $J$-th run) be the query which trigger the event. In particular, $(b^{(J)} - e_i^{(J)} a^{(J)}) \ndivides (f^{(J)} + m^{(J)}_i \cdot  g^{(J)})  a^{(J)}$, and the signature $(A_i^{(J)}, e_i^{(J)})$ verifies, so we have that \[\pairingOp{A_i^{(J)}}{\overline{X}_{2,1}^{(J)} - e_i^{(J)} \overline{G_2}^{(J)}} = \pairingOp{\overline{G}_1^{(J)} + m_i^{(J)} \overline{H}^{(J)}}{\overline{G_2}^{(J)}}\] Accordingly, we have \begin{align*}
	&\alpha^{(i,J)}(x) \cdot (b^{(J)}(x) - e_i^{(J)} a^{(J)}(x)) = (f^{(J)}(x) + m^{(J)} g^{(J)}(x)) \cdot a^{(J)}(x)\;.
	\end{align*} 
	Because of the indivisbility condition, the underlying polynomials on both sides are not identical, so $x$ is one of the zeros of the non-zero polynomial 
	\ifnum\eprintversion=0
	\begin{align*}
	h(\varX ) &= \alpha^{(i, J)}(\varX) (b^{(J)}(\varX) - e_i^{(J)} a^{(J)}(\varX)) \\
	&\pcind[4] - (f^{(J)}(\varX) + m^{(J)}_i \cdot g^{(J)}(\varX)) a^{(J)}(\varX)\;.
	\end{align*}
	\else
	\begin{align*}
		h(\varX ) &= \alpha^{(i, J)}(\varX) \cdot (b^{(J)}(\varX) - e_i^{(J)} a^{(J)}(\varX))  - (f^{(J)}(\varX) + m^{(J)}_i \cdot g^{(J)}(\varX)) \cdot a^{(J)}(\varX)\;.
	\end{align*}
	\fi Thus, $\metaR'$ simply factors $h$ and checks which zero is actually a $(d_1, d_2)$-DL solution. The advantage bound follows via the success of this reduction. \qed
\end{proof}

\subsection{Proof of \Cref{lem:rewinding-game}}\label{sec:rewinding-game}
We first observe that the probability of the event can be bounded by the winning probability of any adversary $\advC$ in the game $\Rewind$ (defined in \Cref{fig:rewinding-game}). The game models the behavior of the unbounded adversary $\advA$ and the adversary $\advB$ simulated by the meta-reduction into two phases: (1) Selecting signing queries ($\oraCh_1$) and (2) Selecting index $i^* \in [\qnum]$ to forge on ($\oraCh_2$). In particular, for any adversary $\advC$, its winning probability is \[
\Adv^\Rewind_{p,d_1, d_2,k,\qnum,r}(\advC, \secpar) := \Pr[\Rewind^\advC_{p, d_1, d_2, k, \qnum, r}(\secpar) = 1]\;.
\] 
Note that there exists an adversary $\advC$ such that \[
\Pr\left[\bigcup_{j = 1}^r \BadIdx'_j\right] \leq \Adv^\Rewind_{p,d_1, d_2,k,\qnum, r}(\advC, \secpar)\;.
\] This follows from how the unbounded adversary $\advA$ and the meta-reduction $\metaR$ reply to $\calR$, and that the event $\bigcup_{j = 1}^r \BadIdx'_j$ corresponds exactly to when $\win$ is set to $\true$. Note that the abort conditions in the challenge oracle $\oraCh_2$ exactly corresponds to how each event $\BadIdx'_j$ is defined. Throughout this section, we will not use the notation $\runscompleted[j]$ for readability and refer to the run indices as $j \in [r]$ instead.

\begin{figure}[!t]
	\centering
	\begin{pcvstack}[boxed]
		\begin{pchstack}
			\procedure{\scriptsize Game $\Rewind^\advC_{p, d_1, d_2, k, \qnum, r}(\secpar)$:}{
				Q_1, Q_2 \gets 0\; ; \; \win \gets \false\\
				\advC^{\oraCh_1, \oraCh_2}(1^\secpar)\\
				\pcreturn \win \eAnd (Q_1 \leq r) \eAnd (Q_2 \leq r)
			}
			\procedure{\scriptsize Oracle $\oraCh_1(f, g, a, b)$:}{
				\pcif \deg f > d_1 \eOr \deg g > d_1 \eOr \\
				\pcind[2] \deg a > d_2 \eOr \deg b > d_2 \eOr \\
				\pcind[2] f = 0 \eOr a = 0 \pcthen\pcabort\\
				Q_1 \gets Q_1 + 1\\
				\vec \mu^{(Q_1)} =  (\mu^{(Q_1)}_l)_{l \in [k]} \getsr \Zp^k \\
				(f^{(Q_1)}, g^{(Q_1)}, a^{(Q_1)}, b^{(Q_1)}) \gets (f, g, a, b)\\
				\vec{m}^{(Q_1)} \gets (\underbrace{\mu^{(Q_1)}_1, \ldots, \mu^{(Q_1)}_1}_{\qnum/k \text{ times}}, \ldots, \underbrace{\mu^{(Q_1)}_k, \ldots, \mu^{(Q_1)}_k}_{\qnum/k \text{ times}}) \\
				\pcreturn (\mu^{(Q_1)}_l)_{l \in [k]}
			}
		\end{pchstack}
		\procedure{\scriptsize Oracle $\oraCh_2(j \in [Q_1], (e_i)_{i \in [\qnum]})$}{
			\pcif (\exists i \in [\qnum]:( b^{(j)} - e_i a^{(j)}) \ndivides (f^{(j)} + m^{(j)}_i g^{(j)})a^{(j)}) \eOr (\exists i \neq j \in [\qnum] : e_i = e_j) \pcthen \pcabort\\
			Q_2 \gets Q_2 + 1; i^* \getsr [\qnum]\\
			\pcif (b^{(j)} - e_{i^*} a^{(j)}) \ndivides f^{(j)}a^{(j)} \eOr  (b^{(j)} - e_{i^*} a^{(j)}) \ndivides g^{(j)}a^{(j)} \eOr \\
			\pcind \deg (b^{(j)} - e_{i^*}^{(j)} a^{(j)}) < \max\{\deg a^{(j)}, \deg b^{(j)}\} \pcthen \\
			\pcind[2] \win \gets \true\\
			\pcreturn i^*
		}
	\end{pcvstack}
	\caption{Rewinding game.} 
	\label{fig:rewinding-game}
\end{figure}

We assume that $\advC$ makes at most $r$ queries to $\oraCh_1$ and $\oraCh_2$. Moreover, we assume without loss of generality that for any $\advC$, there exists an adversary $\advC'$ making $r$ queries to $\oraCh_1$ and ``only 1 query to $\oraCh_2$" such that \[
\Adv^\Rewind_{p,d_1, d_2,k,\qnum, r}(\advC, \secpar) \leq r \cdot \Adv^\Rewind_{p,d_1, d_2,k,\qnum, r}(\advC', \secpar)\;.
\] This follows simply from a guessing argument where $\advC'$ guesses which query to $\oraCh_2$ made by $\advC$ lead to $\win \gets \true$ being set. Accordingly, we can also assume that $\advC'$ makes all the $\oraCh_1$ queries before $\oraCh_2$, since the $\oraCh_1$ queries made after the only $\oraCh_2$ query does not influence the winning event.

Hence, we will analyze the winning probability of $\advC'$ making only $1$ query to $\oraCh_2$ after $r$ queries to $\oraCh_1$ instead.
In particular, we assume without loss of generality that $\advC'$ is {\em deterministic} and denote the sequence of $\oraCh_1$ outputs as $\vec{\mu} = (\vec \mu^{(1)}, \ldots, \vec \mu^{(r)})$ (as random variables), we define the corresponding polynomial $f,g,a,b$ in the $j$-th oracle query as a function $(\cdot)^{(j)}(\vec{\mu})$ of all the sampled messages $\vec{\mu} = (\vec{\mu}^{(j)} = (\mu^{(j)}_l)_{l \in [k]} \in \Zp^k)_{j \in [r]}$. Note that $(\cdot)^{(j)}$ {\em only depends} on the first $j - 1$ elements $\vec{\mu}^{(1)}, \ldots, \vec{\mu}^{(j - 1)}$ of $\vec{\mu}$. 

We define the functions $X_{j, l}(\vec{\mu})$ 
determined by the value of $\vec{\mu}$ for $j \in [r], l \in [k]$ as \[
X_{j,l} (\vec{\mu}) := \left|\left\{e \in \Zp : \begin{array}{l}
	b^{(j)}(\vec{\mu}) + e \cdot a^{(j)}(\vec{\mu}) \divides (f^{(j)}(\vec{\mu}) + \mu^{(j)}_l \cdot g^{(j)}(\vec{\mu})) a^{(j)}(\vec{\mu}) \; \eAnd \\
	(b^{(j)}(\vec{\mu}) + e \cdot a^{(j)}(\vec{\mu}) \ndivides f^{(j)}(\vec{\mu})a^{(j)}(\vec{\mu}) \; \eOr\\
	b^{(j)}(\vec{\mu}) + e  \cdot a^{(j)}(\vec{\mu}) \ndivides g^{(j)}(\vec{\mu})a^{(j)}(\vec{\mu}) ) 
\end{array} \right\}\right|
\] i.e., the number of bad $e \in \Zp$ where the divisibility condition is not satisfied, meaning our meta-reduction will not be able to simulate the forgery. 

Then, we have that if $\advC'$ chooses run-index $j \in [r]$ for the query to $\oraCh_2$, its winning probability given that $\vec{\mu} = \vec{\actualMu}$ is at most $\frac{\min\{\sum_{l = 1}^k X_{j, l}(\vec{\actualMu}), \qnum\} + 1}{\qnum}$. This is simply because of how the winning condition and $X_{j,l}$ are defined. Here, we take into account the particular event where the degree $\deg (b^{(j)} - e_{i^*}^{(j)} a^{(j)})$ decreases from the maximum of $\deg a^{(j)}$ and $\deg b^{(j)}$ with the $1/\qnum$ additive term. Note that this follows by a similar argument as in \Cref{lem:guessing-index}, because only one $e \in \Zp$ can cause that by cancelling out the leading coefficient.  Given that $\vec{\mu} = \vec{\actualMu}$, the optimal strategy for $\advC'$ is to choose the index $j \in [r]$ with the maximum chance. Therefore, the winning probability of $\advC'$ is at most $\frac{\min\{\max_{j \in [r]} \sum_{l = 1}^k X_{j,l}(\vec{\actualMu}), \qnum\} + 1}{\qnum}$. 

We can then say that \begin{align*}
	\Adv^\Rewind_{p,d_1, d_2,\qnum,r}(\advC', \secpar) &= \sum_{\vec{\actualMu}} \Pr_{\vec{\mu} \getsr (\Zp^k)^r}[\vec{\mu} = \vec{\actualMu}] \cdot \Pr\left[ \advC' \text{ wins}~\middle|~\vec{\mu} = \vec{\actualMu} \right]\\
	&\leq \frac{1}{\qnum} \sum_{\vec{\actualMu}} \Pr_{\vec{\mu} \getsr (\Zp^k)^r}[\vec{\mu} = \vec{\actualMu}] \left(\min\left\{\max_{j \in [r]} \sum_{l = 1}^k X_{j,l}(\vec{\actualMu}), \qnum \right\} + 1 \right)\\
	&= \frac{1}{\qnum}\Expect_{\vec{\mu}} \left[\min\left\{\max_{j \in [r]} \sum_{l = 1}^k X_{j,l}(\vec{\mu}), \qnum \right\}\right]+ \frac{1}{\qnum}\;.
\end{align*}
The first equality follows from $\advC'$ only depending on the outputs of the oracle $\oraCh$. The second inequality follows from the observation above. The last equality follows from definition of expectation over $\vec \mu$.

Therefore, our goal is to compute the expectation of the maximum. To do so, we prove the following claim lower bounding the probability that the random variable $Y(\vec{\mu}) = \min\left\{\max_{j \in [r]} \sum_{l = 1}^k X_{j,l}(\vec{\mu}), \qnum \right\}$ (where the randomness comes from $\vec{\mu}$) is upper bounded by some integer $t \geq 1$.

We now prove the following identity on the expectation of a non-negative discrete random variable. \begin{claim}
	For a discrete random variable $X \in [0, N]$, $\Expect[X] = \sum_{t = 1}^{N} \Pr[X \geq t]$.
\end{claim}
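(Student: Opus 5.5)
The identity follows by writing $X$ as a sum of indicator variables and then using linearity of expectation. Since $X$ is a discrete random variable taking values in $[0,N]$ (here, as in our application to $Y(\vec{\mu})$, we may take it to be integer-valued in $\{0,1,\ldots,N\}$), every outcome satisfies
\[
X = \sum_{t = 1}^{N} \mathbbm{1}[X \geq t]\;.
\]
This pointwise identity holds because, for a fixed value $X = s \in \{0, \ldots, N\}$, the indicator $\mathbbm{1}[s \geq t]$ equals $1$ exactly for $t \in \{1, \ldots, s\}$. The right-hand side therefore counts $s$ ones, and it is the empty sum $0$ when $s = 0$.

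Taking expectations of both sides and applying linearity (a finite sum, so there are no convergence issues) gives
\[
\Expect[X] = \sum_{t=1}^N \Expect\bigl[\mathbbm{1}[X \geq t]\bigr] = \sum_{t=1}^N \Pr[X \geq t]\;,
\]
which is the claim.

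As an alternative check, I could also argue by double counting: $\Expect[X] = \sum_{s=0}^N s \Pr[X = s] = \sum_{s=0}^N \sum_{t=1}^{s} \Pr[X=s]$, and exchanging the order of summation yields $\sum_{t=1}^N \sum_{s=t}^N \Pr[X = s] = \sum_{t=1}^N \Pr[X \geq t]$.

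The only point needing care is the integrality assumption. If $X$ could take non-integer values in $[0,N]$, the identity would fail. In our setting $Y(\vec{\mu})$ is a minimum of $\qnum$ and a maximum of sums of set cardinalities, so it is integer-valued with $N = \qnum$. I would state this explicitly before applying the claim.
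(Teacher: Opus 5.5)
Your proof is correct and is essentially the paper's argument. The paper does exactly your ``alternative check'': it writes $t \Pr[X = t]$ as $\sum_{k=1}^{t} \Pr[X = t]$ and swaps the order of summation, and your indicator decomposition is the pointwise version of that same swap. Your explicit integrality caveat is appropriate, since the paper also relies on $X$ taking only integer values bounded by $N$.
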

\begin{proof}[of Claim] Consider 
	\ifnum\eprintversion=0
	\begin{align*}
		\Expect[X] &=  \sum_{t = 1}^{N} t \cdot \Pr[X = t]\\
		&= \sum_{t = 1}^{\qnum} \sum_{k = 1}^t \Pr[X = t] \\
		&= \sum_{k = 1}^\qnum \sum_{t = k}^{\qnum} \Pr[X = t] 
		= \sum_{k = 1}^\qnum \Pr[X \geq k]\;.
	\end{align*}
	\else
	\begin{align*}
		\Expect[X] &=  \sum_{t = 1}^{N} t \cdot \Pr[X = t]= \sum_{t = 1}^{\qnum} \sum_{k = 1}^t \Pr[X = t]  = \sum_{k = 1}^\qnum \sum_{t = k}^{\qnum} \Pr[X = t] 
		= \sum_{k = 1}^\qnum \Pr[X \geq k]\;.
	\end{align*}
	\fi
	The second equality follows from expanding $t = \sum_{k = 1}^t 1$. The third equality follows from swapping the order of the sum (since $k$ starts from $1$ to $t$, $t$ starts from $k$ to $\qnum$). The last equality follows from $X$ only taking positive integer values and is bounded by $N$. \qed
\end{proof}

The claim then gives us $
	\Expect_{\vec{\mu}} \left[Y(\vec{\mu})\right] = \sum_{t = 1}^\qnum \Pr_{\vec{\mu}}[Y(\vec{\mu}) \geq t]$.
Next, we observe that for each $t \in [\qnum]$, 
\ifnum\eprintversion=0
\begin{align*}
	\Pr[Y(\vec{\mu}) \geq t] &= \Pr_{\vec{\mu} \getsr (\Zp^k)^r}\left[ \bigcup_{j = 1}^\qnum (\sum_{l = 1}^k X_{j,l}(\vec{\mu}) \geq t) \right] \\
	&\leq \sum_{j = 1}^r \Pr_{\vec{\mu} \getsr (\Zp^k)^r}\left[ \sum_{l = 1}^k X_{j,l}(\vec{\mu}) \geq t \right]\;.
\end{align*}
\else
\begin{align*}
	\Pr[Y(\vec{\mu}) \geq t] &= \Pr_{\vec{\mu} \getsr (\Zp^k)^r}\left[ \bigcup_{j = 1}^\qnum (\sum_{l = 1}^k X_{j,l}(\vec{\mu}) \geq t) \right] \leq \sum_{j = 1}^r \Pr_{\vec{\mu} \getsr (\Zp^k)^r}\left[ \sum_{l = 1}^k X_{j,l}(\vec{\mu}) \geq t \right]\;.
\end{align*}
\fi
The first equality follows by definition, and the second inequality follows from the union bound. Hence, 
\ifnum\eprintversion=0
\begin{align*}
	\Expect_{\vec{\mu}} \left[Y(\vec{\mu})\right] 
	&\leq  \sum_{t = 1}^\qnum \sum_{j = 1}^r \Pr_{\vec{\mu} \getsr (\Zp^k)^r}\left[ \sum_{l = 1}^k X_{j,l}(\vec{\mu}) \geq t \right]\\
	&= \sum_{j = 1}^r \sum_{t = 1}^\qnum \Pr_{\vec{\mu} \getsr (\Zp^k)^r}\left[ \sum_{l = 1}^k X_{j,l}(\vec{\mu}) \geq t \right] \\
	&= \sum_{j = 1}^r \Expect_{\vec{\mu}}\left[\sum_{l = 1}^k X_{j,l}(\vec{\mu})\right]\\
	&= \sum_{j = 1}^r \sum_{l = 1}^k \Expect_{\vec{\mu}}\left[ X_{j,l}(\vec{\mu})\right]
\end{align*}
\else
\begin{align*}
	\Expect_{\vec{\mu}} \left[Y(\vec{\mu})\right] 
	&\leq  \sum_{t = 1}^\qnum \sum_{j = 1}^r \Pr_{\vec{\mu} \getsr (\Zp^k)^r}\left[ \sum_{l = 1}^k X_{j,l}(\vec{\mu}) \geq t \right]
	= \sum_{j = 1}^r \sum_{t = 1}^\qnum \Pr_{\vec{\mu} \getsr (\Zp^k)^r}\left[ \sum_{l = 1}^k X_{j,l}(\vec{\mu}) \geq t \right] \\
	&= \sum_{j = 1}^r \Expect_{\vec{\mu}}\left[\sum_{l = 1}^k X_{j,l}(\vec{\mu})\right]
	= \sum_{j = 1}^r \sum_{l = 1}^k \Expect_{\vec{\mu}}\left[ X_{j,l}(\vec{\mu})\right]
\end{align*}
\fi
The second equality follows by swapping the summation order.
The second to last equality follows again from the above claim. The last equality follows from linearity of expectation.

Finally, we claim that $\Expect_{\vec{\mu}}[X_{j,l}(\vec{\mu})] \leq 1$ for all $j \in [r], l \in [k]$, which concludes the proof as \[
\Adv^\Rewind_{p,d_1, d_2,\qnum,r}(\advC', \secpar) \leq \frac{\Expect_{\vec{\mu}} \left[Y(\vec{\mu})\right] + 1}{\qnum} \leq \frac{r \cdot (k + 1)}{\qnum}\;.  \pcind \qed 
\] 

\begin{claim}
	For $j \in [r],l \in [k]$, $\Expect_{\vec{\mu}}[X_{j,l}(\vec{\mu})] \leq 1$.
\end{claim}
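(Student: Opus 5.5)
The key point is that $(f,g,a,b)^{(j)}$ depends only on $\vec{\mu}^{(1)},\ldots,\vec{\mu}^{(j-1)}$, while $\mu^{(j)}_l$ is sampled uniformly from $\Zp$ independently of those earlier vectors. I will therefore condition on $\vec{\mu}^{(1)},\ldots,\vec{\mu}^{(j-1)}$. Under this conditioning the four polynomials are fixed, and we may write them as $f,g,a,b$; by the abort conditions of $\oraCh_1$ we also have $f\neq 0$ and $a\neq 0$. The coordinates of $\vec{\mu}$ other than $\mu^{(j)}_l$ do not affect $X_{j,l}$. Conditionally, then, $X_{j,l}(\vec{\mu}) = |\varphi(\mu^{(j)}_l)|$ with $\mu^{(j)}_l \getsr \Zp$. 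Here $\varphi(m)$ is the set appearing in \Cref{lem:guessing-index}, namely the set of $e\in\Zp$ such that $(b+ea)\divides (f+mg)a$ but $(b+ea)$ fails to divide $fa$ or fails to divide $ga$. The sign convention $b+ea$ versus $b-ea$ only reparametrizes $e\mapsto -e$ and is harmless.

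The main step is to show that the sets $\varphi(m)$ are pairwise disjoint for distinct $m$, exactly as in the proof of \Cref{lem:guessing-index}. Suppose $e\in\varphi(m)\cap\varphi(m')$ with $m\neq m'$. Then $b+ea$ divides both $(f+mg)a$ and $(f+m'g)a$, and therefore divides their difference $(m-m')ga$. Since $m-m'$ is invertible, $b+ea$ divides $ga$, and then also $(f+mg)a - m\,ga = fa$. This contradicts the second condition in the definition of $\varphi$. Consequently $\sum_{m\in\Zp}|\varphi(m)| \le |\Zp| = p$.

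I would then average over $\mu^{(j)}_l$: $\Expect[X_{j,l}\mid \vec{\mu}^{(<j)}] = \frac{1}{p}\sum_{m}|\varphi(m)| \le 1$. Taking the outer expectation over $\vec{\mu}^{(<j)}$ gives the claim.

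I expect the main point needing care to be the degenerate case where $b+ea$ is the zero polynomial, or a nonzero constant, since "divides" must be read consistently there. A nonzero constant divides everything, so such an $e$ never lies in $\varphi(m)$. The zero polynomial can arise for at most one $e$. It divides $(f+mg)a$ only if $(f+mg)a=0$, which, because $a\neq 0$, happens for at most one $m$ given $f\neq0$ (two such $m$ would force $g=0$ and then $f=0$). So the disjointness argument, and hence the bound, survives these cases.
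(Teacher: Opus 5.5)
Your proof is correct and follows essentially the same route as the paper. You condition on $\vec{\mu}^{(1)},\ldots,\vec{\mu}^{(j-1)}$ so that the polynomials are fixed, invoke the pairwise disjointness of the bad-tag sets $\varphi(m)$ from \Cref{lem:guessing-index} to get $\sum_{m}|\varphi(m)|\le p$, and average over $\mu^{(j)}_l$; your extra treatment of the zero and constant cases is harmless, since divisibility in $\Zp[\varX]$ is closed under linear combinations and the disjointness argument already covers them.
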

\begin{proof}
	Note that $f^{(j)}, g^{(j)}, a^{(j)}, b^{(j)}$ all depend only on the first $j - 1$ values in $\vec{\mu} = (\vec{\mu}^{(n)} = (\mu^{(n)}_l)_{l \in [k]})_{n \in [r]}$, given any $\vec{\actualMu}_{[1:j-1]} = (\vec{\mu}^{(n)})_{n \in [ j - 1]} \in (\Zp^k)^{j - 1}$ these polynomials are fixed. Then, we consider the conditioned expectation \begin{align*}
		\Expect_{\vec{\mu}^{(j)}}[X_{j,l}(\vec{\actualMu}_{[1:j-1]} \| \vec{\mu}_j)] &= \frac{1}{p^k} \sum_{\vec \actualMu^{(j)} \in \Zp^k} X_{j,l}(\vec{\actualMu}_{[1:j-1]} \| \vec \actualMu^{(j)}) \;\\
		&=  \frac{1}{p^{k - 1}} \sum_{\vec \actualMu^{(j)}_{[k] \setminus \{l\}} \in \Zp^{k - 1}} \frac{1}{p} \sum_{\actualMu^{(j)}_l \in \Zp} X_{j,l}(\vec{\actualMu}_{[1:j-1]} \| \vec \actualMu^{(j)}_{[k] \setminus \{l\}} \| \actualMu^{(j)}_l) \\
		&\leq \frac{1}{p^{k - 1}} \sum_{\vec \actualMu^{(j)}_{[k] \setminus \{l\}} \in \Zp^{k - 1}} 1 = 1
	\end{align*}
	The second to last inequality follows from the same observation as in the proof of \Cref{lem:guessing-index}: in particular, for any fixed $\vec{\actualMu}_{[1:j-1]} \in  (\Zp^k)^{j - 1}$ and $\vec\actualMu^{(j)}_{[k] \setminus \{l\}}$, the number of $e \in \Zp$ that is counted towards $X_{j,l}(\vec{\actualMu}_{[1:j-1]} \| \vec \actualMu^{(j)}_{[k] \setminus \{l\}} \| \allowbreak \actualMu^{(j)}_l)$ over all values of $\actualMu^{(j)}_l \in \Zp$ is at most $p$. 
	
	With this, we can the the expectation over $\vec{\mu}_{[1:j - 1]}$ and the claim follows. \qed
\end{proof}

%% file: figures/meta-reduction-rewind.tex
\begin{figure}[!t]
	\centering
	\begin{pcvstack}[boxed]
		\procedure{\scriptsize Algorithm $\advA^{\oraS}(\overline{G}_1, \overline H, \overline{G}_2, \overline{X}_{2,1})$:}{
			\pcif \overline{G}_1 = 0_{\GOne} \eOr \overline{G}_2 = 0_{\GTwo} \pcthen \pcreturn \bot\\
			F_1 \getsr \{f: \GOne^2 \times \GTwo^2 \to \Zp^k\} \pccomment{Random functions}\\
			F_2 \getsr \{f: \GOne^2 \times \GTwo^2 \times (\Zp^2 \times \GOne)^\qnum \to [\qnum]\} \\
			\protect\dgamebox{$(\mu_l)_{l \in [k]}$} \gets F_1(\overline{G}_1, \overline H, \overline{G}_2, \overline{X}_{2,1}) \; ; x \gets \dlog_{\overline{G}_2}(\overline{X}_{2,1})\\
			\protect\dgamebox{$\vec{m} \gets (\underbrace{\mu_1, \ldots, \mu_1}_{\qnum/k \text{ times}}, \ldots, \underbrace{\mu_k, \ldots, \mu_k}_{\qnum/k \text{ times}})$}\\
			\pcfor i \in [\qnum]: (\sigma_i = (A_i, e_i)) \getsr \oraS(m_i)\\
			\pcif (\exists i \in [\qnum ] : \textstyle \Ver{\BBS}((\overline{G}_1, \overline H, \overline{G}_2), \overline{X}_{2,1}, m_i, \sigma_i) = 0 ) \eOr \\
			\pcind (\exists i \neq i' \in [\qnum] : e_i = e_{i'} \protect\dgamebox{$\eOr \; \exists i \neq i' \in [k] : \mu_i = \mu_{i'}$}) \pcthen\\
			\pcind[2] \pcreturn \bot \\
			i^* \getsr F_2(\overline{G}_1, \overline H, \overline{G}_2, \overline{X}_{2,1}, \{(m_i, e_i, A_i)\}_{i \in [\qnum]})\; \\
			\pccomment{The message-signature tuples are sorted.}\\
			\protect\dgamebox{$m^* \gets \min \; (\Zp\setminus\{\mu_l\}_{l \in [k]})$}\\
			\textstyle \pcreturn \left(m^*, \left(A^* = \frac{\overline{G}_1 + m^* \overline H}{x - e_{i^*}}, e_{i^*}\right)\right)
		}
	\end{pcvstack}
	\caption{Unbounded adversary $\advA$. The integers $k$ and $\qnum$ are defined as in the lemma statement. The code in dashed boxes indicates the difference from the case $k = 1$.} \label{fig:unbounded-adv-rewind}
\end{figure}

\begin{figure}[!]
	\centering
	\begin{pcvstack}[center,boxed,space=2pt]
	\begin{pchstack}[space=2pt]
		\procedure{\scriptsize Algorithm $\metaR(\inp)$:}{
			(G_1, (X_{1,i})_{i \in [d_1]}, G_2, (X_{2,i})_{i \in [d_2]}) \gets \inp\\
			\text{Mapping } F_1 : \bits^* \to \Zp^k \; ; \; F_2 : \bits^* \to [\qnum] \\ 
			 \pccomment{For simulating oracles $\oraF_1, \oraF_2$}\\
			\text{Mapping } \runscompleted : [r] \to [r]\\
			\pccomment{Keep track of completed run ordering}\\
			\started, \closed \gets 0 \\
			\pccomment{Count number of started \& completed runs}\\
			(e', Z^*, \vec{\zeta}) \getsr \calR^{\hat{\advA}^{\oraF_1, \oraF_2}}(\inp)\\
			\pccomment{$\textstyle Z^* = \zeta_0 G_1 + \sum_{i = 1}^{d_1} \zeta_i X_{1,i} + \sum_{j = 1}^r \zeta_{d_1 + j} A^{(*, j)}$}\\
			\pcreturn x \text{ s.t. } xG_1 = X_{1,1} \eAnd \zeta(x)(x - e') = 1
		}
		\begin{pcvstack}
			\procedure{\scriptsize Oracle $\oraF_1(\str)$:}{
				\pccomment{$\str = (\overline{G}_1, \overline H, \overline{G}_2, \overline{X}_{2,1})$}\\
				\pcif F_1[\str] = \bot \pcthen \\
				\pcind F_1[\str] \getsr \Zp^k\\
				\pcreturn F_1[\str]
			}
			\procedure{\scriptsize Oracle $\oraF_2(\str)$:}{
				\pccomment{$\str = (\overline{G}_1, \overline H, \overline{G}_2, \overline{X}_{2,1}, \{(m_i, \sigma_i)\}_{i \in [\qnum]})$}\\
				\pcif F_2[\str] = \bot \pcthen \\
				\pcind F_2[\str] \getsr [\qnum]\\
				\pcreturn F_2[\str]
			}
		\end{pcvstack}
	\end{pchstack}
	\hrulefill\\
	\begin{pchstack}
		\procedure{\scriptsize Algorithm $(\advB^{\oraF_1, \oraF_2})^{\oraS}(\overline{G}_1^{(j)}, \overline H^{(j)}, \overline{G}_2^{(j)}, \overline{X}_{2,1}^{(j)}, \vec{f}^{(j)}, \vec{g}^{(j)}, \vec{a}^{(j)}, \vec{b}^{(j)})$:}{
		\pccomment{$\vec f^{(j)}, \vec g^{(j)}, \vec a^{(j)}, \vec b^{(j)}$ satisfy (\ref{eq:red-rew-input-1}--\ref{eq:red-rew-input-2}), and $j \in [r]$ denotes the run's numbering}\\
					\pcif \overline{G}_1 = 0_{\GOne} \eOr \overline{G}_2 = 0_{\GTwo} \pcthen \pcreturn \bot \pcind[16] \pccomment{$f^{(j)}, a^{(j)} \neq 0$}\\
		\started \gets \started + 1 \pccomment{Note that $\started = j$}\\
		\protect\dgamebox{$(\mu^{(j)}_l)_{l \in [k]}$} \gets \oraF_1(\overline{G}_1^{(j)}, \overline H^{(j)}, \overline{G}_2^{(j)}, \overline{X}_{2,1}^{(j)})\\
		\protect\dgamebox{$\vec{m}^{(j)} \gets (\underbrace{\mu_1^{(j)}, \ldots, \mu_1^{(j)}}_{\qnum/k \text{ times}}, \ldots, \underbrace{\mu_k^{(j)}, \ldots, \mu_k^{(j)}}_{\qnum/k \text{ times}})$}\\
		\pcfor i \in [\qnum]: (\sigma_i^{(j)} = (A_i^{(j)}, e_i^{(j)}), \vec \alpha^{(i,j)}) \getsr \oraS({m}^{(j)}_i)\\
		\pccomment{$\textstyle A_i^{(j)} = \alpha^{(i,j)}_0 G_1 + \sum_{n = 1}^{d_1} \alpha^{(i,j)}_n X_{1,i} + \sum_{n = 1}^{r} \alpha^{(i,j)}_{d_1 + n} A^{(*, n)}$}\\
		\pcif \textstyle (\exists i \in [\qnum]: \Ver{\BBS}(\overline{X}_{2,1}^{(j)}, m^{(j)}_i, \sigma_i^{(j)}) = 0)  \eOr (\exists i \neq i' \in [\qnum] : e_i^{(j)} = e_{i'}^{(j)}) \\
		\pcind \protect\dgamebox{$\eOr \; (\exists i \neq i' \in [k]: \mu_i^{(j)} = \mu_{i'}^{(j)})$ }\pcthen \pcreturn \bot\\
		i^{(j)} \gets \oraF_2(\overline{G}_1^{(j)}, \overline H^{(j)}, \overline{G}_2^{(j)}, \overline{X}_{2,1}^{(j)}, \{(m_i^{(j)}, e_i^{(j)}, A_i^{(j)})\}_{i \in [\qnum]})\; \\
		\protect\dgamebox{$\displaystyle m^{(*,j )} \gets \min (\Zp\setminus\{\mu^{(j)} _l\}_{l \in [k]}) $}\; \\
		\closed \gets \closed + 1\; ; \; \runscompleted[\closed] \gets j \\
		\pcind[14]\pccomment{Register the ordering of the completed runs  with $\runscompleted$.}\\
		\pcif \exists i \in [\qnum]: (b^{(j)} - e_i^{(j)} a^{(j)}) \ndivides (f^{(j)} + m^{(j)}_i g^{(j)}) a^{(j)}  \pcthen \metaR \; \pcabort \pcind[5] \pccomment{$\BadRep_\closed$}\\
		\pcif (b^{(j)} - e_{i^{(j)}}^{(j)} a^{(j)}) \divides f^{(j)}a^{(j)} \eAnd (b^{(j)} - e^{(j)}_{i^{(j)}} a^{(j)}) \divides g^{(j)}a^{(j)} \eAnd \\
		\pcind[2] \deg (b^{(j)} - e_{i^{(j)}}^{(j)} a^{(j)}) = \max\{\deg a^{(j)}, \deg b^{(j)}\}\pcthen\\
		\pcind[3] \textstyle \alpha^{(*,j)}(\varX) \gets \frac{f^{(j)} (\varX) + m^{(*,j)} g^{(j)}(\varX)}{b^{(j)} (\varX) - e_{i^{(j)}}^{(j)}  a^{(j)} (\varX)} a^{(j)} (\varX)\\
		\pcind[3] \textstyle \pcreturn (m^{(*,j )}, (A^{(*,j )} = \alpha^{(*,j)}(x)G_1, e_{i^{(j)} }^{(j)} ))\\
		\pcelse \metaR \; \pcabort \pcind[28] \pccomment{$\BadIdx_\closed$}
		}
	\end{pchstack}
	\end{pcvstack}
	\vspace*{-10pt}
	\caption{Meta-reduction $\metaR$ and simulated adversary $\advB$ which shares global states with $\metaR$. The integers $k, r$ and $\qnum$ are defined as in the lemma statement. As discussed in the body, all representation vectors can be parsed as a polynomial. (For vector $\vec \zeta$, we write $\zeta$ as the corresponding polynomial.) We distinguish between \ifnum\eprintversion=1 the meta-reduction\fi $\metaR$ and \ifnum\eprintversion=1 the simulated adversary\fi $\advB$ aborting with $\pcabort$ and $\pcreturn \bot$, respectively. The code in dashed boxes indicates the difference from the case $k = 1$.} 
	\label{fig:meta-reduction-rewind}
\end{figure}

%% file: tex_files/proof-any-length.tex
\section{Proof of \Cref{lemma:tight-proof-extend}}\label{sec:tight-proof-extend}

For simplicity, assume that all algorithms implicitly takes as input the group description $\pGroupDescript$.
Consider a $\SUF'$ adversary $\advA$, taking as input the group elements $G_1, \vec{H}, G_2, X_{2,1}$, making $\qnum$ signing queries on distinct messages $\vec{m}_1, \ldots, \vec{m}_\qnum \in \Zp^\ell$ and returning a forgery $(\vec{m}^*, \sigma^* = (A^*, e^*))$ such that $A^* = \frac{1}{x - e^*} (G_.1 + \sum_{j = 1}^\ell \vec{m}^*[j] \vec{H}[j])$. We consider the following events: \begin{itemize}
	\item $\Coll$: For some $i \neq i' \in [\qnum]$, $ \sum_{j = 1}^\ell \vec{m}_i[j] \vec{H}[j] =  \sum_{j = 1}^\ell \vec{m}_{i'}[j] \vec{H}[j]$, or for the forgery $(\vec{m}^*, \sigma^*)$, there exists an index $i \in [\qnum]$ such that $ \sum_{j = 1}^\ell \vec{m}_i[j] \vec{H}[j] =  \sum_{j = 1}^\ell \vec{m}^*[j] \vec{H}[j]$. 
	
	\item $\Forge_\ell$: $\Coll$ does not occur and $\advA$ wins in the game.
\end{itemize}
Then, it is easy to see that \[
\Adv^{\suf'}_{\BBS}(\advA, \secpar) \leq \Pr[\Coll] + \Pr[\Forge_\ell]\;.
\]
The two following claims then conclude the proof. \qed

\begin{claim}
	There exists an adversary $\advB_\Coll$ running in time roughly that of $\advA$ such that \[
	\Pr[\Coll] \leq \Adv^{\dlog}_{\GroupGen}(\advA, \secpar) + \frac{1}{p}\;.
	\]
\end{claim}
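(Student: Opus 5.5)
The plan is to build $\advB_\Coll$ by embedding the discrete-logarithm challenge into the generators $\vec{H}$ so that the embedding is hidden from $\advA$. Every collision of the form $\sum_{j} \vec{m}[j]\vec{H}[j] = \sum_j \vec{m}'[j]\vec{H}[j]$ with $\vec{m} \neq \vec{m}'$ will then reveal the discrete logarithm, except with probability $1/p$. Concretely, $\advB_\Coll$ receives $(G_1, Z = zG_1, G_2)$ and samples $\rho_1,\ldots,\rho_\ell, \gamma_1, \ldots, \gamma_\ell \getsr \Zp$. It sets $\vec{H}[j] \gets \rho_j G_1 + \gamma_j Z$, samples $x \getsr \Zp$ itself, and gives $\advA$ the input $(G_1, \vec{H}, G_2, X_{2,1} = xG_2)$. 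Since $\advB_\Coll$ knows $x$, it answers each signing query $\vec{m}_i$ perfectly, exactly as the real signing algorithm does: it draws $e_i \getsr \Zp$ and returns $A_i = \frac{1}{x - e_i}(G_1 + \sum_j \vec{m}_i[j]\vec{H}[j])$. Hence $\advA$'s view is identically distributed to the $\SUF'$ game, and $\Coll$ occurs with the same probability. The running time is that of $\advA$ plus $O(\qnum\ell)$ group operations.

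On the extraction side, $\advB_\Coll$ looks through the queried messages and the forgery message for a pair $\vec{m} \neq \vec{m}'$ with equal images. In the forgery case we only count indices with $\vec{m}_i \neq \vec{m}^*$, which is the meaningful reading of the $\Coll$ event; for a pair with $\vec{m}_i = \vec{m}^*$ the equality holds trivially and gives no information, so it is excluded. Let $\vec\delta = \vec{m} - \vec{m}' \neq \vec 0$. The collision gives $\left(\sum_j \delta_j \rho_j\right) + z\left(\sum_j \delta_j\gamma_j\right) = 0$, because $G_1$ is a generator. Whenever $\sum_j \delta_j \gamma_j \neq 0$, $\advB_\Coll$ outputs $z = -\left(\sum_j\delta_j\rho_j\right)\left(\sum_j\delta_j\gamma_j\right)^{-1}$.

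The only real obstacle is bounding the probability of the degenerate case $\sum_j\delta_j\gamma_j = 0$, since $\vec\delta$ is chosen adaptively by $\advA$. The argument is information-theoretic. Fix $z$ and the values $h_j := \rho_j + \gamma_j z = \dlog_{G_1}\vec{H}[j]$. For each choice of $\vec\gamma$ there is exactly one $\vec\rho$ consistent with these values, so $\vec\gamma$ is uniform and independent of everything $\advA$ sees (and of $x$ and the $e_i$). The pair, and hence $\vec\delta$, is a function of $\advA$'s view. So conditioned on any view, $\vec\gamma$ is still uniform, and for the fixed nonzero $\vec\delta$ we get $\Pr[\sum_j\delta_j\gamma_j = 0] = 1/p$. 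Note that we only need this bound for the single pair $\advB_\Coll$ selects (say the first one in a fixed order), so no union bound over pairs is required. Consequently $\Pr[\Coll] \leq \Adv^{\dl}_{\GroupGen}(\advB_\Coll,\secpar) + \frac1p$, which is the claim.
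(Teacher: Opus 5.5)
Your proposal is correct and follows the paper's proof: embed the DL challenge as $\vec{H} = \vec{\alpha} G_1 + \vec{\beta} Y$, choose $x$ yourself so that signing queries are answered honestly, and extract $-\bigl(\sum_j \delta_j \alpha_j\bigr)/\bigl(\sum_j \delta_j \beta_j\bigr)$ from a nontrivial collision, with the $1/p$ loss coming from $\vec{\beta}$ being information-theoretically hidden from $\advA$. Two of your steps are more explicit than the paper's: you show that $\vec{\gamma}$ remains uniform conditioned on $\advA$'s view, and you restrict the forgery case to $\vec{m}_i \neq \vec{m}^*$. The paper leaves both implicit, the latter only through the phrase ``$\vec{m} \neq \vec{m}'$ triggering the event''.
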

\begin{proof}
	Consider $\advB_\Coll$ defined as follows: \begin{itemize}
		\item Takes as input the group elements $G_1, Y \in \GOne, G_2$. Set $\vec{H} \gets \vec{\alpha} G_1 + \vec{\beta} Y, \allowbreak X_{2,1} \gets x G_2$ where $\vec \alpha, \vec \beta \getsr \Zp^\ell$ and $x \getsr \Zp$.
		\item Run $\advA$ with input $(G_1, \vec{H}, G_2, X_{2,1})$
		\item For each signing query $\vec{m}_i$, computes the signature $\sigma_i$ by sampling $e_i \getsr \Zp$ and $A_i \gets \frac{1}{x - e_i} (G_1 + \sum_{j = 1}^\ell \vec{m}_i[j]\vec{H}[j])$. 
		\item After the forgery $(\vec{m^*}, \sigma^*)$ is given, check if $\Coll$ occurs. If so, let $\vec{m} \neq \vec{m}'$ be the two messages triggering the event. Then, the reduction returns $-\frac{\sum_{j = 1}^\ell (\vec{m}[j] - \vec{m}'[j]) \alpha[j]}{\sum_{j = 1}^\ell (\vec{m}[j] - \vec{m}'[j]) \beta[j]}$ if $\sum_{j = 1}^\ell (\vec{m}[j] - \vec{m}'[j]) \beta[j] \neq 0$. Otherwise, abort.
	\end{itemize}
	It is clear that the running time is roughly $t_\advA$.
	The correctness of the reduction follows from the event $\Coll$ implying that \[
	0_{\GOne} =  \sum_{j = 1}^\ell (\vec{m}[j] - \vec{m}'[j]) \vec{H}[j] =  \sum_{j = 1}^\ell (\vec{m}[j] - \vec{m}'[j]) \vec{\alpha}[j]  G + \sum_{j = 1}^\ell (\vec{m}[j] - \vec{m}'[j]) \vec{\beta}[j]  Y\;.
	\] Also, note that since $\vec{\beta}$ is information theoretically hidden from the view of $\advA$, the abort from $\sum_{j = 1}^\ell (\vec{m}[j] - \vec{m}'[j]) \beta[j] = 0$ can only occur with probability at most $1/p$. Hence, this concludes the proof. \qed
\end{proof}

\begin{claim}
	There exists an adversary $\advB'$ against SUF' game of $\BBS_1 = \BBS[\GroupGen, 1]$ running in time roughly that of $\advA$ and making $\qnum$ distinct signing queries such that \[
		\Pr[\Forge_\ell]\leq \Adv^{\suf'}_{\BBS_1}(\advB', \secpar)\;.
	\]
\end{claim}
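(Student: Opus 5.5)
The plan is to construct $\advB'$ as a wrapper around $\advA$ that plays the $\SUF'$ game for $\BBS_1$. It embeds the length-$\ell$ instance into the length-$1$ one by the random linear map sketched after \Cref{lemma:tight-proof-extend}.

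\emph{Construction.} On input $(G_1, H', G_2, X_{2,1})$, $\advB'$ samples $\vec\alpha \getsr \Zp^\ell$ and sets $\vec H[j] \gets \alpha_j H'$. It then runs $\advA$ on $(G_1, \vec H, G_2, X_{2,1})$. A signing query $\vec m_i$ is translated to $m_i := \sum_{j=1}^\ell \alpha_j \vec m_i[j]$ and forwarded to its own oracle. The answer $(A_i, e_i)$ is returned unchanged. When $\advA$ outputs $(\vec m^*, (A^*, e^*))$, $\advB'$ outputs $(m^*, (A^*, e^*))$ with $m^* := \sum_j \alpha_j \vec m^*[j]$.

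\emph{Perfect simulation.} If $H' \neq 0_{\GOne}$, then $\vec H$ is uniform over $\GOne^\ell$ only when $H'$ is uniform over $\GOne$ (the $\alpha_j$ are independent, so $\alpha_j H'$ is uniform whenever $H'$ is a generator). The degenerate case $H' = 0_{\GOne}$ makes $\vec H = \vec 0$, so I would instead sample $\vec H[j] \gets \alpha_j H' + \gamma_j G_1$ and shift messages accordingly? That breaks the linear structure. The simpler route is to note that $H'$ is uniform and absorb the case $H' = 0$ into the statistical slack. Note also that the claim as stated has no slack term, while the lemma's conference version lists $1/p$, so I would check this point carefully.

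Correctness of the answers follows from the identity
\[
G_1 + m_i H' = G_1 + \textstyle\sum_j \vec m_i[j]\vec H[j].
\]
So each $(A_i, e_i)$ is exactly a $\BBS$ signature on $\vec m_i$ with the same $e_i$ distribution, and $\advA$'s view is identical to the real game.

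\emph{Distinctness and winning.} Under $\neg\Coll$, the group elements $\sum_j \vec m_i[j]\vec H[j] = m_i H'$ are pairwise distinct. Hence the $m_i$ are distinct, and $\advB'$ makes $\qnum$ distinct queries. The verification equation transfers verbatim, since $C$ is the same group element. The tag condition $e^* \in \mathsf{Tags}$ is unchanged. For freshness, $\neg\Coll$ gives $m^* H' \neq m_i H'$ for all $i$, hence $m^* \neq m_i$, so $(m^*, \sigma^*) \notin \mathsf{Sigs}$ trivially. Thus $\Forge_\ell$ implies that $\advB'$ wins.

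The main subtlety is the last step. $\Coll$ is defined through group elements rather than the scalars $m_i$. The argument therefore relies on $H' \neq 0$ to pass from distinct group elements to distinct scalars. When $H' = 0$, every $C$ equals $G_1$, so $\Coll$ occurs whenever $\qnum \geq 2$ or a forgery exists; hence $\Forge_\ell$ is empty and the case is harmless. The running time is that of $\advA$ plus $O(\qnum\ell)$ field operations.
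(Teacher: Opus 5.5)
Your embedding $\vec H[j] = \alpha_j H'$, the translation of queries and forgery, and the verification, tag and freshness arguments all match the paper. The gap is the case $H' = 0_{\GOne}$, which you dismiss as harmless but which is not. The claim compares $\Forge_\ell$ in the \emph{real} length-$\ell$ game, where $\vec H \getsr \GOne^\ell$, with $\advB'$'s success probability, so you need the simulated view of $\advA$ to be distributed as the real one.

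In your simulation, with probability $1/p$ you run $\advA$ on $\vec H = \vec 0$. There, as you observe yourself, $\Coll$ always occurs, so $\Forge_\ell$ fails in the simulation. That is precisely where probability is lost. Conditioned on $H' \neq 0_{\GOne}$ the view is perfect, so your argument only yields $\Adv^{\suf'}_{\BBS_1}(\advB', \secpar) \geq (1 - \frac{1}{p})\Pr[\Forge_\ell]$. In other words, you get the claim only up to an additive $\Pr[\Forge_\ell]/p \le 1/p$ term. The fact that $\Forge_\ell$ is empty on $H' = 0$ is a statement about the simulated experiment; it does not recover the mass that the real game places on $\vec H \neq \vec 0$. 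You flagged the missing slack term, but then resolved it incorrectly.

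The missing idea is that when $H' = 0_{\GOne}$, $\advB'$ should not run $\advA$ at all and should instead win directly:
\begin{itemize}
\item query the message $0$ and receive $(A, e)$ with $A = \frac{1}{x-e}G_1$;
\item output $(1, (A, e))$.
\end{itemize}
This forgery is valid because $G_1 + 1 \cdot H' = G_1$. It is fresh because $(1, (A,e)) \notin \mathsf{Sigs}$, and it satisfies the tag condition because $e \in \mathsf{Tags}$. Then
$\Pr[\advB' \text{ wins}] = \frac{1}{p} \cdot 1 + (1 - \frac{1}{p}) \Pr[\advB' \text{ wins} \mid H' \neq 0_{\GOne}] \geq \frac{1}{p}\Pr[\Forge_\ell] + (1 - \frac{1}{p})\Pr[\Forge_\ell]$,
which is exactly the claimed inequality with no slack. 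This is what the paper does.

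Separately, your distinctness argument holds only under $\neg\Coll$ (the paper's does too). To guarantee that $\advB'$'s queries are always distinct, let it halt whenever a translated scalar query would repeat; nothing is lost, since $\Forge_\ell$ fails in that case anyway.
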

\begin{proof}
	Consider $\advB'$ playing the SUF' game of $\BBS_1$ defined as follows: \begin{itemize}
		\item Takes as input the group elements $G_1, H, G_2, X_{2,1}$. If $H = 0_{\GOne}$, make a query on message $m = 0$ to receive $\sigma = (A, e)$ and return the forgery $(1, \sigma)$. Otherwise, compute $\vec{H} \gets \vec{\alpha} H$ where $\vec \alpha \in \Zp^\ell$.
		\item Run $\advA$ with input $(G_1, \vec{H}, G_2, X_{2,1})$.
		\item For each signing query $\vec{m}_i$, compute $m_i' \gets \sum_{j = 1}^\ell \vec{m}_i[j] \alpha[j]$, make a signing query to its oracle on $m'_i$ to receive $\sigma_i = (A_i, e_i)$, and return $\sigma_i$ to $\advA$.
		\item For the forgery $(\vec{m}^*, \sigma^*)$, compute $\mu^* =  \sum_{j = 1}^\ell \vec{m}^*[j] \alpha[j]$ and return the forgery $(\mu^*, \sigma^*)$. 
	\end{itemize}
	The running time of $\advB'$ is roughly that of $\advA$. 
	Note that the view of $\advA$ remains the same as $\vec{H}$ is uniform in $\GOne^\ell$ and the signatures from the oracles are valid with $e_i$ uniformly random. Moreover, if $H = 0_{\GOne}$, we can see that the reduction trivially wins in the game. 
	
	Now, we consider when $\advA$ wins the game and $\Coll$ does not occur. Since $\Coll$ does not occur and by how we set up $\vec{H}$, we have that \begin{enumerate}[label=(\alph*)]
		\item The signing queries $m_i'$ made by $\advB_\suf$ are distinct. 
		\item The message $\mu^*$ in the forgery is distinct from all the signing queries.
	\end{enumerate}
	Therefore, if $\advA$ wins in the game, $\sigma^* = (A^*, e^*)$ is such that $e^* \in \{e_i\}_{i \in [\qnum]}$ and 
	\ifnum\eprintversion=0
	\begin{align*}
	A^* &= \frac{1}{x - e^*} \left(G_1 + \sum_{j = 1}^\ell \vec{m}^*[j]\vec{H}[j]\right) \\
	&=  \frac{1}{x - e^*} \left(G_1 + \sum_{j = 1}^\ell \vec{m}^*[j] \vec{\alpha}[j] H\right) \\
	&= \frac{G_1 + \mu^* H}{x - e^*}\;.
	\end{align*} \else
	\begin{align*}
		A^* &= \frac{1}{x - e^*} \left(G_1 + \sum_{j = 1}^\ell \vec{m}^*[j]\vec{H}[j]\right) 
		=  \frac{1}{x - e^*} \left(G_1 + \sum_{j = 1}^\ell \vec{m}^*[j] \vec{\alpha}[j] H\right) 
		= \frac{G_1 + \mu^* H}{x - e^*}\;.
	\end{align*}
	\fi Hence, $\advB_\suf$ also wins in the SUF' game, concluding the proof. \qed
\end{proof}